\newif\ifcomment
\renewcommand{\tilde}{\widetilde}
\renewcommand{\emptyset}{\varnothing}
\newcommand{\dirtw}{\text{dtw}}
\newcommand{\dmw}{\text{dmw}}
\newcommand{\fv}{\text{fv}}
\newcommand{\ham}{\text{ham}}
\newtheorem{fact}{Fact}
{\begin{adjustwidth}{2em}{}\hspace{-2em}\textbf{Case #1.} }%
{\end{adjustwidth}}
\title{Parametrised Algorithms for Directed Modular Width}
\date{}
\DeclareRobustCommand{\authorthing}{
\begin{center}
\begin{tabular}{p{.45\textwidth}p{.45\textwidth}}
Raphael Steiner & Sebastian Wiederrecht\\
\url{steiner@math.tu-berlin.de} & \url{sebastian.wiederrecht@tu-berlin.de}\\
\noalign{\smallskip}
\multicolumn{2}{c}{Technische Universität Berlin\quad}
\end{tabular}
\end{center}}
\author{\authorthing}
\begin{document}
\maketitle

\begin{abstract}
	Many well-known NP-hard algorithmic problems on directed graphs resist efficient parametrisations with most known width measures for directed graphs, such as directed treewidth, DAG-width, Kelly-width and many others. 
	While these focus on measuring how close a digraph is to an oriented tree resp.\ a directed acyclic graph, in this paper, we investigate \emph{directed modular width} as a parameter, which is closer to the concept of clique-width. We investigate applications of modular decompositions of directed graphs to a wide range of algorithmic problems and derive FPT-algorithms for several well-known digraph-specific NP-hard problems, namely \emph{minimum (weight) directed feedback vertex set}, \emph{minimum (weight) directed dominating set}, \emph{digraph colouring}, \emph{directed Hamiltonian path/cycle}, \emph{partitioning into paths}, \emph{(capacitated) vertex-disjoint directed paths}, and the \emph{directed subgraph homeomorphism problem}.
	The latter yields a polynomial-time algorithm for detecting topological minors in digraphs of bounded directed modular width.
	Finally we illustrate that also other structural digraph parameters, such as \emph{directed pathwidth} and the \emph{cycle-rank} can be computed efficiently using directed modular width as a parameter.
	
	~
	
	\noindent\textbf{Keywords:} Parameterised Complexity;
		Fixed-Parameter-Tractability;
		Digraph Width Measures;
		Modular Decomposition;
		Integer Linear Programming
\end{abstract}
\section{Introduction}

Width measures for graphs have become a fundamental pillar in both structural graph theory and the field of parametrised complexity (see \cite{downey2012parameterized} for an introduction to the latter).
From an algorithmic point of view a \emph{good} width measure should provide three things:
(1) graph classes of bounded width should have a reasonably rich structure,
(2) a large number of different problems should be efficiently solvable on those classes of bounded width, and at last
(3) a decomposition witnessing the width of a graph should be computable in a reasonable amount of time.
For undirected graphs width measures have been extremely successful.
For sparse graph classes the notion of \emph{treewidth} \cite{robertson1986graph} and equivalent width measures like \emph{branchwidth} \cite{robertson1991graph} have lead to a plethora of different algorithmic approaches to otherwise untractable problems \cite{bodlaender1997treewidth}.
Among the methods utilising treewidth there are also so called \emph{algorithmic meta theorems} like Courcelle's Theorem \cite{courcelle1990monadic} stating that any MSO$_2$-definable problem can be solved on graphs of bounded treewidth in linear time.

For dense graphs the approach of \emph{clique-width} \cite{courcelle1993handle} and related parameters like \emph{rank-width} \cite{oum2005rank} or \emph{boolean-width} \cite{bui2011boolean} still leads to the tractability of several otherwise hard problems on classes of bounded width \cite{espelage2001solve}.
In fact, in the case of clique-width still a powerful algorithmic meta theorem exists \cite{courcelle1997expression,courcelle2000linear}.

In harsh contrast to the immense success of undirected graph width measures for algorithmic applications, the setting of \emph{directed graphs}, despite numerous attempts, has resisted all approaches that tried to replicate the notion of treewidth for directed graphs.
That said, this does only apply to the algorithmic side of things.
From the viewpoint of structural graph theory, especially regarding the butterfly minor relation, directed treewidth seems to be the right approach.
As a major argument in this direction one has to mention the Directed Grid Theorem \cite{kawarabayashi2015directed}.
Several digraph width measures, similar to the undirected treewidth, have been proposed.
The most popular among them are \emph{directed treewidth} \cite{johnson2001directed}, \emph{DAG-width} \cite{berwanger2006dag}, and \emph{Kelly-width} \cite{hunter2008digraph}.
While many routing problems including the general \emph{disjoint paths problem} can be solved in polynomial time on digraphs of bounded directed tree-width, digraphs of bounded DAG- or Kelly-width also admit polynomial time algorithms for solving parity games.
In the case of directed treewidth there also exists an algorithmic meta theorem in the spirit of Courcelle's Theorem \cite{de2016algorithmic}, however it is restricted to a small class of MSO$_2$-definable problems, mainly concerned with the routing of paths.
Even more restrictive digraph parameters like \emph{directed pathwidth} (see for example \cite{yang2008digraph}) or \emph{cycle-rank} \cite{eggan1963transition} do not seem to to perform any better.
One reason for the limited success of these directed width measures seems to be the fact that all of them are bounded from above by the size of a minimum \emph{directed feedback vertex set}, i.e.\@ a set of vertices hitting all directed cycles.
In particular this means that \emph{DAG}s, directed acyclic graphs, have bounded width for all those parameters.
Sadly many classical hard problems like the \emph{directed dominating set} problem remain NP-hard on DAGs or even classes of low \emph{DAG-depth} \cite{ganian2009digraph}.

On the other side the notion of clique-width has simultaneously been introduced for undirected and directed graphs \cite{courcelle1993handle}.
The same algorithmic meta result that was obtained for die undirected version of clique width can be obtained for the directed version \cite{courcelle2000linear} making directed clique width and its equivalent \emph{bi-rank-width} algorithmically much more successful digraph width measures than directed treewidth and its cousins.
However, from the point of structural digraph theory directed clique width and bi-rank-width are not as well behaved as neither of them is closed under subgraphs or butterfly minors. It is furthermore not entirely clear how well a directed clique-width expression of bounded width can be approximated in reasonable time.
The current technique is to compute a bi-rank-width decomposition of optimal width, say $k$, and then use this decomposition to obtain a directed clique-width expression of width at most $2^{k+1}-1$ in FPT-time \cite{courcelle2012graph}.
Additionally, there are elementary problems that still are $W[1]$-hard on graphs of bounded clique-width like the \emph{Hamiltonian cycle} problem \cite{fomin2010intractability}.

In their work on algorithmic meta arguments regarding directed width measures Ganian et.\@ al.\@ \cite{ganian2016there} introduce a concept to capture the algorithmic abilities of a directed width parameter and reach the conclusion that no directed width measure that is closed under subgraphs and \emph{topological butterfly minors} can be a powerful tool in terms of algorithmics.

So an interesting problem would be to find a digraph width measure that does not struggle with more global decision problems like Hamiltonian cycles, while at the same time offers a broad variety of different problems that become efficiently computable on digraphs of bounded width and uses a decomposition concept that can be computed, or at least approximated, within a reasonable span in FPT-time.

In the case of undirected graphs clique width has similar problems, which led Gajarsky et.\@ al.\@ to consider the more restrictive parameter of \emph{modular width} \cite{gajarsky2013parameterized}.
For undirected graphs modular width fills the very specific niche as it covers dense undirected graphs, allows for FPT-time algorithms for a broad spectrum of problems and an optimal decomposition can be computed efficiently \cite{mcconnell1999modular}.

For a more in-depth overview on the topic of directed width measures we recommend the chapter on digraphs of bounded width in \cite{bang2018classes}.

\paragraph{Contribution}

The main contribution of this paper is to consider the directed version of modular width as a structural parameter for directed graphs in terms of its usefulness in parametrised algorithms.
While \emph{directed modular width} is more restrictive than the digraph parameters discussed above, the advantage is a wealth of otherwise untractable problems that turn out to admit FPT-algorithms when parametrised with the directed modular width of the input digraph. Similar to the undirected case a directed modular decomposition of optimal width can be computed in polynomial time \cite{moduledecomposition}. Besides other classical hard problems we give FPT-algorithms for the problems \emph{digraph colouring}, \emph{Hamiltonian cycle}, \emph{partitioning into paths}, \emph{capacitated $k$-disjoint paths}, the \emph{directed subgraph homeomorphism problem}, and for computing the \emph{directed pathwidth} and the \emph{cycle rank}. We obtain polynomial-time algorithms for finding topological minors in digraphs of bounded modular width. 

The dynamic programming approach we take utilises the recursive nature of a directed modular decomposition, however, combining the dynamic programming tables of the children of some node in our recursion tree turns out to be a non-trivial problem on its own. The strategy we use for designing the algorithms is described in \cref{sec:strategy}.

\paragraph{Organisation of the Paper}

For better readability we provide a short overview on the organisation of this work.
\Cref{sec:prelims} contains the preliminaries together with an introduction to directed modular width.
We briefly discuss some structural properties of directed modular width an show that \emph{directed co-graphs} form exactly the class of directed modular width $2$.
Afterwards, in \cref{sec:strategy} we give an overview on the general approach to dynamic programming we take for directed modular width and explain our use of integer programming as part of this technique.

In \cref{sec:FVS} we discuss the \emph{minimum weight feedback vertex set} problem, while in \cref{sec:dominating} \emph{minimum weight dominating set} is discussed.
What follows are the \emph{dichromatic number} in \cref{digraphcolouring} and the \emph{hamiltonicity} problem in \cref{sec:hamilton}.

\Cref{vddp} is dedicated to the \emph{(capacitated) directed disjoint paths} problem for which we need to further develop intermediate theoretical tools to make our recursive approach work. In \cref{sec:homeo} we show how to apply the results from the previous section to the \emph{directed subgraph homeomorphism problem} and derive some related results.
At last \cref{sec:otherwidth} is concerned with the computation of other directed width parameters on graphs of bounded directed modular width.

\section{Preliminaries}\label{sec:prelims}
Digraphs in this paper are considered loopless and without parallel edges, but may have pairs of anti-parallel directed edges (called \emph{digons}). If $D$ is a digraph, we denote by $V(D)$ the set of its vertices and by $E(D)$ the set of its directed edges (sometimes also called \emph{arcs}). If $e \in E(D)$ is an edge which starts at a vertex $u$ and ends in a vertex $v$, we write $e=(u,v)$ and say that $u$ is the \emph{tail} of $e$, while $v$ is its \emph{head}. For a subset $X \subseteq V(D)$ of vertices, we use $D[X]$ to denote the digraph obtained by restricting to the set $X$ of vertices, and we let $D-X\coloneqq D[V(D) \setminus X]$. A subdigraph obtained in this way is called \emph{induced}. If $u \in V(D)$ is a vertex, we let $D-u\coloneqq D-\{u\}$ denote the digraph obtained by deleting $u$. We furthermore denote by $N_D^+(u), N_D^-(u)$ the sets of out- and in-neighbours, respectively, of $u$. 
\paragraph{Directed Modular Width}
Modules in graphs are vertex subsets which have the same relations to vertices outside the set. For digraphs, we have the following similar definition.
\begin{definition}
	Let $D$ be a digraph. A subset $\emptyset \neq M \subseteq V(D)$ of vertices is called a \emph{module}, if all the vertices in $M$ have the same sets of out-neighbours and the same sets of in-neighbours outside the module. Formally, we have $N_D^+(u_1) \setminus M = N_D^+(u_2) \setminus M$ and $N_D^-(u_1) \setminus M = N_D^-(u_2) \setminus M$ for all $u_1, u_2 \in M$.
	Consider \cref{fig:singlemodule} for an illustration.
\end{definition}

\begin{figure}[h!]
	\begin{center}
		\begin{tikzpicture}[scale=0.7]
		
		\pgfdeclarelayer{background}
		\pgfdeclarelayer{foreground}
		
		\pgfsetlayers{background,main,foreground}
		
		\begin{pgfonlayer}{main}
		
		\node (C) [] {};
		
		\node (C1) [v:ghost, position=180:25mm from C] {};
		
		\node (C2) [v:ghost, position=0:0mm from C] {};
		
		\node (C3) [v:ghost, position=0:25mm from C] {};

		

		
		
		\node (v1) [v:main,position=0:0mm from C2] {};
		\node (v2) [v:main,position=80:17mm from v1] {};
		\node (v3) [v:main,position=130:12mm from v1] {};
		\node (v4) [v:main,position=190:12mm from v1] {};
		\node (v5) [v:main,position=240:17mm from v1] {};
		\node (v6) [v:main,position=15:14mm from v1] {};
		\node (v7) [v:main,position=340:10mm from v6] {};
		\node (v8) [v:main,position=70:12mm from v6] {};
		\node (v9) [v:main,position=255:17mm from v6] {};
		\node (v10) [v:main,position=300:18mm from v6] {};
		
		\node (g1) [v:ghost,position=90:5mm from v6] {};
		\node (g2) [v:ghost,position=165:4mm from v6] {};
		\node (g3) [v:ghost,position=260:5mm from v6] {};
		\node (g4) [v:ghost,position=250:5mm from v7] {};
		\node (g5) [v:ghost,position=0:4mm from v7] {};
		\node (g6) [v:ghost,position=100:5mm from v7] {};

		
		
		

		

		
		
		\draw (v1) [e:main,->,bend left=15] to (v2);
		\draw (v1) [e:main,->,bend left=15] to (v3);
		\draw (v1) [e:main,->,bend left=15] to (v4);
		\draw (v1) [e:main,->,bend left=15] to (v5);
		\draw (v2) [e:main,->,bend left=15] to (v1);
		\draw (v3) [e:main,->,bend left=15] to (v1);
		\draw (v4) [e:main,->,bend left=15] to (v1);
		\draw (v5) [e:main,->,bend left=15] to (v1);
		
		\draw (v1) [e:main,->,bend left=20] to (v8);
		\draw (v1) [e:main,->,bend right=30] to (v9);
		\draw (v1) [e:main,->,bend right=15] to (v10);
		
		\draw (v2) [e:main,->,bend right=10] to (v3);
		\draw (v3) [e:main,->] to (v4);
		\draw (v3) [e:main,->,bend right=55] to (v5);
		\draw (v4) [e:main,->,bend right=5] to (v2);
		\draw (v5) [e:main,->] to (v4);
		
		\draw (v6) [e:main,->,bend left=15] to (v7);
		\draw (v7) [e:main,->,bend left=15] to (v6);
		
		\draw (v9) [e:main,->,bend right=15] to (v10);
		\draw (v10) [e:main,->,bend right=55] to (v8);
		
		\draw (v8) [e:main,->] to (v6);
		\draw (v8) [e:main,->] to (v7);
		\draw (v9) [e:main,->] to (v6);
		\draw (v9) [e:main,->] to (v7);
		\draw (v10) [e:main,->] to (v6);
		\draw (v10) [e:main,->] to (v7);
		
		\draw (v6) [e:main,->,bend right=10] to (v1);
		\draw (v7) [e:main,->,bend left=22] to (v1);
		
		\draw (g2) [e:coloredthin,color=magenta,closed,curve through={(g3) (g4) (g5) (g6)}] to (g1);
		
		

		
		
		\end{pgfonlayer}
		

		\begin{pgfonlayer}{background}
		
		\end{pgfonlayer}	
		
		\begin{pgfonlayer}{foreground}

		\end{pgfonlayer}
		\end{tikzpicture}
	\end{center}
	\caption{A digraph $D$ together with a module \textcolor{magenta}{$M$}.}
	\label{fig:singlemodule}
\end{figure}
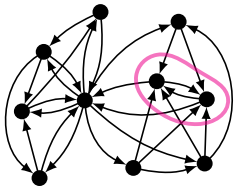

The following gives the precise recursive definition of directed modular width.
For an illustration see \cref{fig:moduledecomp}.

\begin{definition}[Directed Modular-Width]
	Let $k \in \mathbb{N}_0$, and let $D$ be a digraph. We say that $D$ has \emph{directed modular width at most} $k$, if one of the following holds:
	\begin{itemize}
		\item $|V(D)| \leq k$, or
		\item There exists a partition of $V(D)$ into $\ell \in \{2,\ldots,k\}$ modules $M_1, \ldots, M_\ell$ such that for every $i$, $D[M_i]$ has directed modular width at most $k$.
	\end{itemize}
	The least $k \ge 1$ for which a digraph $D$ has directed modular width at most $k$ is now defined to be the \emph{directed modular width}, denoted by $\dmw(D)$, of $D$.
\end{definition}

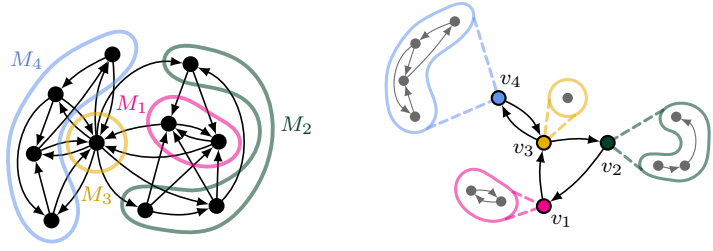
\begin{figure}[h!]
	\begin{center}
		\begin{tikzpicture}[scale=0.7]
		
		\pgfdeclarelayer{background}
		\pgfdeclarelayer{foreground}
		
		\pgfsetlayers{background,main,foreground}
		
		\begin{pgfonlayer}{main}
		
		\node (C) [] {};
		
		\node (C1) [v:ghost, position=180:42mm from C] {};
		
		\node (C2) [v:ghost, position=0:0mm from C] {};
		
		\node (C3) [v:ghost, position=0:42mm from C] {};

		
		
		\node (v1) [v:main,position=0:0mm from C1] {};
		\node (v2) [v:main,position=80:17mm from v1] {};
		\node (v3) [v:main,position=130:12mm from v1] {};
		\node (v4) [v:main,position=190:12mm from v1] {};
		\node (v5) [v:main,position=240:17mm from v1] {};
		\node (v6) [v:main,position=15:14mm from v1] {};
		\node (v7) [v:main,position=340:10mm from v6] {};
		\node (v8) [v:main,position=70:12mm from v6] {};
		\node (v9) [v:main,position=255:17mm from v6] {};
		\node (v10) [v:main,position=300:18mm from v6] {};
		
		\node (gp1) [v:ghost,position=90:5mm from v6] {};
		\node (gp2) [v:ghost,position=165:4mm from v6] {};
		\node (gp3) [v:ghost,position=260:5mm from v6] {};
		\node (gp4) [v:ghost,position=250:5mm from v7] {};
		\node (gp5) [v:ghost,position=0:4mm from v7] {};
		\node (gp6) [v:ghost,position=100:5mm from v7] {};
		\node (Lp) [v:ghost,position=140:4mm from gp2,font=\scriptsize] {\textcolor{magenta}{$M_1$}};
		
		\node (gg1) [v:ghost,position=70:3.5mm from v8] {};
		\node (gg2) [v:ghost,position=165:5mm from v8] {};
		\node (gg3) [v:ghost,position=270:3.7mm from v8] {};
		\node (gg4) [v:ghost,position=45:9mm from v7] {};
		\node (gg5) [v:ghost,position=0:7.5mm from v7] {};
		\node (gg6) [v:ghost,position=270:8mm from v7] {};
		\node (gg7) [v:ghost,position=25:9mm from v9] {};
		\node (gg8) [v:ghost,position=165:4.5mm from v9] {};
		\node (gg9) [v:ghost,position=300:4mm from v9] {};
		\node (gg10) [v:ghost,position=325:7mm from v9] {};
		\node (gg11) [v:ghost,position=270:4mm from v10] {};
		\node (gg12) [v:ghost,position=0:5mm from v10] {};
		\node (gg13) [v:ghost,position=0:10mm from v7] {};
		\node (gg14) [v:ghost,position=0:10mm from v8] {};
		\node (Lg) [v:ghost,position=12:15mm from v7,font=\scriptsize] {\textcolor{BritishRacingGreen}{$M_2$}};
		
		\node (gy1) [v:ghost,position=120:5.5mm from v1] {};
		\node (gy2) [v:ghost,position=180:5.5mm from v1] {};
		\node (gy3) [v:ghost,position=240:5.5mm from v1] {};
		\node (gy4) [v:ghost,position=300:5.5mm from v1] {};
		\node (gy5) [v:ghost,position=0:5.5mm from v1] {};
		\node (gy6) [v:ghost,position=60:5.5mm from v1] {};
		\node (Ly) [v:ghost,position=270:10mm from v1,font=\scriptsize] {\textcolor{Amber!80!black}{$M_3$}};
		
		\node (gb1) [v:ghost,position=330:4mm from v2] {};
		\node (gb2) [v:ghost,position=45:5mm from v2] {};
		\node (gb3) [v:ghost,position=120:4mm from v2] {};
		\node (gb4) [v:ghost,position=90:6mm from v3] {};
		\node (gb5) [v:ghost,position=180:5mm from v3] {};
		\node (gb6) [v:ghost,position=135:6mm from v4] {};
		\node (gb7) [v:ghost,position=225:6.5mm from v4] {};
		\node (gb8) [v:ghost,position=180:5.5mm from v5] {};
		\node (gb9) [v:ghost,position=280:4mm from v5] {};
		\node (gb10) [v:ghost,position=10:5mm from v5] {};
		\node (gb11) [v:ghost,position=180:7mm from v1] {};
		\node (gb12) [v:ghost,position=135:7mm from v1] {};
		\node (gb13) [v:ghost,position=90:9mm from v1] {};
		\node (Lb) [v:ghost,position=130:8.5mm from v3,font=\scriptsize] {\textcolor{CornflowerBlue!85!black}{$M_4$}};
		
		

		
		
		\node (u1) [v:main,position=0:0mm from C3,fill=Amber!90!black] {};
		\node (u2) [v:main,position=0:12mm from u1,fill=BritishRacingGreen] {};
		\node (u3) [v:main,position=270:12mm from u1,fill=magenta] {};
		\node (u4) [v:main,position=135:12mm from u1,fill=CornflowerBlue] {};

		\node (c4) [v:ghost,position=155:11mm from u4] {};
		
		\node(uv2) [v:small,position=80:10mm from c4] {};
		\node(uv3) [v:small,position=135:8mm from c4] {};
		\node(uv4) [v:small,position=190:8mm from c4] {};
		\node(uv5) [v:small,position=235:10mm from c4] {};
		
		\node (ugb1) [v:ghost,position=330:3mm from uv2] {};
		\node (ugb2) [v:ghost,position=45:4mm from uv2] {};
		\node (ugb3) [v:ghost,position=120:3mm from uv2] {};
		\node (ugb4) [v:ghost,position=90:3.5mm from uv3] {};
		\node (ugb5) [v:ghost,position=180:3mm from uv3] {};
		\node (ugb6) [v:ghost,position=135:4mm from uv4] {};
		\node (ugb7) [v:ghost,position=225:4.5mm from uv4] {};
		\node (ugb8) [v:ghost,position=180:4.5mm from uv5] {};
		\node (ugb9) [v:ghost,position=280:3mm from uv5] {};
		\node (ugb10) [v:ghost,position=10:4mm from uv5] {};
		\node (ugb11) [v:ghost,position=180:3.5mm from c4] {};
		\node (ugb12) [v:ghost,position=135:3.5mm from c4] {};
		\node (ugb13) [v:ghost,position=90:4.5mm from c4] {};
		
		\node (gugb2) [v:ghost,position=290:2mm from ugb2] {};
		\node (gugb9) [v:ghost,position=310:3.1mm from uv5] {};
		
		\node (c3) [v:ghost,position=45:12mm from u1] {};
		
		\node(uv1) [v:small,position=180:4mm from c3] {};
		
		\node (ugy1) [v:ghost,position=120:3.5mm from uv1] {};
		\node (ugy2) [v:ghost,position=180:3.5mm from uv1] {};
		\node (ugy3) [v:ghost,position=240:3.5mm from uv1] {};
		\node (ugy4) [v:ghost,position=300:3.5mm from uv1] {};
		\node (ugy5) [v:ghost,position=0:3.5mm from uv1] {};
		\node (ugy6) [v:ghost,position=60:3.5mm from uv1] {};
		
		\node (guy6) [v:ghost,position=180:3.4mm from uv1] {};
		\node (guy4) [v:ghost,position=300:3.4mm from uv1] {};
		
		\node (c1) [v:ghost,position=170:11mm from u3] {};
		
		\node(uv6) [v:small,position=160:3mm from c1] {};
		\node(uv7) [v:small,position=340:3mm from c1] {};
		
		\node (ugp1) [v:ghost,position=90:4mm from uv6] {};
		\node (ugp2) [v:ghost,position=165:3mm from uv6] {};
		\node (ugp3) [v:ghost,position=260:4mm from uv6] {};
		\node (ugp4) [v:ghost,position=250:4mm from uv7] {};
		\node (ugp5) [v:ghost,position=0:3mm from uv7] {};
		\node (ugp6) [v:ghost,position=100:4mm from uv7] {};
		
		\node (gup1) [v:ghost,position=45:2.8mm from uv7] {};
		\node (gup2) [v:ghost,position=280:3.5mm from uv7] {};
		
		\node (c2) [v:ghost,position=0:12mm from u2] {};
		
		\node(uv8) [v:small,position=80:5mm from c2] {};
		\node(uv9) [v:small,position=240:5mm from c2] {};
		\node(uv10) [v:small,position=300:5mm from c2] {};
		
		\node (ugg1) [v:ghost,position=90:2.5mm from uv8] {};
		\node (ugg2) [v:ghost,position=180:3mm from uv8] {};
		\node (ugg3) [v:ghost,position=230:3mm from uv8] {};
		\node (ugg4) [v:ghost,position=70:2mm from c2] {};
		\node (ugg6) [v:ghost,position=300:2mm from c2] {};
		\node (ugg7) [v:ghost,position=45:3mm from uv9] {};
		\node (ugg8) [v:ghost,position=165:3.5mm from uv9] {};
		\node (ugg9) [v:ghost,position=300:3mm from uv9] {};
		\node (ugg10) [v:ghost,position=325:5.05mm from uv9] {};
		\node (ugg11) [v:ghost,position=270:2.8mm from uv10] {};
		\node (ugg12) [v:ghost,position=0:3.7mm from uv10] {};
		\node (ugg13) [v:ghost,position=0:7mm from c2] {};
		\node (ugg14) [v:ghost,position=0:5.5mm from uv8] {};
		
		\node (gug1) [v:ghost,position=145:2.6mm from uv8] {};
		\node (gug2) [v:ghost,position=200:3mm from uv9] {};
		
		\node (Lu1) [v:ghost,position=200:4mm from u1,font=\scriptsize] {$v_3$};
		\node (Lu2) [v:ghost,position=280:5mm from u2,font=\scriptsize] {$v_2$};
		\node (Lu3) [v:ghost,position=320:4.4mm from u3,font=\scriptsize] {$v_1$};
		\node (Lu4) [v:ghost,position=55:4mm from u4,font=\scriptsize] {$v_4$};
		
		

		
		
		\draw (v1) [e:main,->,bend left=15] to (v2);
		\draw (v1) [e:main,->,bend left=15] to (v3);
		\draw (v1) [e:main,->,bend left=15] to (v4);
		\draw (v1) [e:main,->,bend left=15] to (v5);
		\draw (v2) [e:main,->,bend left=15] to (v1);
		\draw (v3) [e:main,->,bend left=15] to (v1);
		\draw (v4) [e:main,->,bend left=15] to (v1);
		\draw (v5) [e:main,->,bend left=15] to (v1);
		
		\draw (v1) [e:main,->,bend left=48] to (v8);
		\draw (v1) [e:main,->,bend right=30] to (v9);
		\draw (v1) [e:main,->,bend right=15] to (v10);
		
		\draw (v2) [e:main,->,bend right=10] to (v3);
		\draw (v3) [e:main,->] to (v4);
		\draw (v3) [e:main,->,bend right=55] to (v5);
		\draw (v4) [e:main,->,bend right=5] to (v2);
		\draw (v5) [e:main,->] to (v4);
		
		\draw (v6) [e:main,->,bend left=15] to (v7);
		\draw (v7) [e:main,->,bend left=15] to (v6);
		
		\draw (v9) [e:main,->,bend right=15] to (v10);
		\draw (v10) [e:main,->,bend right=55] to (v8);
		
		\draw (v8) [e:main,->] to (v6);
		\draw (v8) [e:main,->] to (v7);
		\draw (v9) [e:main,->] to (v6);
		\draw (v9) [e:main,->] to (v7);
		\draw (v10) [e:main,->] to (v6);
		\draw (v10) [e:main,->] to (v7);
		
		\draw (v6) [e:main,->,bend right=10] to (v1);
		\draw (v7) [e:main,->,bend left=22] to (v1);
		
		\draw (gp2) [e:coloredthin,color=magenta,closed,curve through={(gp3) (gp4) (gp5) (gp6)}] to (gp1);
		
		\draw (gg1) [e:coloredthin,color=BritishRacingGreen,closed,curve through={(gg2) (gg3) (gg4) (gg5) (gg6) (gg7) (gg8) (gg9) (gg10) (gg11) (gg12) (gg13)}] to (gg14);
		
		\draw (gy2) [e:coloredthin,color=Amber!90!black,closed,curve through={(gy3) (gy4) (gy5) (gy6)}] to (gy1);
		
		\draw (gb1) [e:coloredthin,color=CornflowerBlue,closed,curve through={(gb2) (gb3) (gb4) (gb5) (gb6) (gb7) (gb8) (gb9) (gb10) (gb11) (gb12) }] to (gb13);
		
		

		
		
		\draw (u1) [e:main,->,bend left=10] to (u2);
		\draw (u1) [e:main,->,bend left=20] to (u4);
		\draw (u4) [e:main,->,bend left=20] to (u1);
		\draw (u2) [e:main,->,bend left=10] to (u3);
		\draw (u3) [e:main,->,bend left=10] to (u1);
		
		
		
		\end{pgfonlayer}
		

		\begin{pgfonlayer}{background}
		
		\draw (uv2) [e:small,->,bend right=10] to (uv3);
		\draw (uv3) [e:small,->] to (uv4);
		\draw (uv3) [e:small,->,bend right=55] to (uv5);
		\draw (uv4) [e:small,->,bend right=5] to (uv2);
		\draw (uv5) [e:small,->] to (uv4);
		
		\draw (uv6) [e:small,->,bend left=25] to (uv7);
		\draw (uv7) [e:small,->,bend left=25] to (uv6);
		
		\draw (uv9) [e:small,->,bend right=15] to (uv10);
		\draw (uv10) [e:small,->,bend right=55] to (uv8);
		
		\draw (ugb1) [line width=1.15pt,opacity=0.55,color=CornflowerBlue,closed,curve through={(ugb2) (ugb3) (ugb4) (ugb5) (ugb6) (ugb7) (ugb8) (ugb9) (ugb10) (ugb11) (ugb12) }] to (ugb13);
		\draw (gugb2) [line width=1.15pt,opacity=0.55,color=CornflowerBlue,densely dashed,cap=round] to (u4);
		\draw (gugb9) [line width=1.15pt,opacity=0.55,color=CornflowerBlue,densely dashed,cap=round] to (u4);
		
		\draw (ugy2) [line width=1.15pt,opacity=0.55,color=Amber!90!black,closed,curve through={(ugy3) (ugy4) (ugy5) (ugy6)}] to (ugy1);
		\draw (guy4) [line width=1.15pt,opacity=0.55,color=Amber!90!black,densely dashed,cap=round] to (u1);
		\draw (guy6) [line width=1.15pt,opacity=0.55,color=Amber!90!black,densely dashed,cap=round] to (u1);
		
		\draw (ugp2) [line width=1.15pt,opacity=0.55,color=magenta,closed,curve through={(ugp3) (ugp4) (ugp5) (ugp6)}] to (ugp1);
		\draw (gup1) [line width=1.15pt,opacity=0.55,color=magenta,densely dashed,cap=round] to (u3);
		\draw (gup2) [line width=1.15pt,opacity=0.55,color=magenta,densely dashed,cap=round] to (u3);
		
		\draw (ugg1) [line width=1.15pt,opacity=0.55,color=BritishRacingGreen,closed,curve through={(ugg2) (ugg3) (ugg4) (ugg6) (ugg7) (ugg8) (ugg9) (ugg10) (ugg11) (ugg12) (ugg13)}] to (ugg14);
		\draw (gug1) [line width=1.15pt,opacity=0.55,color=BritishRacingGreen,densely dashed,cap=round] to (u2);
		\draw (gug2) [line width=1.15pt,opacity=0.55,color=BritishRacingGreen,densely dashed,cap=round] to (u2);
		
		\end{pgfonlayer}	
		
		\begin{pgfonlayer}{foreground}

		\end{pgfonlayer}
		\end{tikzpicture}
	\end{center}
	\caption{A digraph $D$ together with a decomposition into modules (left) and the corresponding module-digraph (right) together with the modules $M_i$ represented by vertices $v_i$.}
	\label{fig:moduledecomp}
\end{figure}

While the directed modular width can increase when taking subdigraphs, it is well-behaved with respect to induced subdigraphs. More precisely, we have the following statement.
\begin{fact} \label{inducedmonotonicity}
	Let $D$ be a digraph, and let $D'$ be an induced subdigraph of $D$. Then $$\dmw(D') \leq \dmw(D).$$
\end{fact}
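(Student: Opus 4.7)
The plan is to prove the statement by induction on $|V(D)|$. If $|V(D)| \le \dmw(D)$, then $|V(D')| \le |V(D)| \le \dmw(D)$ and the bound follows directly from the first clause of the definition. Otherwise, fix an optimal modular partition $V(D) = M_1 \cup \cdots \cup M_\ell$ with $\ell \in \{2,\ldots,\dmw(D)\}$ guaranteed by the second clause, and consider the traces $M_i' \coloneqq M_i \cap V(D')$.

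The first key step is to verify that each non-empty $M_i'$ is a module of $D'$. This is immediate from the definition: for any $u_1, u_2 \in M_i'$ and any $v \in V(D') \setminus M_i'$, the vertex $v$ lies in $V(D) \setminus M_i$ (since $v \in V(D')$ but $v \notin M_i'$ forces $v \notin M_i$), so the module property of $M_i$ in $D$ gives identical in- and out-relations between $v$ and the $u_j$; these relations are inherited by the induced subdigraph $D'$. Collecting the non-empty traces yields a partition of $V(D')$ into $\ell' \le \ell$ modules of $D'$.

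The second step is the recursive application. If $\ell' \ge 2$, then $\ell' \in \{2,\ldots,\dmw(D)\}$ and the induction hypothesis, applied to each induced subdigraph $D'[M_i'] = D[M_i][M_i']$ of $D[M_i]$, gives $\dmw(D'[M_i']) \le \dmw(D[M_i]) \le \dmw(D)$, so the second clause of the definition certifies $\dmw(D') \le \dmw(D)$. If $\ell' = 1$, then $V(D') \subseteq M_i$ for some $i$, and $D'$ is an induced subdigraph of $D[M_i]$ with strictly fewer vertices than $D$, so the induction hypothesis directly yields $\dmw(D') \le \dmw(D[M_i]) \le \dmw(D)$. The case $\ell'=0$ is vacuous.

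I do not anticipate a serious obstacle here; the only mildly delicate point is the degenerate case $\ell' = 1$, which must be handled separately because the recursive clause of the definition requires at least two parts. Everything else is bookkeeping around the fact that modularity is preserved under restriction to a vertex subset.
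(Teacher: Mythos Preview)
Your proof is correct and follows essentially the same inductive argument as the paper: restrict an optimal module partition of $D$ to $V(D')$ and apply the induction hypothesis to each $D'[M_i']$ as an induced subdigraph of $D[M_i]$. You are in fact slightly more careful than the paper, which does not explicitly discuss the possibility that some traces $M_i \cap V(D')$ are empty or that only one trace survives; your separate treatment of the case $\ell'=1$ (reducing directly to $D[M_i]$, which has strictly fewer vertices since $\ell\ge 2$) cleanly fills that small gap.
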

\begin{proof}
	We prove the statement by induction on the number of vertices of $D$. The statement clearly holds true when $|V(D)|=1$, so suppose for the inductive step that $|V(D)|=n \ge 2$, and that the statement is true for all digraphs on less than $n$ vertices (and for all their induced subdigraphs).
	
	Let $D'=D[X]$ with $X \subseteq V(D)$ be a given induced subdigraph of $D$, and let $\omega\coloneqq \dmw(D), \omega'\coloneqq \dmw(D')$. If $|V(D)| \leq \omega$, then clearly, we also have $\omega' \leq |V(D')| \leq |V(D)| \leq \omega$, proving the claim. Otherwise, let $M_1,\ldots,M_\ell$ denote a partition of $V(D)$ into modules such that $2 \leq \ell \leq \omega$ and such that $\dmw(D[M_i]) \leq \omega$ for all $i \in [\ell]$. For every $i \in [\ell]$, $D'[X \cap M_i]=D[X \cap M_i]$ is an induced subdigraph of $D[M_i]$, and since $|V(D[M_i])|<|V(D)|$, the induction hypothesis tells us that $\dmw(D'[X \cap M_i]) \leq \dmw(D[M_i]) \leq \omega$ for all $i \in [\ell]$. Clearly, $X \cap M_i$ defines a module in $D'$ for each $i \in [\ell]$. Because $\ell \leq \omega$, the definition of directed modular width now implies
	$$\omega'=\dmw(D') \leq \max \Set{\ell,\dmw(D'[X \cap M_1]),\ldots,\dmw(D'[X \cap M_\ell])} \leq \omega,$$ which yields the claim also in this case.
\end{proof}
Given a digraph $D$ and a partition $M_1,\ldots,M_\ell$ of $V(D)$ into modules, we will frequently use $D_M$ to denote the \emph{module-digraph} of $D$ corresponding to the module-decomposition $\Set{M_1,\ldots,M_\ell}$: $D_M$ is obtained from $D$ by identifying $M_i, i \in [\ell]$ each into a single vertex $v_i \in V(D_M)$ and deleting parallel directed edges afterwards. Equivalently, an edge $(v_i,v_j)$ lies in $E(D_M)$ if and only if in $D$, there is at least one directed edge starting in $M_i$ and ending in $M_j$. Due to the modular property, this is equivalent to the fact that $(u,w) \in E(D)$ for all $u \in M_i, w \in M_j$.
For an example of a module-digraph see \cref{fig:moduledecomp}.

Throughout the paper, given a module-decomposition $M_1,\ldots,M_\ell$ of a directed graph $D$, we will denote by $\eta:V(D) \rightarrow V(D_M)$ the mapping defined by $\eta(z)\coloneqq v_k$ for all $z \in M_k$.

\paragraph{Classes of Bounded Directed Modular Width}

In the introduction we mentioned a result of Ganian et.\@ al.\@ \cite{ganian2016there} stating that an algorithmically powerful directed graph width measure cannot be closed under subgraphs or topological butterfly minors.
In the light of \Cref{inducedmonotonicity} we want to make the argument, that digraphs of bounded modular width still can be interesting from a structural point of view.
To do this we introduce a class of digraphs that recently got some attention: \emph{directed co-graphs}.

Directed co-graphs are defined recursively via three operations as seen in \cite{crespelle2006fully}.
In the following let $D_1,\dots,D_k$ be $k$ pairwise disjoint digraphs.
\begin{itemize}
\item The \emph{disjoint union} of $D_1,\dots D_k$, denoted by $D_1\oplus\dots\oplus D_k$, is the digraph with vertex set $\bigcup_{i=1}^k\Fkt{V}{D_i}$ and edge set $\bigcup_{i=1}^k\Fkt{E}{D_i}$.
\item The \emph{series composition} of $D_1,\dots,D_k$, denoted by $D_1\otimes\dots\otimes D_k$, is the digraph obtained by their disjoined union together with all possible edges between $D_i$ and $D_j$ for all $1\leq i<j\leq k$.
\item The \emph{order composition} of $D_1,\dots,D_k$, denoted by $D_1\oslash\dots\oslash D_k$, is the digraph obtained by their disjoined union together with all possible edges with tail in $D_i$ and head in $D_j$ for all $1\leq i<j\leq k$.
\end{itemize}

\begin{definition}
The class of directed co-graphs is recursively defined as follows.
\begin{enumerate}
	\item Every digraph consisting of a single vertex is a directed co-graph.
	\item If $D_1,\dots,D_k$ are directed co-graphs, then their disjoint union, their series composition and their order composition are directed co-graphs.
\end{enumerate}
\end{definition}

From a structural point of view the class of directed co-graphs is interesting because it can be characterised by a finite set of forbidden induced subgraph \cite{crespelle2006fully}.
In particular the set of forbidden induced subgraphs is closed under taking the complement and thus the complement of a directed co-graph is again a directed co-graph.
Also algorithmically directed co-graphs have seen some attention as they belong to the class of graphs of directed clique-width at most $2$ \cite{gurski2016directed} and their directed pathwidth as well as directed treewidth, though unbounded, can be computed in polynomial time \cite{gurski2018directed}.
Additionally, for every $k$ there exists a polynomial time algorithm for the \emph{weak $k$-disjoint path} problem \cite{bang2014arc} which is a relaxed version of the $k$-disjoined path problem we will discuss in \cref{vddp}.

Many of the above mentioned algorithms for directed co-graphs already utilise a module decomposition. If seen in this light, the results of our paper are generalisations of the algorithmic results already known especially for co-graphs.
For the sake of completeness, we provide a proof that directed co-graphs are exactly the digraphs of directed modular width at most $2$.

\begin{theorem}\label{thm:cographs}
Let $D$ be a digraph, then $\Fkt{\dmw}{D}\leq 2$ if and only if $D$ is a directed co-graph.
\end{theorem}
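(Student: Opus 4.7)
I would prove both implications simultaneously by induction on $|V(D)|$, exploiting the fact that a partition into $\ell=2$ modules on the one side and a binary composition on the other side carry exactly the same information about the edges crossing the partition.

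\emph{Base case.} If $|V(D)|=1$, then $D$ is by definition a directed co-graph, and clearly $\dmw(D)=1\le 2$. If $|V(D)|=2$ with vertices $u,v$, then there are exactly four possibilities for the edges between $u$ and $v$: no edge, one edge in each direction, or exactly one arc (in either direction). The trivial partition $M_1=\{u\}, M_2=\{v\}$ into modules gives $\dmw(D)\le 2$, while the four cases correspond respectively to $\{u\}\oplus\{v\}$, $\{u\}\otimes\{v\}$, $\{u\}\oslash\{v\}$, or $\{v\}\oslash\{u\}$, so $D$ is a directed co-graph.

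\emph{Forward direction, $\dmw(D)\le 2\Rightarrow D$ is a co-graph.} Assume $|V(D)|\ge 3$. By the definition of directed modular width, there is a partition $V(D)=M_1\cup M_2$ into two modules with $\dmw(D[M_i])\le 2$. By induction, $D[M_1]$ and $D[M_2]$ are directed co-graphs. The module property forces the set of arcs from $M_1$ to $M_2$ to be either empty or complete, and similarly for the reverse direction. Distinguishing the four cases as in the base case, $D$ is $D[M_1]\oplus D[M_2]$, $D[M_1]\otimes D[M_2]$, $D[M_1]\oslash D[M_2]$, or $D[M_2]\oslash D[M_1]$, and is therefore a directed co-graph.

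\emph{Backward direction, $D$ a co-graph $\Rightarrow\dmw(D)\le 2$.} Suppose $D=D_1\circ\cdots\circ D_k$ for some $k\ge 2$, some operation $\circ\in\{\oplus,\otimes,\oslash\}$, and directed co-graphs $D_1,\dots,D_k$. Set $M_1\coloneqq V(D_1)$ and $M_2\coloneqq V(D_2)\cup\dots\cup V(D_k)$. Since all three composition operations assign the same edge pattern between $D_1$ and every other $D_j$ (all, none, or a single fixed direction), $M_1$ and $M_2$ are modules of $D$. Moreover $D[M_2]=D_2\circ\cdots\circ D_k$ is again a directed co-graph, and by induction both $D[M_1]$ and $D[M_2]$ have directed modular width at most $2$. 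Hence $\dmw(D)\le 2$.

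\emph{Main obstacle.} The only point requiring a small amount of care is the backward direction, namely verifying that grouping a consecutive block of $D_i$'s (and not an arbitrary subset) yields a module; for $\oplus$ and $\otimes$ this is immediate, but for $\oslash$ one must use the fact that the first block $V(D_1)$ and its complement $V(D_2)\cup\dots\cup V(D_k)$ are separated consecutively in the linear order, which guarantees the required uniform in- and out-neighbourhoods. With that observation fixed, the induction closes cleanly.
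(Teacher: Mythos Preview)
Your proof is correct and follows essentially the same approach as the paper. The only cosmetic difference is in the backward direction: the paper splits off the last factor (taking $D'=D_1\circ\cdots\circ D_{k-1}$ and $D_k$ as the two modules), whereas you split off the first factor ($D_1$ versus $D_2\circ\cdots\circ D_k$); both choices work for the same reason and lead to the same induction.
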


\begin{proof}
Every digraph on at most $2$ vertices has directed modular width at most $2$ and also is a directed co-graph.

Now let $D$ be a directed co-graph on more than $2$ vertices and let $D=D_1\circ\dots\circ D_k$ for $k\geq 2$ and $\circ\in\Set{\oplus,\otimes,\oslash}$  where $D_i$ is a directed co-graph for every $i\in[k]$.
Then $D'\coloneqq D_1\circ\dots\circ D_{k-1}$ is also a directed co-graph and $\Abs{\Fkt{V}{D'}}<\Abs{\Fkt{V}{D}}$ as well as $\Abs{\Fkt{V}{D_k}}<\Abs{\Fkt{V}{D}}$.
By induction now both $D_k$ and $D'$ are graphs of directed modular width at most $2$.
Moreover ${\Fkt{V}{D'}, \Fkt{V}{D_k}}$ is a partition of $\Fkt{V}{D}$ into two modules, both inducing subgraphs of directed modular width at most $2$.
Hence $D$ has directed modular width at most $2$.

For the reverse direction let $D$ be a digraph of directed modular width at most $2$ on at least $3$ vertices and let $\Set{M_1,M_2}$ be a decomposition of $\Fkt{V}{D}$ into non-empty modules such that $\Fkt{\dmw}{\InducedSubgraph{D}{M_i}}\leq 2$ for $i\in[2]$.
Since both modules are non-empty we may assume by induction that $D_i\coloneqq\InducedSubgraph{D}{M_i}$ is a directed co-graph for both $i$.
With $M_1$ and $M_2$ being modules of $D$, $D$ must be one of the following four graphs:
$D_1\oplus D_2$, $D_1\otimes D_2$, $D_1\oslash D_2$, or $D_2\oslash D_1$.
In all four cases $D$ can be obtained from $D_1$ and $D_2$ by one of the operations used in the definition of directed co-graphs and thus $D$ is a directed co-graph.
\end{proof}

\paragraph{Directed Modular Width and Directed Clique-Width}

When studying width parameters one usually is interested in their comparability to other parameters.
In this paragraph we discuss a bit about the place of directed modular width within the hierarchy of directed width measures.
In the undirected case, clique-width is a lower bound on the modular width and our goal is to establish the same relation for the two directed versions of these parameters.

For a digraph $D$ and a function $\operatorname{lab}\colon\Fkt{V}{D}\rightarrow\Set{1,\dots,k}$, the triple $\Brace{\Fkt{V}{D},\Fkt{E}{D},\operatorname{lab}}$ is called a \emph{$k$-labelled digraph}.
The function $\operatorname{lab}$ is called a \emph{labelling} of $D$, and for each $v\in\Fkt{V}{D}$, $\Fkt{\operatorname{lab}}{v}$ is called its \emph{label}.

\begin{definition}
For a positive integer, the class $CLW_k$ of $k$-labelled digraphs is recursively defined as follows.
\begin{enumerate}
	
	\item The digraph on a single vertex $v$ with label $i\in[k]$ is in $CLW_k$.
	
	\item Let $D_1=\Brace{V_1,E_1,\operatorname{lab}_1}\in CLW_k$ and $D_2=\Brace{V_2,E_2,\operatorname{lab}_2}\in CLW_k$ be two $k$-labelled digraphs on disjoint vertex sets.
	Let $D_1\oplus D_2$ be the disjoint union of $\Brace{V_1,E_1}$ with $\Brace{V_2,E_2}$ together with the labelling $\operatorname{lab}$ such that for all $v\in V_1\cup V_2$:
	\begin{align*}
		\Fkt{\operatorname{lab}}{v}\coloneqq\TwoCases{\Fkt{\operatorname{lab}_1}{v}}{v\in V_1}{\Fkt{\operatorname{lab}_2}{v}}{v\in V_2.}
	\end{align*}
	Then $D_1\oplus D_2\in CLW_k$.
	
	\item Let $D=\Brace{V,E,\operatorname{lab}}\in CLW_k$ be a $k$-labelled digraph and $i,j\in[k]$ be two distinct integers.
	Let $\Fkt{\rho_{i\rightarrow j}}{D}=\Brace{V,E,\operatorname{lab}'}$ where
	\begin{align*}
		\Fkt{\operatorname{lab}'}{v}\coloneqq\TwoCases{\Fkt{\operatorname{lab}}{v}}{\Fkt{\operatorname{lab}}{v}\neq i}{j}{\Fkt{\operatorname{lab}}{v}=i.}
	\end{align*}
	For every $v\in V$.
	Then $\Fkt{\rho_{i\rightarrow j}}{D}\in CLW_k$.
	
	\item Let $D=\Brace{V,E,\operatorname{lab}}\in CLW_k$ be a $k$-labelled digraph, and $i,j\in[k]$ be two distinct integers.
Let $\Fkt{\alpha_{i,j}}{D}$ be the digraph with labelling $\operatorname{lab}$ obtained from $D$ by adding all edges $\Brace{a,b}$ where $\Fkt{\operatorname{lab}}{a}=i$ and $\Fkt{\operatorname{lab}}{b}=j$.
Then $\Fkt{\alpha_{i,j}}{D}\in CLW_k$.
	
\end{enumerate}
The \emph{directed clique width} of a digraph $D$, denoted by $\Fkt{\operatorname{dcw}}{D}$, is the minimum integer $k$ such that there is a $k$-labelling $\operatorname{lab}$ of $D$ where $\Brace{\Fkt{V}{D},\Fkt{E}{D},\operatorname{lab}}\in CLW_k$.
\emph{Directed clique-width $k$-expressions} are expressions which recursively construct a digraph with the four operations defined above.
\end{definition}

Directed acyclic graphs can have arbitrary high directed clique width and thus there are classes of digraphs where directed treewidth and related parameters are bounded, but the directed clique width is not.
Moreover, bidirected complete graphs, i.e.\@ undirected complete graphs where we replace every undirected edge by a digon, are directed co-graphs and thus have bounded directed clique-width, however the directed treewidth and its cousins are unbounded on these graphs.
Hence directed treewidth and directed clique-width are incomparable.
However, it is straight forward to construct a directed clique width $\omega$-expression for digraphs of directed modular width $\omega$, which implies that directed clique-width acts as a lower bound on the directed modular width, just as in the undirected case.
For the sake of completeness we present a proof of this claim.

\begin{theorem}\label{thm:clwleqdmw}
Let $D$ be a directed graph, then $\Fkt{\operatorname{dcw}}{D}\leq\Fkt{\dmw}{D}$.
\end{theorem}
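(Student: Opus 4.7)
The plan is to prove the inequality by induction on $|V(D)|$, showing that for $k := \dmw(D)$ one can explicitly exhibit a directed clique-width $k$-expression that evaluates to $D$ under some $k$-labelling. The heart of the construction is to mirror the recursive structure of the modular decomposition: build each module separately by induction, collapse its internal labels so that a whole module becomes identifiable by one reserved label, and then reinstate the cross-module edges in bulk using the $\alpha_{i,j}$ operations, guided by the module-digraph $D_M$.

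For the base case, when $|V(D)| \leq k$, I would enumerate $V(D) = \{u_1,\ldots,u_n\}$ with $n \leq k$, start from the singleton $k$-labelled digraphs giving label $i$ to $u_i$, combine them by iterated $\oplus$, and then for each arc $(u_i,u_j) \in E(D)$ append an $\alpha_{i,j}$ operation; since labels are pairwise distinct, each such operation adds exactly the one desired arc. For the inductive step, fix a partition $V(D) = M_1 \cup \cdots \cup M_\ell$ into modules with $2 \leq \ell \leq k$ and $\dmw(D[M_i]) \leq k$ for every $i$. By induction each $D[M_i]$ is produced by some $k$-expression $\tau_i$. I would then compose $\tau_i$ with the sequence of relabellings $\rho_{j \to i}$ for all $j \in [k] \setminus \{i\}$, obtaining an expression $\tau_i'$ whose evaluation is $D[M_i]$ with \emph{every} vertex carrying the single label $i$. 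Taking $\tau_1' \oplus \cdots \oplus \tau_\ell'$ yields the disjoint union of the $D[M_i]$ with $M_i$ uniformly labelled $i$. Finally, for each arc $(v_i,v_j) \in E(D_M)$ I would append an $\alpha_{i,j}$ operation; by the modular property, this adds \emph{precisely} the full set of arcs from $M_i$ to $M_j$ in $D$. Iterating over all arcs of $D_M$ reconstructs $D$, and the whole expression uses only labels from $[\ell] \subseteq [k]$, establishing $\operatorname{dcw}(D) \leq k$.

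The main obstacle, or at least the step that requires the most care, is the unification of labels inside each module: after the recursive call the expression $\tau_i$ uses arbitrarily many of the $k$ labels internally, and one must be sure that collapsing all of them to the single label $i$ via $\rho$ does no harm. This is harmless precisely because, at the moment one performs the collapse, the induced digraph $D[M_i]$ has already been fully constructed, so no further intra-$M_i$ edge additions are needed; from that point on, label $i$ is used only as a bulk identifier to fire the correct $\alpha_{i,j}$ operations dictated by $D_M$, and the defining property of a module guarantees that such a bulk addition faithfully realises the inter-module adjacencies of $D$ without introducing spurious edges.
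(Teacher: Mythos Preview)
Your proposal is correct and takes essentially the same approach as the paper's proof: induct on the modular decomposition, build each $D[M_i]$ recursively with a $k$-expression, relabel all of $M_i$ to the single label $i$ via iterated $\rho_{j\to i}$, take the disjoint union, and then fire $\alpha_{i,j}$ for each arc of $D_M$. Your write-up is in fact slightly more careful than the paper's, as you explicitly separate the base case $|V(D)|\leq k$ and correctly allow $\ell\leq k$ modules rather than exactly $k$.
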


\begin{proof}
	First of all note that any digraph $D$  on $\omega$ vertices has directed clique width at most $\omega$, since we may assign a unique label to every vertex of $D$.
	
	Now let $D$ be a directed graph of directed modular width $\omega$ and let $\mathcal{M}=\Set{M_1,\dots,M_{\omega}}$ be a partitioning of $\Fkt{V}{D}$ into modules such that $\Fkt{\dmw}{\InducedSubgraph{D}{M_i}}\leq\omega$ for all $i\in[\omega]$.
	By induction we may assume that $\Fkt{\operatorname{dcw}}{\InducedSubgraph{D}{M_i}}\leq\omega$ for all $i\in[\omega]$ holds as well.
	We now show that we can construct $D$ from the digraphs induced by its modules using the operations from the definition of directed clique width.
	Suppose for every $i\in[\omega]$ that $D_i\coloneqq\Brace{M_i,\Fkt{E}{\InducedSubgraph{D}{M_i}},\operatorname{lab}_i}$ is a $\omega$-labelled digraph where $\operatorname{lab}_i\colon M_i\rightarrow[\omega]$.
	First for every $i\in[\omega]$ and every $j\in[\omega]\setminus\Set{i}$ apply the $\rho_{j\rightarrow i}$-operation to $D_i$ and denote the result in which all vertices of $D_i$ have label $i$ by $D_i'$.
	Then let $D'\coloneqq D_1'\oplus\dots\oplus D_{\omega}'$.
	Now consider the module digraph $D_M$ corresponding to $\mathcal{M}$.
	For every $\Brace{v_i,v_j}\in\Fkt{E}{D_M}$, $i,j\in[\omega]$ apply the $\alpha_{i,j}$ operation to $D'$.
	After all of those operations have been applied, since $\mathcal{M}$ is a module decomposition for $D$, we have constructed $D$ from the $D_i$ and thus $\Fkt{\operatorname{clw}}{D}\leq\omega$.
\end{proof}

\paragraph{Computing a non-trivial module-decomposition of a digraph.}
The most important tool, which is involved in all the algorithms proposed in this paper, is the ability to find a non-trivial decomposition of the vertex set of a given digraph into modules, in polynomial time. In fact, this task can be executed in a much stronger form. In \cite{moduledecomposition}, it was shown that a so-called \emph{canonical module-decomposition} of a given digraph can be obtained in linear time. For us, the following weaker form of their result will be sufficient.
\begin{theorem}[\cite{moduledecomposition}] \label{computedecomposition}
	There is an algorithm that, given a digraph $D$ on at least two vertices as input, returns a decomposition of $V(D)$ into $\ell \in \{2,\ldots,\dmw(D)\}$ modules. This algorithm runs in time $\mathcal{O}(n+m)$, where $n\coloneqq |V(D)|$ and $m\coloneqq |E(D)|$. 
\end{theorem}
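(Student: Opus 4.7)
The plan is to reduce to the linear-time modular decomposition algorithm of \cite{moduledecomposition}, which returns the canonical modular decomposition tree $T_D$ of $D$ in time $\mathcal{O}(n+m)$. The root of $T_D$ naturally induces a partition $\{M_1,\ldots,M_p\}$ of $V(D)$ into maximal strong modules, with $p \ge 2$ whenever $|V(D)| \ge 2$. The algorithm I would output simply runs this decomposition and returns a partition read off from the root, possibly after a coarsening step.

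Verifying the bound $\ell \le \dmw(D)$ splits into two cases according to the label of the root. If the root is degenerate (i.e., $D$ is a parallel, series, or order composition of $D[M_1],\ldots,D[M_p]$), then any non-trivial union of blocks $M_i$ is again a module of $D$. So, if $p \le \dmw(D)$, we output $\{M_1,\ldots,M_p\}$ directly; otherwise we arbitrarily merge the $M_i$'s into two blocks and output a partition with $\ell=2$, which is valid because $\dmw(D) \ge 2$ whenever $|V(D)| \ge 2$.

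If the root is prime, the claim is that $p \le \dmw(D)$. We may assume $|V(D)| > \dmw(D)$ (otherwise $\ell = |V(D)|$ with singletons works). Pick a partition $\{N_1,\ldots,N_\ell\}$ witnessing $\dmw(D)$ with $\ell \in \{2,\ldots,\dmw(D)\}$. The key structural observation, coming from the primality of the quotient digraph $D_M$ on $\{M_1,\ldots,M_p\}$, is that every proper module $N \subsetneq V(D)$ is contained in some $M_i$: a non-trivial union $\bigcup_{i \in I} M_i$ with $1 < |I| < p$ would project to a non-trivial module of the prime quotient, which is impossible. Hence each $N_j$ is contained in some $M_{f(j)}$, and since the $N_j$'s cover $V(D) = \bigcup_i M_i$, each $M_i$ is a union of some $N_j$'s, forcing $\ell \ge p$. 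This yields $p \le \dmw(D)$, and the root partition itself is the desired output.

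The main obstacle I anticipate is the structural transfer argument in the prime case, namely proving cleanly that under a prime root no module of $D$ can be a non-trivial union of maximal strong modules; this relies on the classical correspondence between modules of $D$ that respect the top-level partition and modules of the quotient $D_M$. Beyond this, everything else is bookkeeping and a direct invocation of the cited linear-time modular decomposition algorithm.
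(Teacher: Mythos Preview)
The paper does not prove this statement; it is stated as a direct consequence of the linear-time canonical modular decomposition algorithm of \cite{moduledecomposition}, with no further argument. Your derivation of the stated weaker form from that stronger result is essentially correct, and the structural argument in the prime case is the right one: since strong modules do not overlap any module, every proper module of $D$ is either contained in some $M_i$ or equal to a union $\bigcup_{i\in I} M_i$, and primality of the quotient rules out $1<|I|<p$, so any witnessing partition for $\dmw(D)$ refines the root partition and $p \le \dmw(D)$.

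There is one small slip in the degenerate case. You write ``if $p \le \dmw(D)$, we output $\{M_1,\ldots,M_p\}$ directly; otherwise we merge into two blocks,'' but the algorithm has no access to $\dmw(D)$ and therefore cannot branch on this test. The fix is immediate: at a degenerate root simply always output the two-block partition $\{M_1,\, M_2 \cup \cdots \cup M_p\}$, which is a valid pair of modules since any union of the $M_i$ is a module in this case, and $\ell=2 \le \dmw(D)$ holds automatically because $|V(D)| \ge 2$. With this adjustment your algorithm is well-defined, runs in linear time, and the analysis goes through.
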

To do the runtime-analysis of our algorithms, we will often use a rooted model-tree $T$ which resembles the structure of recursive calls in our algorithms. Every vertex $q \in V(T)$ has either no children or at least two. It furthermore admits a labelling of its vertices of the following kind: 

The root of the tree is labelled with a finite ground set $\Omega$ (in our case the vertex set of the considered digraph). Every other vertex $q \in V(T)$ is labelled with a subset $\emptyset \neq \Omega(q) \subseteq \Omega$, and for every branching vertex, the associated subset is the disjoint union of the subsets associated to its children. Finally, the leafs of the tree are labelled with the singletons $\{v\}, v \in \Omega$. A tree which admits a labelling of this type will be called a \emph{decomposition tree}. 
\begin{fact} \label{dectree}
	If $T$ is a decomposition tree with ground-set $\Omega$, then $|V(T)| \leq 2|\Omega|-1$.
\end{fact}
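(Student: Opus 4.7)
The plan is to count the leaves and internal vertices of $T$ separately and then combine the bounds.

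First I would observe that the leaves of $T$ are in bijection with the elements of $\Omega$. Indeed, the leaves carry precisely the singleton labels $\{v\}$, $v \in \Omega$, and since at every branching vertex the label is the disjoint union of the labels of the children, an easy induction from the root downwards shows that each element of $\Omega$ appears in exactly one leaf-label. Hence the number $L$ of leaves satisfies $L = |\Omega|$.

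Next I would bound the number $I$ of non-leaf (internal) vertices. Every non-root vertex of $T$ is the child of exactly one internal vertex, so summing the number of children over all internal vertices gives $|V(T)| - 1 = L + I - 1$. On the other hand, by the assumption that every vertex of $T$ either has no children or at least two, each internal vertex contributes at least $2$ to this sum, yielding $2I \leq L + I - 1$, that is, $I \leq L - 1 = |\Omega| - 1$.

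Combining the two bounds gives $|V(T)| = L + I \leq |\Omega| + (|\Omega| - 1) = 2|\Omega| - 1$, as required. The only step that requires any care is the verification that leaves correspond bijectively to elements of $\Omega$; everything else is an elementary handshake-style counting argument on rooted trees with no degree-one branching, so I do not expect real obstacles here.
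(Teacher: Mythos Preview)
Your proof is correct but follows a genuinely different route from the paper. The paper first reduces to the binary case: it replaces every branching vertex with $b \ge 3$ children by a small binary caterpillar that splits off the children one at a time, observes that this operation can only increase $|V(T)|$, and then invokes the standard fact that a full binary tree with $|\Omega|$ leaves has exactly $2|\Omega|-1$ vertices. You instead stay with the original tree and run a direct handshake count: the number of parent--child pairs is $|V(T)|-1$ on the one hand and at least $2I$ on the other (since every internal vertex has at least two children), which forces $I \le L-1$. Your argument is a bit more elementary and avoids building an auxiliary tree; the paper's reduction has the minor advantage of making the equality case (full binary trees) explicit.
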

\begin{proof}
	First note that we can reduce to the case where $T$ is a rooted binary tree: If there is a branching vertex $q \in V(T)$ with $b \ge 3$ children $q_1,\ldots, q_b$, we can locally replace this branching by a binary tree with $b$ leafs, where instead of directly splitting $\Omega(q)$ into $\Omega(q_1),\ldots,\Omega(q_b)$, we first split-off $\Omega(q_1)$, then $\Omega(q_2)$, and so on. Clearly, successive application of this operation to every branching with more than two children yields a binary decomposition-tree $T'$ with ground set $\Omega$ and $|V(T')| \ge |V(T)|$. 
	
	Now if $T$ is a binary-tree, because the leafs of $T$ are labelled by the singletons of $\Omega$, $T$ has $|\Omega|$ leafs and therefore $2|\Omega|-1$ vertices.
\end{proof}
\paragraph{Integer programming with bounded number of variables.}
In the design of our algorithms, we frequently reduce the treated algorithmic problem to the same problem on an input digraph with a bounded number of vertices, but possibly equipped with additional information (such as weightings) of polynomial-size in the original input. In many cases, we will then make use of an integer program reformulation of the problem, in which we have a bounded number of constraints and variables, but possibly entries in the input matrices and vectors of polynomial size. We will therefore make use of the following powerful tool from the theory of Integer Programming, which shows that the feasibility of a given ILP can be decided in polynomial time when using the number of variables $p$ as a parameter.

\begin{theorem}[\cite{ipsolving}, Theorem 1]\label{ipfeasibility}
	There exists an algorithm that, given as input a matrix $A \in \mathbb{Z}^{n \times p}$ and a vector $b \in \mathbb{Z}^n$, decides whether there is a feasible solution to 
	$$Ax \geq b, x \in \mathbb{Z}^p$$ (and returns a solution if applicable) in time $\mathcal{O}(p^{2.5p+o(p)}L)$ where $L$ denotes the coding length of the input $(A,b)$. 
\end{theorem}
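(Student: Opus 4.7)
The plan is to follow the classical approach of Lenstra's algorithm for integer programming in fixed dimension, incorporating modern improvements on the flatness constant and lattice basis reduction to obtain the tighter $p^{2.5p+o(p)}$ running time. The overall strategy is a recursion on the dimension $p$: at each level we either conclude feasibility directly or identify a ``flat'' integer direction along which the feasible region splits into a bounded number of lower-dimensional slices.

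First I would solve the linear programming relaxation $\{x \in \mathbb{R}^p : Ax \geq b\}$ in time polynomial in $L$; if this is infeasible we are done. Otherwise, let $P$ denote the resulting rational polytope. The central tool is the Flatness Theorem: if $P$ contains no integer point, then there exists a nonzero $v \in \mathbb{Z}^p$ such that the $v$-width $\max_{x \in P}\langle v,x\rangle - \min_{x \in P}\langle v,x\rangle$ is bounded by a dimension-only function $f(p)$. To locate such a direction efficiently, I would apply a basis-reduction procedure (LLL combined with the recent improvements underlying $p^{2.5p+o(p)}$-time bounds on the flatness constant) to the lattice $\mathbb{Z}^p$ viewed inside the John ellipsoid of $P$. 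This either yields a short lattice vector that directly lies in $P$ (giving a feasible solution) or a vector $v$ witnessing small $P$-width.

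Second, once a flat direction $v$ with integer $v$-width at most $w \leq f(p)$ is found, every integer point of $P$ must lie on one of the hyperplanes $\{x : \langle v, x\rangle = t\}$ for at most $w+1$ integer values of $t$. I would substitute each such hyperplane into the system, obtaining $w+1$ integer programs in dimension $p-1$, and recurse on each of them. The recursion has depth $p$ and branching factor at most $f(p)+1$, and each step performs lattice-reduction and LP work that is polynomial in $L$ and in the current dimension. Multiplying the per-node cost by the total number of recursion nodes gives the claimed bound $\mathcal{O}(p^{2.5p+o(p)}L)$, provided $f(p)$ is controlled by the best known flatness estimate.

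The main obstacle is clearly the tightness of the bound: getting down to $p^{2.5p+o(p)}$ rather than the loose $p^{O(p)}$ or $2^{\mathcal{O}(p^3)}$ of the original Lenstra/Kannan analyses requires the sharpest available upper bounds on the flatness constant, together with a basis-reduction algorithm whose guarantees match those bounds; both ingredients are nontrivial and were obtained only in recent work. Everything else (LP relaxation, hyperplane substitution, bookkeeping of coding length after substitution, and returning an explicit witness rather than just a yes/no answer) is routine once the flatness step is in place.
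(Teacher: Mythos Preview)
The paper does not prove this statement at all: it is quoted verbatim from \cite{ipsolving} and used as a black box throughout. So there is nothing to compare your proposal against within this paper.

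That said, your outline is a reasonable high-level description of the Lenstra--Kannan approach that underlies the cited result. The specific exponent $2.5p+o(p)$ is due to Kannan's refinement via improved bounds in the geometry of numbers (shortest-vector and flatness estimates), and your sketch correctly identifies those as the bottleneck. If you actually wanted to prove the theorem rather than cite it, you would need to work through those sharp bounds carefully; but for the purposes of this paper, the theorem is simply imported, and no proof is expected or given.
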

Solving an ILP can be easily reduced to checking the feasibility of several ILP-s using binary search. 
\begin{corollary}[\cite{ipsolving}, Theorem 12]\label{ipsolving}
	There exists an algorithm that, given as input a matrix $A \in \mathbb{Z}^{n \times p}$, vectors $c \in \mathbb{Z}^{p}, b \in \mathbb{Z}^n$, and some $U_1, U_2 \in \mathbb{Z}_+$, tests feasibility and if applicable outputs an optimal solution of the ILP
	\begin{align}
		\min c^Tx\\
		\text{  subj.\ to }&
		Ax \geq b, x \in \mathbb{Z}^p
	\end{align}
	in time $\mathcal{O}(p^{2.5p+o(p)}L \log (U_1U_2))$ where $L$ denotes the coding length of the input $(A,b,c)$. Here we assume that the optimal value of the program lies within $[-U_1,U_1]$ and that $U_2$ is an upper bound on the largest absolute value any entry in an optimal solution vector can take.
\end{corollary}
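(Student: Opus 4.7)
The plan is to reduce the optimisation version of the ILP to a logarithmic number of feasibility calls of the algorithm from \Cref{ipfeasibility}, combining binary search on the objective value with coordinate-wise binary search to recover an explicit optimal solution vector.

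First, I would perform an initial feasibility check on the system $Ax\geq b$ using \Cref{ipfeasibility}. If infeasible, report infeasibility. Otherwise, to determine the optimal objective value $t^{\ast}\in[-U_1,U_1]$, I would perform binary search: for each candidate threshold $t$, augment the system by the additional linear constraint $-c^Tx\geq -t$ (equivalently $c^Tx\leq t$) and invoke \Cref{ipfeasibility} on the augmented system, which still has $p$ variables. Standard binary search over $t\in[-U_1,U_1]$ terminates after $\mathcal{O}(\log U_1)$ feasibility calls and pinpoints $t^{\ast}$.

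Having identified $t^{\ast}$, to retrieve an explicit optimal solution vector $x^{\ast}\in\mathbb{Z}^p$ whose entries lie in $[-U_2,U_2]$ (by hypothesis), I would proceed coordinate by coordinate. Fix the optimality constraint $c^Tx=t^{\ast}$ (i.e.\ adjoin both $c^Tx\geq t^{\ast}$ and $-c^Tx\geq -t^{\ast}$); the feasible set of the resulting system is exactly the set of optimal solutions. Then for each coordinate $x_i$ in turn, binary-search its value $v_i\in[-U_2,U_2]$ by calling the feasibility oracle on the system augmented with a constraint of the form $-x_i\geq -v$ and updating the search interval according to the oracle's answer. Once $v_i$ is pinned down, freeze it by appending $x_i\geq v_i$ and $-x_i\geq -v_i$ before moving on to $x_{i+1}$; note that by the definition of $v_i$ as the smallest value for which $x_i\leq v_i$ is feasible, every solution of the current system already satisfies $x_i\geq v_i$, so freezing preserves feasibility. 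This phase uses $\mathcal{O}(p\log U_2)$ feasibility calls.

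The total number of oracle calls is thus $\mathcal{O}(\log U_1+p\log U_2)=\mathcal{O}(p\log(U_1U_2))$, and each call operates on a system in $p$ variables whose coding length remains $\mathcal{O}(L)$, since at every point only a constant number of additional constraints involving $c$, a current threshold $t$, or single-coordinate bounds have been appended, each of coding length polynomial in $L$. Plugging into the per-call runtime $\mathcal{O}(p^{2.5p+o(p)}L)$ from \Cref{ipfeasibility} and absorbing the single extra factor of $p$ into the $p^{o(p)}$ term yields the claimed bound $\mathcal{O}(p^{2.5p+o(p)}L\log(U_1U_2))$. The only mild subtlety is the bookkeeping of coding length through successively augmented systems and verifying that the polynomial overhead from the coordinate-recovery phase is absorbed cleanly into the stated exponent; no new algorithmic ingredient beyond the feasibility oracle and binary search is required.
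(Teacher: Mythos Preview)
Your proposal is correct and matches the approach the paper indicates: the paper does not give a detailed proof but simply remarks that ``Solving an ILP can be easily reduced to checking the feasibility of several ILP-s using binary search'' and cites the result, which is precisely what you carry out via binary search on the objective value followed by coordinate-wise recovery.
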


\section{Strategy}\label{sec:strategy}
Most of the FPT-algorithms presented in the following sections are based on a common general strategy, which shall be outlined in the following. 
\begin{itemize}
	\item In most cases, we consider a well-chosen generalisation of the original problem we want to solve. This often involves additional inputs, such as integer weights or capacities on the vertices. Although such a generalisation is not always necessary, it often acts naturally within the context of module-decompositions. Our methods allow us to also handle these more general settings.
	\item We derive auxiliary theoretical results, that deal with a given module-decomposition of a digraph and describe how the studied parameters or objects which shall be computed on the whole digraph interact with corresponding objects on the digraphs induced by the modules as well as the module-digraph. These theoretical results are at the core of the construction of these algorithms.
	\item We construct an algorithm that, given solutions to the considered problem on the modules and the module-digraph, constructs a solution to the problem for the whole digraph in polynomial time.
	\item To solve the problem on the module-digraph, we make use of the fact that for bounded directed modular width, we can bound the number of vertices of the module-digraph by a constant. We then reformulate the problem on the module-digraph as an Integer Linear Program, which has bounded number of variables. However, the additional inputs such as weights on the vertices may still have polynomial size. We then make use of \cref{ipsolving} to solve the problem on the module-digraph in FPT-time.
	\item Now we recurse until we end up solving the problem on digraphs consisting of single vertices. Because in each step, we further decompose an induced subdigraph into modules, the size of the recursion-tree, using Fact \ref{dectree}, has linear size in the number of vertices of the input digraph. Because the module-decompositions can be computed in polynomial time in each step, we manage to prove an upper bound on the run-time of the form $\mathcal{O}(f(\omega)p(n)q(\log \tau))$, where $f$ is some function, $p$ and $q$ are polynomials, $\omega$ denotes the directed modular width of the input digraph, $n$ the number of vertices of the input digraph, and $\tau$ bounds the additional information carried in the input (for instance an upper bound on the sum of the weights distributed on the digraph).
\end{itemize}

\section{Feedback Vertex Set}\label{sec:FVS}
In this section, we deal with the famous \emph{Feedback Vertex Set Problem} on directed graphs. A \emph{feedback vertex set} in a digraph $D$ is a subset $F \subseteq V(D)$ of vertices that meets every directed cycle. Equivalently, $D-F$ is an acyclic digraph. The size of a smallest feedback vertex set in $D$, denoted by $\fv(D)$ in the following. Feedback vertex sets play an important role in parametrised algorithmics. Especially for problems which are tractable on acyclic digraphs (resp. forests in the undirected case), small feedback vertex sets often allow for fast parametrisations of NP-hard problems. From a structural point of view, feedback vertex sets are also important, as $\fv(D)$ acts as an upper bound for many important parameters such as the directed treewidth or the cycle-rank. 
It is therefore desirable to find the size of a smallest or a sufficiently small feedback vertex set.  

\ProblemDefLabelled{Minimum Feedback Vertex Set}{FVS}{prob:FVS}
{A digraph $D$.}
{What is the value of $\fv(D)$? Find a feedback vertex set $F \subseteq V(D)$ with $|F|=\fv(D)$.}

The decision version of the above problem was among Karp's famous list of 21 NP-complete problems \cite{Karp1972}. It is known to be NP-complete even for restricted classes such as planar digraphs of maximum in- and out-degree $3$ \cite{garey}. 

On the positive side, the directed feedback vertex set problem is fixed-parameter tractable with respect to $\fv(D)$ itself. In fact, after a period of research concerning parametrisations of the problem on tournaments (see for example \cite{nieder}), it was shown in 2008 by Chen et al. \cite{chen2008fixed} that for digraphs $D$ with $\fv(D) \leq k$, a feedback vertex set of minimal size can be computed in time $\mathcal{O}((1.48k)^k\mathcal{O}(n^c))$, where $n\coloneqq |V(D)|$ and $c$ is some constant. This has motivated research towards a polynomial-size kernel for the problem, see for instance \cite{polynomialkernelFV}. 
Finally, it is known that the directed minimum feedback vertex set problem can be parametrised by the treewidth of the underlying graph \cite{bonamy}. However, the same is not known when using the directed treewidth of the digraph as a parameter instead. 

In the following, we present a simple FPT-algorithm for this problem with fixed parameter $\omega\coloneqq \dmw(D)$. In fact, the algorithm we propose recursively solves the following weighted generalisation of \ref{prob:FVS}. In order to keep control over the running time, we use an additional threshold-parameter $\tau \in \mathbb{N}$ as part of the input, which upper bounds the total weight distributed on the vertices. For a digraph $D$, a vertex-set $X \subseteq V(D)$ and a vertex-weighting $w:V(D) \rightarrow \mathbb{N}_0$, let us define $w(X)\coloneqq \sum_{z \in X}{w(z)}$. 

\ProblemDefLabelled{Minimum Weight Feedback Vertex Set}{wFVS}{prob:wFVS}
{A digraph $D$, a non-negative integer weighting $w:V(D) \rightarrow \mathbb{N}_0$ of the vertices, and a threshold $\tau \in \mathbb{N}$ such that $\sum_{z \in V(D)}{w(z)} \leq \tau$. }
{Find a feedback vertex set $F \subseteq V(D)$ of $D$ that minimises $w(F)$.}

The goal of this section is to prove the following. 
\begin{theorem} \label{FPTWeightedFVS}
	There exists an algorithm that given a digraph $D$, a vertex-weighting $w:V(D) \rightarrow \mathbb{N}_0$, and a corresponding bound $\tau \in \mathbb{N}$ as input, outputs a feedback vertex set of $D$ with minimum total weight. The algorithm runs in time $\mathcal{O}(n^3+\omega^2 2^\omega n^2\log \tau)$, where $n\coloneqq |V(D)|$ and $\omega\coloneqq \dmw(D)$. 
\end{theorem}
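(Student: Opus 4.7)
The plan is to follow the paper's recursive strategy: decompose $D$ via Theorem \ref{computedecomposition}, solve recursively on each module, and reduce the combination step to an easy problem on the much smaller module-digraph $D_M$. The heart of the argument is a structural characterization: if $M_1,\ldots,M_\ell$ is a module decomposition of $D$, then $F\subseteq V(D)$ is a feedback vertex set of $D$ if and only if \textbf{(a)} $F\cap M_i$ is a feedback vertex set of $D[M_i]$ for every $i\in[\ell]$, and \textbf{(b)} the sub-module-digraph $D_M[\{\eta(z):z\in V(D)\setminus F\}]$ is acyclic. The forward direction uses the module property: any cycle $v_{i_1}\cdots v_{i_k}v_{i_1}$ in $D_M$ restricted to modules still containing a surviving vertex lifts to a cycle $u_1\cdots u_k u_1$ in $D-F$ by picking $u_j\in M_{i_j}\setminus F$, since $(v_{i_j},v_{i_{j+1}})\in E(D_M)$ means every pair $(u,w)\in M_{i_j}\times M_{i_{j+1}}$ is an edge. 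Conversely, any cycle of $D-F$ is either captured entirely inside a single $D[M_i]$ (ruled out by (a)) or its trace across modules yields a closed walk in $D_M$ on modules that survive in $V(D)\setminus F$, hence a cycle there (ruled out by (b)).

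A crucial auxiliary observation is that the minimum weight $\alpha_i$ of a feedback vertex set of $D[M_i]$ is always attained by some realizer $F_i^*\subsetneq V(M_i)$ that leaves at least one vertex behind: if a candidate FVS covered all of $M_i$, dropping the heaviest vertex would leave a single vertex (which carries no cycle) and would not increase the total weight. Combining this with the structural lemma, solving \ref{prob:wFVS} on $D$ reduces to choosing $S\subseteq [\ell]$ with $D_M[S]$ acyclic that minimizes
$$\sum_{i\in S}\alpha_i\;+\;\sum_{i\notin S}w(M_i),$$
since for $i\in S$ we reuse the recursively computed $F_i^*$, while for $i\notin S$ we simply include all of $M_i$ in the feedback vertex set. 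Equivalently, $T\coloneqq[\ell]\setminus S$ is a feedback vertex set of $D_M$ minimizing $\sum_{i\in T}(w(M_i)-\alpha_i)$, which is a weighted FVS instance on a digraph of at most $\omega$ vertices with non-negative integer weights bounded by $\tau$.

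The algorithm is then straightforward: at each recursive node we invoke Theorem \ref{computedecomposition} to obtain a non-trivial module decomposition (the base case $|V(D)|=1$ returns $\emptyset$, since the digraph has no loops), recurse on each $D[M_i]$ with induced weights, and solve the reduced problem on $D_M$. The latter can be formulated as an ILP with $\omega$ binary variables $x_i$ (indicating $i\in T$), a linear objective with coefficients of coding length $O(\log\tau)$, and constraints $\sum_{v_i\in V(C)}x_i\ge 1$ for each cycle $C$ of $D_M$; an application of Corollary \ref{ipsolving} resolves this in time $\mathcal{O}(\omega^{2.5\omega+o(\omega)}\,\mathrm{poly}(\omega)\log\tau)$. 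Fact \ref{dectree} bounds the number of recursive calls by $2n-1$; the per-node module decomposition costs $\mathcal{O}(n+m)\le\mathcal{O}(n^2)$ and the bookkeeping needed to assemble the returned realizer $F$ contributes the remaining polynomial factors, yielding the claimed bound $\mathcal{O}(n^3+\omega^2 2^\omega n^2\log\tau)$.

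The main obstacle is the structural lemma and its clean use: the forward direction depends delicately on the module property enabling the lifting of cycles, and the observation that we may restrict to realizers leaving at least one vertex behind is what makes the reduction to a problem purely on $D_M$ possible without any loss. Once this is in place, the rest is standard recursion and an invocation of the ILP machinery from Section \ref{sec:strategy}.
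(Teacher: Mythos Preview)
Your structural characterization is equivalent to the paper's \cref{modulefvsets} (your condition (b) that $D_M$ restricted to the surviving modules is acyclic is the same as saying $\{v_i : M_i \subseteq F\}$ is a feedback vertex set of $D_M$), and your recursive reduction with weights $w(M_i)-\alpha_i$ on $D_M$ is exactly \cref{mfvrec}. So the core argument is correct and matches the paper.

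The gap is in the running-time claim. You solve the weighted FVS instance on $D_M$ via an ILP with $\omega$ binary variables and one constraint per directed cycle of $D_M$, invoking \cref{ipsolving}, and you state this costs $\mathcal{O}(\omega^{2.5\omega+o(\omega)}\,\mathrm{poly}(\omega)\log\tau)$. But $\omega^{2.5\omega+o(\omega)}=2^{\Theta(\omega\log\omega)}$, which is asymptotically much larger than the $\omega^2 2^\omega$ factor appearing in the theorem; you cannot then conclude the stated bound $\mathcal{O}(n^3+\omega^2 2^\omega n^2\log\tau)$ from your per-node analysis. (There is also a secondary issue: the number of simple directed cycles in $D_M$ can be of order $(\omega-1)!$, so the coding length $L$ of your constraint system may itself be super-exponential unless you restrict to chordless cycles.)

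The paper avoids this by not using an ILP at all for this step: since $|V(D_M)|\leq\omega$, it simply enumerates all $2^\omega$ subsets of $V(D_M)$, checks acyclicity of the complement in $\mathcal{O}(\omega^2)$ time, and evaluates the weight in $\mathcal{O}(\omega\log\tau)$ time, for a total of $\mathcal{O}(\omega^2 2^\omega\log\tau)$ per node (\cref{obvious1}). Swapping your ILP call for this brute-force enumeration closes the gap and recovers the claimed bound.
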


Setting $w(z)\coloneqq 1$ for all $z \in V(D)$, we see that \ref{prob:FVS} is a special case of \ref{prob:wFVS}, where we can put $\tau\coloneqq n$.

\begin{corollary}
	There exists an algorithm that, given as input a digraph $D$, outputs $\fv(D)$ and a feedback vertex set $F$ of minimum size in time $\mathcal{O}(n^3+\omega^2 2^\omega n^2\log n)$, where $n\coloneqq |V(D)|$ and $\omega\coloneqq \dmw(D)$. 
\end{corollary}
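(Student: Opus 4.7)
The plan is to derive the corollary as an immediate special case of \cref{FPTWeightedFVS}. Given the input digraph $D$, I would set the vertex-weighting to be the constant function $w(z)\coloneqq 1$ for every $z\in V(D)$, and take the threshold $\tau\coloneqq n=|V(D)|$. This is a legitimate instance of \ref{prob:wFVS}, since $\sum_{z\in V(D)}w(z)=n=\tau$, so the promised bound $\sum_z w(z)\leq\tau$ is satisfied.

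For any subset $F\subseteq V(D)$, the weighted objective becomes $w(F)=|F|$. Hence a feedback vertex set $F$ minimising $w(F)$ under this weighting is the same as a feedback vertex set of minimum cardinality, and the minimum value of $w(F)$ equals $\fv(D)$. Therefore, running the algorithm from \cref{FPTWeightedFVS} on the instance $(D,w,\tau)$ returns both $\fv(D)$ and a feedback vertex set of minimum size, as required.

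It remains to verify the claimed runtime. The algorithm from \cref{FPTWeightedFVS} runs in time $\mathcal{O}(n^3+\omega^2 2^\omega n^2\log\tau)$, and substituting $\tau=n$ yields the bound $\mathcal{O}(n^3+\omega^2 2^\omega n^2\log n)$, matching the corollary's claim. There is no obstacle of substance here; the only thing to check is that building the unit weighting and setting $\tau=n$ takes $\mathcal{O}(n)$ time, which is absorbed by the $\mathcal{O}(n^3)$ term already present in the runtime bound. This completes the reduction of the corollary to the previously established weighted theorem.
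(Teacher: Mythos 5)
Your proposal is correct and matches the paper's own derivation exactly: the corollary is obtained from \cref{FPTWeightedFVS} by taking the unit weighting $w\equiv 1$ and $\tau=n$, so that minimum weight equals minimum cardinality and $\log\tau=\log n$ in the runtime bound. Nothing further is needed.
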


We prepare the proof of \cref{FPTWeightedFVS} with some auxiliary statements. 
\begin{lemma}
	Let $D$ be a digraph equipped with a vertex-weighting $w:V(D) \rightarrow \mathbb{N}_0$ and a partition $M_1,\ldots,M_\ell$ of the vertex set into modules. Let $D_M$ denote the corresponding module-digraph with vertex set $\{v_1,\ldots,v_\ell\}$. 
	
	Then for any set $F \subseteq V(D)$, the following statements are equivalent:
	\begin{itemize} \label{modulefvsets}
		\item $F$ is a feedback vertex set for $D$.
		\item $F \cap M_i$ is a feedback vertex set for $D[M_i]$ for all $i \in [\ell]$, and 
		$$F_M\coloneqq \CondSet{v_i}{i \in [\ell], M_i \subseteq F}$$ is a feedback vertex set for $D_M$.
	\end{itemize}
\end{lemma}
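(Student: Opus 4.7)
The plan is to prove the equivalence by establishing both directions via a lifting argument between cycles in $D$ and cycles in $D_M$, exploiting the fact that modules are fully adjacent to each other in a uniform way.

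For the forward direction, assume $F$ is a feedback vertex set of $D$. Every directed cycle in $D[M_i]$ is also a directed cycle in $D$, hence meets $F$; since the cycle lives in $M_i$, it meets $F \cap M_i$, yielding (a). For (b), I will argue by contradiction: suppose $C$ is a directed cycle in $D_M - F_M$ traversing vertices $v_{i_1}, v_{i_2}, \ldots, v_{i_k}, v_{i_1}$. By definition of $F_M$, every $M_{i_j}$ contains at least one vertex $u_j \in M_{i_j} \setminus F$. Since $(v_{i_j}, v_{i_{j+1}}) \in E(D_M)$, the modular property guarantees that every vertex of $M_{i_j}$ is adjacent to every vertex of $M_{i_{j+1}}$ in $D$, so $(u_j, u_{j+1}) \in E(D)$. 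Hence $u_1, u_2, \ldots, u_k, u_1$ forms a directed cycle in $D - F$, contradicting that $F$ is a feedback vertex set.

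For the reverse direction, assume (a) and (b) hold and, seeking a contradiction, let $C$ be a directed cycle in $D - F$. If all vertices of $C$ belong to a single module $M_i$, then $C$ is a cycle in $D[M_i] - (F \cap M_i)$, contradicting (a). Otherwise, consider the cyclic sequence of modules visited by $C$; by deleting consecutive repetitions, obtain a cyclic sequence $M_{j_1}, M_{j_2}, \ldots, M_{j_r}$ with $r \geq 2$ and $j_s \neq j_{s+1}$ (indices modulo $r$). Since $C$ contains an edge from $M_{j_s}$ to $M_{j_{s+1}}$, the edge $(v_{j_s}, v_{j_{s+1}})$ belongs to $E(D_M)$, so this sequence is a closed directed walk in $D_M$. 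Every such walk contains a directed cycle $C'$. Each vertex appearing on this walk is some $v_{j_s}$ with $M_{j_s}$ containing a vertex of $C$, which is outside $F$; thus $M_{j_s} \not\subseteq F$ and $v_{j_s} \notin F_M$. Therefore $C'$ is a cycle in $D_M - F_M$, contradicting (b).

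The main subtlety — and the only step requiring real care — is the reverse direction: one must be careful that the closed walk derived from $C$ actually gives rise to a genuine directed cycle in $D_M$ (not merely a closed walk), and that this cycle is truly disjoint from $F_M$. Both points are handled by observing that extracting any directed cycle from the closed walk preserves the property that each traversed module contains a vertex of $C$, hence is not fully contained in $F$. Once this is clear, the forward direction is essentially routine thanks to the full bipartite adjacency between consecutive modules guaranteed by the modular property.
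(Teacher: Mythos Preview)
Your proof is correct and follows essentially the same approach as the paper: both directions are handled by lifting cycles from $D_M$ to $D$ (forward) and projecting cycles from $D$ to a closed walk in $D_M$ and extracting a cycle (reverse), using the full bipartite adjacency guaranteed by the modular property. The only cosmetic difference is that the paper phrases the reverse direction directly (take an arbitrary cycle in $D$ and show it meets $F$) rather than by contradiction, but the underlying argument is identical.
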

\begin{proof} 
	$\Longrightarrow$ Assume that $F$ is a feedback vertex set of $D$. Because $F$ intersects $V(C)$ for every directed cycle in $D$, $F \cap M_i$ intersects every directed cycle in $D[M_i]$ for each $i \in [\ell]$ and therefore defines a feedback vertex set. To prove that $F_M$ defines a feedback vertex set in $D_M$, assume towards a contradiction that there was a directed cycle $C$ with vertex sequence $v_{i_1}, v_{i_2}, \ldots,v_{i_m}=v_{i_1}$ in $D_M$ such that $V(C) \cap F_M=\emptyset$. By definition, this means that for every module $M_{i_r}$ with $1 \leq r \leq m-1$ we find a vertex $z_r \in M_{i_r}$ with $z \notin F$. However, by the definition of the module-digraph and the properties of the modules, we have $(z_{i_r},z_{i_{r+1}}) \in E(D)$ for all  $r \in [m-1]$. This implies that the cyclic vertex sequence $z_{i_1},z_{i_2},\ldots,z_{i_m}=z_{i_1}$ defines a directed cycle in $D$ whose vertex set is disjoint from $F$. This contradicts the fact that $F$ is a feedback vertex set. Therefore $F_M$ is a feedback vertex set of $D_M$ as claimed.
	
	$\Longleftarrow$ Assume that $F \cap M_i$ is a feedback vertex set in $D$ for all $i \in [\ell]$ and that $F_M$ is one for $D_M$. Now let $C$ be an arbitrary directed cycle in $D$ with vertex-sequence $u_1,u_2,\ldots,u_m=u_1$. We must show that $V(C) \cap F \neq \emptyset$. If it is completely contained in some module $M_i$, it must contain a vertex of $F \cap M_i \subseteq F$ and we are done. 
	
	Otherwise, consider the cyclical sequence $\eta(u_1),\eta(u_2),\ldots,\eta(u_m)=\eta(u_1)$ of vertices in $D_M$. For each $r \in [m-1]$, we either have $\eta(u_r)=\eta(u_{r+1})$ or $(\eta(u_r), \eta(u_{r+1})) \in E(D_M)$. Therefore, deleting all consecutive multiple occurrences of identical vertices from the sequence, we obtain a closed directed walk in $D_M$ which visits at least two different vertices, and therefore contains the vertex set of a directed cycle $C_M$ in $D_M$. Because $F_M$ was assumed to be a feedback vertex set, we conclude that there is an $r \in [m]$ and some $i \in [\ell]$ such that $v_i=\eta(u_r) \in V(C_M) \cap F_M$. Hence, $u_r \in M_i \subseteq F$, and thus $u_r \in V(C) \cap F$. Since $C$ contains a vertex from $F$, the claim follows.
\end{proof}
From the above we can easily reduce \ref{prob:FVS} on the digraph $D$ to corresponding instances for the digraphs $D[M_1],\ldots,D[M_\ell]$ and $D_M$. For any vertex-weighted digraph $(D,w)$, let us denote by $\fv(D,w)$ the minimum weight of a feedback vertex set for $D$.
\begin{lemma} \label{mfvrec}
	Let $(D,w)$ with $w:V(D) \rightarrow \mathbb{N}_0$ be a vertex-weighted digraph, and let $\Set{M_1,\ldots,M_\ell}$ be a module-partition of $V(D)$. Let $w_M:V(D_M) \rightarrow \mathbb{N}_0$ be defined according to 
	$$w_M(v_i)\coloneqq w(M_i)-\fv(D[M_i],w|_{M_i})$$ for any vertex $v_i$ corresponding to module $M_i$, $i \in [\ell]$. Let $F_M$ be a feedback vertex set in $D_M$ of minimum weight with respect to $w_M$, and for each $i \in [\ell]$, let $F_i$ be a feedback vertex set in $D[M_i]$ of minimum weight with respect to $w|_{M_i}$. Then
	$$F\coloneqq \left(\bigcup_{\substack{i \in [\ell],\cr
			v_i \in F_M}}{M_i}\right) \cup \left(\bigcup_{\substack{i \in [\ell],\cr
			v_i \notin F_M}}{F_i}\right)$$
	
	defines a feedback vertex set in $D$ of minimum weight with respect to $w$, namely
	$$\fv(D,w)=\fv(D_M,w_M)+\sum_{i=1}^{\ell}{\fv(D[M_i],w|_{M_i})}.$$
\end{lemma}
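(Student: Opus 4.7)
The plan is to combine the structural characterisation from the previous lemma with matching upper and lower bounds on the optimal weight, with the key observation being that the definition of $w_M$ is precisely the reparametrisation that makes these bounds coincide.

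First, I would verify that $F$ is indeed a feedback vertex set of $D$ by checking the two conditions of the previous lemma. For each $i \in [\ell]$, the intersection $F \cap M_i$ equals either $M_i$ (when $v_i \in F_M$) or $F_i$ (when $v_i \notin F_M$), and both of these are feedback vertex sets of $D[M_i]$. The set $S \coloneqq \{v_i : M_i \subseteq F\}$ contains $F_M$ as a subset, hence it is a superset of a feedback vertex set of $D_M$ and therefore itself a feedback vertex set of $D_M$. Note also that $w_M$ is a well-defined non-negative integer weighting, since $M_i$ itself is always a trivial feedback vertex set of $D[M_i]$, so $\fv(D[M_i],w|_{M_i}) \leq w(M_i)$.

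Next, I would compute $w(F)$ directly to obtain the claimed formula as an upper bound on $\fv(D,w)$. Splitting by modules and using the substitution $w(M_i) = w_M(v_i) + \fv(D[M_i],w|_{M_i})$ for $v_i \in F_M$, one obtains
$$w(F) = \sum_{v_i \in F_M} w(M_i) + \sum_{v_i \notin F_M} \fv(D[M_i],w|_{M_i}) = w_M(F_M) + \sum_{i=1}^{\ell} \fv(D[M_i],w|_{M_i}),$$
which equals $\fv(D_M,w_M) + \sum_{i=1}^{\ell} \fv(D[M_i],w|_{M_i})$ by optimality of $F_M$. For the matching lower bound, let $F'$ be any feedback vertex set of $D$ and set $S' \coloneqq \{v_i : M_i \subseteq F'\}$. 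By the previous lemma, $S'$ is a feedback vertex set of $D_M$ and each $F' \cap M_i$ is a feedback vertex set of $D[M_i]$, so the same substitution yields
$$w(F') = \sum_{v_i \in S'} w(M_i) + \sum_{v_i \notin S'} w(F' \cap M_i) \geq w_M(S') + \sum_{i=1}^{\ell} \fv(D[M_i],w|_{M_i}) \geq \fv(D_M,w_M) + \sum_{i=1}^{\ell} \fv(D[M_i],w|_{M_i}).$$

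The only subtlety I anticipate is that $F_i = M_i$ can occur for some $i$ with $v_i \notin F_M$, making $S$ strictly larger than $F_M$; this is harmless, since supersets of feedback vertex sets are still feedback vertex sets, and the weight calculation for $F$ given above is unaffected. Apart from this, the proof amounts to a direct bookkeeping check that hinges on the substitution provided by the definition of $w_M$.
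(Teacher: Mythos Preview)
Your proposal is correct and follows essentially the same approach as the paper's proof: verify $F$ is a feedback vertex set via \cref{modulefvsets}, compute $w(F)$ using the substitution $w(M_i)=w_M(v_i)+\fv(D[M_i],w|_{M_i})$, and derive the matching lower bound for an arbitrary feedback vertex set $F'$ via the same substitution. Your explicit handling of the possibility $F_i=M_i$ for some $v_i\notin F_M$ (so that $S\supsetneq F_M$) is a point the paper leaves implicit, but otherwise the arguments coincide.
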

\begin{proof}
	Using \cref{modulefvsets}, it is readily verified that $F$ defines a feedback vertex set of $D$. It is furthermore clear from the definition of $w_M$ that
	\begin{align*}
	w(F)=~&\sum_{\substack{i \in [\ell],\cr
			v_i \in F_M}}{w(M_i)} + \sum_{\substack{i \in [\ell],\cr
			v_i \notin F_M}}{w(F_i)}=\sum_{\substack{i \in [\ell],\cr
			v_i \in F_M}}{\left(w(M_i)-w(F_i)\right)}+\sum_{i=1}^{\ell}{w(F_i)}\\
		=~&w_M(F_M)+\sum_{i=1}^{\ell}{w(F_i)}=\fv(D_M,w_M)+\sum_{i=1}^{\ell}{\fv(D[M_i],w|_{M_i})}.
		\end{align*}
	It therefore remains to show that $w(F') \ge w(F)$ for any other feedback vertex set $F'$ of $D$. So let $F'$ be arbitrary and define $F_i'\coloneqq F' \cap M_i$ for all $i \in [\ell]$ as well as $F'_M\coloneqq \{v_i|i \in [\ell], M_i \subseteq F'\}$. By \cref{modulefvsets}, these are feedback vertex sets and we conclude that $w(F'_i) \ge \fv(D[M_i],w|_{M_i}), i \in [\ell]$ and $w_M(F'_M) \ge \fv(D_M,w_M)$. We therefore have
	\begin{align*}
	w(F') \ge~& \sum_{\substack{i \in [\ell],\cr
			v_i \in F'_M}}{w(M_i)}+\sum_{\substack{i \in [\ell],\cr
			v_i \notin F'_M}}{\fv(D[M_i],w|_{M_i})}\\
			=~&\sum_{\substack{i \in [\ell],\cr
			v_i \in F'_M}}{\left(w(M_i)-\fv(D[M_i],w|_{M_i})\right)}+\sum_{i=1}^{\ell}{\fv(D[M_i],w|_{M_i})}\\
	=~&w_M(F'_M)+\sum_{i=1}^{\ell}{\fv(D[M_i],w|_{M_i})}\\ \ge~& \fv(D_M,w_M)+\sum_{i=1}^{\ell}{\fv(D[M_i],w|_{M_i})}.
	\end{align*}
	This verifies the minimality of $F$ and proves the claim.
\end{proof}
In order to compute a minimum weight feedback vertex set on the module digraph, we need the following.
\begin{lemma} \label{obvious1}
	Given a digraph $D$ on at most $\omega$ vertices, a vertex-weighting $w:V(D) \rightarrow \mathbb{N}_0$ and some $\tau \in \mathbb{N}$ such that $\sum_{z \in V(D)}{w(z)} \leq \tau$, $\fv(D,w)$ and a feedback vertex set of minimum weight can be computed in time $\mathcal{O}(\omega^2 2^\omega\log \tau)$.
\end{lemma}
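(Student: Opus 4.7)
The plan is to use a direct brute-force enumeration, exploiting the fact that $|V(D)| \leq \omega$. The central observation I would use is that a subset $F \subseteq V(D)$ is a feedback vertex set of $D$ if and only if the induced subdigraph $D - F$ is acyclic. The algorithm therefore iterates over all $2^{|V(D)|} \leq 2^\omega$ subsets $F$ of $V(D)$, tests in each case whether $D - F$ is acyclic, and returns a subset minimising $w(F)$ among those that pass the test. I would maintain the currently best weight and corresponding witness during the enumeration, so that both $\fv(D,w)$ and a minimum-weight feedback vertex set are produced simultaneously.

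For the running time analysis, I would argue as follows. For a single candidate $F$, testing whether $D - F$ is acyclic can be done in $\mathcal{O}(\omega^2)$ time by a standard topological-sort routine, since $D - F$ has at most $\omega$ vertices and hence at most $\omega^2$ arcs. Computing $w(F) = \sum_{z \in F} w(z)$ requires at most $\omega$ additions of non-negative integers bounded by $\tau$, each of cost $\mathcal{O}(\log \tau)$, amounting to $\mathcal{O}(\omega \log \tau)$ bit operations. The per-candidate work is therefore $\mathcal{O}(\omega^2 + \omega \log \tau)$, and summing over the $2^\omega$ candidates gives a total running time of $\mathcal{O}\bigl(2^\omega(\omega^2 + \omega \log \tau)\bigr) = \mathcal{O}(\omega^2 2^\omega \log \tau)$, as claimed.

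There is no real obstacle to overcome here: the lemma is a convenience statement providing the base case for the recursion on the module-decomposition. I note in passing that one could alternatively obtain a qualitatively similar bound by formulating the problem as an ILP with one binary variable per vertex and one covering constraint per directed cycle of $D$, and then invoking \cref{ipsolving}; however, the direct enumeration is conceptually simpler and already yields the sharper runtime stated in the lemma, so I would prefer it.
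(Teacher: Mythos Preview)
Your proposal is correct and matches the paper's own proof essentially line for line: both enumerate all at most $2^\omega$ subsets, test acyclicity of $D-F$ in $\mathcal{O}(\omega^2)$ time, compute the weight in $\mathcal{O}(\omega\log\tau)$ bit operations, and conclude with the same total bound. Your side remark on the ILP alternative is not in the paper but is harmless.
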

\begin{proof}
	We use a simple brute-force approach. We first enumerate all at most $2^\omega$ feedback vertex sets of $D$. For this, we go trough the subsets $X \subseteq V(D)$ and test whether $D-X$ is acyclic, which can be done in time $\mathcal{O}(|E(D-X)|) \leq \mathcal{O}(\omega^2)$. For each feedback vertex set $F$, we compute $w(F)$ using at most $|V(F)|-1 \leq \omega$ arithmetic operations with pairs of numbers of size at most $\tau$. In the end, we select a feedback vertex set $F$ with minimum weight and output it. These steps in total require time at most $\mathcal{O}(2^\omega \omega^2+2^\omega \cdot \omega\log \tau+2^\omega \log \tau) \leq \mathcal{O}(\omega^2 2^\omega\log \tau)$.
\end{proof}
We are now ready to prove \cref{FPTWeightedFVS}. 
\begin{proof}[Proof of \cref{FPTWeightedFVS}.]
	Let $D$ be the input digraph with a vertex-weighting $w:V(D) \rightarrow \mathbb{N}_0$, an let $\tau$ be the given bound on the total weight. 
	
	If $V(D)=\{v\}$ for some vertex $v \in V(D)$, we simply return $F\coloneqq \emptyset$ and $\fv(D,w)=0$ as the solution to the problem. 
	
	Otherwise, if $|V(D)| \ge 2$, we apply the algorithm from \cref{computedecomposition} to $D$ and obtain a partition of $V(D)$ into modules $M_1,\ldots,M_\ell$, where $2 \leq \ell \leq \omega=\dmw(D)$. We next compute the module-digraph $D_M$ with vertex set $\{v_1,\ldots,v_\ell\}$. For each $i \in [\ell]$, we now recursively call the algorithm with instance $(D[M_i],w|_{M_i},\tau)$. Because of $$\sum_{z \in M_i}{w(z)} \leq \sum_{z \in V(D)}{w(z)} \leq \tau,$$ these are feasible instances. Let for each $i \in [\ell]$ $F_i$ be the minimum-weight feedback vertex set in $D[M_i]$ with respect to $w|_{M_i}$ obtained from the recursive call. We next compute for each $i \in [\ell]$ the weight
	$$w_M(v_i)\coloneqq w(M_i)-\fv(D[M_i],w|_{M_i})=\sum_{z \in M_i \setminus F_i}{w(z)}.$$
	Finally we apply the algorithm from \cref{obvious1} to the instance $(D_M,w_M,\tau)$ to obtain a minimum-weight feedback vertex set $F_M$ (note that $|V(D_M)|=l \leq \omega$). Again this instance is feasible, as we have
	$$\sum_{i=1}^{\ell}{w_M(v_i)} \leq \sum_{z \in V(D)}{w(z)} \leq \tau.$$ Finally we compute 
	$$F\coloneqq \left(\bigcup_{\substack{i \in [\ell],\cr
			v_i \in F_M}}{M_i}\right) \cup \left(\bigcup_{\substack{i \in [\ell],\cr
			v_i \notin F_M}}{F_i}\right)$$ and return $F$ as well as its weight. By \cref{mfvrec} we conclude that this indeed is a minimum-weight feedback vertex set for $D$ and that $\fv(D,w)=w(F)$.
	
	The running time of the described algorithm can be analysed using a rooted decomposition-tree $T$ which resembles the structure of the recursive calls appearing during the algorithm. The root of this tree corresponds to the input digraph $D$, while the remaining vertices correspond to all other digraphs appearing in a recursive call during the execution of the algorithm. Whenever we compute a module-decomposition $M_1',\ldots,M_s'$ of an induced subdigraph $D'$ of $D$ during a recursive call, the induced subdigraphs $D'[M_1'],\ldots,D'[M_s']$ correspond to the children of the vertex representing $D'$. Therefore, the leafs of this tree are labeled by the $n$ subdigraphs $D[\{v\}], v \in V(D)$ of $D$. Because $T$ is a decomposition tree with ground set $\Omega\coloneqq V(D)$, we conclude from \Cref{dectree} that $|V(T)| \leq 2n-1$. 
	
	Assume $|V(D)| \ge 2$ and let us first estimate the running time needed for the described computations corresponding to the root $D$, excluding the running time required by the recursive calls to $D[M_1],\ldots,D[M_\ell]$. We first have to apply the algorithm from \cref{computedecomposition}, which requires time $\mathcal{O}(|V(D)|+|E(D)|) \leq \mathcal{O}(n^2)$. Furthermore, we have to compute the weights $w_M(v_i)$ for all $i \in [\ell]$, the total time required here is bounded by $\mathcal{O}(\omega n \log \tau)$. Finally we apply the algorithm from \cref{obvious1} to obtain the minimum-weight FVS $F_M$ in time $\mathcal{O}(\omega^2  2^\omega \log \tau)$. Computing $F$ affords at most $\mathcal{O}(\omega n)$ elementary operations. In total, we conclude that these computations can be executed in time $\mathcal{O}(n^2+\omega^2 2^\omega n \log \tau)$.
	
	The same upper bound applies to any other branching vertex in the tree, as the corresponding induced subdigraph $D'$ of $D$ by Fact \ref{inducedmonotonicity} has less than $n$ vertices and directed modular width at most $\omega$ as well. After having reached a leaf of the tree, this branch of the algorithm terminates in constant time. Therefore the total run-time required for the execution of the algorithm can be upper bounded by
	$$\mathcal{O}(\underbrace{n}_{\text{leafs}}+\underbrace{|V(T)|(n^2+\omega^2  2^\omega n \log \tau)}_{\text{branching vertices}})=\mathcal{O}(n^3+\omega^2  2^\omega n^2\log \tau),$$ which proves the bound claimed in the Theorem.
\end{proof}

\section{Dominating Set}\label{sec:dominating}
For any digraph $D$ and a vertex subset $X \subseteq D$, let $$N_D^+(X)\coloneqq \CondSet{z \in V(D)}{\text{there exists}~ u \in X~\text{such that}~ (u,z) \in E(D)}$$ denote the \emph{out-neighbourhood} of $X$, and let $\coloneqq X \cup N_D^+(X)$ denote the \emph{closed out-neighbourhood}.

In the following, a vertex subset $X \subseteq V(D)$ shall be called \emph{(out-)dominating}, if $\InducedSubgraph{N_D^+}{X}=V(D)$, that is, every vertex in $V(D) \setminus X$ has an in-neighbour inside $X$. In this section, we deal with the well-known problem of finding a minimum dominating set in a given digraph. We denote by $\gamma(D)$ the \emph{directed domination number} of $D$, which is defined as the size of a smallest dominating vertex set. 

\ProblemDefLabelled{Minimum Dominating Set}{DS}{prob:DS}
{A digraph $D$.}
{Find $\gamma(D)$ and a dominating vertex set $X \subseteq V(D)$ with $|X|=\gamma(D)$.}

The minimum dominating set problem on digraphs (as well as its undirected counterpart) is a classical NP-complete problem \cite{garey}. It is known that the problem remains NP-hard on DAGs \cite{directedwidthmeasures}. We refer to \cite{dirdomsetprob} for some digraph-specific algorithms for this problem.

In this section, we construct an FPT-algorithm solving \ref{prob:DS} with respect to directed modular width as the fixed parameter. Again, we design an algorithm which solves the following more general weighted version of the problem.

\ProblemDefLabelled{Minimum Weight Dominating Set}{wDS}{prob:wDS}
{A digraph $D$, and non-negative integer weights $w:V(D) \rightarrow \mathbb{N}_0$ on the vertices, and a threshold $\tau \in \mathbb{N}$ such that $\sum_{z \in V(D)}{w(z)} \leq \tau$.}
{Find a dominating vertex set $X \subseteq V(D)$ of minimum weight $w(X)=:\gamma(D,w)$.}

\begin{theorem} \label{domsetweighted}
	There exists an algorithm that, given as instance a digraph $D$ with vertex-weights $w:V(D) \rightarrow \mathbb{N}_0$ and a corresponding bound $\tau$ on the total weight, computes $\gamma(D,w)$ and a dominating set in $D$ of minimum weight in time $\mathcal{O}(n^3+\omega 2^\omega n^2\log \tau)$, where $n\coloneqq |V(D)|$ and $\omega\coloneqq \dmw(D)$. 
\end{theorem}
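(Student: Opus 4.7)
The proof proceeds in parallel to that of \cref{FPTWeightedFVS}: I first prove a structural lemma describing how a dominating set of $D$ decomposes along a module partition, and then design a recursive algorithm that combines the recursive calls on the modules with a brute-force search on the module-digraph.

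\textbf{Structural lemma.} Let $M_1,\ldots,M_\ell$ be a partition of $V(D)$ into modules, with module-digraph $D_M$ on vertices $v_1,\ldots,v_\ell$. For any $X \subseteq V(D)$ write $Y_M \coloneqq \{v_i : X \cap M_i \neq \emptyset\}$ for the \emph{trace} of $X$. Because all vertices of $M_i$ share the same in-neighbours outside $M_i$, a vertex $u \in M_i$ is dominated by $X \setminus M_i$ in $D$ if and only if $v_i$ has an in-neighbour in $Y_M$ in $D_M$. From this equivalence one shows that $X$ dominates $D$ iff (i) $Y_M$ dominates $V(D_M)$ in $D_M$, and (ii) for every $v_i \in Y_M$ with no in-neighbour in $Y_M$ (in $D_M$), the set $X \cap M_i$ is a dominating set of $D[M_i]$. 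Setting $\alpha_i \coloneqq \min_{z \in M_i} w(z)$ and $\beta_i \coloneqq \gamma(D[M_i], w|_{M_i})$, the minimum weight of an $X$ with trace equal to a given dominating set $Y_M$ of $D_M$ is
\[
  \sum_{\substack{v_i \in Y_M \\ v_i \text{ has in-nbr in } Y_M}} \alpha_i \;+\; \sum_{\substack{v_i \in Y_M \\ v_i \text{ has no in-nbr in } Y_M}} \beta_i.
\]

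\textbf{Algorithm.} On input $(D,w,\tau)$ with $|V(D)|=1$, return the single vertex. Otherwise, compute a module partition $M_1,\ldots,M_\ell$ via \cref{computedecomposition}, recursively obtain minimum weight dominating sets $F_i$ of each $D[M_i]$ (and thus the values $\beta_i$), and compute each $\alpha_i$ by one pass over $M_i$. Then enumerate all $2^\ell \leq 2^\omega$ subsets $Y_M \subseteq V(D_M)$; for each one, check in $\mathcal{O}(\omega)$ bit-operations whether $Y_M$ dominates $D_M$, determine the subset $Z_M \subseteq Y_M$ of vertices with no in-neighbour in $Y_M$, and evaluate the cost formula above. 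Pick a cheapest feasible $Y_M^*$ with corresponding $Z_M^*$ and assemble
\[
  X \coloneqq \bigcup_{v_i \in Z_M^*} F_i \;\cup\; \bigcup_{v_i \in Y_M^* \setminus Z_M^*} \{z_i^*\},
\]
where $z_i^*$ is a minimum-weight vertex of $M_i$. Correctness follows from the structural lemma.

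\textbf{Runtime.} Use the decomposition-tree analysis from \cref{FPTWeightedFVS}: by \cref{dectree}, the recursion tree has $\mathcal{O}(n)$ nodes. At each node we spend $\mathcal{O}(n^2)$ for \cref{computedecomposition}, $\mathcal{O}(\omega n \log \tau)$ for computing $\alpha_i$ and combining the $F_i$, and $\mathcal{O}(\omega 2^\omega \log \tau)$ for the brute-force enumeration on $D_M$ (with bitmask representations making each domination test $\mathcal{O}(\omega)$ and each cost evaluation $\mathcal{O}(\omega \log \tau)$). Summing across the $\mathcal{O}(n)$ nodes yields the claimed bound $\mathcal{O}(n^3 + \omega 2^\omega n^2 \log \tau)$.

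\textbf{Main obstacle.} The subtle step is the $\alpha_i$-versus-$\beta_i$ dichotomy in the cost formula. The essential observation is that, by the module property, when $v_i \in Y_M$ has some in-neighbour in $Y_M$ inside $D_M$, a single arbitrarily chosen vertex of $M_i$ placed into $X$ already guarantees that all of $M_i$ is dominated (the rest of $M_i$ being dominated from outside), so we pay only $\alpha_i$; whereas if $v_i$ has no in-neighbour in $Y_M$, internal domination is forced and we must pay $\beta_i$. Because $\beta_i \ge \alpha_i$ (any dominating set of $D[M_i]$ contains a vertex of weight at least $\alpha_i$), the dichotomy is optimal; verifying this, together with the reconstruction of $X$ from the optimal $Y_M^*$, is where the module property is used critically.
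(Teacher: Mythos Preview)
Your proof is correct and follows essentially the same approach as the paper: a structural lemma characterising dominating sets via the trace on the module-digraph, brute-force enumeration over the at most $2^\omega$ subsets of $V(D_M)$, recursion into the modules, and the standard decomposition-tree runtime analysis. Your formulation is in fact slightly cleaner than the paper's: where the paper ``arbitrarily select[s] a vertex $x_i \in M_i$'' and writes the cost contribution as $|I_1|$, you correctly use $\alpha_i=\min_{z\in M_i} w(z)$, which is what the weighted problem actually requires.
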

Clearly, if we set all weights to $1$, we see that the Minimum Dominating Set Problem is a special case of its weighted version, where we can set $\tau\coloneqq |V(D)|$. We therefore obtain:
\begin{corollary}
	There exists an algorithm that, given as instance a digraph $D$ computes $\gamma(D)$ and a dominating set in $D$ of minimum weight in time $\mathcal{O}(n^3+\omega 2^\omega n^2\log n)$, where $n\coloneqq |V(D)|$ and $\omega\coloneqq \dmw(D)$. 
\end{corollary}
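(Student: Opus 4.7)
The plan is to follow the general strategy of \cref{sec:strategy}, in close analogy with the proof of \cref{FPTWeightedFVS}. The first step is a structural characterisation of dominating sets via modules, which I would state as a lemma: given a module-partition $M_1, \ldots, M_\ell$ of $V(D)$ with module-digraph $D_M$ on vertices $v_1, \ldots, v_\ell$, a subset $X \subseteq V(D)$ is a dominating set of $D$ if and only if, for every $i \in [\ell]$, \emph{either} $X \cap M_i$ is a dominating set of $D[M_i]$, \emph{or} there exists $j$ with $(v_j, v_i) \in E(D_M)$ and $M_j \cap X \neq \emptyset$. The non-trivial direction exploits the key module-property that $(v_j, v_i) \in E(D_M)$ forces $(u,w) \in E(D)$ for \emph{every} pair $(u,w) \in M_j \times M_i$, so that a single chosen vertex inside $M_j$ externally dominates every vertex of $M_i$.

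Using this characterisation, I would prove a recursion lemma analogous to \cref{mfvrec}, expressing $\gamma(D,w)$ in terms of the values $b_i \coloneqq \gamma(D[M_i], w|_{M_i})$ obtained from recursive calls on the modules together with the trivial quantities $a_i \coloneqq \min_{v \in M_i} w(v) \le b_i$. The picture is that each module falls, in an optimal solution, into one of three regimes: \emph{(0) inactive}, $X \cap M_i = \emptyset$ at cost $0$; \emph{(1) activated cheaply}, a single minimum-weight vertex is chosen at cost $a_i$; or \emph{(2) internally dominated}, $X \cap M_i$ is a minimum-weight dominating set of $D[M_i]$ at cost $b_i$. Regimes $(0)$ and $(1)$ are feasible only when $M_i$ is externally dominated by some activated module $M_j$ with $(v_j, v_i) \in E(D_M)$. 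A short case check handles the degenerate situation in which a minimum-weight single vertex of $M_i$ already dominates $D[M_i]$: in that case regime $(1)$ is not literally distinct from regime $(2)$, but $a_i = b_i$, so the characterisation goes through unchanged.

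The optimisation on the module-digraph $D_M$ then reduces to choosing an ``activation set'' $S \subseteq V(D_M)$ subject to the constraint that every $v_i \notin S$ has an in-neighbour in $S$, minimising
\begin{equation*}
    \sum_{v_i \in S,\, N_{D_M}^-(v_i) \cap S \neq \emptyset} a_i \;+\; \sum_{v_i \in S,\, N_{D_M}^-(v_i) \cap S = \emptyset} b_i.
\end{equation*}
Since $|V(D_M)| \le \omega$, this can be solved by brute-force enumeration over the $2^\omega$ candidate subsets $S$, in the spirit of \cref{obvious1}. Plugging everything into the recursive decomposition framework of \cref{sec:FVS}, using \cref{computedecomposition} to obtain a module decomposition at each level, \cref{inducedmonotonicity} to preserve the modular-width bound throughout the recursion, and \cref{dectree} to bound the resulting decomposition tree by $2n-1$ vertices, and repeating the per-branching bookkeeping analysis yields the claimed runtime $\mathcal{O}(n^3 + \omega 2^\omega n^2 \log \tau)$. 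The main conceptual obstacle compared to the feedback-vertex-set argument is the careful distinction between a module being \emph{activated} (which is what matters for externally dominating its out-neighbours in $D_M$) and being \emph{internally dominated} (which is what allows the module itself to forgo external domination); correctly setting up this three-regime description, and in particular verifying that the trivial quantities $a_i$ suffice for regime $(1)$, is the crux of the construction.
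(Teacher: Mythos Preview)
Your approach is correct and essentially identical to the paper's: your three regimes $(0),(1),(2)$ correspond exactly to the paper's partition of $V(D_M)$ into $V(D_M)\setminus X_M$, $I_1$, and $I_2$ in \cref{domsetsmodules} and \cref{optimumdomset}, and your brute-force enumeration over activation sets $S$ matches the enumeration over dominating subsets $Z\subseteq V(D_M)$ in the proof of \cref{domsetweighted}. Your structural characterisation (``each module is either internally dominated by $X\cap M_i$ or externally dominated via an activated in-neighbour module'') is a slightly more direct reformulation of \cref{domsetsmodules}, and your use of $a_i=\min_{v\in M_i}w(v)$ in regime $(1)$ is the correct choice for the weighted version; note also that the corollary itself is obtained in the paper simply by specialising \cref{domsetweighted} to unit weights and $\tau=n$.
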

To prove \cref{domsetweighted}, we again start by analysing the structure of dominating sets on a digraph whose vertex set is partitioned into modules. 
\begin{lemma}\label{domsetsmodules}
	Let $D$ be a digraph and let $M_1,\ldots,M_\ell$ be a partition of $V(D)$ into modules. Let $D_M$ denote the corresponding module-digraph with vertex-set $\{v_1,\ldots,v_\ell\}$. Then for every vertex set $U \subseteq V(D)$, the following two statements are equivalent:
	\begin{itemize}
		\item $U$ is a dominating set for $D$.
		\item There exists $X \subseteq U$ such that each of the following holds:
		\begin{enumerate}[label=(\roman*)]
			\item $X_M\coloneqq \CondSet{v_i}{i \in [\ell], X \cap M_i \neq \emptyset}$ is a dominating set for $D_M$.
			\item For all $i \in [\ell]$ with $v_i \in X_M \cap N_{D_M}^+(X_M)$, we have $|X \cap M_i|=1$.
			\item For all $i \in [\ell]$ with $v_i \in X_M \setminus N_{D_M}^+(X_M)$, $X \cap M_i$ is an inclusion-wise minimal dominating set for $D[M_i]$.
		\end{enumerate}
	\end{itemize}
\end{lemma}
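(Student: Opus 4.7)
The plan is to prove the two implications separately; the non-trivial content is the passage between dominating vertex sets in $D$ and in the module-digraph $D_M$, repeatedly exploiting the defining property that an edge $(v_j, v_i) \in E(D_M)$ reflects a complete bipartite attachment from $M_j$ to $M_i$ in $D$.

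For the direction ($\Longleftarrow$), I would in fact show that $X$ already dominates $D$, which is stronger than needed since $U \supseteq X$. Pick an arbitrary $z \in V(D) \setminus X$ in some $M_i$, and distinguish three cases by the position of $v_i$ relative to $X_M$. If $v_i \notin X_M$, condition (i) provides an in-neighbor $v_j \in X_M$ of $v_i$ in $D_M$, and any vertex of $X \cap M_j$ serves as an in-neighbor of $z$ in $D$ by the module property. If $v_i \in X_M \cap N_{D_M}^+(X_M)$, some $v_j \in X_M$ with $(v_j, v_i) \in E(D_M)$ exists, and necessarily $j \neq i$ since $D_M$ is loopless, so again any vertex of $X \cap M_j$ dominates $z$. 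Finally, if $v_i \in X_M \setminus N_{D_M}^+(X_M)$, condition (iii) says $X \cap M_i$ dominates $D[M_i]$, producing an in-neighbor of $z$ within $X \cap M_i$.

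For the direction ($\Longrightarrow$), I would construct $X \subseteq U$ explicitly starting from the auxiliary set $U_M \coloneqq \{v_i : U \cap M_i \neq \emptyset\}$. First I would verify that $U_M$ dominates $D_M$: if $v_i \notin U_M$, every vertex of $M_i$ is dominated from outside $M_i$ by some $y \in U \cap M_j$, and the module property lifts the edge $(y, z) \in E(D)$ to $(v_j, v_i) \in E(D_M)$ with $v_j \in U_M$. Next, for each $v_i \in U_M \setminus N_{D_M}^+(U_M)$, I would show that $U \cap M_i$ already dominates $D[M_i]$: any outside dominator in $U$ of a vertex of $M_i$ would yield an edge $(v_j, v_i)$ in $D_M$ with $v_j \in U_M$, contradicting $v_i \notin N_{D_M}^+(U_M)$. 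With these two observations in hand, I would define $X$ module by module: set $X \cap M_i = \emptyset$ whenever $v_i \notin U_M$, pick one arbitrary vertex of $U \cap M_i$ when $v_i \in U_M \cap N_{D_M}^+(U_M)$, and choose an inclusion-wise minimal dominating subset of $U \cap M_i$ for $D[M_i]$ when $v_i \in U_M \setminus N_{D_M}^+(U_M)$ (which exists by the previous step). Then $X_M = U_M$ by construction, and conditions (i)--(iii) are immediate.

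The hard part is the internal-domination step in the forward direction: establishing that $U \cap M_i$ is already a dominating set of $D[M_i]$ whenever $v_i \in U_M \setminus N_{D_M}^+(U_M)$. Without it the construction of $X$ collapses, since there is nothing to choose a minimal dominating subset from. The rest is routine bookkeeping with the module property and a careful case split, with the only subtle point being the loopless-ness of $D_M$, which guarantees that the auxiliary dominator $v_j$ in the case $v_i \in X_M \cap N_{D_M}^+(X_M)$ is genuinely distinct from $v_i$ and thus provides a vertex of $X$ inside a different module $M_j$.
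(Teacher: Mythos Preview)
Your proof is correct. The backward direction is essentially identical to the paper's argument. For the forward direction, however, you take a genuinely different route: the paper simply lets $X$ be an inclusion-wise minimal dominating subset of $U$ and then argues by contradiction that such an $X$ automatically satisfies (i)--(iii), using minimality to rule out $|X\cap M_i|\ge 2$ in case (ii) and to force minimality of $X\cap M_i$ in case (iii). You instead build $X$ explicitly module by module from the auxiliary set $U_M$, after first establishing the two key facts (that $U_M$ dominates $D_M$, and that $U\cap M_i$ dominates $D[M_i]$ whenever $v_i\in U_M\setminus N_{D_M}^+(U_M)$). Your construction makes the verification of (i)--(iii) immediate once $X_M=U_M$ is noted, whereas the paper's choice of $X$ is cleaner to state but requires separate contradiction arguments for (ii) and (iii). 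Both approaches hinge on the same internal-domination observation you correctly identified as the crux; neither is materially shorter than the other.
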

\begin{proof}
	$\Longrightarrow$ Let $U$ be a dominating set, and let $X \subseteq U$ be a dominating set which is additionally inclusion-wise minimal. To prove (i), suppose for a contradiction that there was an $i \in [\ell]$ such that $v_i \notin X_M$ and $v_i$ has no in-neighbour within the set $X_M$. This means that $X \cap M_i=\emptyset$ and, by the modular property, that there is no directed edge $(u,z) \in E(D)$ with $u \in X$ and $z \in M_i$, which clearly contradicts the fact that $X$ is a dominating set for $D$.
	
	For (ii), suppose towards a contradiction that there is $i \in [\ell]$ such that $v_i \in X_M \cap N_{D_M}^+(X_M)$ and $|X \cap M_i| \ge 2$. The first condition implies that there exists some module $M_j$ with $X \cap M_j \neq \emptyset$ such that $(v_j,v_i) \in E(D_M)$. Let $u \in X \cap M_j$ be some vertex. By the modular property and the definition of $D_M$, we have $N_D^+(u) \supseteq M_i$.
	
	Now choose some fixed vertex $x \in X \cap M_i$. We claim that also $X'\coloneqq (X \setminus M_i) \cup \{x\}$ is a dominating set for $D$. For that, because of $M_i \subseteq N_D^+(u) \subseteq \InducedSubgraph{N_D^+}{X'}$, it suffices to verify that $N_D^+(y) \setminus M_i \subseteq\InducedSubgraph{N_D^+}{X'}$ for every $y \in (X \cap M_i) \setminus \{x\}$. However, this again is a consequence of the modular property, as we have $N_D^+(y) \setminus M_i=N_D^+(x) \setminus M_i \subseteq \InducedSubgraph{N_D^+}{X'}$ for all $y \in (X \cap M_i) \setminus \{x\}$. Therefore $X'$ is a dominating set which is properly contained in $X$. This is a contradiction to the assumption that $X$ is inclusion-wise minimal. 
	
	To verify (iii), let $i \in [\ell]$ be given such that $v_i \in X_M \setminus N_{D_M}^+(X_M)$. The latter implies that for every vertex $v_j \in V(D_M), j \neq i$, $(v_j,v_i) \in E(D_M)$ implies that $X \cap M_j=\emptyset$. By the properties of the modules and the definition of $D_M$, this implies that no vertex within $M_i$ has an in-neighbour in $X \setminus M_i$. Because $X$ is a dominating set, this means that each vertex in $M_i$ either is contained in $X$ or has an in-neighbour in $X \cap M_i$. This clearly means that $X \cap M_i$ is a dominating set for $D[M_i]$. Finally, $X \cap M_i$ is inclusion-wise minimal: If there was a dominating set $Y \subsetneq X \cap M_i$ for $D[M_i]$, by the definition of a module we would have $N_D^+(Y) \setminus M_i=N_D^+(X \cap M_i) \setminus M_i$, and therefore $(X \setminus M_i) \cup Y$ would form a proper subset of $X$ which is dominating for $D$, again a contradiction to the assumption that $X$ is inclusion-wise minimal. Finally, this implies that (iii) holds.
	
	$\Longleftarrow$ Assume that (i)-(iii) are satisfied. We verify that $X$ is dominating, which clearly implies that the same holds true for $U \supseteq X$. For this purpose, let $z \in V(D) \setminus X$ be arbitrary, and let $i \in [\ell]$ be such that $z \in M_i \setminus X$. 
	
	Assume first $X \cap M_i=\emptyset$, that is, $v_i \notin X_M$. Using (i), we conclude that there is a vertex $v_j \in X_M, j \neq i$ such that $(v_j,v_i) \in E(D_M)$. Let $u \in X \cap M_j \neq \emptyset$ be some vertex. By the modular property, we have $z \in M_i \subseteq N_D^+(u) \subseteq \InducedSubgraph{N_D^+}{X}$ as desired, which concludes this case.
	
	Otherwise, we have that $X \cap M_i \neq \emptyset$ and therefore $v_i \in X_M$. 
	
	Then we either have $v_i \in X_M \cap N_{D_M}^+(X_M)$ or $v_i \in X_M \setminus N_{D_M}^+(X_M)$. In the first case, as above, we conclude the existence of a vertex $v_j \in X_M, j \neq i$ such that $(v_j,v_i) \in E(D_M)$. With the same argumentation as above, we conclude that $z \in \InducedSubgraph{N_D^+}{X}$, as desired.
	
	In the second case, we have $v_i \in X_M \setminus N_{D_M}^+(X_M)$, and so by (iii), $z \in N_{D[M_i]}^+(X \cap M_i) \subseteq \InducedSubgraph{N_D^+}{X}$, which again shows that $z$ is dominated by $X$. 
	
	Finally, this shows that each vertex in $D$ is dominated by $X$ and therefore verifies the claim.
\end{proof}
As a direct consequence, we obtain:
\begin{corollary} \label{optimumdomset}
	Let $D$ be a digraph with vertex-weighting $w:V(D) \rightarrow \mathbb{N}_0$ and a partition $M_1,\ldots,M_\ell$ of the vertex set into modules. Then there exists a dominating set $X_M$ in the module-digraph $D_M$ such that with 
	$$I_1\coloneqq \CondSet{i \in [\ell]}{v_i \in X_M \cap N_{D_M}^+(X_M)}, I_2\coloneqq \CondSet{i \in [\ell]}{v_i \in X_M \setminus N_{D_M}^+(X_M)}$$ we have
	$$\gamma(D,w)=|I_1|+\sum_{i \in I_2}{\gamma(D[M_i],w|_{M_i})}.$$
\end{corollary}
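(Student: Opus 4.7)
The plan is to deduce the corollary as a direct consequence of Lemma~\ref{domsetsmodules}. First I would fix an inclusion-wise minimal dominating set $X^*$ of $D$ with $w(X^*)=\gamma(D,w)$; such a set exists since any optimal dominating set contains a minimal dominating subset of no greater weight. Applying the forward implication of Lemma~\ref{domsetsmodules} with $U=X^*$ yields a subset $X\subseteq X^*$ together with a set $X_M\subseteq V(D_M)$ satisfying (i)--(iii). Because the backward direction of that lemma certifies $X$ to be dominating on its own, the minimality of $X^*$ forces $X=X^*$, giving us structural access to the modular breakdown of an optimum.

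Next I would decompose $w(X^*)$ according to the partition $\{M_1,\ldots,M_\ell\}$. By (ii), for each $i\in I_1$ we have $|X^*\cap M_i|=1$, and by the modular property any vertex of $M_i$ could replace the unique representative, so optimality forces the chosen vertex to have minimum weight in $M_i$, contributing $\min_{z\in M_i}w(z)$ to $w(X^*)$. By (iii), for each $i\in I_2$ the intersection $X^*\cap M_i$ is a dominating set of $D[M_i]$, and optimality combined with the backward direction of Lemma~\ref{domsetsmodules} (used to swap in a lighter inner dominating set without disturbing the rest of $X^*$) forces $w(X^*\cap M_i)=\gamma(D[M_i],w|_{M_i})$. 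Summing across modules then yields the claimed equality, where $|I_1|$ is to be read as $\sum_{i\in I_1}\min_{z\in M_i}w(z)$ (a notational shorthand that matches the unit-weight reading used elsewhere in the section).

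To confirm that the quantity on the right is not merely an upper bound but truly equals $\gamma(D,w)$, I would verify the reverse inequality by taking any dominating set $X_M'$ of $D_M$ with its induced partition $(I_1',I_2')$ and constructing a dominating set $X'$ of $D$ by picking one minimum weight representative in each $M_i$, $i\in I_1'$, and a minimum weight dominating set of $D[M_i]$ for each $i\in I_2'$. The backward direction of Lemma~\ref{domsetsmodules} certifies that $X'$ dominates $D$, so its weight is an upper bound on $\gamma(D,w)$. The main obstacle I expect is justifying that once $X_M$ is fixed, the optimisation over each $M_i$ can be carried out independently; this decoupling follows because conditions (ii) and (iii) depend only on whether $v_i$ lies in $X_M\cap N_{D_M}^+(X_M)$ or in $X_M\setminus N_{D_M}^+(X_M)$, and the modular property allows swapping vertices inside any single module without affecting domination of vertices in other modules.
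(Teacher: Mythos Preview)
Your approach is correct and is precisely the intended derivation from Lemma~\ref{domsetsmodules}; the paper itself offers no argument beyond the phrase ``as a direct consequence'', so you have simply spelled out what that consequence amounts to.

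You are also right to flag the term $|I_1|$: as literally written the formula fails for general weights. Take $D$ to be a directed $3$-cycle with every vertex of weight $5$ and the trivial decomposition into singleton modules; then $\gamma(D,w)=10$, yet every dominating set $X_M$ of $D_M=D$ yields $|I_1|+\sum_{i\in I_2}\gamma(D[M_i],w|_{M_i})\in\{3,6\}$. The intended expression is $\sum_{i\in I_1}\min_{z\in M_i}w(z)+\sum_{i\in I_2}\gamma(D[M_i],w|_{M_i})$, exactly as you deduce. The same slip recurs in the proof of Theorem~\ref{domsetweighted}, where $w(X(Z))$ is written as $|I_1(Z)|+\ldots$ and the representatives $x_i$ are selected arbitrarily rather than of minimum weight in their module; your reading repairs both places at once.

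Your final paragraph, building a dominating set of $D$ from an arbitrary dominating set $X_M'$ of $D_M$, is not needed for the corollary itself, since your decomposition of $w(X^*)$ already yields an equality rather than an inequality; it is, however, exactly the construction the paper uses in the ensuing algorithm.
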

Using the above characterisation of dominating sets, we can reduce the computation of a minimum-weight dominating set to corresponding tasks on the modules. Here, as in the case of the minimum weight feedback vertex set problem, no integer program is required. We simply enumerate all inclusion-wise minimal dominating sets of the module-digraph $D_M$ and then, given such a set, which prescribes which modules are to be filled with vertices and which are to be left empty, compute a dominating set for the whole digraph with minimum weight according to these additional properties. In the end, we simply select the best dominating set obtained in this way.

\begin{proof}[Proof of \cref{domsetweighted}.]
	Assume we are given as instance a directed graph $D$, a weighting $w:V(D) \rightarrow \mathbb{N}_0$ of the vertices, and a bound $\tau \in \mathbb{N}$ on the total weight distributed on the vertices.
	
	If $D$ consists of a single vertex $v$, we output $X\coloneqq \{v\}$ as the unique dominating set. Otherwise, we apply the algorithm from \cref{computedecomposition} in order to obtain a non-trivial decomposition of $V(D)$ into modules $M_1,\ldots,M_\ell$, where $2 \leq \ell \leq \omega$. We compute the induced subdigraphs $D[M_1],\ldots,D[M_\ell]$ as well as the module-digraph $D_M$.
	
	Now we recursively apply the algorithm to each of the instances $(D[M_i],w|_{M_i}),i \in [\ell]$. For each $i \in [\ell]$, we thereby obtain a dominating set $X_i$ for $D[M_i]$ with minimum weight. For each $i \in [\ell]$, we arbitrarily select a vertex $x_i \in M_i$. 
	
	Next we go through all (at most $2^\ell \leq 2^\omega$) subsets of $V(D_M)$ and test whether they are dominating or not. For each dominating set $Z \subseteq V(D_M)$, we compute $$I_1(Z)\coloneqq \CondSet{i \in [\ell]}{v_i \in Z \cap N_{D_M}^+(Z)}, I_2(Z)\coloneqq \CondSet{i \in [\ell]}{v_i \in Z \setminus N_{D_M}^+(Z)}.$$  We now define a subset of $V(D)$ according to
	$$X(Z)\coloneqq \left(\bigcup_{i \in I_1(Z)}{\{x_i\}}\right) \cup \left(\bigcup_{i \in I_2(Z)}{X_i}\right).$$ It is easily seen from \cref{domsetsmodules} that $X(Z)$ is a dominating set for $D$, and clearly, we have
	$$w(X(Z))=|I_1(Z)|+\sum_{i \in I_2(Z)}{\gamma(D[M_i],w|_{M_i})}.$$
	\cref{optimumdomset} implies that $\min_{Z \subseteq V(D_M)}{w(X(Z))}=\gamma(D,w)$. Therefore, we simply select the dominating set $Z$ with minimum value $w(X(Z))$ and output $X(Z)$, which is a dominating set for the digraph $D$ with minimum weight $\gamma(D,w)$.
	
	It remains to analyse the running time of the described algorithm. 
	Let $T$ denote a rooted decomposition-tree which resembles the structure of the recursive calls appearing during the algorithm. Whenever we compute a module-decomposition $M_1',\ldots,M_s'$ of an induced subdigraph $D'$ of $D$ during a recursive call, the induced subdigraphs $D'[M_1'],\ldots,D'[M_s']$ correspond to the children of the vertex representing $D'$. We again conclude from Fact \ref{dectree} that $|V(T)| \leq 2n-1$, as the leafs of $T$ correspond to the $n$ singletons in $V(D)$.
	
	Let us estimate the running time needed for the described computations corresponding to the root $D$, excluding the running time required by the recursive calls to $D[M_1],\ldots,D[M_\ell]$. We assume $|V(D)| \ge 2$, otherwise the algorithm terminates in constant time. We first have to apply the algorithm from \cref{computedecomposition}, which requires time $\mathcal{O}(|V(D)|+|E(D)|) \leq \mathcal{O}(n^2)$. Furthermore, we have to compute $D_M$ and $D[M_1],\ldots,D[M_\ell]$, this certainly can be done in time $\mathcal{O}(n^2)$ as well. Next we enumerate the at most $2^\ell \leq 2^\omega$ subsets of $V(D_M)$ and for each $Z \subseteq V(D_M)$ test whether it is dominating. If so, we furthermore compute $I_1(Z), I_2(Z)$, $X(Z)$ and the sum $X(Z)$ of weights. In total, the number of elementary operations required here is bounded by $\mathcal{O}(2^\omega (\omega^2+\omega+\omega n+n \log \tau))$. Finally we output $X(Z)$, where $Z$ minimises $w(X(Z))$. Determining such a set needs no more time than $\mathcal{O}(2^\omega \log \tau)$. 
	
	Finally, we can execute all operations performed with respect to $D$ in the algorithm using $\mathcal{O}(n^2+2^\omega n \log \tau+\omega 2^\omega n) \leq \mathcal{O}(n^2+\omega 2^\omega n \log \tau)$ operations.
	
	The same upper bound applies to any other branching vertex in the tree, as the corresponding induced subdigraph $D'$ of $D$ by Fact \ref{inducedmonotonicity} has less than $n$ vertices and directed modular width at most $\omega$ as well. After having reached a leaf of the tree, this branch of the algorithm terminates in constant time. Therefore the total run-time required for the execution of the algorithm can be upper bounded by
	$$\mathcal{O}(\underbrace{n}_{\text{leafs}}+\underbrace{|V(T)|(n^2+\omega 2^\omega n \log \tau)}_{\text{branching vertices}})=\mathcal{O}(n^3+\omega 2^\omega n^2\log \tau),$$ which proves the bound claimed in the Theorem.
\end{proof}

\section{Dichromatic number}\label{digraphcolouring}
In this section, we investigate the complexity of computing the so-called \emph{dichromatic number} of a given directed graph, when modular directed width is used as a parameter. In an \emph{acyclic colouring} of a given digraph $D$, we assign colours to the vertices of $D$ such that there are no monochromatic directed cycles. The \emph{dichromatic number} of a digraph $D$, denoted by $\vec{\chi}(D)$, now is defined as the minimum number of colours required for an acyclic colouring of a digraph. This notion was introduced in 1982 by Neumann-Lara (\cite{neulara}), rediscovered by Mohar (\cite{mohar}), investigated in \cite{bokal2004circular}, and since then has attracted wide interest. It has become apparent that the dichromatic number acts as a natural directed counterpart of the chromatic number of an undirected graph. Numerous recent results (see \cite{perfect}, \cite{fractionalNL}, \cite{largesubdivisions}, \cite{dig4}, \cite{lists}, \cite{HARUTYUNYAN2019}, \cite{noneven}) support this claim.

Formally, we consider the following problem.

\ProblemDefLabelled{Digraph Colouring}{DC}{prob:DC}
{A digraph $D$.}
{What is the value of $\vec{\chi}(D)$?}

While graph colouring famously can be solved by an FPT-algorithm with respect to treewidth, it was shown in \cite{noneven} that even for bounded size of a directed feedback vertex set, deciding whether a directed graph has dichromatic number at most $2$ is NP-complete. This rules out efficient parameterisations by most known directed width parameters such as directed treewidth, DAG-width or Kelly-width, as all of these are upper bounded in terms of the size of a smallest feedback vertex set. In fact, up to date, only few classes of digraphs for which the dichromatic number can be evaluated in polynomial time are known. In the following, we show that using modular directed width as a parameter, there is an FPT-algorithm. To obtain this algorithm, we slightly generalise the problem of determining the dichromatic number to enable a recursion.
\begin{definition}
	Let $D$ be a digraph equipped with an assignment $N:V(D) \rightarrow \mathbb{N}$ of positive integers to the vertices.
	An \emph{$N$-colouring} with $k\in\mathbb{N}$ colours of $D$ is an assignment of lists $\Fkt{c}{v}\subseteq[k]$ of colours to every vertex $v\in\Fkt{V}{D}$ such that $\Abs{\Fkt{c}{v}}=\Fkt{N}{v}$for all $v\in\Fkt{V}{D}$, and moreover there is no directed cycle $C$ in $D$ such that $i \in c(v)$ for every $v \in V(C)$ and any $i \in [k]$.
%
	
	We define the \emph{$N$-dichromatic number} $\vec{\chi}_N(D)$ of a digraph $D$ to be the minimum $k$ such that an $N$-colouring with $k$ colours of $D$ exists.
\end{definition}
As an additional input for our generalised colouring problem, we also have a threshold $\tau \in \mathbb{N}$, which bounds the total size of the colour lists which have to be assigned. For bounded directed modular width, the proposed algorithm runs in polynomial time in $\tau$ and $n\coloneqq |V(D)|$.

\ProblemDefLabelled{Weighted Digraph Colouring}{wDC}{prob:wDC}
{A digraph $D$, a natural number $\tau \in \mathbb{N}$, and an assignment $N:V(D) \rightarrow \mathbb{N}$ such that $\sum_{v \in V(D)}{N(v)} \leq \tau$.}
{What is the value of $\vec{\chi}_N(D)$?}

\begin{theorem}\label{algdichromatic}
	There is an algorithm that, given a digraph $D$ on $n$ vertices and an assignment $N:V(D) \rightarrow \mathbb{N}$ of numbers to the vertices such that $\sum_{v \in V(D)}{N(v)} \leq \tau$, outputs the value of $\vec{\chi}_N(D)$ together with a certifying assignment of colour lists to the vertices. The running time of the algorithm is $\mathcal{O}(n^3+f(\omega)n\log^2 \tau+n\tau)$, where $n\coloneqq |V(D)|$, $\omega\coloneqq \dmw(D)$, and $f(\omega)=2^{\mathcal{O}(\omega 2^\omega)}$.
\end{theorem}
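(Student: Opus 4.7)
The plan is to set up a recursion along module decompositions. Given a decomposition $M_1,\dots,M_\ell$ with corresponding module-digraph $D_M$ on vertices $v_1,\dots,v_\ell$, the first observation is that a directed cycle in $D$ either lies entirely inside some $D[M_j]$ or, after contracting each visited $M_j$ to $v_j$, projects to a directed cycle in $D_M$. Hence for any list-assignment $c\colon V(D)\to 2^{[k]}$ with $|c(v)|=N(v)$, the colour class of $i\in[k]$ is acyclic in $D$ if and only if for every $j\in[\ell]$ the set $\{v\in M_j : i\in c(v)\}$ is acyclic in $D[M_j]$ and the set $A_i\coloneqq \{v_j : i\in c(v)\text{ for some }v\in M_j\}$ is acyclic in $D_M$. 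I would then introduce the \emph{profile} $P_j\coloneqq \{i\in[k] : i\in c(v)\text{ for some }v\in M_j\}$, so that the inter-module condition is exactly that $v_j\mapsto P_j$ is a valid $N_M$-colouring of $D_M$ with $N_M(v_j)\coloneqq |P_j|$.

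The key auxiliary statement I would prove is a \emph{profile inflation lemma}: writing $\chi_j\coloneqq \vec{\chi}_{N|_{M_j}}(D[M_j])$ and $\sigma_j\coloneqq \sum_{v\in M_j}N(v)$, every integer $s$ with $\chi_j\le s\le \sigma_j$ occurs as the profile size of some valid $N|_{M_j}$-colouring of $D[M_j]$. This follows by a short swap argument: starting from an optimal colouring with profile of size $\chi_j$, as long as $s<\sigma_j$ some colour appears on at least two vertex lists (by pigeonhole on $\sigma_j$ slots filled by at most $s$ colours), so replacing one occurrence by a fresh colour preserves validity (the fresh class is a singleton, and deleting a vertex preserves acyclicity) and increases the profile size by one. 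Combined with the monotonicity $\vec{\chi}_{N_M^{(1)}}(D_M)\le \vec{\chi}_{N_M^{(2)}}(D_M)$ whenever $N_M^{(1)}\le N_M^{(2)}$ pointwise (proved by shrinking each list), these ingredients yield the clean identity
\[
\vec{\chi}_N(D)=\vec{\chi}_{N_M^{*}}(D_M),\qquad N_M^{*}(v_j)\coloneqq \chi_j,
\]
which drives the recursion: one direction is the direct lift of an $N_M^{*}$-colouring using the inflation lemma to produce a within-module colouring with any prescribed profile of size $\chi_j$, while the other reads off the profiles $P_j$ from an optimal $N$-colouring of $D$, notes $|P_j|\ge \chi_j$, and shrinks each to size $\chi_j$.

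To solve the bounded-size subproblem on $D_M$, I would enumerate all at most $2^\omega$ subsets of $V(D_M)$ and keep those inducing an acyclic subdigraph (testable in $\mathcal{O}(\omega^2)$ each), yielding $A_1,\dots,A_m\subseteq V(D_M)$ with $m\le 2^\omega$, and then solve the ILP with variables $x_a\in\mathbb{Z}_{\ge 0}$ encoding the multiplicity with which $A_a$ is used as a colour class, constraints $\sum_{a:v_j\in A_a}x_a=\chi_j$ for $j\in[\ell]$, and objective $\min\sum_a x_a$; by \cref{ipsolving}, this runs in time $2^{\mathcal{O}(\omega 2^\omega)}\log^2\tau$ since there are at most $2^\omega$ variables and all coefficients are bounded by $\tau$. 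The overall algorithm recurses on each $D[M_j]$ to obtain $\chi_j$ and a certifying colouring $c_j\colon M_j\to 2^{[\chi_j]}$ (which must use every colour in $[\chi_j]$ by minimality), solves the ILP on $D_M$, reads off the profiles $P_1,\dots,P_\ell$ from the multiset of colour classes, and assembles the global colouring by relabelling each $c_j$ along an arbitrary bijection $[\chi_j]\to P_j$. The recursion tree has at most $2n-1$ nodes by \cref{dectree}, each node costs $\mathcal{O}(n^2)$ for the module decomposition via \cref{computedecomposition} together with $f(\omega)\log^2\tau$ for the ILP, and the list-manipulation per node is $\mathcal{O}(\tau)$, giving the claimed bound $\mathcal{O}(n^3+f(\omega)n\log^2\tau+n\tau)$. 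The main obstacle is the inflation lemma: without it the recursion collapses to an inequality, and one loses the crucial fact that setting the module-digraph list sizes equal to the componentwise minimum $\chi_j$ is both feasible and optimal, which is what ultimately converts the problem on $D_M$ into a compact ILP over acyclic subsets.
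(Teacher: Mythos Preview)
Your proposal is correct and follows essentially the same route as the paper: the identity $\vec{\chi}_N(D)=\vec{\chi}_{N_M^{*}}(D_M)$ is exactly \cref{recursion}, the ILP over acyclic vertex subsets of $D_M$ is \cref{smallgraphs}, and the recursion along module decompositions together with the runtime bookkeeping via \cref{dectree} matches the paper's proof. One remark: your ``profile inflation lemma'' is not actually needed and is not the main obstacle---in the lift direction you only ever require a within-module colouring whose profile is a \emph{prescribed} set of size exactly $\chi_j$, and any optimal $N|_{M_j}$-colouring (which by minimality uses precisely $\chi_j$ colours) already provides this after a relabelling, so the equality does not ``collapse to an inequality'' without inflation.
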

Clearly, for any digraph, the assignment $N(v)\coloneqq 1$ for all $v \in V(D)$ leads to $\vec{\chi}_N(D)=\vec{\chi}(D)$ and we can put $\tau\coloneqq n$. Therefore, the FPT-algorithm from \cref{algdichromatic} contains the computation of the dichromatic number as a special case.
\begin{corollary}
	The dichromatic number of a given digraph $D$ can be computed in time $\mathcal{O}(n^3+f(\omega)n\log^2 n)$, where $n\coloneqq |V(D)|$, $\omega\coloneqq \dmw(D)$ and $f(\omega)=2^{\mathcal{O}(\omega 2^\omega)}$.  
\end{corollary}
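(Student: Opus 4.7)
The plan is to derive the corollary by a direct reduction to Theorem \ref{algdichromatic}. Given an input digraph $D$ on $n$ vertices whose ordinary dichromatic number $\vec{\chi}(D)$ is sought, I would simply define the constant assignment $N:V(D)\to\mathbb{N}$ by $N(v):=1$ for every vertex $v$, together with the threshold $\tau:=n$. Then $\sum_{v\in V(D)} N(v) = n \le \tau$, so the triple $(D,N,\tau)$ is a valid instance of \ref{prob:wDC} and Theorem \ref{algdichromatic} applies.

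Next I would verify that for this choice of $N$, the generalised notion of $N$-colouring specialises to an ordinary acyclic vertex colouring. By definition each vertex $v$ receives a list $c(v)\subseteq[k]$ of cardinality exactly $N(v)=1$, so the assignment $c$ can be identified with an ordinary vertex colouring $\bar{c}:V(D)\to[k]$ via $\bar{c}(v)\in c(v)$. Under this identification, the requirement that there be no directed cycle $C$ with a colour $i\in[k]$ satisfying $i\in c(v)$ for all $v\in V(C)$ becomes precisely the standard requirement that no directed cycle of $D$ be monochromatic under $\bar{c}$. Consequently $\vec{\chi}_N(D)=\vec{\chi}(D)$, and the certifying list-assignment returned by the algorithm of Theorem \ref{algdichromatic} immediately yields an acyclic colouring of $D$ using $\vec{\chi}(D)$ colours.

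Finally I would substitute $\tau=n$ into the running-time bound of Theorem \ref{algdichromatic}, obtaining $\mathcal{O}(n^3+f(\omega)n\log^2 n+n\cdot n)=\mathcal{O}(n^3+f(\omega)n\log^2 n)$, since the extra term $n\tau=n^2$ is absorbed into $n^3$. No new combinatorial argument or recursion is needed, and there is no genuine obstacle to overcome: the only point requiring explicit mention is the equivalence of the two colouring notions when $N\equiv 1$, which is a direct unpacking of the definition of $N$-colouring. The corollary therefore follows as a routine specialisation of the weighted FPT-algorithm.
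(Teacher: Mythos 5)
Your proposal is correct and matches the paper's own derivation: the corollary is obtained exactly by setting $N\equiv 1$ and $\tau\coloneqq n$, noting that $N$-colourings with unit lists are ordinary acyclic colourings so $\vec{\chi}_N(D)=\vec{\chi}(D)$, and observing that the $n\tau=n^2$ term is absorbed into $n^3$. No further comment is needed.
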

We prepare the proof of \cref{algdichromatic} with some auxiliary statements.
\begin{lemma} \label{InducedCycles}
	Let $D$ be a digraph equipped with a module-decomposition $\Set{M_1,\ldots,M_\ell}$ of the vertex set. If $C$ is an induced (that is, chordless) directed cycle in $D$, then either there is some $i \in [\ell]$ such that $C$ is contained in $D[M_i]$, or $D$ uses at most one vertex from each module.
\end{lemma}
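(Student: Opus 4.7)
I would argue by contradiction, fixing an induced directed cycle $C$ in $D$ that is not entirely contained in any single $D[M_i]$, and showing that then $|V(C) \cap M_j| \le 1$ for every $j \in [\ell]$. So suppose for contradiction that some module $M_i$ contains at least two distinct vertices of $C$. Since $C$ is not contained in $D[M_i]$, there is also at least one vertex of $C$ outside $M_i$. Hence, walking cyclically along $C$, there must exist a vertex $x \in V(C) \cap M_i$ whose successor $x^+$ on $C$ lies in $V(D) \setminus M_i$ (otherwise, $V(C) \cap M_i$ would be closed under taking successors along $C$ and thus equal to $V(C)$).

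Now let $y \in V(C) \cap M_i$ be any second vertex distinct from $x$; such a $y$ exists by assumption. Because $x, y \in M_i$ and $x^+ \notin M_i$, the modular property applied to the edge $(x, x^+) \in E(D)$ yields $(y, x^+) \in E(D)$. I need to check that this edge is in fact a chord of $C$. It is a genuine edge between two distinct vertices of $C$ (indeed $y \in M_i$ while $x^+ \notin M_i$, so $y \neq x^+$). Moreover, $(y, x^+)$ cannot be an edge of $C$ itself, because the unique in-neighbour of $x^+$ along the cycle $C$ is $x$, and $x \neq y$. Hence $(y, x^+)$ is a chord of $C$, contradicting the hypothesis that $C$ is induced.

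\textbf{Main obstacle.} The whole argument is driven by finding a single vertex $x \in V(C) \cap M_i$ with successor outside $M_i$; everything else is a routine application of the modular property. The main subtlety to be careful about is verifying that the produced edge $(y, x^+)$ is really a chord and not an edge of $C$ (which uses that in a directed cycle each vertex has exactly one in-neighbour along the cycle) and that its endpoints are distinct (which uses $y \in M_i$ versus $x^+ \notin M_i$). Everything else, including the case of short cycles and digons, is subsumed by this argument.
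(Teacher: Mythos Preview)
Your proof is correct and follows essentially the same strategy as the paper's: assume a module $M_i$ meets $C$ in at least two vertices while $C$ also leaves $M_i$, locate an edge of $C$ crossing the boundary of $M_i$, and use the modular property to manufacture a chord. Your execution is in fact slightly more economical---the paper tracks two boundary vertices $y_1,y_2$ and two potential chords, whereas you need only one crossing edge $(x,x^+)$ and one additional vertex $y\in M_i$ to produce the single chord $(y,x^+)$.
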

\begin{proof}
	Assume towards a contradiction that there was a directed cycle $C$, such that for some $i \in [\ell]$ we have $|V(C) \cap M_i| \ge 2$, and $V(C) \setminus M_i \neq \emptyset$. Let $x \in V(C) \setminus M_i$ be some vertex, and let $y_1$ be the closest vertex after $x$ in the cyclic directed order along $C$ which is contained in $M_i$, and let $y_2$ be the closest vertex before $x$ contained in $M_i$ in the cyclic order. Because of $|V(C) \cap M_i| \ge 2$, we know that $y_1 \neq y_2$. Let $x_1 \in V(C) \setminus M_i$ be the predecessor of $y_1$ on $C$, and let $x_2 \in V(C) \setminus M_i$ be the successor of $y_2$ on $C$. This means that $(x_1,y_1), (y_2,x_2) \in E(C)$. By the modular property, this implies that also $(x_1,y_2), (y_1,x_2) \in E(D)$. Because $C$ was assumed to be chordless, this implies that $E(C)=\{(x_1,y_1), (y_1,x_2), (y_2,x_2),(x_1,y_2)\}$, contradicting the fact that $C$ is a directed cycle.
\end{proof}
\begin{lemma}\label{recursion}
	Let $D$ be a digraph, let $M_1,\ldots,M_\ell$ be a partition of $V(D)$ into modules, and let $D_M$ denote the corresponding module-digraph. 
	
	Let $N:V(D) \rightarrow \mathbb{N}$ be an assignment of numbers to the vertices.
	
	Denote by $v_i \in V(D_M)$ for every $i \in [\ell]$ the vertex of the module-digraph representing $M_i$, and define an assignment $N_M:V(D_M) \rightarrow \mathbb{N}$ according to $N_M(v_i)\coloneqq \vec{\chi}_{N|_{M_i}}(D[M_i])$ for each $i \in [\ell]$. 
	
	Then we have
	$$\vec{\chi}_N(D)=\vec{\chi}_{N_M}(D_M).$$
	Moreover, given an optimal $N_M$-colouring of $D_M$ (i.e., with a minimal total number of colours), we can construct an optimal $N$-colouring of $D$ in time $\mathcal{O}(\ell |V(D)|).$
\end{lemma}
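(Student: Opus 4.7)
The plan is to establish the equality $\vec{\chi}_N(D)=\vec{\chi}_{N_M}(D_M)$ by two matching inequalities and to read off the algorithmic construction from the upper-bound direction. The key structural tool will be \cref{InducedCycles}, which guarantees that every chordless directed cycle in $D$ either lies inside a single $D[M_i]$ or meets every module in at most one vertex. To apply it to monochromatic cycles, I will use the reduction: if some colour $j$ is shared along a directed cycle of $D$, then the induced subdigraph on the vertices whose lists contain $j$ contains a directed cycle, hence a chordless one, and chordlessness is preserved when we view it back inside $D$.

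For the inequality $\vec{\chi}_N(D)\leq\vec{\chi}_{N_M}(D_M)$, I would fix an optimal $N_M$-colouring $c_M$ of $D_M$ using $k\coloneqq\vec{\chi}_{N_M}(D_M)$ colours. For each $i\in[\ell]$ we have $|c_M(v_i)|=N_M(v_i)=\vec{\chi}_{N|_{M_i}}(D[M_i])$, so I can take an optimal $N|_{M_i}$-colouring $c_i$ of $D[M_i]$ and relabel its palette of $N_M(v_i)$ colours bijectively onto the set $c_M(v_i)$. Setting $c(u)\coloneqq c_i(u)$ for $u\in M_i$ then produces a list assignment of the correct sizes. If $c$ failed to be a valid $N$-colouring, the chordless reduction above would yield a chordless directed cycle $C'$ in $D$ monochromatic for some colour $j$; either $C'\subseteq D[M_i]$ for some $i$, contradicting the validity of the relabelled $c_i$ (as $j\in c_M(v_i)$ lies in its palette), or $C'$ uses at most one vertex per module, in which case $\eta(V(C'))$ spans a directed cycle in $D_M$ whose $c_M$-lists all contain $j$, contradicting the validity of $c_M$. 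This same construction yields the algorithm: given $c_M$ and recursively obtained optimal $c_i$'s, each module is processed by a single pass relabelling its colour lists, for a total of $\mathcal{O}(\ell|V(D)|)$ work.

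For the reverse inequality, take any optimal $N$-colouring $c$ of $D$ and set $C_i\coloneqq\bigcup_{u\in M_i}c(u)$. Since $c|_{M_i}$ is an $N|_{M_i}$-colouring of $D[M_i]$ drawing only colours from $C_i$, by definition of the $N|_{M_i}$-dichromatic number we have $|C_i|\geq\vec{\chi}_{N|_{M_i}}(D[M_i])=N_M(v_i)$, so I can pick any $c_M(v_i)\subseteq C_i$ of exact size $N_M(v_i)$. If some directed cycle $v_{i_1},\ldots,v_{i_m}=v_{i_1}$ in $D_M$ were monochromatic for a colour $j$ under $c_M$, then for each $r$ I could select $u_r\in M_{i_r}$ with $j\in c(u_r)$; the modular property forces $(u_r,u_{r+1})\in E(D)$ for all $r$, so $u_1,\ldots,u_m=u_1$ would be a directed cycle in $D$ monochromatic for $j$ — contradicting validity of $c$. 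Hence $c_M$ is a valid $N_M$-colouring with at most $k$ colours, giving $\vec{\chi}_{N_M}(D_M)\leq\vec{\chi}_N(D)$.

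The main obstacle I anticipate is precisely the chordless-cycle reduction used in the upper-bound direction: without it \cref{InducedCycles} cannot be invoked, and a monochromatic directed cycle in $D$ could mix several vertices of a single module with outside vertices in a way that does not project to a cycle in $D_M$, breaking the transfer argument. Once that reduction is cleanly stated, the rest of the proof is a direct application of the modular property and the definitions.
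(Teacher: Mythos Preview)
Your proposal is correct and follows essentially the same approach as the paper: both directions of the inequality are argued exactly as you describe, with the chordless-cycle reduction invoking \cref{InducedCycles} for the upper bound and the direct lifting of a monochromatic $D_M$-cycle to $D$ via the modular property for the lower bound. Your phrasing of the chordless reduction (restricting to the induced subdigraph on colour-$j$ vertices and taking a shortest cycle there) is a clean variant of the paper's ``choose $V(C)$ inclusion-wise minimal'' step, and the algorithmic claim is handled identically.
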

\begin{proof}
	Let $k\coloneqq \vec{\chi}_N(D), k_M\coloneqq \vec{\chi}_{N_M}(D_M)$. We prove both inequalities $k \leq k_M$ and $k_M \leq k$ separately.
	
	To prove the first inequality, consider an assignment $c_M:V(D_M) \rightarrow 2^{[k_M]}$ of colour lists that define an optimal $N_M$-colouring of $D_M$. We therefore have $|c_M(v_i)|=\vec{\chi}_{N|_{M_i}}(D[M_i])$ for all $i \in [\ell]$. The latter implies that for every $i \in [\ell]$, the digraph $D[M_i]$ admits an $N|_{M_i}$-colouring $c_i$ using $|c_M(v_i)|$ colours in total. By relabeling, we may assume that in total the set of colours used by $c_i$ is exactly $c_M(v_i)$. Consider now the assignment $c:V(D) \rightarrow 2^{[k_M]}$ defined by $c(v)\coloneqq c_i(v)$ whenever $v \in M_i$. This assignment has the property that $|c(v)|=N|_{M_i}(v)=N(v)$ for all $v \in M_i \subseteq V(D), i \in [\ell]$. We claim that it defines a valid $N$-colouring of $D$. Assume towards a contradiction that there exists a directed cycle $C$ in $D$ such that $\bigcap_{v \in V(C)}{c(v)} \neq \emptyset$. W.l.o.g. we may assume that $C$ is chosen with $V(C)$ inclusion-wise minimal, i.e., $C$ is chordless (if there was a chord, we could find a directed cycle using a proper subset of the vertices). By \cref{InducedCycles} $C$ either is contained in $D[M_i]$ for some $i \in [\ell]$, or it uses at most one vertex from each module. In the first case, we obtain that $C$ is a directed cycle in $D[M_i]$ with $\bigcap_{v \in V(C)}{c_i(v)} \neq \emptyset$, which contradicts the choice of $c_i$ as a proper $N|_{M_i}$-colouring of $D[M_i]$. In the second case, $C$ in a natural way yields a directed cycle $C_M$ in the module-digraph $D_M$ such that for every vertex $v_i \in V(C_M)$, there is a unique corresponding vertex $w_i \in V(C) \cap M_i$ from the module. By the choice of the colourings $c_i$, we find that $c(w_i)=c_i(w_i) \subseteq \bigcup_{z \in M_i}{c_i(z)}=c_M(v_i)$ for all $i \in [\ell]$, and so we have
	$$\bigcap_{v_i \in V(C_M)}{c_M(v_i)} \supseteq \bigcap_{w_i \in V(C)}{c(w_i)} \neq \emptyset,$$ contradicting the choice of $c_M$ as a valid $N_M$-colouring of $D_M$. As all cases led to a contradiction, we conclude that indeed $c$ is a proper $N$-colouring of $D$ whose colour sets are contained in $[k_M]$, and therefore $k \leq k_M$ as claimed.
	
	To prove the second inequality, consider an optimal $N$-colouring $c:V(D) \rightarrow 2^{[k]}$ of $D$. Clearly, for any $i$, the restriction $c|_{M_i}$ defines a valid $N|_{M_i}$-colouring of $D[M_i]$, and therefore has to fulfil
	$$\left|\bigcup_{z \in M_i}{c(z)} \right| \ge \vec{\chi}_{N|_{M_i}}(D[M_i])=N_M(v_i).$$ 
	Now for each $i \in [\ell],$ choose some subset $L_i \subseteq \bigcup_{z \in M_i}{c(z)}$ with $|L_i|=N_M(v_i)$ and define an assignment $c_M:V(D_M) \rightarrow 2^{[k]}$ of colour lists to the vertices of $D_M$ according to
	$c_M(v_i)\coloneqq L_i.$
	We claim that this defines a proper $N_M$-colouring of $D_M$. Assume towards a contradiction that there exists a directed cycle $C$ in $D_M$ and a colour $\tilde{c} \in [k]$ such that $\tilde{c} \in L_i$ for every $v_i \in V(C)$. By the definition of $L_i$, this implies that for every $i$ such that $v_i \in V(C_i)$, we can find a vertex $w_i \in M_i$ such that $\tilde{c} \in c(w_i)$. By the properties of the modules, we now immediately conclude that $\CondSet{w_i}{i \in [\ell], v_i \in V(C)}$ forms the vertex set of a directed cycle in $D$ such that $\tilde{c}$ is contained in all colour sets of its vertices. This contradicts the fact that $c$ was chosen as a valid $N$-colouring of $D$. Finally, since $c_M$ uses at most $k$ colours in total, this shows $k_M \leq k$ and concludes the proof of the claimed equality.
	
	Concerning the (algorithmic) construction of an optimal $N$-colouring of $D$ given an optimal $N_M$-colouring of $D_M$ (using $k_M$ colours in total), we can simply compute the $N$-colouring of $D$ as defined in the proof of $k \leq k_M$. This requires at most $\mathcal{O}(|V(D)|)$ operations for each module, and so at most $\mathcal{O}(\ell |V(D)|)$ (rough estimate) in total. 
\end{proof}
\begin{lemma} \label{smallgraphs}
	Given a digraph $D$ on at most $\omega \in \mathbb{N}$ vertices, an assignment $N:~V(D) \rightarrow~\mathbb{N}$ and some $\tau \in \mathbb{N}$ such that $\sum_{v \in V(D)}{N(v)} \leq \tau$, $\vec{\chi}_N(D)$ and a corresponding optimal $N$-colouring can be computed in time $\mathcal{O}(f(\omega)\log^2 \tau)$, where $f(\omega)=2^{\mathcal{O}(\omega 2^\omega)}$.
\end{lemma}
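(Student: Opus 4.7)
The plan is to reformulate the search for an optimal $N$-colouring of a digraph on at most $\omega$ vertices as an integer linear program with at most $2^\omega$ variables, and then invoke \cref{ipsolving}. The key observation is the following bijection-like correspondence: any $N$-colouring $c$ of $D$ naturally decomposes into colour classes $S_i\coloneqq \Set{v\in V(D) : i\in c(v)}$, each of which must induce an acyclic subdigraph of $D$ (for otherwise the cycle would be monochromatic in colour $i$). Conversely, any multiset $\{A_1,\dots,A_k\}$ of acyclic vertex subsets of $D$ such that every vertex $v$ is covered at least $N(v)$ times yields a valid $N$-colouring with $k$ colours, by assigning to $v$ any $N(v)$-element subset of $\CondSet{i\in [k]}{v\in A_i}$. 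Hence $\vec{\chi}_N(D)$ equals the minimum size of such a covering multiset.

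Let $\mathcal{A}$ denote the collection of acyclic subsets of $V(D)$; since $|V(D)|\leq \omega$, we have $|\mathcal{A}|\leq 2^\omega$, and $\mathcal{A}$ can be enumerated by brute force in time $\mathcal{O}(2^\omega\omega^2)$ by testing acyclicity of each subset. Introducing a variable $x_A\in\mathbb{Z}_{\geq 0}$ for each $A\in\mathcal{A}$ counting the multiplicity of $A$ in the cover, the problem becomes the ILP
\[
\min \sum_{A\in\mathcal{A}} x_A \quad\text{subject to}\quad \sum_{A\in\mathcal{A}\colon v\in A} x_A \;\geq\; N(v) \text{ for all } v\in V(D),\; x_A\in\mathbb{Z}_{\geq 0}.
\]
This ILP has $p\coloneqq|\mathcal{A}|\leq 2^\omega$ variables and $|V(D)|\leq \omega$ constraints; the constraint matrix has $0/1$-entries, the right-hand-side $b$ has entries bounded by $\tau$, and the cost vector $c$ equals $\mathbf{1}$. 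The coding length of the input is $L=\mathcal{O}(\omega 2^\omega+\omega\log\tau)$, and the trivial cover by singletons shows that both the optimum value and the maximum entry of an optimum solution vector can be bounded by $\tau$, so we may take $U_1,U_2\leq\tau$ in \cref{ipsolving}.

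Plugging these bounds into \cref{ipsolving} yields a running time of $\mathcal{O}(p^{2.5p+o(p)} L \log(U_1U_2))=2^{\mathcal{O}(\omega 2^\omega)}\cdot(\omega 2^\omega+\omega\log\tau)\cdot\mathcal{O}(\log\tau)$, which is absorbed into $\mathcal{O}(f(\omega)\log^2\tau)$ with $f(\omega)=2^{\mathcal{O}(\omega 2^\omega)}$ as claimed. After retrieving the optimal vector $(x_A)_{A\in\mathcal{A}}$, the corresponding explicit $N$-colouring is built in time polynomial in $\omega$ and $\log\tau$ by writing out $x_A$ copies of each colour class $A$ and greedily assigning each vertex $v$ to the first $N(v)$ classes that contain it; this lower-order contribution does not affect the stated bound. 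The only delicate point is the running-time accounting: one factor of $\log\tau$ comes from the size of the right-hand side $b$ entering $L$, and the second from $\log(U_1U_2)$, together producing the $\log^2\tau$ dependency; all remaining $\omega$-dependent terms are swept into $f(\omega)=2^{\mathcal{O}(\omega 2^\omega)}$.
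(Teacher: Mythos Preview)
Your proposal is correct and follows essentially the same approach as the paper: both reformulate the problem as an ILP with one variable $x_A$ per acyclic vertex subset and invoke \cref{ipsolving}. The only difference is that you use covering constraints $\sum_{A\ni v} x_A \ge N(v)$ whereas the paper uses equalities $\sum_{A\ni v} x_A = N(v)$; your version is also valid since over-coverage can always be trimmed (subsets of acyclic sets stay acyclic), and the optimal values coincide. One small slip: writing out the explicit colouring by listing $x_A$ copies of each class is not polynomial in $\log\tau$ but rather $\mathcal{O}(\omega\tau)$; however, the compact representation $(x_A)_{A\in\mathcal{A}}$ already encodes an optimal $N$-colouring within the stated time bound, and the paper likewise does not expand the colouring explicitly at this stage.
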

\begin{proof}
	We reformulate the problem of determining the $N$-dichromatic number as a linear integer program to enable an application of \cref{ipsolving}. For this purpose, note that we can alternatively represent a colour-list assignment $c:V(D) \rightarrow 2^{[k]}$ by the collection of 'colour classes' $A_i\coloneqq \{v \in V(D)|i \in c(v)\}$ for all colours $i \in [k]$, each of which (by the definition of an $N$-colouring) induces an acyclic subdigraph of $D$.
	
	We can therefore associate an $N$-colouring with a set of variables $x_A, A \in \mathcal{A}(D)$, where $\mathcal{A}(D)$ is the collection of acyclic vertex sets in $D$, and $x_A$ counts the number of $i \in [k]$ such that $A=A_i$. The condition that the assigned colour lists are of the sizes required by $N$ can be formulated as a linear equation for each vertex. This shows that we can compute $\vec{\chi}_N(D)$ as the optimal value of the following ILP:
	\begin{alignat}{3}\label{primalfract}
		~&&~\min \sum_{A \in \mathcal{A}(D)}{x_A} ~&&~\\
		\text{  subj.\@ to }~&&~	\sum_{A \ni v}{x_A}=N(v) ~&&~ \text{ for all } v \in V(D) \nonumber\\
		~&&~  x_A \ge 0 ~&&~ \text{ for all } x_A \in \mathbb{Z}\nonumber
	\end{alignat}
	This ILP in canonical form has $p=\mathcal{O}(|\mathcal{A}(D)|)=\mathcal{O}(2^\omega)$ variables. The coding length $L$ of the matrix and the vectors describing this ILP is clearly bounded by $(\omega+1)2^\omega \log \tau$. Setting up the ILP requires enumerating the subsets $A \subseteq V(D)$ for which $D[A]$ is acyclic. As we can test whether $D[A]$ is acyclic in time $\mathcal{O}(|E(D)|) \leq \mathcal{O}(\omega^2)$, the ILP can be set up in time $\mathcal{O}(\omega^2 2^\omega +\omega 2^\omega \log \tau) \leq \mathcal{O}(\omega^2 2^\omega \log \tau).$
	
	It is readily verified that the optimal value $\vec{\chi}_N(D)$ of the program is bounded from above by $U_1\coloneqq \tau$, more generally, it holds that $\vec{\chi}_N(D) \leq \sum_{v \in V(D)}{N(v)}$ (assign disjoint colour sets to the different vertices). In an optimal solution to program \ref{primalfract}, we certainly have $x_A \leq \tau$ for all $A \in \mathcal{A}(D)$. Therefore we can put $U_2\coloneqq \tau$.  Application of \cref{ipsolving} now yields that there is an algorithm for determining $\vec{\chi}_N(D)$ in time $\mathcal{O}(f(\omega)\log \tau \log \tau^2)=\mathcal{O}(f(\omega)\log^2 \tau)$ for some function $f$. In fact, we may take $f(\omega)=p^{\mathcal{O}(p)}+\omega^2 2^\omega \leq 2^{\mathcal{O}(\omega 2^\omega)}$. This proves the claim.
\end{proof}
\begin{proof}[Proof of \cref{algdichromatic}]
	We follow a recursive approach which makes use of \cref{computedecomposition}, \cref{recursion} and \cref{smallgraphs}.
	
	Assume we are given a digraph $D$, the assignment $N:V(D) \rightarrow \mathbb{N}$ and a natural number $\tau \in \mathbb{N}$ such that $\sum_{v \in V(D)}{N(v)} \leq \tau$ as input. Let $\omega\coloneqq \dmw(D)$ be the directed modular width of $D$. 
	
	If $|V(D)|=1$, say $V(D)=\{v\}$, we return $\vec{\chi}_N(D)=N(v)$ and a colour-list of $N(v)$ different colours.
	
	If $|V(D)| \geq 2$, we first apply the algorithm from \cite{moduledecomposition} to $D$, in order to obtain a partition of $V(D)$ into modules $M_1,\ldots,M_\ell$, where $2 \leq \ell \leq \omega$. We compute the digraphs $D[M_1], \ldots, D[M_\ell]$ as well as the module-digraph $D_M$.
	
	We now recursively apply the algorithm to each of the instances $(D[M_1],N|_{M_1},\tau), \ldots,$ $(D[M_\ell],N|_{M_\ell},\tau)$. Each of the digraphs $D[M_i]$ (according to Fact \ref{inducedmonotonicity}) has directed modular width at most $\omega$ and less vertices than $D$.
	
	Now given the outputs $\vec{\chi}_{N|_{M_i}}(D), i=1,\ldots,\ell$ of these recursive calls (and the corresponding optimal colour-list assignments), as in \cref{recursion}, we define $N_M: V(D_M) \rightarrow \mathbb{N}$ according to $N_M(v_i)\coloneqq \vec{\chi}_{N|_{M_i}}(D)$ for all $i \in [\ell]$. Because of $|V(D_M)|=\ell \leq \omega$ we can now apply the algorithm from \cref{smallgraphs} to the instance $(D_M,N_M,\tau)$ in order to obtain the value of $\vec{\chi}_{N_M}(D_M)$ and a corresponding optimal $N_M$-colouring of $D_M$. Note that the instance $(D_M,N_M,\tau)$ is feasible, as we have the estimate
	$$\sum_{v \in V(D_M)}{N_M(v)}=\sum_{i=1}^{\ell}{\vec{\chi}_{N|_{M_i}}(D[M_i])} \leq \sum_{i=1}^{\ell}{\sum_{v \in M_i}{N(v)}}=\sum_{v \in V(D)}{N(v)} \leq \tau.$$
	By the procedure explained in (the proof of) \cref{recursion}, from an optimal colour-list assignment for $D_M$ with respect to $N_M$ we obtain an optimal colour-list assignment for $D$, and furthermore can calculate the $N$-dichromatic number of $D$ according to $\vec{\chi}_N(D)=\vec{\chi}_{N_M}(D_M)$. 
	
	It remains to argue for the correctness of the algorithm. First of all, the algorithm returns an optimal $N$-colouring of $D$ in finite time: In each of the recursive calls, the number of vertices of the digraphs in the instances is strictly smaller than the number of vertices in the current digraph. At some point, we therefore have reduced the task to such instances where the digraphs consist of a single vertex. In this case, the algorithm outputs a solution without further recursion. 
	
	For the runtime-analysis, we again consider a rooted tree $T$ corresponding to the execution of the algorithm, where the root vertex is identified with the digraph $D$ in the instance, and the remaining vertices are each identified with a different induced subdigraph $D'$ which appears in a recursive call during the execution. The children of a vertex corresponding to such a digraph $D'$ are associated with the induced subdigraphs $D'[M_1'],\ldots,D'[M_{s}']$ for the corresponding module-decomposition $\Set{M_1',\ldots,M_{s}'}$. The leafs of this tree correspond to single-vertex digraphs. 
	
	The runtime consumed by a vertex in $T$ corresponding to a call with instance $(D',N',\tau)$ (disregarding the time needed to execute the recursive calls corresponding to its successors) involves 
	\begin{itemize}
		\item Computing a directed modular decomposition $\Set{M_1',\ldots,M_s'}$ of $D'$ with the algorithm from \cite{moduledecomposition}, which takes time $\mathcal{O}(|V(D')|+|E(D')|) \leq \mathcal{O}(|V(D')|^2)$.
		\item Computing the digraphs $D'[M_1'], \ldots, D'[M_s']$ and the module-digraph $D'_M$. This certainly can be executed in time $\leq \mathcal{O}(|V(D')|^2)$.
		\item Applying the algorithm from \cref{smallgraphs} to the instance $(D'_M,N'_M,\tau)$, which can be executed ($D'_M$ has $s \leq \omega$ vertices) in time $\mathcal{O}(f(\omega)\log^2 \tau)$.
		\item Constructing an optimal $N'$-colouring of $D'$ given an optimal $N'_M$-colouring of $D'_M$ and optimal $N'|_{M_i'}$-colourings of $D'[M_i']$ for all $i \in [s]$. By \cref{recursion}, we get $\mathcal{O}(\omega |V(D')|)$ as an upper bound for the required runtime.
	\end{itemize}
	This yields an upper bound of $\mathcal{O}(|V(D')|^2+f(\omega)\log^2 \tau)=\mathcal{O}(n^2+f(\omega)\log^2 \tau)$ for the runtime needed for the computations corresponding to the non-leaf vertex $D'$ of the tree. For a leaf vertex corresponding to a digraph with unique vertex $v$, we only need to output $N(v) \leq \tau$ different colours, which requires linear time in $\tau$. The number of leafs of $T$ clearly is the number of vertices of $D$, which is $n$. Therefore, summing over all vertices of $T$, we conclude that the runtime of our algorithm is at most $$\mathcal{O}(\underbrace{|V(T)|(n^2+f(\omega)\log^2 \tau)}_{\text{branchings}}+\underbrace{n\tau}_{\text{leafs}}).$$ By Fact \ref{dectree}, we have $|V(T)| \leq 2n-1$. This finally yields the desired upper bound of $\mathcal{O}(n^3+f(\omega)n\log^2 \tau+n\tau)$ for the total run-time. As in \cref{smallgraphs}, we can bound $f$ by $f(\omega)=2^{\mathcal{O}(\omega 2^\omega)}$.
\end{proof}

\section{Hamiltonian Paths and Cycles}\label{sec:hamilton}
In this section we present an algorithm which solves the Path Partitioning Problem on directed graphs by an FPT-algorithm with respect to directed modular width. As simple consequences we obtain FPT-algorithms for the problems Directed Hamiltonian Cycle and Directed Hamiltonian Path. Our approach is similar to the one for the undirected case in \cite{gajarsky2013parameterized}. In the following, a \emph{path-partition} of a digraph $D$ shall be defined to be a collection of vertex-disjoint directed paths $P_1,\ldots,P_r$ in $D$ which together cover all vertices of $D$. In order to guarantee such a collection of paths to always exist, for the remainder of this section, we exceptionally allow a single vertex to be treated as a directed path as well. This will \emph{not} be the case in other sections such as in \cref{vddp}. In the following, we denote by $\ham(D)$ the minimum size of a path-partition of $D$.

\ProblemDefLabelled{Partitioning into Directed Paths}{PDP}{prob:PDP}
{A digraph $D$.}
{Find $\ham(D)$.}

Clearly, a digraph $D$ admits a directed Hamiltonian path iff $\ham(D)=1$. It is clear from that perspective that the above problem is NP-hard in general, even for planar digraphs and for digraphs with bounded degree \cite{P79}. However, on directed acyclic graphs, $\ham(D)$ can be computed in polynomial time \cite{stefan}. 

\begin{theorem} \label{hamlecker}
	There exists an algorithm that, given a digraph $D$, computes $\ham(D)$ in time $\mathcal{O}(n^3+f(\omega) n^2\log n)$, where $n\coloneqq |V(D)|$, $\omega\coloneqq \dmw(D)$ and $f(\omega)=2^{\mathcal{O}(\omega^2 \log \omega)}$.
\end{theorem}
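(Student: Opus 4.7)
The overall strategy follows Section~\ref{sec:strategy}: recurse along a directed modular decomposition and, at each level, solve the reduced problem on the bounded-size module-digraph via an Integer Linear Program. No weighted generalisation of \ref{prob:PDP} is needed here; in the framework of Section~\ref{sec:strategy} the threshold is simply $\tau := n$, which produces the $\log n$ factor in the final runtime.

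The structural heart is the identity
\begin{equation*}
\ham(D) \;=\; \min_{a}\, \ham\bigl(D_M'[a]\bigr), \qquad a \in \prod_{i=1}^{\ell} \bigl\{\ham(D[M_i]),\dots,|M_i|\bigr\},
\end{equation*}
where $\{M_1,\dots,M_\ell\}$ is a module partition of $V(D)$ with module-digraph $D_M$, and $D_M'[a]$ denotes the blow-up obtained by replacing each $v_i \in V(D_M)$ by $a_i$ independent copies, inheriting complete bipartite inter-module adjacencies from the arcs of $D_M$. The ``$\le$'' direction is obtained by picking, for each $i$, an arbitrary partition of $D[M_i]$ into $a_i$ vertex-disjoint directed paths together with any path-partition of $D_M'[a]$, and concatenating the sub-paths as prescribed; the modular property ensures every cross-module concatenation is legal. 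For ``$\ge$'', I would take a path-partition $\mathcal{P}$ of $D$, let $a_i$ be the number of maximal sub-paths of $\mathcal{P}$ inside $M_i$, collapse each such sub-path to one meta-vertex, and read off a path-partition of $D_M'[a]$ of the same size.

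With this identity, the algorithm recurses on every $D[M_i]$ to compute $\ham(D[M_i])$ and then evaluates the right-hand side as an ILP on the quotient. The natural variables --- multiplicities $a_i$, path-start counts $s_i$, path-end counts $e_i$, and inter-module transition counts $y_{ij}$ for each arc $(v_i,v_j)$ of $D_M$, amounting to $\mathcal{O}(\omega^2)$ variables in total --- are tied together by the degree identities $s_i + \sum_j y_{ji} = a_i = e_i + \sum_j y_{ij}$ and the range constraints $\ham(D[M_i]) \le a_i \le |M_i|$, with objective $\sum_i s_i$. The main obstacle is that this naive ILP a priori models a deg-$\le 1$ sub-digraph of $D_M'[a]$ whose components are paths \emph{or directed cycles}, while a genuine path-partition must break every cycle at the cost of one extra path-start. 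I would resolve this by adding $\mathcal{O}(\omega^2)$ auxiliary variables and linear constraints that encode the minimum Eulerian decomposition of the multiset $\{y_{ij}\}$ viewed as a flow on $D_M$: the imbalances $\Delta_i := \sum_j y_{ij} - \sum_j y_{ji}$ peel off $\sum_i \max(0,\Delta_i)$ directed paths in $D_M$, and the remaining Eulerian residue decomposes into at most $\omega$ weakly-connected circulations (since $D_M$ has at most $\omega$ vertices), each contributing exactly one additional path-start after cycle-breaking. All coefficients remain polynomial in $n$, so \cref{ipsolving} solves each ILP in time $2^{\mathcal{O}(\omega^2 \log \omega)}\log n$, and the recursion-tree accounting via \cref{dectree} yields the claimed $\mathcal{O}(n^3 + f(\omega)n^2\log n)$ bound as in the preceding sections.
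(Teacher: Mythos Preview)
Your structural reduction is correct and is essentially what the paper uses as well: the range $\ham(D[M_i]) \le a_i \le |M_i|$ for the number of sub-paths traversing each module is exactly the pair of constraints appearing in \cref{hamprogram}. The gap is in your ILP. You correctly identify that the naive degree equations $s_i + \sum_j y_{ji} = a_i = e_i + \sum_j y_{ij}$ permit spurious circulations, but your proposed repair does not work. The quantity ``number of weakly-connected components of the Eulerian residue'' is not a linear function of the $y_{ij}$, nor can it be expressed with $\mathcal{O}(\omega^2)$ auxiliary integer variables and linear constraints: it depends on the \emph{support pattern} of $y$, and encoding which vertices lie in a common component is a connectivity predicate. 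Your imbalance argument is also confused: with $s_i,e_i$ already present as free variables one has $\Delta_i = s_i - e_i$, so $\sum_i \max(0,\Delta_i)$ is neither $\sum_i s_i$ nor the minimum number of covering walks, and adding a term ``one per residual component'' on top of the objective does not yield a linear program.

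The paper avoids this difficulty by a detour through Hamiltonicity. \cref{obsone} shows $\ham(D)=\min\{s \ge 1 : D \oplus [s] \text{ is Hamiltonian}\}$, where $D \oplus [s]$ adjoins $s$ independent universal vertices; this converts path-partition into a Hamiltonicity test at the price of a linear search over $s \in [n]$. Hamiltonicity of a digraph with a given module decomposition is then decided (\cref{hamprogram}) by an ILP on $D_M$ with edge-multiplicity variables $x_e$, flow balance at each $v_i$, the range constraints above, and---this is the point you are missing---the $2^\ell$ subtour-elimination constraints $\sum_{e \in \delta^+(X)} x_e \ge 1$ for every $\emptyset \ne X \subsetneq V(D_M)$. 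These cut constraints are precisely what force the resulting Eulerian multigraph to be connected, and since $\ell \le \omega$ their number is bounded in terms of $\omega$ alone, so \cref{ipfeasibility} applies with $p = \mathcal{O}(\omega^2)$ variables and encoding length $\mathcal{O}(\ell^2 2^\ell \log n)$. If you wish to keep your direct formulation and skip the loop over $s$, you can do so by adding the analogous $2^\omega$ cut constraints (every nonempty $X \subseteq V(D_M)$ must have either a path-start inside, $\sum_{i \in X} s_i \ge 1$, or an incoming transition, $\sum_{i \in X,\, j \notin X} y_{ji} \ge 1$); but $\mathcal{O}(\omega^2)$ linear constraints cannot enforce this.
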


To prove the above Theorem, we again use a mixture of dynamic programming and Integer Programming with a bounded number of variables. We prepare the proof with the following auxiliary statements.
\begin{lemma} \label{obsone}
	Let $D$ be a digraph, and for any integer $s \ge 1$, let $D \oplus [s]$ denote the digraph obtained from $D$ by adding an independent set of $s$ new vertices to $D$, labeled by the elements of $[s]=\{1,2,\ldots,s\}$, and connecting each new vertex to all vertices in $V(D)$ by digons. Then we have
	$$\ham(D)=\min\CondSet{s \ge 1}{D \oplus [s] \text{ is directed Hamiltonian}}.$$
\end{lemma}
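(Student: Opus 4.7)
The plan is to prove the equality by establishing both inequalities, using an explicit construction of a Hamiltonian cycle from a path partition, and conversely extracting a path partition from a Hamiltonian cycle. First I would observe that the minimum on the right is well-defined: for $s = |V(D)|$, a Hamiltonian cycle in $D \oplus [s]$ is easily built by alternating between the new vertices and the vertices of $D$ in some order, using that every added vertex is joined to every vertex of $D$ by a digon.

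For the direction $\min \leq \ham(D)$, let $r\coloneqq \ham(D)$ and fix an optimal path partition $P_1,\ldots,P_r$ of $D$, where $P_i$ starts in $s_i \in V(D)$ and ends in $t_i \in V(D)$ (possibly $s_i=t_i$ if $P_i$ is a single vertex). Denote the added vertices in $D\oplus [r]$ by $u_1,\ldots,u_r$. Then the sequence
\[
u_1 \to s_1 \to \cdots \to t_1 \to u_2 \to s_2 \to \cdots \to t_2 \to u_3 \to \cdots \to u_r \to s_r \to \cdots \to t_r \to u_1
\]
visits every vertex of $D\oplus[r]$ exactly once; all transition edges of the form $(t_i,u_{i+1})$ and $(u_i,s_i)$ exist because each $u_j$ is connected to every vertex of $V(D)$ by a digon. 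Hence $D\oplus [r]$ is Hamiltonian, and the minimum on the right is at most $r$.

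For the reverse direction $\ham(D) \leq \min$, assume $D \oplus [s]$ admits a directed Hamiltonian cycle $H$. Deleting the $s$ added vertices from $H$ leaves a disjoint collection of (possibly trivial) directed subpaths of $D$ whose union covers $V(D)$; the number of non-empty maximal subpaths is at most $s$, since each such subpath corresponds to a maximal run of old vertices between two consecutive added vertices on the cyclic order of $H$. Because single vertices are admitted as paths, this yields a path-partition of $D$ of size at most $s$, so $\ham(D) \leq s$.

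The main ``obstacle'' is essentially bookkeeping rather than a genuine difficulty: one must be careful to allow trivial paths in the path-partition (as the statement at the start of the section explicitly does) so that the extraction from $H$ in the second direction really produces a valid partition, and to verify that the forward construction gives a simple cycle visiting every vertex exactly once. Both verifications are immediate from the definition of $D\oplus[s]$ and the fact that the added vertices are joined to all of $V(D)$ by digons.
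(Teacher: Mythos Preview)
Your proof is correct and follows essentially the same approach as the paper: build a Hamiltonian cycle in $D\oplus[\ham(D)]$ by interleaving the paths of an optimal partition with the new vertices, and conversely delete the new vertices from a Hamiltonian cycle of $D\oplus[s]$ to recover a path-partition of size at most $s$. The paper is slightly terser (it omits your well-definedness remark and notes that one gets exactly $s$ pieces, since the added vertices are independent and hence non-adjacent on the cycle), but the argument is the same.
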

\begin{proof}
	Let $P_1,\ldots,P_r$ be an optimal path-partition of $D$, i.e., $r=\ham(D)$. It is clear that now $\Brace{1,P_1,2,P_2,3,P_3,\ldots,r,P_r,1}$ defines a directed Hamiltonian cycle in $D \oplus [r]$. On the other hand, given a directed Hamiltonian cycle $C$ in $D \oplus [s]$ for some $s \ge 1$, it is readily verified that the weakly connected components of $C-(V(C) \cap [s])$ form a partition of $D$ into $s$ directed paths. This proves the claim.
\end{proof}
In the proof of our next Lemma we express the existence of a directed Hamiltonian cycle in a digraph $D$ equipped with a modular decomposition as the feasibility of a certain ILP with the number of variables being bounded in terms of the number of modules. 
\begin{lemma} \label{hamprogram}
	Let $D$ be a digraph, $M_1,\ldots,M_\ell$ a partition of $V(D)$ into modules, $D_M$ the corresponding module-digraph with vertex set $\{v_1,\ldots,v_\ell\}$. There exists an algorithm that, given as instance the $D_M$ and the numbers $\ham(D[M_1]),|M_1|,\ldots,\ham(D[M_\ell]),|M_\ell|$, decides whether $D$ is directed Hamiltonian in time $\mathcal{O}(f(\ell) \log n)$, where $n\coloneqq |V(D)|$ and $f(\omega)=2^{\mathcal{O}(\ell^2 \log \ell)}$.
\end{lemma}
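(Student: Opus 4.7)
The plan is to translate the Hamiltonicity question for $D$ into feasibility of integer linear programs over the module-digraph $D_M$, so that \cref{ipsolving} can be applied. Since $|V(D_M)|=\ell$ is small, the only real difficulty is that $D_M$ has up to $\ell^2$ arcs and that the natural encoding involves a weak connectivity condition, which I will handle by a mild enumeration on top of the ILP.

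First, I would prove the structural characterisation: $D$ admits a directed Hamiltonian cycle if and only if there exist integers $k_1,\dots,k_\ell$ with $\ham(D[M_i]) \leq k_i \leq |M_i|$ for every $i$, together with a closed walk in $D_M$ visiting each $v_i$ exactly $k_i$ times. For the forward direction, contracting each module along a Hamiltonian cycle $C$ yields such a closed walk, and the $k_i$ maximal subpaths of $C$ lying in $M_i$ form a path partition of $D[M_i]$. For the reverse direction I would combine two observations: (a) the set of sizes of path partitions of $D[M_i]$ is exactly the interval $[\ham(D[M_i]),|M_i|]$, since any path of length at least one can be split into two shorter paths, so a partition of $D[M_i]$ of size exactly $k_i$ exists; (b) whenever $(v_i,v_j) \in E(D_M)$ the modular property makes $M_i\times M_j$ a complete bipartite arc set in $D$, so on each transition of the closed walk from $v_i$ to $v_j$ an endpoint of a chosen path in $M_i$ can be joined to an arbitrary start of a chosen path in $M_j$. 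Traversing the walk and consuming a fresh path of the chosen partition of $D[M_i]$ at each visit to $v_i$ therefore assembles a Hamiltonian cycle in $D$.

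Next I would encode closed walks in $D_M$ by non-negative integer arc-multiplicities $x_{ij}$ for $(v_i,v_j) \in E(D_M)$. By the Eulerian characterisation of closed walks in directed multigraphs, a closed walk visiting each $v_i$ exactly $k_i$ times exists if and only if the $x_{ij}$ satisfy the balance equations $\sum_{j:(v_i,v_j)\in E(D_M)} x_{ij}=\sum_{j:(v_j,v_i)\in E(D_M)} x_{ji}=k_i$ for every $i$ and the submultigraph formed by the arcs with $x_{ij}>0$ is weakly connected. On top of this I impose the linear bounds $\ham(D[M_i]) \leq k_i \leq |M_i|$. Because $k_i \geq \ham(D[M_i]) \geq 1$, every vertex $v_i$ has positive degree, so weak connectivity is required on the full vertex set $V(D_M)$.

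To bypass the non-linear connectivity requirement I would enumerate all $\leq 2^{|E(D_M)|}\leq 2^{\ell^2}$ subsets $S\subseteq E(D_M)$ such that $(V(D_M),S)$ is weakly connected. For each such $S$ I would solve the ILP in variables $x_{ij}$, $(v_i,v_j)\in E(D_M)$, with the balance and bound constraints above and the additional sign conditions $x_{ij}\geq 1$ for $(v_i,v_j)\in S$ and $x_{ij}=0$ otherwise; the digraph $D$ is Hamiltonian iff at least one of these ILPs is feasible. Each ILP has at most $\ell^2$ variables, and since all coefficients lie in $\{0,1,\ham(D[M_i]),|M_i|\}$ with magnitude at most $n$, its encoding length is $\mathcal{O}(\ell^2\log n)$. \Cref{ipfeasibility} then solves a single instance in time $\ell^{\mathcal{O}(\ell^2)}\log n = 2^{\mathcal{O}(\ell^2\log \ell)}\log n$, and multiplying by the $2^{\ell^2}$ enumerated subsets yields a total bound of $\mathcal{O}(f(\ell)\log n)$ with $f(\ell)=2^{\mathcal{O}(\ell^2\log \ell)}$, as claimed. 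The principal obstacle is expressing the weak connectivity condition in linear form; enumerating the arc-support sets is precisely the step that absorbs this cost within $f(\ell)$.
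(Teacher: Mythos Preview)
Your proposal is correct and attains the stated bound. The structural equivalence you set up (Hamiltonicity of $D$ $\Leftrightarrow$ existence of an Eulerian closed walk in $D_M$ whose visit count at each $v_i$ lies in $[\ham(D[M_i]),|M_i|]$) is exactly the one the paper uses, and your justification for both directions is sound.

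The only genuine difference is how the connectivity condition for the Euler tour is enforced. The paper handles it inside a \emph{single} feasibility ILP by adding the directed cut constraints $\sum_{e\in\delta^+(X)} x_e \geq 1$ for every $\emptyset\neq X\subsetneq V(D_M)$. This introduces $\mathcal{O}(2^\ell)$ extra constraints, but since the number of constraints enters the runtime of \cref{ipfeasibility} only linearly via the encoding length $L$, the resulting $2^\ell$ factor is swallowed by the dominant $p^{\mathcal{O}(p)}=2^{\mathcal{O}(\ell^2\log\ell)}$ term coming from the $\mathcal{O}(\ell^2)$ variables. You instead fix the arc support externally and solve up to $2^{\ell^2}$ smaller ILPs, each with only $\mathcal{O}(\ell^2)$ constraints. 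Both routes arrive at $f(\ell)=2^{\mathcal{O}(\ell^2\log\ell)}$; the paper's is tidier (one feasibility call, no outer loop), while yours keeps each individual ILP structurally simpler at the price of the enumeration.
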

\begin{proof}
	We consider a system of linear constraints on integer variables and prove that its feasible assignments yield directed Hamiltonian cycles in $D$ (and vice versa). In a digraph, for any set $X$ of vertices, let us denote by $\delta^+(X)$ the set of directed edges starting in $X$ and ending outside of $X$. The system of linear equations has a non-negative integer variable $x_e$ for every edge $e \in E(D_M)$. It is defined as follows:
	\begin{alignat*}{2}
		\sum_{t\in N_{D_M}^-(v_i)}{x_{\Brace{t,v_i}}}&=\sum_{t \in N_{D_M}^+(v_i)}{x_{\Brace{v_i,t}}} ~&&~ \text{ for all } i \in [\ell],\\
		\sum_{t \in N_{D_M}^+(v_i)}{x_{\Brace{v_i,t}}} &\leq |M_i| ~&&~ \text{ for all }i \in [\ell],\\
		\sum_{t \in N_{D_M}^+(v_i)}{x_{\Brace{v_i,t}}} &\geq \ham(D[M_i]) ~&&~ \text{ for all }i \in [\ell],\\
		\sum_{e \in \delta^+(X)}{x_e} &\ge 1 ~&&~ \text{ for all } \emptyset \neq X \subsetneq V(D_M),\\
		0 &\leq x_{e} \in \mathbb{Z} ~&&~ \text{ for all }e \in E(D_M).
	\end{alignat*}
	Given a directed Hamiltonian cycle $C$ in $D$, for any edge $e=(v_i,v_j) \in E(D_M)$, we define $x_e \ge 0$ as the number of edges on $C$ starting in $M_i$ and ending in $M_j$. The first two types of inequalities in the system are easily seen to be satisfied for any $i \in [\ell]$, because $C$ enters and leaves the module $M_i$ the same number of times, and clearly has to use a different vertex from $M_i$ each time it enters. Moreover, the weak components of $C-(V(C) \setminus M_i)$ form a partition of the digraph $D[M_i]$ into directed paths. The number of these paths is equal to the number $\sum_{t \in N_{D_M}^-(v_i)}{x_{\Brace{t,v_i}}}=\sum_{t \in N_{D_M}^+(v_i)}{x_{\Brace{v_i,t}}}$ of times the cycle $C$ enters or leaves the module, respectively. Therefore, this number must be lower-bounded by $\ham(D[M_i])$. Finally, because $C$ is a directed cycle that visits all modules, it follows that for every non-trivial partition of $V(D_M)$, it has to have edges in $D_M$ between the partition sets in both directions. This implies that the last inequality is satisfied. 
	
	The other way round, assume we are given a feasible variable-assignment $x_e \in \mathbb{N}_0$ of variables for the above system of linear inequalities. Let $D_M^\ast$ denote the multi-digraph obtained from $D_M$ by replacing every edge $e \in E(D_M)$ by $x_e$ parallel edges if $x_e \ge 1$, and deleting the edge $e$ if $x_e=0$. The first type of constraints shows that in this multi-digraph, in- and out-degree coincide for every vertex. The inequalities $\sum_{e \in \delta^+(X)}{x_e} \ge 1, \text{ for all } \emptyset \neq X \subsetneq V(D_M)$ ensure that $D_M^\ast$ is connected. Finally, this implies that $D_M^\ast$ admits a directed Eulerian tour, and therefore, there is a closed directed walk $W$ in $D_M$ which visits every edge $e \in E(D_M)$ exactly $x_e$ times. The remaining two conditions ensure that for any $i \in [\ell]$, the number $n_i\coloneqq \sum_{t \in N_{D_M}^-(v_i)}{x_{\Brace{t,v_i}}}=\sum_{t \in N_{D_M}^+(v_i)}{x_{\Brace{v_i,t}}}$ of times $W$ visits $v_i$ fulfils $\ham(D[M_i]) \leq n_i \leq |M_i|$. The latter implies that there exists a partition of $D[M_i]$ into exactly $n_i$ directed paths. 
	It is now easy to construct a directed Hamiltonian cycle. We start with some module, say $M_1$, and visit the modules in the order of traversal of the modules as given by $W$. Between entering a module $M_i$ and leaving it again, we have to move on a directed path within $D[M_i]$. This path will be chosen as follows: If we enter the module $M_i$ for the $j$-th time, where $1 \leq j \leq n_i$, we move along the $j$-th directed path in the path-partition of $D[M_i]$ selected above and then leave the module again. Note that during the process, whenever we leave a module $M_i$ and enter a module $M_j$, we have $(v_i,v_j) \in D_M$, and therefore, by the definition of $D_M$, all directed edges $(z_1,z_2)$, $z_1 \in M_i, z_2 \in M_j$ exist in $D$. This finally implies that the described traversal of the vertices in $D$ indeed defines a directed Hamiltonian cycle. 
	
	We have proven that checking Hamiltonicity of $D$ is equivalent to deciding feasibility of the above system of linear constraints for integer variables. This system has $p=|E(D_M)|=\mathcal{O}(\ell^2)$ variables and $\mathcal{O}(2^\ell)$ constraints. The coding length $L$ of the matrix and the vector describing the linear constraints is therefore bounded by $\mathcal{O}(\ell^2 2^\ell \log n)$ (note that all absolute values in the matrix and the vector are bounded by $n$). Hence, according to \cref{ipfeasibility}, this task can be executed in time $\mathcal{O}(f(\ell) \log n)$, as claimed. Here we have $f(\ell)=p^{O(p)} \cdot \ell^2 2^\ell=2^{\mathcal{O}(\ell^2 \log \ell)}$.
\end{proof}
\begin{corollary}\label{hamrecursion}
	If we are given a digraph $D$, a module-decomposition $M_1,\ldots,M_\ell$, the module-digraph $D_M$ and the numbers $\ham(D[M_1]),\ldots,\ham(D[M_\ell])$, we can compute $\ham(D)$ in time $\mathcal{O}(f(\ell)n \log n)$, where $n\coloneqq |V(D)|$, $\omega\coloneqq \dmw(D)$ and $f(\ell)=2^{\mathcal{O}(\ell^2 \log \ell)}$.
\end{corollary}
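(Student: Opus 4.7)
The plan is to reduce the computation of $\ham(D)$ to checking directed Hamiltonicity of the auxiliary digraphs $D \oplus [s]$ provided by \cref{obsone}, and to then apply \cref{hamprogram} to each such check by lifting the given modular decomposition of $D$.

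First, I would observe that the partition $M_1, \ldots, M_\ell$ of $V(D)$ together with the set $[s]$ forms a partition of $V(D \oplus [s])$ into $\ell + 1$ modules. Each $M_i$ remains a module in $D\oplus [s]$ because the new vertices in $[s]$ are universally adjacent (in both directions) to every vertex of $V(D)$, so the outside out- and in-neighbourhoods of any two vertices in $M_i$ still coincide. Similarly, $[s]$ itself is an independent-set module, since every vertex in $[s]$ has $V(D)$ as both its out- and in-neighbourhood outside $[s]$. The corresponding module-digraph of $D\oplus [s]$ is obtained from $D_M$ by adding one extra vertex $v_0$ and connecting it to each $v_i$ by a digon, which takes only $\mathcal{O}(\ell^2)$ time.

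Second, I would read off all data required by \cref{hamprogram} for this enlarged decomposition. The numbers $\ham(D[M_i])$ and $|M_i|$ for $i\in[\ell]$ are part of the input. For the new module $[s]$, the induced subdigraph $(D\oplus [s])[[s]]$ is edgeless, so (using the single-vertex convention for paths) $\ham((D\oplus [s])[[s]]) = s$ and $|[s]| = s$. Hence all inputs needed by \cref{hamprogram} are available without any further recursive work.

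Third, I would search over the right $s$. By \cref{obsone}, $\ham(D)$ is the minimum $s \ge 1$ such that $D\oplus [s]$ admits a directed Hamiltonian cycle, and this predicate is monotone in $s$ (splitting any path in a size-$s$ partition yields one of size $s+1$). Hence binary search on $s \in\{1,\ldots,n\}$ requires $\mathcal{O}(\log n)$ invocations of \cref{hamprogram}; alternatively, a simple linear sweep also fits the budget. Each invocation runs on a digraph with at most $2n$ vertices and $\ell+1$ modules, costing $\mathcal{O}(f(\ell+1)\log(2n)) = \mathcal{O}(2^{\mathcal{O}(\ell^2 \log \ell)}\log n)$ time, which yields the claimed total bound $\mathcal{O}(f(\ell)\, n \log n)$ (with plenty of slack). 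The main technical obstacle is essentially bookkeeping: verifying that the extended module-partition really is valid, that the auxiliary input parameters are computed correctly, and that the $\ell+1$ module count does not inflate the function $f$ beyond $2^{\mathcal{O}(\ell^2\log\ell)}$. The genuine algorithmic content is packaged inside \cref{obsone} and \cref{hamprogram}.
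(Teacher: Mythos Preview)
Your proposal is correct and follows essentially the same approach as the paper: extend the given module-decomposition of $D$ by the extra module $[s]$, note that $\ham$ of the edgeless module equals $s$, and invoke \cref{hamprogram} on $D\oplus[s]$ for varying $s$ via \cref{obsone}. The paper simply loops over all $s\in[n]$ rather than using binary search, but your linear-sweep alternative matches the paper exactly and the binary-search refinement is a harmless bonus.
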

\begin{proof}
	Using the algorithm from \cref{hamprogram}, for any fixed number $s \in [n]$, we can test whether $D \oplus s$ is Hamiltonian in time $\mathcal{O}(f(\ell) \log n)$. For this we add the set $M_{\ell+1}\coloneqq [s]$ of extra vertices as the $(\ell+1)$-the module to the module-decomposition of $D$, to obtain a module-decomposition of $D \oplus [s]$. We clearly know that $\ham((D \oplus [s])[M_{\ell+1}])=s$. Using \cref{obsone}, we therefore can compute $\ham(D)$ in time $n\mathcal{O}(f(\ell) \log n)=\mathcal{O}(f(\ell) n\log n)$, as claimed.
\end{proof}
We are now ready for the proof of \cref{hamlecker}. 
\begin{proof}[Proof of \cref{hamlecker}.]
	Let $D$ be the input digraph of directed modular width $\omega$. If $|V(D)|=1$, we simply return $\ham(D)=1$. Otherwise, we apply \cref{computedecomposition} in order to obtain a module-decomposition $M_1,\ldots,M_\ell$ of $D$ where $2 \leq \ell \leq \omega$ in time $\mathcal{O}(|V(D)|+|E(D)|)=\mathcal{O}(n^2)$. We now make a recursive call to $D[M_i]$ to obtain the value $\ham(D[M_i])$ for $i=1,\ldots,\ell$. Using \cref{hamrecursion} we can now compute $\ham(D)$ in time $\mathcal{O}(f(\omega) \log n)$. In total, these steps (except for the time needed to execute the recursive calls) require time $\mathcal{O}(n^2+f(\omega) \log n)$. The same upper bound applies to the running time consumed by any other recursive call during the algorithm (because each induced subdigraph of $D$ has modular width at most $\omega$). Again, the structure of recursive calls corresponds to a decomposition-tree $T$ on at most $2n-1$ vertices. The computation corresponding to one of the $n$ leafs of the tree terminates after constant time, and so we get an upper bound of 
	$$\mathcal{O}(|V(T)|(n^2+f(\omega) n\log n)+n)=\mathcal{O}(n^3+f(\omega) n^2\log n)$$ on the required runtime of the algorithm. This concludes the proof of the Theorem. 
\end{proof}

We conclude by noting the following two direct consequences of Theorem \cref{hamlecker} and \cref{hamprogram}.

\begin{corollary}
There exists an algorithms for testing the existence of a Hamiltonian path or of a Hamiltonian cycle in a digraph $D$ in time $\mathcal{O}(n^3+f(\omega) n^2\log n)$ where $n\coloneqq |V(D)|$, $\omega\coloneqq \dmw(D)$ and $f(\omega)=2^{\mathcal{O}(\omega^2 \log \omega)}$. 
\end{corollary}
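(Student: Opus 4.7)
The plan is to reduce both problems directly to the infrastructure already established in \cref{hamlecker} and \cref{hamprogram}.

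For the \textbf{Hamiltonian path} version, I would observe that a digraph $D$ admits a directed Hamiltonian path if and only if $\ham(D)=1$: a path-partition of size one is by definition a single directed path spanning $V(D)$. Hence it suffices to invoke the algorithm of \cref{hamlecker} to compute $\ham(D)$ and return ``yes'' exactly when the output equals $1$. The claimed runtime is inherited verbatim.

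For the \textbf{Hamiltonian cycle} version, the analogous reduction via $\ham(D)=1$ fails (e.g.\@ any transitive tournament on $n\ge 2$ vertices satisfies $\ham(D)=1$ but is acyclic and thus not Hamiltonian). Instead, I would apply \cref{hamprogram} directly at the top of a recursion. Concretely: if $|V(D)|\le 1$ the question is trivial; otherwise, first compute a non-trivial module-decomposition $M_1,\ldots,M_\ell$ of $D$ with $\ell\le\omega$ via \cref{computedecomposition}, then recursively compute $\ham(D[M_i])$ for every $i\in[\ell]$ using \cref{hamlecker}, assemble the module-digraph $D_M$, and feed $(D_M,\ham(D[M_1]),|M_1|,\ldots,\ham(D[M_\ell]),|M_\ell|)$ into the ILP-based procedure of \cref{hamprogram} to decide whether $D$ itself is Hamiltonian.

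For the runtime, the $\ell\le\omega$ recursive invocations of \cref{hamlecker} on $D[M_1],\ldots,D[M_\ell]$ together cost $\mathcal{O}(n^3+f(\omega)n^2\log n)$ -- each $D[M_i]$ has directed modular width at most $\omega$ by Fact~\ref{inducedmonotonicity}, so the bound from \cref{hamlecker} applies to each recursive call -- and the single top-level call to \cref{hamprogram} contributes only an extra $\mathcal{O}(f(\omega)\log n)$, which is absorbed by the dominant term. The only conceptual obstacle is that the Hamiltonian cycle case does not reduce to a clean numerical condition on $\ham(D)$, so one has to invoke the ILP-based Hamiltonicity test from \cref{hamprogram} at the root of the recursion rather than attempt a purely ``ham-value'' reduction as in the Hamiltonian path case.
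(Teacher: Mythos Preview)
Your proposal is correct and matches the paper's intended approach: the paper states the corollary as a direct consequence of \cref{hamlecker} and \cref{hamprogram} without further proof, and you have spelled out precisely that derivation—$\ham(D)=1$ for the Hamiltonian path test, and a single top-level application of \cref{hamprogram} (fed with the module-wise $\ham$ values from \cref{hamlecker}) for the Hamiltonian cycle test. A minor streamlining: the values $\ham(D[M_i])$ are already computed internally when running the algorithm of \cref{hamlecker} on $D$, so you need not invoke it separately on each module; but your version is equally valid and yields the same asymptotic bound.
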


\section{Disjoint Directed Paths} \label{vddp}
An important computational problem for digraphs is the so-called \emph{disjoint directed paths problem}. In this problem, as an instance, we are given a digraph $D$ and pairwise disjoint pairs $(s_1,t_1), \ldots, (s_r,t_r)$ of distinct vertices. Our task is to determine whether there exist pairwise vertex-disjoint directed paths $P_1,\ldots, P_r$ in $D$ such that $P_i$ starts in $s_i$ and ends in $t_i$ for all $i \in [r]$, and, if they exist, to find such a collection of paths. 

\ProblemDefLabelled{$r$ Vertex-Disjoint Directed Paths}{$r$-VDDP}{prob:rVDDP}
{A digraph $D$, disjoint pairs $(s_1,t_1), \ldots, (s_r,t_r)$ of vertices.}
{Does there exist a collection $P_1, \ldots, P_r$ of pairwise vertex-disjoint directed paths such that $P_i$ starts in $s_i$ and ends in $t_i$, for all $i \in [r]$? 
	
	If so, find one.}

The disjoint directed paths problem has long been known to be very hard algorithmically even for $r=2$. In the following we cite some important negative and positive results from literature. For an overview of the topic, we refer to the articles \cite{frank} and \cite{robseym}. 
\begin{itemize}
	\item $1$-VDDP is polynomial-time solvable for arbitrary digraphs\footnote{Reachability in directed graphs is a simple special case of the max-flow problem and can be solved using one of the well-known polynomial algorithms for this task (see for instance \cite{edmonds}).}. 
	\item For general digraphs, the existence version of \ref{prob:rVDDP} is NP-complete, for any fixed $r \ge 2$ (\cite{fortune}).
	\item There is an FPT-algorithm solving $r$-VDPP with respect to parameter $r$ for planar digraphs \cite{planardisjointpaths}.
	\item The \ref{prob:rVDDP} is polynomial-time solvable on DAGs (\cite{fortune}), for any fixed $r \ge 1$. More generally, for any fixed $r \ge 1$, there exists an XP-algorithm for \ref{prob:rVDDP} when parametrising with the directed treewidth $\dirtw(D)$ of the digraph \cite{dirtreewidth}. No FPT-algorithms with respect to parameters $r$ and $\dirtw$ are known to exist.
\end{itemize}
As the main contribution of this section, we show that \ref{prob:rVDDP} on arbitrary digraphs is fixed-parameter tractable with respect to the parameters $r$ and directed modular width. In fact what we prove implies a more general result, see \cref{sec:homeo}. Similar to \cref{digraphcolouring}, we consider a well-chosen weighted generalisation of \ref{prob:rVDDP} which allows a recursive approach. 

In the following, given a digraph $D$ and vertices $s, t \in V(D)$, we will use the term \emph{$s$-$t$-path} in order to refer to a usual directed path starting in $s$ and ending in $t$ in the case of $s\neq t$, and to a directed cycle with a designated 'beginning' and 'end' at the vertex $s=t$ otherwise.

For the generalised problem, it will be necessary to allow that in the given pairs $(s_1,t_1), \ldots, (s_r,t_r)$ of vertices, several vertices coincide, i.e., that we have $s_i=t_j$, $s_i=s_j$, or $t_i=t_j$ for some $i,j \in [r]$. Our goal will be to find an $s_i$-$t_i$-path $P_i$ in $D$ for every $i \in [r]$. However, in general, we do not require the paths to be disjoint any more, they are allowed to share vertices. Instead, we bound the number of times a vertex can be traversed by the paths in total by a respective non-negative integer-capacity. In the special case where all the capacities are equal to $1$, we get disjoint paths (as in the usual setting). For an $s$-$t$-path with $s=t$, we count two traversals of the vertex $s=t$ (at the beginning and at the end).
For an illustration of this generalised version of \ref{prob:rVDDP} together with a feasible solution in which one of the $\Brace{s_i,t_i}$ pairs has to be connected by a directed cycle (namely $\Brace{s_2,t_2}$ in this example) consider \cref{fig:rVDDPCexample}.

The last important alteration of the original problem is that in addition to testing whether a collection of $s_i$-$t_i$-paths as required exists, in case it does, we want to find one that minimises the \emph{sum of the sizes} of the paths. In the following, the \emph{size} of an $s$-$t$-path $P$ shall be defined as $|P|\coloneqq |E(P)|+1$. $|P|$ is defined such that (independent of whether $s=t$ or $s \neq t$), it counts the total number of traversals of vertices by $P$.

For the purpose of applying the method from \cref{ipsolving}, we again use a threshold $\tau \in \mathbb{N}$ as an input, which bounds the sum of the vertex-capacities.

To formulate our generalised problem properly, we need the following terminology.

Given a list $\mathcal{S}=[(s_1,t_1), \ldots, (s_r,t_r)]$ of vertex-pairs equipped with vertex-capacities $w:V(D) \rightarrow \mathbb{N}_0$ of a digraph $D$, let us say that a collection $P_1,\ldots,P_r$ of $s_i$-$t_i$-paths in $D$ is \emph{feasible} with respect to $(w, \mathcal{S})$, if any vertex $z \in V(D)$ is traversed at most $w(z)$ times in total by the collection. Furthermore, let us define $W(D,w,\mathcal{S})$ to be the minimum of $$W(P_1,\ldots,P_r)\coloneqq \sum_{i=1}^{r}{|P_i|}$$ over all feasible collections $\mathcal{P}=(P_1,\ldots,P_r)$ of $s_i$-$t_i$-paths in $D$. If such a collection of $s_i$-$t_i$-paths does not exist, by convention, we put $W(D,w,\mathcal{S})=\infty$. A feasible collection $\mathcal{P}$ for the pair $(w,\mathcal{S})$ is called \emph{optimal}, if it attains the minimum, i.e., $W(\mathcal{P})=W(D,w,\mathcal{S})$.

\begin{figure}[h!]
	\begin{center}
		\begin{tikzpicture}[scale=1]
		
		\pgfdeclarelayer{background}
		\pgfdeclarelayer{foreground}
		
		\pgfsetlayers{background,main,foreground}
		
		\begin{pgfonlayer}{main}
		
		\node (C) [] {};
		
		\node (C1) [v:ghost, position=180:25mm from C] {};
		
		\node (C2) [v:ghost, position=0:0mm from C] {};
		
		\node (C3) [v:ghost, position=0:25mm from C] {};

		

		
		
		\node (v1) [v:main,position=0:0mm from C2,fill=white,scale=1] {$2$};
		\node (v2) [v:main,position=120:15mm from v1,fill=white,scale=1] {$2$};
		\node (v3) [v:main,position=180:15mm from v1,fill=white,scale=1] {$3$};
		\node (v4) [v:main,position=240:15mm from v1,fill=white,scale=1] {$2$};
		\node (v5) [v:main,position=0:15mm from v1,fill=white,scale=1] {$2$};
		\node (v6) [v:main,position=270:15mm from v5,fill=white,scale=1] {$1$};
		\node (v7) [v:main,position=0:15mm from v5,fill=white,scale=1] {$1$};
		\node (v8) [v:main,position=90:15mm from v5,fill=white,scale=1] {$1$};
		
		\node (Ls1) [v:ghost,position=135:4mm from v2,font=\scriptsize] {$s_1$};
		\node (Ls2t2) [v:ghost,position=270:3.5mm from v4,font=\scriptsize] {$s_2=t_2$};
		\node (Lt1) [v:ghost,position=270:3.5mm from v6,font=\scriptsize] {$t_1$};
		\node (Ls3) [v:ghost,position=225:4mm from v3,font=\scriptsize] {$s_3$};
		\node (Lt3) [v:ghost,position=0:4mm from v7,font=\scriptsize] {$t_3$};
		
		
		
		

		

		
		
		\draw (v1) [e:main,->,bend right=15] to (v2);
		\draw (v1) [e:main,->] to (v5);
		\draw (v1) [e:main,->,bend right=15] to (v6);
		
		\draw (v2) [e:main,->,bend right=15] to (v1);
		\draw (v2) [e:main,->] to (v3);
		\draw (v2) [e:main,->] to (v8);
		
		\draw (v3) [e:main,->] to (v1);
		\draw (v3) [e:main,->] to (v4);
		
		\draw (v4) [e:main,->] to (v1);
		\draw (v4) [e:main,->] to (v6);
		
		\draw (v5) [e:main,->] to (v7);
		\draw (v5) [e:main,->,bend right=15] to (v8);
		\draw (v5) [e:main,->] to (v6);
		
		\draw (v6) [e:main,->,bend right=15] to (v1);
		\draw (v6) [e:main,->] to (v7);
		
		\draw (v7) [e:main,->] to (v8);
		
		\draw (v8) [e:main,->] to (v1);
		\draw (v8) [e:main,->,bend right=15] to (v5);

		

		
		
		\end{pgfonlayer}
		

		\begin{pgfonlayer}{background}
		
		\draw (v1) [e:marker,color=CornflowerBlue,bend right=15] to (v2);
		\draw (v2) [e:marker,color=CornflowerBlue] to (v3);
		\draw (v3) [e:marker,color=CornflowerBlue] to (v4);
		\draw (v4) [e:marker,color=CornflowerBlue] to (v1);
		
		\draw (v2) [e:marker,color=magenta] to (v8);
		\draw (v8) [e:marker,color=magenta,bend right=15] to (v5);
		\draw (v5) [e:marker,color=magenta] to (v6);
		
		\draw (v3) [e:marker,color=Amber!85!black] to (v1);
		\draw (v1) [e:marker,color=Amber!85!black] to (v5);
		\draw (v5) [e:marker,color=Amber!85!black] to (v7);
		
		\end{pgfonlayer}	
		
		\begin{pgfonlayer}{foreground}

		\end{pgfonlayer}
		\end{tikzpicture}
	\end{center}
	\caption{An example of an \ref{prob:rVDDP-C} instance together with a feasible solution.}
	\label{fig:rVDDPCexample}
\end{figure}
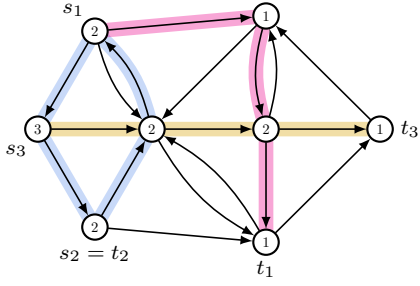

\ProblemDefLabelled{$r$ Vertex-Disjoint Directed Paths (Capacity Version)}{$r$-VDDP-C}{prob:rVDDP-C}
{A digraph $D$, a list $\mathcal{S}=[(s_1,t_1), \ldots, (s_r,t_r)]$ of not necessarily disjoint pairs of not necessarily distinct vertices in $D$, non-negative capacities $w:V(D) \rightarrow \mathbb{N}_0$ on the vertices and a threshold $\tau \in \mathbb{N}$ such that $\sum_{z \in V(D)}{w(z)} \leq \tau$.}
{Does there exist a feasible collection $P_1,\ldots,P_r$ compatible with $(w,\mathcal{S})$? 
	
	If so, output the value of $W(D,w,\mathcal{S})$ and a corresponding optimal collection.}
\begin{theorem}\label{algdisjpaths}
	There exists an algorithm solving the capacity-version of \ref{prob:rVDDP} which runs in time $\mathcal{O}\left(n^3+f(r,\omega)(n^2+n\log \tau)\right)$, where $n\coloneqq |V(D)|$ and $\omega\coloneqq \dmw(D)$, and $f(r,\omega)=2^{\mathcal{O}(r \log r \cdot 2^\omega \omega)}$.
\end{theorem}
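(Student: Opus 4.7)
The plan is to follow the general recursive strategy from Section~\ref{sec:strategy}. Given input $(D,\mathcal{S},w,\tau)$, I first handle the trivial case $|V(D)|=1$, and otherwise apply \cref{computedecomposition} to obtain a partition of $V(D)$ into modules $M_1,\ldots,M_\ell$ with $2\le\ell\le\omega$. The algorithm reduces the problem to recursive subproblems on $D[M_1],\ldots,D[M_\ell]$ together with an integer linear program on the module-digraph $D_M$ that stitches the pieces back together.

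The structural observation driving the approach is the following: for any feasible collection $(P_1,\ldots,P_r)$ and any module $M_i$, the intersections $P_j\cap M_i$ decompose into a family of subpaths of $D[M_i]$, each of whose endpoints is either \emph{anchored} at a terminal $s_{j'}$ or $t_{j'}$ belonging to $M_i$, or \emph{free}, corresponding to a boundary crossing. By the modular property, every free endpoint can be taken to be an arbitrary vertex of $M_i$: an external vertex has edges to some vertex of $M_i$ if and only if it has edges to every vertex of $M_i$. Conversely, any collection of subpaths with the correct endpoint types in each module can be glued along edges of $D_M$ into a feasible global collection whose total cost is the sum of the subpath sizes. I would formalise this equivalence as an auxiliary lemma.

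The combinatorial data recording how the anchors in $M_i$ are paired with each other or with free endpoints, together with the orientations of the resulting subpaths, constitutes an \emph{anchor pattern} of $M_i$. Since each module contains at most $2r$ anchors, there are at most $2^{\mathcal{O}(r\log r)}$ such patterns per module. For each pattern $p$ I would recursively solve a suitably generalised instance on $(D[M_i],w|_{M_i})$ — where the generalisation allows demands to have \emph{free} endpoints — to obtain both the minimum cost of realising the anchored subpaths of pattern $p$ and the marginal cost (and maximum feasible number) of additional free-free transits. Observe that under the modular property an optimal free-free transit can always be realised by a shortest residual path inside $M_i$, so the transit cost as a function of the transit count is easy to tabulate. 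A slight generalisation of the recursive input format (beyond \ref{prob:rVDDP-C}) is needed for this recursion to close on itself.

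I then set up an integer linear program on $D_M$ with flow variables $x_{j,e}\in\mathbb{Z}_{\ge 0}$ for each path $j\in[r]$ and each edge $e\in E(D_M)$, together with binary pattern-selection variables $z_{i,p}$ for each module $M_i$ and each anchor pattern $p$. The constraints encode flow conservation per path (with appropriate adjustments at source and sink modules), walk-connectedness of each per-path flow via at most $2^\omega$ cut constraints per path, consistency between the chosen pattern at $M_i$ and the transit count implied by the flow at $M_i$, and the capacity bound $w(M_i)$. The objective sums the recursively precomputed per-module costs and is linear in the variables. The total number of variables and the coding length of the input then allow \cref{ipsolving} to solve this ILP in time $f(r,\omega)\log\tau$ with $f(r,\omega)=2^{\mathcal{O}(r\log r\cdot 2^\omega\omega)}$. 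The hard part will be the bookkeeping — ensuring the interplay between anchor patterns, free endpoints, and connectedness constraints captures every feasible collection exactly once, without double-counting transits or creating disconnected per-path flows. Combining this with the recursion-tree bound $|V(T)|\le 2n-1$ from \cref{dectree}, together with per-node cost $\mathcal{O}(n^2+f(r,\omega)(n+\log\tau))$ (dominated by the module decomposition and the ILP solve), yields the overall running time $\mathcal{O}(n^3+f(r,\omega)(n^2+n\log\tau))$.
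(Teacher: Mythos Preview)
Your plan has the right recursive shape but misses the structural lemma that makes everything clean. The paper proves (\cref{reducedpaths}) that any \emph{reduced} $s$-$t$-path---and hence every path in an optimal collection---meets each module in at most a single vertex, with one exception: if $s\neq t$ lie in a common module $M_k$ and the path leaves $M_k$, then $V(P)\cap M_k=\{s,t\}$ exactly. So your ``free-free transits'' are always single vertices, an ``anchored-free subpath'' is just the anchor itself, and there are no nontrivial subpaths to route inside a module for any boundary-crossing path. Your anchor patterns therefore collapse to a single binary choice per pair $(s_i,t_i)$ with $s_i,t_i$ in a common module: does $P_i$ stay inside or leave?

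The paper enumerates this choice as a subset $B\subseteq[r']$ \emph{outside} the ILP. For fixed $B$, the recursive subproblem on each $D[M_k]$ is then a genuine instance of \ref{prob:rVDDP-C} (just the internal pairs with $i\notin B$, capacities decremented at the terminals of crossing paths), so no ``free endpoint'' generalisation is needed at all. On $D_M$ each crossing path projects to a simple path or cycle, so the ILP takes one binary variable per pair (demand $i$, simple $\eta(s_i)$-$\eta(t_i)$-path in $D_M$): at most $p\le r2^\omega$ variables, which via \cref{ipsolving} is what produces the stated $f(r,\omega)=2^{\mathcal{O}(r\log r\cdot 2^\omega\omega)}$. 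Your flow-plus-pattern ILP has on the order of $\omega\cdot 2^{\mathcal{O}(r\log r)}$ variables and yields an exponent of the form $2^{\mathcal{O}(r\log r)}\cdot\omega$ rather than $2^\omega\cdot r\log r$, which does not match the theorem. One further subtlety you only allude to: different choices of $B$ demand different recursive values from each child, and a naive recursion would pick up a factor $2^r$ at every level of the decomposition tree. The paper's fix (\cref{subcollections}) is to compute, in a single recursive pass, the optimum for \emph{all} sublists $\mathcal{S}(A)$, $A\subseteq[r]$, with correspondingly reduced capacities $w_A$; this is the precise ``slight generalisation'' you need, and it must be spelled out for the running time to go through.
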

Putting $w(v)\coloneqq 1, v \in V(D)$ we can take $\tau=n$, and obtain:
\begin{corollary}
	There exists an algorithm that solves \ref{prob:rVDDP} and runs in time $\mathcal{O}\left(n^3+f(r,\omega)n^2\right)$, where $n$ is the number of vertices in the input digraph, $\omega$ denotes its directed modular width, and $f(r,\omega)=2^{\mathcal{O}(r \log r \cdot 2^\omega \omega)}$ is some function.
\end{corollary}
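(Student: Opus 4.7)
The plan is to apply the general recursive strategy from Section \ref{sec:strategy}. Given an instance $(D, \mathcal{S}, w, \tau)$ with $|V(D)| \ge 2$, first compute a modular decomposition $M_1, \ldots, M_\ell$ with $2 \le \ell \le \omega$ using Theorem \ref{computedecomposition}, and let $D_M$ be the corresponding module-digraph on vertices $v_1, \ldots, v_\ell$. The base case $|V(D)|=1$ is trivial (feasibility is determined solely by the single capacity $w(v)$).

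The structural backbone of the algorithm is the following observation. In any feasible collection $(P_1, \ldots, P_r)$, each $P_i$ projects to a walk $W_i$ in $D_M$ from $\eta(s_i)$ to $\eta(t_i)$ obtained by applying $\eta$ to the vertex-sequence of $P_i$ and contracting consecutive duplicates. Each visit of $W_i$ to some $v_j$ corresponds to a maximal directed sub-path of $P_i$ lying entirely in $D[M_j]$. Because $M_j$ is a module, any pair of vertices of $M_j$ can serve as the entry/exit vertices of such a sub-path, subject only to the pinning constraints $s_i \in M_j$ or $t_i \in M_j$. Thus, once the walks $W_i$ and the pinning assignments are fixed, the task within each module $M_j$ becomes a smaller instance of \ref{prob:rVDDP-C} that can be solved recursively on $D[M_j]$ with capacities $w|_{M_j}$.

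I would then formulate an ILP on $D_M$ with $\mathcal{O}(r\omega^2)$ variables that jointly encodes the walks and the choice of entry/exit vertices. For each $i \in [r]$ and each $e \in E(D_M)$, introduce a variable $x_e^{(i)}$ counting how often $W_i$ uses $e$; add flow-conservation constraints at each $v_j$, with supply/demand shifted according to whether $v_j$ equals $\eta(s_i)$ or $\eta(t_i)$, together with subset-based connectivity constraints analogous to those of Lemma \ref{hamprogram} to ensure the support of $x^{(i)}$ actually forms a walk. Auxiliary integer variables specify endpoint multiplicities for the induced sub-path requests inside each module; the aggregate cost is coupled to precomputed sub-instance optima $W(D[M_j], w|_{M_j}, \mathcal{S}_j)$ obtained via recursive calls, exactly as done in the weighted FVS and dichromatic-number algorithms. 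Since $|V(D_M)| \le \omega$, the ILP has $\mathcal{O}(r\omega^2)$ variables of magnitude at most $\tau$ and is solvable by Corollary \ref{ipsolving} in time $f(r,\omega) \cdot \mathcal{O}(\log \tau)$.

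The principal obstacle, foreshadowed in the remark that ``we need to further develop intermediate theoretical tools'', is controlling the number of terminal pairs transmitted to recursive calls: a single $P_i$ may visit the same module many times, yielding a potentially large number of sub-path requests inside $M_j$. I would resolve this by exploiting the interchangeability of sub-paths of the same type inside a module — multiple free-endpoint visits of the same path to the same module are indistinguishable from the outside — so that requests can be consolidated into a bounded number of ``types'' (at most a constant per path per module, distinguished by whether each end is pinned to $s_i$, pinned to $t_i$, or free), and the sub-instance list length remains $\mathcal{O}(r)$. Once the recursion returns, the walks $W_i$ prescribe how to stitch the per-module sub-collections together into the final paths $P_1,\ldots,P_r$, and their total size decomposes as the sum of the sub-instance optima. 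Combining the per-node cost with the $\mathcal{O}(n)$-sized recursion tree from Fact \ref{dectree} yields the claimed overall running time $\mathcal{O}(n^3 + f(r,\omega)(n^2 + n\log \tau))$ with $f(r,\omega) = 2^{\mathcal{O}(r \log r \cdot 2^\omega \omega)}$.
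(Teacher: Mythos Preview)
Your plan has a genuine gap centred on the structural behaviour of paths with respect to modules. You treat the projection of each $P_i$ to $D_M$ as a general walk that may revisit a module many times, and then try to tame the resulting unbounded number of sub-path requests by an ad hoc ``type consolidation''. The paper's actual resolution is the structural Lemma~\ref{reducedpaths}: in any \emph{reduced} (hence optimal) collection, each $P_i$ uses at most one vertex from every module, except possibly the module containing both $s_i$ and $t_i$, where it uses exactly $\{s_i,t_i\}$. Consequently the projection of $P_i$ to $D_M$ is a genuine $\eta(s_i)$-$\eta(t_i)$-\emph{path}, not a walk, and each ``free-endpoint visit'' to a module is a single vertex. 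This lemma is what makes the recursion close up; your consolidation idea is a vague substitute for it and, as stated, does not yield a sub-instance of bounded list length of the \emph{same} problem (free-endpoint requests are not pairs $(s,t)$).

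There is a second, related gap: your coupling between the ILP and the recursive calls is circular. You propose to feed ``precomputed sub-instance optima $W(D[M_j], w|_{M_j}, \mathcal{S}_j)$'' into the ILP, but the sub-instance list $\mathcal{S}_j$ and the available capacity inside $M_j$ depend on the ILP solution (how many paths pass through $M_j$, and which same-module pairs $(s_i,t_i)$ choose to leave $M_j$). The paper breaks this circularity by (i) using Lemma~\ref{reducedpaths} so that the only undetermined datum is the set $B \subseteq [r']$ of same-module pairs whose path leaves the module, (ii) enumerating all $2^{|B|}$ choices explicitly (Lemma~\ref{disjpathrecursion}), and (iii) solving the strengthened sub-lists problem \ref{prob:rVDDP-CS}, which returns optima for \emph{all} subsets $A$ of terminal pairs simultaneously, so that every needed sub-instance value is available after a single recursive call per module. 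The ILP on $D_M$ then has one binary variable per pair $(i,Q)$ with $Q$ an $\eta(s_i)$-$\eta(t_i)$-path in $D_M$ (so $p=\mathcal{O}(r2^\omega)$ variables), not flow variables on edges; this is also what produces the stated bound $f(r,\omega)=2^{\mathcal{O}(r\log r\cdot 2^\omega\omega)}$.
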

The rest of this section is devoted to the proof of \cref{algdisjpaths}. We again aim at recursively reducing the problem to the subdigraphs induced by the modules. To do so however requires a number of quite technical auxiliary statements. The final verification of \cref{algdisjpaths} comes with the proof of \cref{subcollections}, where we construct an algorithm that solves a slightly more general version of \ref{prob:rVDDP-C} by recursively computing optimal collections of paths and cycles not only for one, but simultaneously for a set of $2^r$ well-chosen adapted inputs, including the original input. 

We start with some simple observations concerning the structure of $s$-$t$-paths with respect to a module-decomposition.

In the following, we call an $s$-$t$-path $P$ in a digraph $D$ \emph{reduced}, if there is no $s$-$t$-path $P'$ in $D$ such that $V(P') \subsetneq V(P)$. 

If a collection $\mathcal{P}=(P_1,\ldots,P_r)$ of paths in a digraph $D$ is optimal with respect to a pair $(w,\mathcal{S})$, it is easy to see that every $s_i$-$t_i$-path $P_i$ has to be \emph{reduced}. For if there was an $i \in [r]$ and an $s_i$-$t_i$-path $P'$ in $D$ such that $V(P') \subsetneq V(P_i)$, we could replace $P_i$ by $P'$ in the collection $\mathcal{P}$ to obtain a collection $\mathcal{P}'=(P_1, \ldots, P_{i-1},P',P_{i+1}, \ldots, P_r)$ which clearly still is feasible for the pair $(w,\mathcal{S})$ but has $W(\mathcal{P}')-W(\mathcal{P})=|P'|-|P_i|<0$, a contradiction to the assumed optimality. In the following, we call a collection $\mathcal{P}$ of $s_i$-$t_i$-paths \emph{reduced}, if every member of $\mathcal{P}$ is reduced.

\begin{lemma} \label{reducedpaths}
	Let $D$ be a digraph and let $M_1,\ldots,M_\ell$ be a partition of $V(D)$ into modules. Let $s, t \in V(D)$ be arbitrary, and let $P$ be an $s$-$t$-path in $D$ such that at least one of the following holds:
	\begin{itemize}
		\item $P$ is reduced, or
		\item $s, t \in M_j$ for some $j \in [\ell]$, $V(P) \setminus M_j \neq \emptyset$ and there is no $s$-$t$-path $P'$ in $D$ with $V(P') \subsetneq V(P)$ which uses at least one vertex outside the module $M_j$.
	\end{itemize}
	
	Then one of the following situations appears:
	\begin{itemize}
		\item $P$ is fully contained in $D[M_i]$ for some $i \in [\ell]$. 
		\item $P$ uses at most one vertex from each module.
		\item $s \neq t$, $V(P) \cap M_i=\{s,t\}$ for some $i \in [\ell]$, and $|V(P) \cap M_k| \leq 1$ for all $k \in [\ell], k \neq i$. 
	\end{itemize}
\end{lemma}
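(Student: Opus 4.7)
The plan is to exploit the modular property to show that whenever $P$ uses a module $M_i$ at multiple non-adjacent positions, a strictly shorter $s$-$t$-path with the same endpoints can be constructed; the minimality in both hypothesis alternatives then forces $P$ into one of the three listed shapes. Write $P$ as a vertex sequence $v_1 v_2 \cdots v_k$ with $v_1 = s$, $v_k = t$ (and $v_1 = v_k$ if $P$ is a cycle), and call a maximal consecutive run of indices mapped into the same module $M_i$ a \emph{block} of $M_i$ in $P$.

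The first step is to show that no block of $M_i$ has more than one vertex, unless $V(P) \subseteq M_i$. If a block occupies positions $a, \ldots, b$ with $a < b$ and $b < k$, the edge $(v_b, v_{b+1})$ leaves $M_i$, so by the modular property $(v_a, v_{b+1}) \in E(D)$, giving the shortcut $P' = v_1 \cdots v_a v_{b+1} \cdots v_k$ with $V(P') \subsetneq V(P)$. The symmetric move using the edge $(v_{a-1}, v_a)$ works when $a > 1$; the remaining case $a = 1$ and $b = k$ is exactly Case~1. The second step assumes singleton blocks and considers the positions $a_1 < \cdots < a_p$ at which $P$ visits a fixed module $M_i$. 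If $p \ge 2$ and $a_1 > 1$, modularity applied to $(v_{a_1-1}, v_{a_1})$ yields $(v_{a_1-1}, v_{a_2}) \in E(D)$, so $v_1 \cdots v_{a_1-1} v_{a_2} v_{a_2+1} \cdots v_k$ strictly shortens $P$; dually if $a_p < k$. Hence $p \ge 2$ forces $a_1 = 1$ and $a_p = k$, so $\{s, t\} \subseteq M_i$, and this can occur for at most one module. If in addition $p \ge 3$ (which forces $s \ne t$), pick $a_2$ with $1 < a_2 < k$; modularity applied to $(v_{a_2}, v_{a_2+1})$ gives $(s, v_{a_2+1}) \in E(D)$, so $s \, v_{a_2+1} v_{a_2+2} \cdots v_k$ is a strictly shorter $s$-$t$-path. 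Hence $p \le 2$, and combining with the first step forces the desired trichotomy.

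The main subtlety is that under the second hypothesis alternative the constructed shortcut $P'$ is only forbidden if it still uses a vertex outside $M_j$. This is easy to verify for each construction above: in the block-shortening moves $P'$ retains either $v_{b+1}$ or $v_{a-1}$, which lies outside $M_i$ by singleton-blockness and hence outside $M_j$ when $i = j$; when $i \ne j$, $P'$ contains a vertex of $M_i \ne M_j$. In the visit-shortening moves $P'$ likewise retains either a vertex of $M_i$ or a block-boundary vertex outside $M_i$, and the same dichotomy applies. The cyclic case $s = t$ is parallel: a revisit to any module $M_i$ admits an analogous shortcut by skipping either the segment just before or just after the revisit, using the modular in-edge or out-edge at the corresponding endpoint; in each subcase the resulting shorter cycle at $s$ still traverses a vertex outside $M_j$, as required. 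No new idea is needed beyond the basic block-and-visit analysis together with careful bookkeeping of which vertices each shortcut preserves.
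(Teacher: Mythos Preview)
Your argument is correct and follows essentially the same approach as the paper: both proofs repeatedly exploit the module property to produce shortcut edges, deriving a contradiction with the assumed minimality whenever a module is visited ``too often''; your block/visit bookkeeping is just a reorganisation of the paper's four sequential claims. Two small expository points: the phrase ``by singleton-blockness'' in the block-shortening paragraph should read ``by maximality of the block'' (you are precisely in the non-singleton case there), and the parenthetical ``(which forces $s\neq t$)'' is not justified as written---but your $p\ge 3$ shortcut $s\,v_{a_2+1}\cdots v_k$ works verbatim in the cyclic case as well, so the claim is unnecessary rather than wrong.
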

\begin{proof}
	Assume that $P$ is not fully contained in any module, but still there exists some module $M_i$ that hosts at least two different vertices of $P$. We want to prove that now the third of the cases in the Lemma applies.
	
	\textbf{Claim 1: $s \neq t$.}

	Assume towards a contradiction that $s=t$. Now consider the module $M_j$ containing $s=t$. $M_j$ cannot contain any further vertices from $P$: If $|V(P) \cap M_j| \ge 2$, we could find vertices $x \in M_j \setminus \{s\}$ and $y \not\in M_j$ such that $(x,y) \in E(P)$ or $(y,x) \in E(P)$. By the modular property, this would imply that $(s,y) \in E(D) \setminus E(P)$ respectively $(y,s) \in E(D) \setminus E(P)$. However, this means that $P$ would have a chord between $s=t$ and another vertex on $P$. This directly yields that there is an $s$-$t$-path using a proper subset of the vertices of $P$ and which contains a vertex outside $M_j$ (namely $y$), a contradiction.
	
	Now that we know that $V(P) \cap M_j=\{s\}$, it follows that $M_i \neq M_j$. Let $x$ be the first vertex in $M_i$ that is met when traversing $P$ in forward-direction starting with $s$, and let $y \in (V(P) \cap M_i) \setminus \{x\}$ be some vertex distinct from $x$. Let $x' \in V(P) \setminus M_i$ be the predecessor of $x$ on $P$. By the modular property, $(x',x) \in E(D)$ implies that also $(x',y) \in E(D)$, and hence, replacing the directed subpath of $P$ from $x'$ to $y$ by the short-cut-edge $(x',y)$ yields an $s$-$t$-path in $D$ using a proper subset of the vertices of $P$ and which contains a vertex outside $M_j$ (namely $y$), again, a contradiction. This contradiction shows that indeed, under the initial assumptions, we must have $s \neq t$.

	\textbf{Claim 2: $s, t \in M_i$.}

	Assume towards a contradiction that at least one of $s, t$ is not contained in $M_i$. By symmetry (reversal of all edges in $D$), we may assume that $s \notin M_i$. At least two vertices of $P$ are contained in $M_i$. Because of $s \not \in M_i$, the first vertex $x \in M_i$ that is met when traversing $P$ starting with $s$ has a predecessor $x'$ not contained in $M_i$. Let $y \in (V(P) \cap M_i)\setminus \{x\}$ some other element (which has to lie on the subpath $P[x,t]$ of $P$). As above, by the modular property, $(x',x) \in E(D)$ implies that also $(x',y) \in E(D)$. This is a short-cut which proves the existence of an $s$-$t$-path using a strict subset of the vertices of $P$, and again, we have a contradiction.

	\textbf{Claim 3: $|V(P) \cap M_k| \leq 1$ for all $k \neq i$.}

	Assume towards a contradiction there was a module $M_k, k \neq i$ containing two different vertices from $P$. Let $y_1$ be the first vertex in $M_k$ that is met when traversing $P$ starting at $s$, and let $y_2$ be the first vertex in $M_k$ that is met when traversing $P$ in backward-direction starting at $t$. Because of $|V(P) \cap M_k| \ge 2$, we conclude that $y_1 \neq y_2$. Because $y_1 \neq s$ and $y_2 \neq t$, there is a predecessor $x_1$ of $y_1$ on $P$ and a successor $x_2$ of $y_2$ on $P$. We have $x_1, x_2 \notin M_k$ and $(x_1,y_1), (y_2,x_2) \in E(P)$. By the modular property, this implies that also $(x_1,y_2),(y_1,x_2) \in E(D)$. Clearly, now replacing the directed subpath $P[x_1,y_2]$ by the edge $(x_1,y_2)$ defines an $s$-$t$-path $P'$ using a proper subset of the vertices of $P$, which uses vertices belonging to different modules (namely $x_1,y_2$), a contradiction. 
	
	\textbf{Claim 4: $V(P) \cap M_i=\{s,t\}$.}
	
	Assume towards a contradiction that $(V(P) \cap M_i)\setminus \{s,t\} \neq \emptyset$. Because there exist vertices on $P$ outside the module, we conclude that there are vertices $x \in M_i \setminus \{s,t\}, y \notin M_i$ such that $(x,y) \in E(P)$ or $(y,x) \in E(P)$. In the first case, we conclude that $(s, y) \in E(D)$, and in the second that $(y,t) \in E(D)$. In each case, we can take the corresponding edge as a short-cut in order to obtain a directed $s$-$t$-path $P'$ whose vertex set is properly contained in $V(P)$. In addition, $P'$ contains vertices from different modules, namely we have $y \notin M_i, s$ and $t \in M_i$. This contradiction verifies the claim.
	
	All in all, we have shown that if neither of the first two cases in the Lemma holds, we have to be in the situation of the third case. This concludes the proof. 
\end{proof}

\begin{lemma}\label{disjpathrecursion}
	Let $D$ be a digraph equipped with a partition $M_1, \ldots, M_\ell$ of $V(D)$ into modules. Let $D_M$ be the corresponding module-digraph and let $w:V(D) \rightarrow \mathbb{N}$ be a given assignment of non-negative integer-capacities to the vertices. Let $\mathcal{S}=[(s_1,t_1), \ldots, (s_r,t_r)]$ be a given list of vertex-tuples. Assume that $s_i, t_i$ lie in the same module for all $1 \leq i \leq r'$ and in different modules for all $i>r'$, where $r' \in \{0,1,\ldots,r\}$. 
	
	For a fixed subset $B \subseteq \{1,\ldots,r'\}$ let $\mathcal{S}_M^B\coloneqq [(\eta(s_i),\eta(t_i))~|~i \in B \cup \{r'+1,\ldots,r\}]$ denote an associated list of vertex-tuples in $D_M$. 
	For a vertex $v_k \in V(D_M)$ corresponding to module $M_k$, let $\mathcal{S}_k^B\coloneqq [(s_i,t_i)~|~s_i, t_i \in M_k, i \notin B]$ be a list of vertex-tuples within the module. 
	
	For every $k \in [\ell]$, define integer-capacities on the vertices of $D[M_k]$ by
	$$w_k^B(z)\coloneqq w(z)-|\CondSet{i \in B \cup \{r'+1,\ldots,r\}}{s_i=z}|-|\CondSet{i \in B \cup \{r'+1,\ldots,r\}}{t_i=z}|$$ for all $z \in M_k$.
	Furthermore, define integer-capacities on the vertices of $D_M$ according to 
	$$w_M^B(v_k)\coloneqq \sum_{z \in M_k}{w(z)}-W(D[M_k],w_k^B,\mathcal{S}_k^B)$$
	for all $k \in [\ell]$. 
	
	Then for any fixed set $B \subseteq [r']$, the following two statements are equivalent:
	\begin{itemize}
		\item There exists a collection $\mathcal{P}=(P_1,\ldots,P_r)$ of directed paths and cycles in $D$ feasible for $(w,\mathcal{S})$, such that $P_i$ stays within the module containing $s_i$ and $t_i$ for all $i \in [r'] \setminus B$, while it has a vertex outside the module if $i \in B$.
		\item We have $w_k^B(z) \ge 0$ for all $z \in M_k$, $W(D[M_k],w_k^B,\mathcal{S}_k^B)<\infty$ and $w_M^B(v_k)\ge 0$ for all $k \in [\ell]$, and $W(D_M,w_M^B,\mathcal{S}_M^B)<\infty$. 
	\end{itemize}
	In the case that both of the above statements hold true, the minimum value $W_B(D,w,\mathcal{S})$ of $W(\mathcal{P})$ among all collections $\mathcal{P}$ of directed cycles and paths in $D$ which are feasible for the pair $(w,\mathcal{S})$ and satisfy the additional properties described in the first item is given by
	$$W_B(D,w,\mathcal{S})=W(D_M,w_M^B,\mathcal{S}_M^B)+\sum_{k=1}^{\ell}{W(D[M_k],w_k^B,\mathcal{S}_k^B)}.$$
\end{lemma}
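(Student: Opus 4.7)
The proof establishes the equivalence $(i)\Leftrightarrow(ii)$ and the weight formula simultaneously, by translating a collection $\mathcal{P}=(P_1,\ldots,P_r)$ in $D$ satisfying the $B$-property into a pair consisting of intra-module collections $\mathcal{P}_k$ in each $D[M_k]$ and a projected collection $\{Q_i\}$ in $D_M$, and vice versa. For the forward direction I would start with a minimum-weight feasible $\mathcal{P}$ with the $B$-property and note that by minimality each $P_i$ is reduced in the relevant sense (for $i\in B$ this means reduced among $s_i$-$t_i$-paths that visit some vertex outside $M_{k(i)}$). Applying \Cref{reducedpaths} then yields a clean structural classification: $P_i\subseteq D[M_{k(i)}]$ for $i\in[r']\setminus B$; $V(P_i)\cap M_{k(i)}=\{s_i,t_i\}$ and $P_i$ hits each other module in at most one vertex for $i\in B$; $P_i$ hits each module in at most one vertex for $i>r'$.

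From this classification I would form $\mathcal{P}_k\coloneqq\{P_i : i\in[r']\setminus B,\ s_i,t_i\in M_k\}$ and, for each external $i\in B\cup\{r'+1,\ldots,r\}$, the projection $Q_i\coloneqq\eta(P_i)$, which by the classification is a simple path or simple cycle in $D_M$. A direct capacity-accounting verifies feasibility of both extracted collections: for each $z\in M_k$ the intra-module usage by $\mathcal{P}_k$ is bounded by $w(z)$ minus the external endpoint usage $w(z)-w_k^B(z)$, hence by $w_k^B(z)$; and the total usage of $v_k$ by $\{Q_i\}$ equals the total external usage of vertices in $M_k$, which together with $W(\mathcal{P}_k)\geq W(D[M_k],w_k^B,\mathcal{S}_k^B)$ is bounded above by $\sum_{z\in M_k}w(z)$, hence by $w_M^B(v_k)$. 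This yields all four conditions of $(ii)$. Because $|P_i|=|Q_i|$ for each external $i$ (the reduced structure implies that $P_i$ traverses exactly the modules listed along $Q_i$), the weight decomposes as $W(\mathcal{P})=\sum_k W(\mathcal{P}_k)+\sum_i|Q_i|$, and optimality of the sub-collections gives $W_B(D,w,\mathcal{S})\geq W(D_M,w_M^B,\mathcal{S}_M^B)+\sum_k W(D[M_k],w_k^B,\mathcal{S}_k^B)$.

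For the converse and the matching upper bound I would take optimal $\mathcal{P}_k$ and $\{Q_i\}$ furnished by $(ii)$ and lift each $Q_i$ back to a path $P_i$ in $D$ by selecting, for each intermediate visit of $Q_i$ to $v_k$, a vertex $z\in M_k$ to serve as the lift; the modular property guarantees that consecutive choices are linked by edges in $D$. The main obstacle is to perform all these selections compatibly with the vertex capacities $w$. Writing $\alpha(z)$ for the usage of $z$ by $\mathcal{P}_k$, the bookkeeping identity $\sum_{z\in M_k}(w_k^B(z)-\alpha(z))=w_M^B(v_k)-\sum_{z\in M_k}(w(z)-w_k^B(z))$ exactly equates the total intermediate-slot demand placed on $M_k$ by $\{Q_i\}$ with the total residual capacity left after fixing endpoints and intra-module usages, so any greedy assignment respecting the residuals $w_k^B(z)-\alpha(z)$ succeeds. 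The lifted $\mathcal{P}$ then has the $B$-property (because each $Q_i$ with $i\in B$ is a cycle of length at least two through a module other than $M_{k(i)}$) and total weight $\sum_k W(D[M_k],w_k^B,\mathcal{S}_k^B)+W(D_M,w_M^B,\mathcal{S}_M^B)$, matching the lower bound from the forward direction and simultaneously establishing $(i)$ and the weight formula.
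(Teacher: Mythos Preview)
Your proposal is essentially correct and follows the same overall architecture as the paper: forward direction via minimality, \Cref{reducedpaths}, and projection; backward direction via lifting optimal module-local and module-digraph collections. The paper's backward lifting is organised differently---it processes the external paths $Q_i$ one at a time while maintaining two running invariants on the residual capacities $R_{(I,J)}$, whereas you reserve all external endpoints up front and then greedily fill the intermediate slots. Your variant is cleaner and equally valid, since once endpoints are reserved the intermediate-slot problem at each module is just ``place at most $\sum_{z\in M_k}(w_k^B(z)-\alpha(z))$ indistinguishable tokens into bins of those capacities''.

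One imprecision worth fixing: you write that the bookkeeping identity ``exactly equates the total intermediate-slot demand placed on $M_k$ by $\{Q_i\}$ with the total residual capacity''. The identity $\sum_{z}(w_k^B(z)-\alpha(z))=w_M^B(v_k)-\sum_{z}(w(z)-w_k^B(z))$ is correct, but the right-hand side is the capacity $w_M^B(v_k)$ minus the endpoint load, not the actual intermediate demand. The intermediate demand is the \emph{actual} usage of $v_k$ by $\{Q_i\}$ minus the endpoint load, which is only $\leq$ the right-hand side (feasibility of $\{Q_i\}$ gives usage $\leq w_M^B(v_k)$, not equality). Since you only need demand $\leq$ residual for the greedy to succeed, the argument still goes through---just replace ``equates'' by ``bounds from above''.
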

\begin{proof}
	We start with proving both directions of the claimed equivalence. 
	
	$\Longrightarrow$ Assume for the first implication that there exists a collection $\mathcal{P}=(P_1,\ldots,P_r)$ of directed paths and cycles in $D$ feasible for $(w,\mathcal{S})$, such that $P_i$ stays within the module containing $s_i$ and $t_i$ for all $i \in [r'] \setminus B$, while it has a vertex outside the module if $i \in B$. Let us furthermore choose $\mathcal{P}$ such that it minimises $W(\mathcal{P})$ among all collections with these additional properties. This assumption implies that $P_i$ is reduced for every $i>r'$ and every $i \in [r'] \setminus B$, and that for every $i \in B$, there is no directed $s_i$-$t_i$-path $P'$ in $D$ such that $V(P') \subsetneq V(P)$ which also has at least one vertex outside the module containing $s_i$ and  $t_i$. Therefore, by \cref{reducedpaths}, $P_i$ uses at most one vertex from each module if $i>r'$. If $i \in [r'] \setminus B$, then $P_i$ uses at most one vertex from each module not containing $s_i, t_i$ and the module including $s_i,t_i$ has no further vertices on $P_i$.
	
We have to prove that $w_k^B(z) \ge 0$ for all $z \in M_k$ and that there exist feasible collections $\mathcal{P}_k^B$ in $D[M_k]$ for the pairs $(w_k^B,\mathcal{S}_k^B)$ and that $w_M^B(v_k) \ge 0$ for all $k \in [\ell]$. Furthermore, we must show that there exists a feasible collection $\mathcal{P}_M^B$ in $D_M$ for the pair $(w_M^B,\mathcal{S}_M^B)$. 
	
	First of all, it is readily verified that we must have $$|\CondSet{i \in B \cup \{r'+1,\ldots,r\}}{s_i=z}|+|\CondSet{i \in B \cup \{r'+1,\ldots,r\}}{t_i=z}| \leq w(z)$$ for every $z \in V(D)$, for otherwise, the $s_i$-$t_i$-paths $P_i$ with $i \in B$ and $s_i=z$ or $t_i=z$ together would traverse $z$ more than $w(z)$ times, a contradiction to the feasibility of $\mathcal{P}$ with respect to $(w,\mathcal{S})$. This shows that $w_k^B$ is non-negative for all $k \in [\ell]$.
	
	Now let $k \in [\ell]$ be arbitrary but fixed, and consider a vertex-tuple $(s_i,t_i) \in \mathcal{S}_k^B$. By definition, we have that $i \in [r'] \setminus B$. The latter implies that the path $P_i$ stays within the module $M_k$, and therefore defines an $s_i$-$t_i$-path in $D[M_i]$. Let $$\mathcal{P}_k^B\coloneqq (P_i~|~i \in [r'] \setminus B, s_i, t_i \in M_k).$$ This is a feasible collection of directed paths and cycles for $(w_k^B,\mathcal{S}_k^B)$. To see this, consider an arbitrary vertex $z \in M_k$. It is traversed at least $$|\CondSet{i \in B \cup \{r'+1,\ldots,r\}}{s_i=z}|+|\CondSet{i \in B \cup \{r'+1,\ldots,r\}}{t_i=z}|$$ times by the paths $P_i$ with $i \in B \cup \{r'+1,\ldots,r\}$, and at most $w(z)$ times in total. Therefore, the collection $\mathcal{P}_k^B$ does not traverse $z$ more than $w_k^B(z)$ times.
	
	To prove that $w_M^B(v_k) \ge 0$ for all $k \in [\ell]$, we verify that $W(D[M_k],w_k^B,\mathcal{S}_k^B) \leq W(\mathcal{P}_k^B) \leq \sum_{z \in M_k}{w(z)}$. However, this follows directly by double-counting the total number of traversals of vertices in $D[M_k]$ by members of the collection $\mathcal{P}_k^B$: In total, each $s_i$-$t_i$-path $P_i \in \mathcal{P}_k^B$ contributes exactly $|P_i|$ traversals of vertices, but every vertex $z \in M_k$ is traversed at most $w(z)$ times. 
	
	For the last part, we have to construct a collection $\mathcal{P}_M^B$ of directed paths and cycles in $D_M$ which is feasible for $(w_M^B, \mathcal{S}_M^B)$. For this purpose, let $i \in B \cup \{r'+1,\ldots,r\}$ be arbitrary. 
	
If $i \in B$, then we know that $s_i,t_i$ lie together in some module $M_k$ and are connected via the $s_i$-$t_i$-path $P_i \in \mathcal{P}$ in $D$, which uses at least one vertex outside $M_k$. As observed at the beginning of the proof, $P_i$ uses at most one vertex from each module different from $M_k$ and has no further vertices in $M_k$. Let $s_i=u_1,u_2,\ldots,u_p=t_i$ be the sequence of vertices in $P_i$. We now define $\eta(P_i)$ to be the $\eta(s_i)$-$\eta(t_i)$-path in $D_M$ described by the sequence $\eta(s_i)=\eta(u_1),\eta(u_2),\ldots,\eta(u_p)=\eta(t_i)$ of vertices. 

On the other hand, if $i>r'$, as observed at the beginning of the proof, $P_i$ is reduced and by \cref{reducedpaths} uses at most one vertex from each module. Thus, we may define the $\eta(s_i)-\eta(t_i)$-path $\eta(P_i)$ in $D_M$ by replacing each vertex $x \in V(D)$ by $\eta(x)$.
	
	We now claim that the so-defined collection $\mathcal{P}_M^B\coloneqq (\eta(P_i)~|~i \in B \cup \{r'+1,\ldots,r\})$ does the job. We must prove that for any $k \in [\ell]$, the vertex $v_k \in M_k$ is traversed at most $w_M^B(v_k)$ times by the paths $\eta(P_i)$. For this purpose, we again double-count the number of traversals of vertices within the module $M_k$ by paths in $\mathcal{P}$. By the feasibility of $\mathcal{P}$, this number can be at most $\sum_{z \in M_k}{w(z)}$. On the other hand, the paths $P_i \in \mathcal{P}_k^B$ together contribute at least $\sum_{P \in \mathcal{P}_k^B}{|P|} \ge W(D[M_k],w_k^B,\mathcal{S}_k^B)$ traversals, and thus, the paths $P_i \in \mathcal{P}$ with $i \in B \cup \{r'+1,\ldots,r\}$ cannot contribute more than $\sum_{z \in M_k}{w(z)}-W(D[M_k],w_k^B,\mathcal{S}_k^B)=w_M^B(v_k)$ traversals in total. Because $\eta(P_i)$ visits $v_k$ exactly as often as $P_i$ visits vertices in $M_k$, for all $i \in B \cup \{r'+1,\ldots,r\}$, this verifies the last claim. This concludes the proof of the first implication.
	
	At this point, we furthermore note that the existence of the constructed path-collections $\mathcal{P}_k^B, k\in[\ell]$ and $\mathcal{P}_M^B$ imply that
	$$\sum_{k=1}^{\ell}{W(D[M_k],w_k^B,\mathcal{S}_k^B)} \leq \sum_{k=1}^{\ell}{W(\mathcal{P}_k^B)}=\sum_{i \in [r'] \setminus B}{|P_i|}$$ and
	$$W(D_M,w_M^B,\mathcal{S}_M^B) \leq W(\mathcal{P}_M^B)=\sum_{i \in B \cup \{r'+1,\ldots, r\}}{|\eta(P_i)|}=\sum_{i \in B \cup \{r'+1,\ldots, r\}}{|P_i|}.$$
	Finally, this implies that
	$$W(D_M,w_M^B,\mathcal{S}_M^B)+\sum_{k=1}^{\ell}{W(D[M_k],w_k^B,\mathcal{S}_k^B)} \leq \sum_{i \in [r]}{|P_i|}=W(\mathcal{P})=W_B(D,w,\mathcal{S}),$$ where the last equality holds because $\mathcal{P}$ was chosen in such a way that $W(\mathcal{P})$ is minimised among all feasible collections for $(D,w,\mathcal{S})$ with the additional properties prescribed by the set $B$.
	
	\bigskip
	
	$\Longleftarrow$ Assume that $w_k^B(z) \ge 0$ for all $z \in M_k$, $W(D[M_k],w_k^B,\mathcal{S}_k^B) <\infty$ and $w_M^B(v_k) \ge 0$ for all $k \in [\ell]$. Furthermore assume that $W(D_M,w_M^B,\mathcal{S}_M^B)<\infty$. 
	
	Let $\mathcal{P}_k^B$ be an optimal collection of paths for $(w_k^B,\mathcal{S}_k^B)$ for all $k \in [\ell]$ and $\mathcal{P}_M^B$ an optimal collection for $(w_M^B, \mathcal{S}_M^B)$. For any $i \in B \cup \{r'+1,\ldots,r\}$, let $Q_i$ denote the $\eta(s_i)$-$\eta(t_i)$-path in the collection $\mathcal{P}_M^B$. 
	
	In the following, we define a collection $\mathcal{P}=(P_i~|~i \in [r])$ of $s_i$-$t_i$-paths in $D$ which is feasible with respect to $(w,\mathcal{S})$ by successively adding new elements to the collection. In order to ensure that in the end, every vertex $z \in V(D)$ is traversed at most $w(z)$ times, during the process we keep track of the capacities on the vertices which are still left for the remaining paths that have to be routed. 
	
	Formally, we start with the initial capacities $w:V(D) \rightarrow \mathbb{N}_0$ on the vertices and an empty collection. Whenever a new $s_i$-$t_i$-path $P_i$ is added to the collection, we update the capacities on the vertices by reducing the capacity of each vertex $x \in V(P_i)$ by $1$ (resp. $2$, if $x=s_i=t_i$). We will then show that we can add a new path $P_i$ to the collection and route it through $D$ such that certain invariants keep satisfied. The invariants ensure that the reduced capacities are always non-negative, and therefore, in the end, we obtain a feasible collection in which no vertex $z \in V(D)$ is traversed more than $w(z)$ times. 
	
	We start by going through the different modules $M_k, k=1,\ldots,\ell$, and add all the directed paths and cycles contained in $\mathcal{P}_k^B$, which are also directed paths and cycles in $D$, to the collection.
	
	At this stage, all the pairs $(s_i,t_i)$ with $i \in [r'] \setminus B$ are connected. For every $k \in [\ell]$, the sum of the remaining capacities of vertices in $M_k$ is $$\sum_{z \in M_k}{w(z)}-\sum_{P \in \mathcal{P}_k^B}{|P|}=w_M^B(v_k) \ge 0.$$ Furthermore, by the definition of $w_k^B$, we know that for every vertex $z \in V(D)$, the remaining capacity is at least $$|\CondSet{i \in B \cup \{r'+1,\ldots,r\}}{s_i=z}|+|\CondSet{i \in B \cup \{r'+1,\ldots,r\}}{t_i=z}|.$$
	
	Now we go through the remaining pairs $(s_{i'},t_{i'}), i' \in B \cup \{r'+1,\ldots,r\}$ and successively have to add a new $s_i$-$t_i$ path $P_i$ to the collection. Let at each stage $I \subseteq B \cup \{r'+1,\ldots,r\}$ denote the set of indices $i'$ corresponding to $s_{i'}$-$t_{i'}$-paths $P_{i'}$ that are already contained in the current collection, and let $J\coloneqq (B \cup \{r'+1,\ldots,r\}) \setminus I$ correspond to the paths that still have to be routed. Denote by $R_{(I,J)}(z) \ge 0$ for every $z \in V(D)$ the remaining capacity of $z$ at this state. For every $k \in [\ell]$, we count by
	$$\text{trav}_{(I,J)}(v_k)\coloneqq 2|\CondSet{i' \in I}{\{s_{i'},t_{i'}\} \subseteq M_k}|+|\CondSet{i' \in I}{v_k \in V(Q_{i'}) \text{ and }\{s_{i'}, t_{i'}\} \not\subseteq M_k }|$$ the number of traversals of the vertex $v_k \in V(D_M)$ by the paths $(Q_{i'}|i' \in I)$. 
	
	During the whole process, we maintain the following invariants:
	
	\emph{For every $k \in [\ell]$, we have
		$$\sum_{z \in M_k}{R_{(I,J)}(z)} \ge w_M^B(v_k)-\text{trav}_{(I,J)}(v_k).$$ 
		Furthermore, the remaining capacity at any vertex $z \in V(D)$ is at least $$R_{(I,J)}(z) \ge |\CondSet{i' \in J}{s_{i'}=z}|+|\CondSet{i' \in J}{t_{i'}=z}|.$$}
	By the above, these properties are fulfilled in the beginning when $I=\emptyset, J=B \cup \{r'+1,\ldots,r\}$. As long as $J \neq \emptyset$, we continue to choose some $i \in J$, whose corresponding path is to be added next to the collection. Let $\eta(s_i)=v_{k_1},v_{k_2},\ldots,v_{k_j}=\eta(t_i)$ be the vertex-sequence of $Q_i$. The $s_i$-$t_i$-path $P_i$ we will add has a vertex-sequence of the form $s_i=u_1,u_{2},\ldots,u_j=t_i$, where $u_1 \in M_{k_1}, u_{2} \in M_{k_2}, \ldots, u_{j} \in M_{k_j}$. By the definition of the module-digraph, this sequence defines an $s_i$-$t_i$-path in $D$, independent of which vertex $u_{j'}$ is selected from the module $M_{k_{j'}}$, $j'\in\Set{2,\ldots,j-1}$. 
	
	The following claim describes the property according to which we select the vertices.
	\begin{claim} 
		\textit{In each module $M_{j'}, 2 \leq j' \leq j-1$, there exists a vertex $u_{j'}$ such that
			$$R_{(I,J)}(u_{j'})>|\CondSet{i' \in J \setminus \{i\}}{s_{i'}=u_{j'}}|+|\CondSet{i' \in J \setminus \{i\}}{t_{i'}=u_{j'}}|.$$}
	\end{claim}
	\begin{proof}
	For simpler notation, assume for the proof of this claim that $k_{j'}=j', 2 \leq j' \leq j-1$ (if necessary, we relabel).
		Let $j' \in \{2,\ldots,j-1\}$ be fixed. To see that a selection of $u_{j'}$ as claimed is possible, let us consider the sum
		$$\sum_{z \in M_{j'}}{\left(R_{(I,J)}(z)-|\CondSet{i' \in J \setminus \{i\}}{s_{i'}=z}|-|\CondSet{i' \in J \setminus \{i\}}{t_{i'}=z}|\right)}$$ and prove that it is strictly positive. 
		
		The feasibility of the collection $\mathcal{P}_M^B$ of paths in $D_M$ with respect to $(w_M^B, \mathcal{S}_M^B)$, implies that $v_{j'} \in V(D_M)$ is traversed at most $w_M^B(v_{j'})$ times in total by the paths $Q_{i'}$, $i' \in B \cup \{r'+1,\ldots,r\}$. The number of traversals by the sub-collection $(Q_{i'}|i' \in I)$ is counted by $\text{trav}_{(I,J)}(v_{j'})$. Clearly,
		$$\sum_{z \in M_{j'}}{\left(|\CondSet{i' \in J \setminus \{i\}}{s_{i'}=z}|+|\CondSet{i' \in J \setminus \{i\}}{t_{i'}=z}| \right)}$$ is a lower bound on the number of traversals of the vertex $v_{j'}$ by the sub-collection $(Q_{i'}|i' \in J \setminus \{i\})$. Also $Q_i$ traverses $v_{j'}$ once. All in all, this means that
		$$w_M^B(v_{j'}) \ge \text{trav}_{(I,J)}(v_{j'})+\sum_{z \in M_{j'}}{\left(|\CondSet{i' \in J \setminus \{i\}}{s_{i'}=z}|+|\CondSet{i' \in J \setminus \{i\}}{t_{i'}=z}|\right)}+1.$$
		
		By the first invariant, we know that 
		\begin{align*}
		\sum_{z \in M_{j'}}{R_{(I,J)}(z)} &\geq w_M^B(v_{j'})-\text{trav}_{(I,J)}(v_{j'})\\
		&> \sum_{z \in M_{j'}}{\left(|\CondSet{i' \in J \setminus \{i\}}{s_{i'}=z}|+|\CondSet{i' \in J \setminus \{i\}}{t_{i'}=z}|\right)}.
		\end{align*}
		Therefore, the sum we initially considered is positive, and the claim follows.
	\end{proof}
	For each $j' \in \{2,\ldots,j-1\}$, we search for a vertex in $M_{j'}$ with the property described in the claim, and thereby obtain the $s_i$-$t_i$-path $P_i$ in $D$, which we add to the current collection. It now remains to verify that the invariants are maintained.
	The first part of the invariant is readily verified: By the definition of $P_i$, for any $k \in [\ell]$, the total remaining capacity of the vertices in the module $M_k$ which is consumed by the addition of $P_i$ remains unchanged for all $k \in [\ell] \setminus \{k_1,\ldots,k_j\}$, decreases by $1$ if $k \in \{k_2,k_3,\ldots,k_{j-1}\}$, decreases by $1$ if $k_1 \neq k_j$ and $k \in \{k_1,k_j\}$, and finally by $2$, if $k=k_1=k_j$. These values are identical with the number of traversals of $v_k$ by the $v_{k_1}$-$v_{k_j}$-path $Q_i$ in $D_M$, and therefore, the inequality claimed in the invariant stays valid.
	
	For the second part of the invariant, let $z \in V(D)$ be arbitrary. If $z \notin V(P_i)$, both sides of the claimed inequality in the invariant remain unaffected by the addition of $P_i$ to the collection, and therefore the inequality stays valid. If $z=u_{j'}$ with $2 \leq j' \leq j-1$, the selection of the vertices $u_{j'}$ according to the property described in the claim ensures that
	$$R_{(I\cup \{i\},J \setminus \{i\})}(z)=R_{(I,J)}(z)-1 \ge |\CondSet{i' \in J \setminus \{i\}}{s_{i'}=z}|+|\CondSet{i' \in J \setminus \{i\}}{t_{i'}=z}|,$$ and so also in this case, the inequality keeps satisfied. If $z \in \{u_1,u_j\}=\{s_i,t_i\}$ and $s_i \neq t_i$, both sides of the inequality are reduced by exactly $1$ by the addition of $P_i$ to the collection, so the inequality remains valid. If instead $z=u_1=u_j=s_i=t_i$, both sides of the inequality are reduced by $2$ by the addition of $P_i$, again maintaining validity of the invariant. 
	
	Finally, we conclude that we can continue the described process and maintain the invariant properties until we end up with $I=B \cup \{r'+1,\ldots,r\}, J=\emptyset$, and therefore a collection $\mathcal{P}=(P_i~|~i \in [r])$ of $s_i$-$t_i$-paths in $D$ for which the remaining capacity of each vertex $z \in V(D)$, by the second invariant property, is non-negative. This means that $\mathcal{P}$ is compatible with $(w_M^B, \mathcal{S}_M^B)$. 
	
	For any $i \in [r'] \setminus B$, we have added the corresponding path $P_i$ in the first step of the process, namely, $P_i$ is a member from $\mathcal{P}_k^B$, where $s_i, t_i \in M_k$, and therefore $V(P_i) \subseteq M_k$. If $i \in B$, let $k \in [\ell]$ be such that $s_i, t_i \in M_k$. $P_i$ was defined by looking at the $v_k$-$v_k$-path $Q_i$ in $D_M$, and selecting for each vertex traversed by $Q_i$ a vertex on $P_i$ from the corresponding module. Because $Q_i$ visits at least two different vertices, this proves that $P_i$ uses at least one vertex outside the module $M_k$. Finally, this verifies that $\mathcal{P}$ has the additional properties as claimed, and concludes the proof of the equivalence.
	
	Note that it follows directly from the described construction that we have $|P_i|=|Q_i|$ for all $i \in B \cup \{r'+1,\ldots,r\}$. We therefore conclude
	\begin{align*}
	W_B(D,w,\mathcal{S}) \leq W(\mathcal{P})&=\sum_{i \in B \cup \{r'+1,\ldots,r\}}{|Q_i|}+\sum_{k=1}^{\ell}{W(\mathcal{P}_k^B)}=W(\mathcal{P}_M^B)+\sum_{k=1}^{\ell}{W(\mathcal{P}_k^B)}\\
	&=W(D_M,w_M^B,\mathcal{S}_M^B)+\sum_{k=1}^{\ell}{W(D[M_k],w_k^B,\mathcal{S}_k^B)},
	\end{align*}
	where in the last step we have used the optimality of the collections $\mathcal{P}_M^B$ and $\mathcal{P}_k^B, k \in [\ell]$. Putting this together with the inverse inequality obtained at the end of the proof of the $\Longrightarrow$-direction, we obtain the equality claimed at the end of the Lemma.
\end{proof}
It is not hard to convert the constructions described in the proof of the previous Lemma into polynomial-time algorithms.
\begin{corollary} \label{recursivecall}
	With the notation of \cref{disjpathrecursion}, there exists an algorithm that, given as instance
	\begin{itemize}
		\item The digraphs $D,D_M$, the capacities $w:V(D) \rightarrow \mathbb{N}_0$, a threshold $\tau \in \mathbb{N}$ such that $\sum_{z \in V(D)}{w(z)} \leq \tau$,
		\item The subset $B \subseteq [r']$, the collections $\mathcal{S},\mathcal{S}_M^B,\mathcal{S}_k^B, k \in [\ell]$, the module decomposition $M_1,\ldots,M_\ell$,
		\item For every $k \in [\ell]$, a collection $\mathcal{P}_k^B$ of directed paths and cycles in $D[M_k]$ which is optimal for $(w_k^B,\mathcal{S}_k^B)$,
		\item A collection $\mathcal{P}_M^B$ of directed paths and cycles in $D_M$ which is optimal for $(w_M^B,\mathcal{S}_M^B)$,
	\end{itemize}
	outputs a collection $\mathcal{P}=(P_1,\ldots,P_r)$ of directed paths and cycles in $D$ feasible for $(w,\mathcal{S})$, such that $P_i$ stays within the module containing $s_i$ and $t_i$ for all $i \in [r'] \setminus B$, while it has a vertex outside the module if $i \in B$. Additionally, $\mathcal{P}$ minimises $W(\mathcal{P})$ among all feasible collections with these additional properties, that is, $W(\mathcal{P})=W_B(D,w,\mathcal{S})$. 
	
	The algorithm runs in time $\mathcal{O}(r^2\ell|V(D)|)$.
\end{corollary}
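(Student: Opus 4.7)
The plan is to directly translate the constructive procedure appearing in the $\Longleftarrow$-direction of the proof of \cref{disjpathrecursion} into an explicit algorithm, keeping track of remaining capacities in a flat array and using naive enumeration to pick an appropriate vertex from a module whenever required.

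First, the output collection $\mathcal{P}$ is initialised with the paths and cycles from $\mathcal{P}_k^B$ for all $k \in [\ell]$; this already assigns $P_i$ for every $i \in [r'] \setminus B$. Simultaneously an array $R:V(D) \to \mathbb{N}_0$ storing the remaining capacity at each vertex is maintained, starting from $w$ and decremented by $1$ at each vertex traversed by a path just placed into $\mathcal{P}$. Since at most $r$ paths of length at most $n$ are written during this step, the initialisation costs $\mathcal{O}(rn)$.

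The main phase iterates over the indices $i \in B \cup \{r'+1,\ldots,r\}$ in any order. For each such $i$, retrieve the $\eta(s_i)$-$\eta(t_i)$-path $Q_i \in \mathcal{P}_M^B$ with vertex sequence $v_{k_1},\ldots,v_{k_j}$ (where $j \leq \ell+1$), set $u_1 \coloneqq s_i$ and $u_j \coloneqq t_i$, and for each intermediate index $j' \in \{2,\ldots,j-1\}$ scan through the vertices $z \in M_{k_{j'}}$ until one is found satisfying the strict inequality from the claim in the proof of \cref{disjpathrecursion}. By the modular property together with the fact that $(v_{k_{j'}},v_{k_{j'+1}}) \in E(D_M)$ for all $j'$, any such choice automatically yields a valid $s_i$-$t_i$-path $P_i \coloneqq (u_1,u_2,\ldots,u_j)$ in $D$. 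Once constructed, $P_i$ is appended to $\mathcal{P}$ and $R$ is updated in time $\mathcal{O}(\ell)$.

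Checking the strict inequality for a single candidate $z$ can be done in $\mathcal{O}(r)$ time by a direct sweep through the pair-list $\mathcal{S}$ to count the indices $i' \in J \setminus \{i\}$ with $s_{i'} = z$ or $t_{i'} = z$, where $J$ is the set of indices in $B \cup \{r'+1,\ldots,r\}$ not yet processed, maintained explicitly throughout the loop. Scanning a module thus costs $\mathcal{O}(rn)$ in the worst case; with at most $\ell$ intermediate modules per $Q_i$ and at most $r$ indices $i$ to be processed, the total cost of the main phase is $\mathcal{O}(r^2 \ell n)$, which dominates the initialisation and yields the claimed bound. Correctness is entirely inherited from \cref{disjpathrecursion}: the invariants established there guarantee that a valid $u_{j'}$ always exists when needed, that the resulting collection $\mathcal{P}$ is feasible for $(w,\mathcal{S})$ with the properties prescribed by $B$, and that $W(\mathcal{P}) = W_B(D,w,\mathcal{S})$. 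No genuine obstacle is expected here; the only delicate point is the bookkeeping of $J$ and $R$ so that the inequality-check always refers to the currently active data, which is immediate from the loop structure.
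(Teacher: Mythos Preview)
Your proposal is correct and follows essentially the same approach as the paper: both directly algorithmicise the construction from the $\Longleftarrow$-direction of \cref{disjpathrecursion}, maintaining the remaining capacities and the index sets $I,J$, and scanning modules linearly to locate the intermediate vertices $u_{j'}$. Your runtime analysis is marginally looser than the paper's (you bound the scan of a single module by $\mathcal{O}(rn)$ rather than $\mathcal{O}(r|M_k|)$), but this makes no difference to the final $\mathcal{O}(r^2\ell|V(D)|)$ bound.
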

\begin{proof}
	Assume we are given as instance for our algorithm collections $\mathcal{P}_k^B, k=1,\ldots,\ell$ and $\mathcal{P}_M^B$ of directed paths and cycles in $D[M_k]$ respectively $D_M$ which are feasible and optimal for $(w_k^B,\mathcal{S}_k^B)$ respectively $(w_M^B,\mathcal{S}_M^B)$. Following the strategy of the $\Longleftarrow$-part of the proof of \cref{disjpathrecursion}, we can construct a collection $\mathcal{P}$ of directed cycles and paths in $D$ which is feasible for $(w_M^B, \mathcal{S}_M^B)$ and moreover (see the end of the proof of the $\Longleftarrow$-part of \cref{disjpathrecursion}) satisfies $W(\mathcal{P})=W_B(D,w,\mathcal{S})$. 
	
	It therefore remains to determine the runtime needed for executing the construction of $\mathcal{P}$ as described in \cref{disjpathrecursion}. The procedure consists of the following steps. 
	\begin{itemize}
		\item Initialising the collection $\mathcal{P}$ as the union of the collections $\mathcal{P}_k^B$, $k=1,\ldots,\ell$. As each collection contains at most $r$ paths and cycles in $D$, this certainly can be executed in time $\mathcal{O}(r\ell|V(D)|)$. 
		\item Initialising $I\coloneqq \emptyset, J\coloneqq B \cup \{r'+1,\ldots,r\}$, and the remaining capacities $R_{(I,J)}(z)\coloneqq w(z)$ for all $z \in V(D)$. This certainly requires time at most $\mathcal{O}(r+|V(D)|)$.
		\item In each of the $|B|+r-r' \leq r$ remaining steps of the procedure, selecting some $i \in J$ and constructing the path $P_i$ (given the path $Q_i \in \mathcal{P}_k^B$). This includes for every module $M_k$ with $v_k \in V(Q_i) \setminus \{\eta(s_i),\eta(t_i)\}$ the selection of a vertex $u \in M_k$ with the property 
		$$R_{(I,J)}(u)>|\CondSet{i' \in J \setminus \{i\}}{s_{i'}=u}|+|\CondSet{i' \in J \setminus \{i\}}{t_{i'}=u}|.$$ Computing the right hand side of this inequality for a given vertex $u$ requires only time $\mathcal{O}(r)$, the left hand side is known. Therefore, searching for a vertex $u$ with this property in the worst case can take time $\mathcal{O}(r|M_k|)$. Consequently, the total runtime required for constructing the path $P_i$ can be bounded as $\mathcal{O}(r|V(D)|)$. 
		\item In each of the at most $r$ steps, when adding a new path $P_i$ to the collection, updating $I\coloneqq I \cup \{i\}, J\coloneqq J \setminus \{i\}$, and updating the corresponding reduced capacities $R_{(I,J)}(z)$ for all $z \in V(D)$. We have to reduce $R_{(I,J)}(z)$ only at vertices $z \in V(P)$ by $1$ (respectively $2$, if $z=s_i=t_i$), which requires not more than $\mathcal{O}(\ell)$ operations (the length of $Q_i$ is at most $\ell$).
		\item Returning the collection $\mathcal{P}$ of directed paths and cycles. This requires at most $\mathcal{O}(r|V(D)|)$ operations.
			\end{itemize} 
Summing up, the algorithm therefore runs in time $\mathcal{O}(r^2\ell|V(D)|),$ which verifies the claim.

\end{proof}
We are now prepared for solving our weighted generalisation of \ref{prob:rVDDP} on arbitrary digraphs as claimed by \cref{algdisjpaths}. In order to ensure the claimed polynomial run-time for fixed parameters $r$ and $\omega$, we have to carefully set up our recursion, and in fact, solve a a slightly more general algorithmic problem, which shall be described in the following.

We need some further terminology, which is adapted to the one from \cref{disjpathrecursion}.

If $\mathcal{S}=[(s_1,t_1),\ldots,(s_r,t_r)]$ is a list of vertex-tuples in a digraph $D$ with vertex-capacities $w:V(D) \rightarrow \mathbb{Z}$, then for any $A \subseteq [r]$, we let $\mathcal{S}(A)\coloneqq [(s_i,t_i)~|~i \in A]$. We furthermore define a corresponding reduced capacity $w_A:V(D) \rightarrow \mathbb{Z}$ on the vertices by
$$w_A(z)\coloneqq w(z)-|\CondSet{i \notin A}{s_i=z}|-|\CondSet{i \notin A}{t_i=z}|.$$

\ProblemDefLabelled{$r$ Vertex-Disjoint Directed Paths (Capacity Version, Sub-Lists)}{$r$-VDDP-CS}{prob:rVDDP-CS}
{A digraph $D$ with non-negative integer capacities $w:V(D) \rightarrow \mathbb{N}_0$, a list $\mathcal{S}=[(s_1,t_1), \ldots, (s_r,t_r)]$ of not necessarily disjoint pairs of not necessarily distinct vertices in $D$ and a threshold $\tau \in \mathbb{N}$ such that $\sum_{z \in V(D)}{w(z)} \leq \tau$.}
{For each set $A \subseteq [r]$, test whether $w_A \ge 0$ and if there exists a feasible collection of directed paths and cycles in $D$ which is compatible with $(w_{A},\mathcal{S}(A))$.
	If so, output the value of $W(D,w_{A},\mathcal{S}(A))$ and a corresponding optimal collection.}

Since $w_{[r]}=w$, solving the above problem contains the weighted generalisation \ref{prob:rVDDP-C} of the \ref{prob:rVDDP} as the special case $A=[r]$, and therefore the following finally implies \cref{algdisjpaths}. 
\begin{theorem}\label{subcollections}
	There exists an algorithm that solves \ref{prob:rVDDP-CS} in time $$\mathcal{O}\left(n^3+f(r,\omega)(n^2+n\log \tau)\right),$$ where $n\coloneqq |V(D)|$, $\omega\coloneqq \dmw(D)$, and $f(r,\omega)=2^{\mathcal{O}(r \log r \cdot 2^\omega \omega)}$.
\end{theorem}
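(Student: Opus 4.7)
The plan is to follow the paper's recursion-along-modules strategy, with the sub-list problem \ref{prob:rVDDP-CS} (rather than just \ref{prob:rVDDP-C}) serving as the recursive invariant so that one call per module delivers, in a single pass, everything a parent call could want from that module for each of its own sub-list queries.

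Given input $(D',w',\mathcal{S}',\tau)$, the base case $|V(D')|=1$ is handled trivially for each $A$. Otherwise I would apply \cref{computedecomposition} to obtain a modular partition $M_1,\ldots,M_\ell$ with $\ell\le\omega$, form $D'_M$, and for every $k$ recursively call the algorithm on $(D'[M_k],w''_k,\mathcal{S}'_k,\tau)$, where $\mathcal{S}'_k$ is the restriction of $\mathcal{S}'$ to pairs with both endpoints inside $M_k$ and the base capacity
\[
w''_k(z)\coloneqq w'(z) - |\{i\in[r]:s_i=z,\,t_i\notin M_k\}| - |\{i\in[r]:t_i=z,\,s_i\notin M_k\}|
\]
pre-reserves capacity at endpoints of cross-module pairs. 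The recursive call returns, for every sub-list index set $A'$ of $\mathcal{S}'_k$, the optimum $W(D'[M_k],(w''_k)_{A'},\mathcal{S}'_k(A'))$ together with a witnessing collection.

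For each target sub-list $A\subseteq[r]$, I then iterate over every $B\subseteq\{i\in A:s_i,t_i\text{ lie in a common module}\}$, which encodes the same-module pairs of $A$ that get routed outside their module. By \cref{disjpathrecursion} applied to $(D',w'_A,\mathcal{S}'(A))$, the minimum weight over feasible collections of this type equals
\[
W(D'_M,(w'_A)^B_M,(\mathcal{S}'(A))^B_M)+\sum_{k=1}^\ell W(D'[M_k],(w'_A)^B_k,(\mathcal{S}'(A))^B_k).
\]
A short bookkeeping argument, comparing the definition of $w''_k$ with those of $w_A$ and $w^B_k$ in \cref{disjpathrecursion}, shows that $(w''_k)_{A'_k}=(w'_A)^B_k$ on $M_k$ for $A'_k\coloneqq (A\cap I_k)\setminus B$ with $I_k\coloneqq\{i:s_i,t_i\in M_k\}$; hence the module terms are already in hand from the recursive calls. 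The $D'_M$-term is a bounded-size instance of \ref{prob:rVDDP-C}: I would precompute, by a $2^{\mathcal{O}(\omega)}$-time subset dynamic program, the minimum length $\mu_i(S)$ of an $\eta(s_i)$-$\eta(t_i)$-path in $D'_M$ using exactly the vertex set $S$, and then feed \cref{ipsolving} the ILP with a binary variable $x_{i,S}$ for each eligible pair $(i,S)$, equalities $\sum_S x_{i,S}=1$ for every $i$, capacity inequalities $\sum_i\sum_{S\ni v} x_{i,S}\le (w'_A)^B_M(v)$ for every $v\in V(D'_M)$, and objective $\sum_{i,S}\mu_i(S)x_{i,S}$. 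With $p=\mathcal{O}(r\cdot 2^\omega)$ variables, \cref{ipsolving} runs in time $p^{\mathcal{O}(p)}\log\tau = f(r,\omega)\log\tau$ for $f(r,\omega)=2^{\mathcal{O}(r\log r\cdot 2^\omega\omega)}$. Finally \cref{recursivecall} assembles a concrete optimal collection in $D'$; across all $B$ we retain the best for each $A$.

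For the runtime, the recursion tree is a decomposition tree with ground set $V(D)$, so \cref{dectree} bounds its size by $2n-1$. Each non-leaf node spends $\mathcal{O}(n^2)$ on \cref{computedecomposition} plus at most $4^r$ rounds of ILP-solving and merging, totalling $\mathcal{O}(n^2+f(r,\omega)(n+\log\tau))$; leaves cost $\mathcal{O}(2^r)$. Summation over the tree yields the advertised $\mathcal{O}(n^3+f(r,\omega)(n^2+n\log\tau))$. The main obstacle I anticipate is precisely the bookkeeping identity $(w''_k)_{A'_k}=(w'_A)^B_k$: it is what lets a single recursive call per child module service all $4^r$ future $(A,B)$ queries of the parent, and establishing it cleanly, together with handling the corner cases (pairs with $s_i=t_i$ contributing two traversals, vertices serving as endpoints of several pairs, and propagating $\infty$ when $w''_k$ would go negative), is where most of the technical effort sits.
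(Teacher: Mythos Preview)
Your proposal is correct and follows essentially the same approach as the paper: the single recursive call per module, the bookkeeping identity $(w''_k)_{A'_k}=(w'_A)^B_k$ (which is exactly the Claim the paper proves inside its proof), the bounded-variable ILP on $D_M$ with $\mathcal{O}(r\cdot 2^\omega)$ variables, and the assembly via \cref{recursivecall} all match. The only cosmetic difference is that you index ILP variables by vertex sets $S$ with a precomputed $\mu_i(S)$ while the paper indexes by explicit paths $Q\in\Pi_i$; both yield the same variable count and runtime, and the double-traversal issue you flag for $s_i=t_i$ is handled in the paper by the coefficients $c_{Q,i,k}$.
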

\begin{proof}
	In the following, we describe a recursive algorithm which makes use of \cref{disjpathrecursion} and \cref{recursivecall}. Afterwards, we analyse the run-time of the proposed algorithm. 
	
	\paragraph{Description and Correctness of the Algorithm}
	Assume we are given as input $D,w,\mathcal{S}=[(s_1,t_1),\ldots,(s_r,t_r)]$, and $\tau \in \mathbb{N}$ such that $\sum_{z \in V(D)}{w(z)} \leq \tau$.
	
	We first check whether $|V(D)|=1$. If this is the case, and $V(D)=\{v\}$ for a unique vertex $v$, there exist no directed $s_i$-$t_i$-paths in $D$ at all. We therefore return for every $A \subseteq [r]$ with $A \neq \emptyset$ that no feasible collection for $(D,\mathcal{S}[A],w_A)$ exists. If $A=\emptyset$, we test whether $w_\emptyset(v)=w(v)-2r \ge 0$. If so, the empty collection is the unique feasible collection for $(D,\mathcal{S}(\emptyset),w_\emptyset)$, and this is our output. Otherwise, we return that no feasible collection exists.
	
	If $|V(D)| \ge 2$, we start by applying the algorithm from \cref{computedecomposition} to $D$ and obtain a non-trivial module decomposition $\Set{M_1,\ldots,M_\ell}$, of $V(D)$, where $2 \leq \ell \leq \dmw(D)$. 
	
	We compute $r' \in \{0,\ldots,r\}$ and relabel the elements of $\mathcal{S}$ in such a way that $s_i,t_i$ are contained in the same module for all $i \in [r']$ and in different modules for all $i>r'$.  
	
	Next we determine the digraphs $D[M_k], k=1,\ldots,\ell$ as well as the module-digraph $D_M$. 
	For each $k \in [\ell]$, we then compute the sets $$I_k^\text{out}\coloneqq \CondSet{i \in [r]}{s_i \neq t_i \text{ and }
	|\{s_i,t_i\} \cap M_k|=1}, I_k^\text{in}\coloneqq \CondSet{i \in [r]}{s_i,t_i \in M_k}$$ as well as integer-capacities $w_k: M_k \rightarrow \mathbb{Z}$ defined by
	
	$$w_k(z)\coloneqq w(z)-|\{i \in I_k^\text{out}|s_i=z\}|-|\{i \in I_k^\text{out}|t_i=z\}|,$$
	for all $z \in M_k$. Finally, we compute the sub-collection of vertex pairs
	$$\mathcal{S}_k\coloneqq [(s_i,t_i)|i \in I_k^\text{in}]=\mathcal{S}(I_k^\text{in})$$ for each $k \in [\ell]$.
	
	We now make a recursive call of the algorithm to the each of the $\ell$ instances $$(D[M_k],w_k,\mathcal{S}_k, \tau), k \in [\ell].$$
	Each of these instances is indeed feasible, as we have
	$$\sum_{z \in M_k}{w_k(z)} \leq \sum_{z \in V(D)}{w(z)} \leq \tau$$ for all $k \in [\ell]$. From the recursive calls, for every $k \in [\ell]$ and every set $A \subseteq I_k^\text{in}$, we know whether $(w_k)_A \ge 0$, and if so, we know the value of $W(D[M_k],(w_k)_A,\mathcal{S}_k(A))$ as well as a corresponding optimal collection of directed paths and cycles in $D[M_k]$, which we denote by $\mathcal{P}_k(A)$.
	
	We now go through all (at most $2^{r}$) possible subsets $A \subseteq [r]$. For a fixed $A \subseteq [r]$, we compute the weighting $w_A$ defined as before by $$w_A(z)\coloneqq w(z)-|\CondSet{i \notin A}{s_i=z}|-|\CondSet{i \notin A}{t_i=z}|.$$ If it has a negative entry, we return that no collection feasible for the pair $(w_A,\mathcal{S}(A))$ exists, and move on to the next set $A$.
	
	Otherwise, we go through all (at most $2^{r'} \leq 2^r$) possible subsets $B \subseteq [r'] \cap A$. 
	
	For a fixed set $B$, we will determine the value of $W_B(D,w_A,\mathcal{S}(A))$ as defined in \cref{disjpathrecursion}, and if this value is finite, compute a collection $\mathcal{P}(A,B)=(P_i(A,B)~|~i \in A)$ of directed paths and cycles in $D$ which is compatible with $(w_A,\mathcal{S}(A))$, additionally has the property that the $s_i$-$t_i$-path $P_i(A,B)$ in $D$ leaves the module containing $s_i, t_i$ for all $i \in B$, stays within the module containing $s_i,t_i$ for all $i \in (A \cap [r'])\setminus B$, and is optimal among collections with these properties in the sense that $W(\mathcal{P}(A,B))=W_B(D,w_A,\mathcal{S}(A))$. In the following, we solve this task by solving an Integer Linear Program with bounded number of variables and constraints, as well as using the algorithm provided by \cref{ipsolving}. 
	
	To do so, for each $k \in [\ell]$, we compute the subset $A_k\coloneqq I_k^\text{in} \cap A$, which corresponds to the vertex-pairs $(s_i,t_i)$ in $\mathcal{S}(A)$ within the module $M_k$. 
	
	For a fixed but arbitrary $k \in [\ell]$, let $(w_A)_k^B: M_k \rightarrow \mathbb{Z}$ and the collection $(\mathcal{S}(A))_k^B$ be defined as in \cref{disjpathrecursion} with respect to the pairs $\mathcal{S}(A)$ in $D$, the module-decomposition $M_1,\ldots,M_\ell$, the weighting $w_A: V(D) \rightarrow \mathbb{N}_0$ and the set $B$.
	
	On the other hand consider the vertex-capacities $(w_k)_{A_k \setminus B}:M_k \rightarrow \mathbb{Z}$ defined with respect to the list of vertex-pairs $\mathcal{S}_k$ within $D[M_k]$, the weighting $w_k:M_k \rightarrow \mathbb{Z}$ of $D[M_k]$ and the subset $A_k \setminus B \subseteq I_k^\text{in}$. 
	
	\begin{claim}
		For any $k \in [\ell]$ we have $(w_A)_k^B=(w_k)_{A_k \setminus B}$ and $(\mathcal{S}(A))_k^B=\mathcal{S}_k(A_k \setminus B)$.
	\end{claim}
	\begin{proof}
		For the first claim, let $z \in M_k$ be arbitrary. We show that $w(z)-(w_A)_k^B(z)=w(z)-(w_k)_{A_k \setminus B}(z)$. By the definition of $w_A$ and the definition of $(w_A)_k^B(z)$ in \cref{disjpathrecursion}, we have
		\begin{align*}
		w(z)-(w_A)_k^B(z)&=(w(z)-w_A(z))+(w_A(z)-(w_A)_k^B(z))\\
		&=|\CondSet{i \in X_1 \cup X_2}{s_i=z}|+|\CondSet{i \in X_1 \cup X_2}{t_i=z}|,
		\end{align*}
		where $X_1\coloneqq (I_k^\text{out} \cup I_k^\text{in}) \setminus A, X_2\coloneqq (I_k^\text{out} \cup I_k^\text{in})) \cap (A \cap (B \cup (\{r'+1,\ldots,r\}))$.
		By the definition of $(w_k)_{A_k \setminus B}$, we have
		\begin{align*}
		w(z)-(w_k)_{A_k \setminus B}(z)&=(w(z)-w_k(z))+(w_k(z)-(w_k)_{A_k \setminus B}(z))\\
		&=|\CondSet{i \in Y_1 \cup Y_2}{s_i=z}|+|\CondSet{i \in Y_1 \cup Y_2}{t_i=z}|,
		\end{align*}
		where $Y_1\coloneqq I_k^\text{out}, Y_2\coloneqq I_k^\text{in} \setminus (A_k \setminus B)=(I_k^\text{in} \setminus A) \cup (I_k^\text{in} \cap A \cap B)$. The claim follows now from observing that $X_1 \cup X_2=Y_1 \cup Y_2$.
		
		The second part of the claim follows directly from observing that both $(\mathcal{S}(A))_k^B$ and $\mathcal{S}(A_k \setminus B)$ contain exactly those pairs $(s_i,t_i)$ with $i \in A, i \in I_k^\text{in}$, but $i \notin B$. 
	\end{proof}
	
	This claim implies that using the information obtained from the recursive calls, for every $k \in [\ell]$, we already know 
	\begin{itemize}
		\item the capacities $(w_A)_k^B=(w_k)_{A_k \setminus B}$,
		\item if $(w_A)_k^B$ is non-negative, the value $$W(D[M_k], (w_A)_k^B, (\mathcal{S}(A))_k^B)=W(D[M_k], (w_k)_{A_k \setminus B}, \mathcal{S}_k(A_k \setminus B)),$$ 
		\item and if this value is finite, a collection of directed cycles and paths in $D[M_k]$ which is optimal for the pair $((w_A)_k^B,\mathcal{S}(A_k))$.
	\end{itemize}
	We now proceed by going through the vertices of $D$ in the different modules and testing for non-negativity of the capacities $(w_A)_k^B(z), z \in M_k, k \in [\ell]$. If we find a negative entry, by \cref{disjpathrecursion}, no feasible collection $\mathcal{P}(A,B)$ with the required properties exists. We then close this branch of computations with respect to $B$ and proceed with the next subset of $A \cap [r']$.
	
	Otherwise, we compute for each vertex $v_k \in V(D_M)$ the corresponding integer capacity $(w_A)_M^B(v_k)$ as defined in \cref{disjpathrecursion}, that is, we set
	$$(w_A)_M^B(v_k)\coloneqq \sum_{z \in M_k}{w_A(z)}-W(D[M_k],(w_A)_k^B,(\mathcal{S}(A))_k(B))$$ for all $k \in [\ell]$. 
	
	Now we test for non-negativity. If we find that $(w_A)_M^B(v_k)<0$ for some $k \in [\ell]$,  we again conclude by \cref{disjpathrecursion} that there exists no collection $\mathcal{P}(A,B)$ with the required properties, and we again close this branch of computations with respect to the set $B \subseteq A \cap [r']$, and move on with another subset.
	
	If $(w_A)_M^B(v_k) \ge 0$ for all $k \in [\ell]$, we compute $$(\mathcal{S}(A))_M^B\coloneqq [(\eta(s_i),\eta(t_i))~|~i \in A \cap (B \cup (\{r'+1,\ldots,r\})]$$ in accordance with \cref{disjpathrecursion}. 
	
	Next we set up an ILP which computes the value of $W(D_M,(\mathcal{S}(A))_M^B,(w_A)_M^B)$. In order to model the problem, for every $i \in A \cap (B \cup (\{r'+1,\ldots,r\})$, we produce a list $\Pi_i$ of all $s_i$-$t_i$-paths in $D_M$. For every pair $(i,Q)$ with $Q \in \Pi_i$, we have a binary variable $x_{i,Q} \in \{0,1\}$ which encodes whether the $s_i$-$t_i$-path $Q$ is used for the collection or not.
	
	For each $Q \in \Pi_i$, we compute the size $|Q|$ of $Q$. Furthermore, for each vertex $v_k \in V(D_M), k \in [\ell]$, we compute the number of traversals 
	$$c_{Q,i,k}\coloneqq \begin{cases} 2, & \text{if }v_k=\eta(s_i)=\eta(t_i) \cr 1, & \text{if }v_k \in V(Q) \setminus \{\eta(s_i)=\eta(t_i)\} \cr 1, & \text{if } v_k \in V(Q), \eta(s_i) \neq \eta(t_i) \cr 0, & \text{if }v_k \notin V(Q) \end{cases}$$ 
	of the vertex $v_k$ by $Q$. The ILP is defined as follows.
	\begin{alignat}{3}\label{ILPdisjointpaths}
		~&&~\min \sum_{i}{\sum_{Q \in \Pi_i}{|Q| \cdot x_{i,Q}}}~&&~\\
		\text{  subj.\ to }~&&~\sum_{Q \in \Pi_i}{x_{i,Q}}&=1 ~&&~ \text{ for all } i \in A \cap (B \cup (\{r'+1,\ldots,r\}), \nonumber\\
		~&&~ \sum_{i}\sum_{Q \in \Pi_i}{c_{Q,i,k} \cdot x_{i,Q}} &\leq (w_A)_M^B(v_k) ~&&~ \text{ for all }k \in [\ell], \nonumber\\
		~&&~ 0 \leq x_{i,Q} &\leq 1 ~&&~ x_{i,Q} \in \mathbb{Z}.\nonumber
	\end{alignat}
	It is easy to see that the feasible variable assignments of this ILP are in bijection with the collections of directed paths and cycles in $D_M$ that are feasible for the pair $((w_A)_M^B, (\mathcal{S}(A))_M^B)$. Moreover, if such a collection exists, the optimum of the ILP is $W(D_M,(\mathcal{S}(A))_M^B,(w_A)_M^B)$, and a corresponding optimal variable assignment encodes an optimal collection of paths and cycles in $D_M$. The optimal value of the program (if existent) is clearly bounded from above by $U_1\coloneqq r(\ell+1)$, and the maximum absolute value appearing in any feasible variable assignment is bounded by $U_2\coloneqq 1$.
	
	We therefore, as a next step, run the algorithm from \cref{ipsolving} to determine whether the ILP (\ref{ILPdisjointpaths}) is feasible, and if so, to find an optimal variable assignment. If the algorithm returns that the ILP is not feasible, with \cref{disjpathrecursion} we conclude that no collection $\mathcal{P}(A,B)$ with the required properties exist. We therefore close this branch of computations with respect to $B$ and move on with the next subset of $A \cap [r']$. 
	
	Otherwise, we have found an optimal collection of paths an directed cycles in $D_M$ with respect to the pair $((w_A)_M^B, (\mathcal{S}(A))_M^B)$. By the above, we furthermore know for every $k \in [\ell]$ a collection of directed cycles and paths in $D[M_k]$ which is optimal for the pair $((w_A)_k^B,\mathcal{S}(A_k))$. We have now gathered all necessary information to apply the algorithm from \cref{recursivecall} to these instances. As an output, we therefore obtain a collection $\mathcal{P}(A,B)$ of directed paths and cycles in $D$ which is feasible for the pair $(w_A,\mathcal{S}(A))$ and attains the minimum value $W(\mathcal{P}(A,B))=W_B(D,\mathcal{S}_A,w_A)$ among all feasible collections which share with $\mathcal{P}(A,B)$ the property that for any $i \in A \cap [r']$, the $s_i$-$t_i$-path in the collection leaves the module containing $s_i$ and $t_i$ iff $i \in B$. We now close this branch of computations with respect to $B$ and move on with the next subset of $A \cap [r']$. 
	
	After having finished the described computations for each subset $B \subseteq A \cap [r']$, we first test whether there is at least one set $B \subseteq [r']$ for which a corresponding feasible collection $\mathcal{P}(A,B)$ exists. 
	
	If this is not the case, this means that there exists no feasible collection of directed paths and cycles in $D$ with respect to the pair $(w_A, \mathcal{S}(A))$, no matter which $s_i$-$t_i$-paths with $i \in A \cap [r']$ are required to leave or stay in the module containing $s_i,t_i$. We therefore output that no feasible collection for $(D,\mathcal{S}(A),w_A)$ exists at all, close the branch of computations with respect to the set $A \subseteq [r]$, and move on with the next subset of $[r]$.
	
	Otherwise, it follows from the definitions that we have the equality
	$$W(D,\mathcal{S}(A),w_A)=\min_{B \subseteq [r']}{W_B(D,\mathcal{S}(A),w_A)}.$$ 
	As we have already computed the values $W_B(D,\mathcal{S}(A),w_A)$ for all $B \subseteq A \cap [r']$, we can compute a set $B \subseteq [r']$ for which the minimum is attained. We now output $\mathcal{P}(A,B)$, which is a feasible collection for $(D,\mathcal{S}(A),w_A)$ satisfying $W(\mathcal{P}(A,B))=W(D,\mathcal{S}(A),w_A)$, and is therefore optimal. We close this branch of computations with respect to the set $A$ and move on with the next subset of $[r]$.
	
	After having finished the described computations, for each subset $A \subseteq [r]$ we have either returned an optimal collection of directed paths and cycles for $(D,\mathcal{S}(A),w_A)$ or concluded correctly that no feasible collection exists. 
	
	This closes the description of the algorithm. By the above argumentation, the algorithm, if it runs in finite time, correctly solves \ref{prob:rVDDP-CS} for the given instance. Indeed, the algorithm terminates: As long as $|V(D)| \ge 2$, the digraphs $D[M_1],\ldots,D[M_{\ell}]$ appearing in the recursive calls because of $\ell \ge 2$ each have strictly less vertices than $D$. Therefore, the described recursive algorithm, after finitely many steps, reduces the problem to the same problem on a set of digraphs consisting of single vertices. As described above, the algorithm then terminates in finite time for each of these singleton-digraphs.
	
	\paragraph{Run-Time Analysis}
	For our analysis, we again use a rooted decomposition tree $T$ which corresponds to the structure of the recursive calls during the algorithm. Again, the vertices correspond to the digraphs appearing in the instances of the recursive calls during the algorithm, the root corresponds to the input digraph $D$, and the children of some vertex associated with a digraph $D'$ correspond to the induced subdigraphs of the modules in the module-decomposition of $D'$. The leafs of the tree correspond to the digraphs induced by singletons of $V(D)$. Each digraph $D'$ in the tree by the description of the algorithm is an induced subdigraph of $D$, and therefore, by Fact \ref{inducedmonotonicity}, we have $\dmw(D') \leq \dmw(D)=\omega$. Furthermore note that in any instance $(D',\mathcal{S}',w',\tau)$ used for a recursive call the list $\mathcal{S'}$ is a sublist of $\mathcal{S}$ and therefore has size at most $r$. 
	
	In the following, for an arbitrary branching vertex of $T$ with corresponding instance $(D',\mathcal{S}',w',\tau)$, where $|V(D')| \ge 2$, let us estimate the run-time consumed by all operations made during the corresponding recursive call in the algorithm, disregarding the run-times needed for executing the recursive calls to its leafs. For the sake of simplicity, in the following analysis, we restrict to the root, however the upper bound for the run-time obtained in this way clearly applies to every other branching vertex as well, as it only depends on the parameters $n=|V(D)|$, $r$ and $\omega=\dmw(D)$, which can only get smaller for other branchings. 
	The operations that need to be executed can be listed as follows:
	\begin{itemize}
		\item First we compute a module-decomposition $M_1,\ldots,M_{\ell}$ of $D$ using the algorithm from \cref{computedecomposition}. This requires time $\mathcal{O}(|V(D)|+|E(D)|)=\mathcal{O}(n^2)$.
		\item Determining $r'$ and relabeling the list $\mathcal{S}$ accordingly. This needs time at most $\mathcal{O}(r^2)$.
		\item Computing the module-digraph $D_M$ and $D[M_1],\ldots,D[M_\ell]$. This can be executed in time $\mathcal{O}(|V(D)|^2)=\mathcal{O}(n^2)$.
		\item For each $k \in [\ell]$, computing the sets $I_k^\text{out}, I_k^\text{in}$. This needs time at most $\mathcal{O}(\ell r) \leq \mathcal{O}(\omega r)$.
		\item For each $k \in [\ell]$, computing the weightings $w_k:M_k \rightarrow \mathbb{Z}$ and the sub-collections $\mathcal{S}_k$. This can be done using $\mathcal{O}(\ell n+\ell r)\leq\mathcal{O}(\omega(n+r))$ operations.
		\item For each $A \subseteq [r]$, we need to compute the weightings $w_A$, the sub-collection $\mathcal{S}(A)$, and for each $k \in [\ell]$ the set $A_k$. This needs no more than $\mathcal{O}(2^r(nr+r+\ell r))\leq\mathcal{O}(r2^r(n+\omega))$ operations.
		\item For each of the at most $2^r \cdot 2^r=4^r$ pairs $(A,B)$ of sets with $A \subseteq [r]$, $B \subseteq A \cap [r']$, we need to execute the following:
		\begin{itemize}
			\item For each $k \in [\ell]$, computing the set $A_k \setminus B$ and testing for non-negativity of $(w_k)_{A_k \setminus B}$. This can be done using $\mathcal{O}(\ell (r+n))=\mathcal{O}(\omega(n+r))$ operations.
			\item If applicable, for each $k \in [\ell]$, computing the weights $(w_A)_M^B(v_k)$ and testing for non-negativity. This requires at most $\mathcal{O}(\ell n)=\mathcal{O}(\omega n)$ operations in total.
			\item If applicable, computing $(\mathcal{S}(A))_M^B$. This is in $\mathcal{O}(r)$. 
			\item Generating the list $\Pi_i$ of all $s_i$-$t_i$-paths in $D_M$ for all $i \in A \cap (B \cup \{r'+1,\ldots,r\})$. This is in $\mathcal{O}(r2^\ell)=\mathcal{O}(r2^\omega)$.
			\item Determining $|Q|$ for all $Q \in \Pi_i$ and $i \in [r]$, as well as $c_{Q,i,k}$ for all $Q \in \Pi_i, i \in [r]$ and $k \in [\ell]$. This is in $\mathcal{O}(r\ell 2^\ell)=\mathcal{O}(r\omega2^\omega)$. 
			\item Testing feasibility and solving the ILP \ref{ILPdisjointpaths} using the algorithm from \cref{ipsolving} with respect to the given bounds $U_1=r(\ell+1) \leq r(\omega+1)$ and $U_2=1$. The number of variables of the ILP is bounded by $r2^\ell \leq r2^\omega$, the number of constraints is $\mathcal{O}(r2^\omega)$ as well. By definition, we have
			$$(w_A)_M^B(v_k) \leq \sum_{z \in M_k}{w_A(z)} \leq \sum_{z \in V(D)}{w(z)} \leq \tau$$ for all $k \in [\ell]$. Therefore each entry can be encoded using $\mathcal{O}(\log \tau)$ bits. As the number of entries in the matrix and the vectors is upper bounded in terms of $r$ and $\omega$, the coding length of the matrix and vectors describing the ILP is $L=\mathcal{O}(f_1(r,\omega)\log \tau)$ for some function $f_1$. By \cref{ipsolving}, this step therefore requires running time $\mathcal{O}(f_2(r,\omega)\log \tau \log (U_1U_2))=\mathcal{O}(f_3(r,\omega) \log \tau)$ for fixed functions $f_2$ and $f_3$.
		\end{itemize}
		\item For each $A \subseteq [r]$, if applicable, choosing the best out of the at most $2^r$ collections $\mathcal{P}(A,B), B \subseteq A \cap [r']$ computed beforehand. This is $\mathcal{O}(2^r \log \tau)$.
	\end{itemize}
	Finally, we conclude that the running time consumed by the root (and thus an arbitrary branching vertex) is bounded from above by $\mathcal{O}(n^2+f(r,\omega)(n+\log \tau))$, where $f:\mathbb{N}^2 \rightarrow \mathbb{N}$ is some function. For each of the $n$ leafs of the tree $T$, the respective running time is constant. Again making use of Fact \ref{dectree}, we have $|V(T)| \leq 2n-1$ and therefore conclude that the total running time of the algorithm is bounded from above by
	$$\mathcal{O}\left(|V(T)|(n^2+f(r,\omega)(n+\log \tau))+n\right)=\mathcal{O}\left(n^3+f(r,\omega)(n^2+n\log \tau)\right),$$
	which verifies the bound claimed in the Theorem. From the above one can see that we have $f(r,\omega)=2^{\mathcal{O}(r \log r \cdot 2^\omega \omega)}$. This concludes the proof.
\end{proof}
\section{The Directed Subgraph Homeomorphism Problem and Topological Minors}\label{sec:homeo}
In this section, we want to shed light on the parametrised complexity of the so-called \emph{directed subgraph homeomorphism problem} with respect to the parameter directed modular width. Both the (undirected) subgraph homeomorphism problem \cite{lapaugh} as well as its directed version \cite{fortune} are concerned with a quite general routing problem, in which specified vertices in a digraph are supposed to be connected by internally vertex-disjoint paths according to a given pattern.

Formally, we consider a pair of an \emph{input digraph} $D$ and a \emph{pattern digraph} $\mathcal{H}$. The pattern digraph $\mathcal{H}$ (exceptionally in this paper) is allowed to admit loops, as well as multiple parallel or anti-parallel edges. A \emph{homeomorphism} from $\mathcal{H}$ into $D$ maps vertices of $\mathcal{H}$ to distinct vertices in $D$ and directed edges in $\mathcal{H}$ to directed paths in $D$ connecting the images of the corresponding end vertices, such that the paths only intersect at common endpoints. For a loop, this means that its image forms a directed cycle passing through the image of the incident vertex in $\mathcal{H}$. The directed homeomorphism problem is to decide whether a given digraph $D$ contains a homeomorphic image of a pattern digraph $\mathcal{H}$ using specified vertices. This can be seen as a generalised path finding problem. In fact, the $r$-VDDP is the special case of this problem where the pattern digraph $\mathcal{H}$ forms an oriented matching consisting of $r$ disjoint edges. To keep control of the complexity of this problem, one usually regards the pattern digraph $\mathcal{H}$ as part of the problem description rather than as part of the input.

\ProblemDefLabelled{Directed Subgraph Homeomorphism Problem (pattern digraph $\mathcal{H}$)}{$\mathcal{H}$-DSHP}{prob:DSH}
{A digraph $D$, and a list $s_1,s_2,s_3,\ldots,s_r$ of pairwise distinct vertices in $D$, where $r=|V(\mathcal{H})|$.}
{Decide whether $D$ contains a homeomorphic image of $\mathcal{H}$ with vertex set $V(\mathcal{H})=\{h_1,h_2,\ldots,h_r\}$, such that for every $i \in [r]$, the vertex $h_i$ is mapped to $s_i$. If so, find such a homeomorphic image in $D$.}

A classification of the pattern digraphs $\mathcal{H}$ for which the decision version of the $\mathcal{H}$-DSHP is NP-complete respectively polynomial-time solvable (for fixed $r$) was obtained in \cite{fortune}. Their main result shows that the problem is in $P$ for pattern digraphs $\mathcal{H}$ which admit a dominating source or a dominating sink, and NP-complete in every other case. On the positive side, they established polynomial-time algorithms to solve the $\mathcal{H}$-DSHP on DAGs for every fixed pattern digraph $\mathcal{H}$. 

As the main-result of this section, we show that as a direct Corollary of \cref{algdisjpaths}, for any fixed pattern digraph $\mathcal{H}$, we can obtain an FPT-algorithm solving the $\mathcal{H}$-DSHP with respect to directed modular width as the fixed parameter. 
\begin{theorem}\label{dshpsolution}
Let $\mathcal{H}$ be a pattern digraph with vertex set $\{h_1,\ldots,h_r\}$, and let $m\coloneqq |E(H)|$. There exists an algorithm that, given as input a digraph $D$ equipped with a list $s_1,s_2,\ldots,s_r$ of pairwise distinct vertices in $D$, solves the $\mathcal{H}$-DSHP with this instance in time $$\mathcal{O}(n^3+f(m,\omega)n^2),$$ where $n\coloneqq |V(D)|$, $\omega\coloneqq \dmw(D)$, and $f(m, \omega)=2^{\mathcal{O}(m \log m \cdot 2^\omega \omega)}$.
\end{theorem}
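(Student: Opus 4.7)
The plan is to reduce the $\mathcal{H}$-DSHP directly to an instance of the capacitated $m$-VDDP (\ref{prob:rVDDP-C}) and then invoke the algorithm from \cref{algdisjpaths}. Enumerate the edges of $\mathcal{H}$ as $e_1,\ldots,e_m$ where $e_p=(h_{i_p},h_{j_p})$. Given an input $(D,s_1,\ldots,s_r)$, I construct the instance as follows: the list of vertex-pairs is $\mathcal{S}\coloneqq [(s_{i_p},s_{j_p})\mid p\in[m]]$, and the capacity function $w:V(D)\to\mathbb{N}_0$ is defined by setting $w(s_k)\coloneqq 2L(h_k)+D(h_k)$ for each $k\in[r]$, where $L(h_k)$ is the number of loops at $h_k$ and $D(h_k)$ is the number of non-loop edges incident to $h_k$ in $\mathcal{H}$, and by setting $w(z)\coloneqq 1$ for every vertex $z\in V(D)\setminus\{s_1,\ldots,s_r\}$. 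Finally, I put $\tau\coloneqq n+2m$, which is a valid bound since $\sum_{z\in V(D)}w(z)\leq (n-r)+\sum_{k}(2L(h_k)+D(h_k))\leq n+2m$.

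The core step is to argue that a feasible collection of paths and cycles for $(D,\mathcal{S},w)$ in the sense of \ref{prob:rVDDP-C} is \emph{exactly} the same data as a homeomorphism of $\mathcal{H}$ into $D$ that maps each $h_k$ to $s_k$. The forward direction is straightforward: any such homeomorphism produces a collection of $s_{i_p}$-$s_{j_p}$-paths/cycles whose use of each vertex is bounded by the prescribed capacities (this is exactly the bookkeeping that motivated the choice of $w$). The reverse direction is the key observation. Suppose $(P_1,\ldots,P_m)$ is feasible. Every non-designated vertex $z$ has $w(z)=1$, so $z$ lies on at most one of the $P_p$, and in that path it appears as an internal vertex or as an endpoint (but endpoints of paths are designated, so $z$ can only be internal). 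Every designated vertex $s_k$ is, by construction, an endpoint of exactly $2L(h_k)+D(h_k)$ members of the collection, which already saturates $w(s_k)$; hence $s_k$ cannot appear as an internal vertex of any of the $P_p$. Therefore the paths and cycles are pairwise internally vertex-disjoint and meet the designated vertices only at the prescribed endpoints, which is precisely the condition for a homeomorphism.

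Having established this equivalence, the algorithm for $\mathcal{H}$-DSHP consists of building $(\mathcal{S},w,\tau)$ in time $\mathcal{O}(n+m)$ and invoking the algorithm from \cref{algdisjpaths}. That algorithm either returns an optimal feasible collection (which via the equivalence yields a concrete homeomorphic image of $\mathcal{H}$ with the required vertex-mapping, by interpreting each $P_p$ as the image of edge $e_p$) or reports that no feasible collection exists, in which case no such homeomorphism can exist either.

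For the running time, \cref{algdisjpaths} applied to our instance with $r=m$ variables runs in time $\mathcal{O}(n^3+f(m,\omega)(n^2+n\log\tau))$ with $f(m,\omega)=2^{\mathcal{O}(m\log m\cdot 2^\omega\omega)}$. Since $m$ is a constant depending only on the fixed pattern $\mathcal{H}$, we have $\tau=\mathcal{O}(n)$, so $n\log\tau=\mathcal{O}(n\log n)$, which is absorbed into the $n^2$ term. The total runtime is therefore $\mathcal{O}(n^3+f(m,\omega)n^2)$, matching the claim. The only mildly delicate point in the whole argument is choosing the capacities at the designated vertices precisely enough so that saturation forces the $s_k$ to appear only as endpoints; everything else is a mechanical translation.
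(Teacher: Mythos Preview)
Your proposal is correct and follows essentially the same approach as the paper: reduce $\mathcal{H}$-DSHP to a single instance of \ref{prob:rVDDP-C} by taking one terminal pair per edge of $\mathcal{H}$, setting the capacity at each designated vertex $s_k$ to the degree of $h_k$ (loops counted twice) and capacity $1$ elsewhere, and then invoking \cref{algdisjpaths}. One small wording nit: $s_k$ is an endpoint of $L(h_k)+D(h_k)$ members of the collection, not $2L(h_k)+D(h_k)$; what equals $2L(h_k)+D(h_k)$ is the total number of \emph{traversals} of $s_k$ contributed by those endpoints (each loop contributing two), which is the quantity that saturates $w(s_k)$ and forces $s_k$ to be non-internal.
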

\begin{proof}
We apply Theorem \cref{algdisjpaths} to the instance $D, w, \mathcal{S}, \tau$, where we define 
$$w(z)\coloneqq \begin{cases} \deg_\mathcal{H}(h_i), & \text{if }z=s_i \text{ with }i \in [r], \cr
1, & \text{otherwise,}  \end{cases}$$ for all $z \in V(D)$. Here, $\deg_\mathcal{H}(h_i)$ is the degree of $h_i$ in the underlying graph of $\mathcal{H}$, where incident loops are counted twice. Furthermore, $$\mathcal{S}\coloneqq [(s_i,s_j)~|~e \in E(\mathcal{H}), \text{tail}(e)=h_i, \text{head}(e)=h_j].$$ Here, parallel edges lead to multiple occurrences of the same vertex-tuple in the list $\mathcal{S}$. Finally, we define $\tau\coloneqq 2m-r+n=\sum_{z \in V(D)}{w(z)}$. 

It is now easy to see that the homeomorphic images of $\mathcal{H}$ in $D$ are in bijection with collections $\mathcal{P}$ of directed paths and cycles in $D$ which are feasible with respect to the pair $(w,\mathcal{S})$: Let $\mathcal{P}$ be a feasible collection for $(w,\mathcal{S})$. Then for every directed edge from $h_i$ to $h_j$ in $\mathcal{H}$, we route a corresponding directed $s_i$-$s_j$-path in $D$. For any $i \in [r]$, since the accessible capacity $w(s_i)$ at a vertex $s_i$ is already saturated by the end-points of the paths corresponding to the incident edges in $\mathcal{H}$, $s_i$ is contained in no path whose corresponding edge is not incident with $h_i$. Furthermore, all the paths are internally vertex-disjoint, since $w(z)=1$ for $z \notin \{s_1,\ldots,s_r\}$. This shows that mapping any edge in $\mathcal{H}$ to the path connecting the corresponding vertex-tuple in $\mathcal{S}$ defines a homeomorphism from $\mathcal{H}$ into $D$ where $h_i$ is mapped to $s_i$, for $i=1,2,\ldots,r$. The converse of this statement is verified in the same fashion. Therefore, we can use the algorithm from \cref{algdisjpaths} to decide whether a homeomorphic image of $\mathcal{H}$ in $D$ mapping $h_i$ to $s_i$ for $i=1,2,\ldots,r$ exists, and in case it does, to obtain such an image as the union of the paths contained in a feasible collection for $(w,\mathcal{S})$. The running time of this algorithm is $$\mathcal{O}(n^3+f_1(m,\omega)(n^2+n \log (m-r+n)))=\mathcal{O}(n^3+f(m,\omega)n^2),$$ where $f_1$ is the function from \cref{algdisjpaths}, and $f(m,\omega)\coloneqq f_1(m,\omega) \cdot \frac{\log (m-r+n)}{\log n}=2^{\mathcal{O}(m \log m \cdot 2^\omega \omega)}$. This proves the claim.
\end{proof}
For a digraph $\mathcal{H}$, a \emph{subdivision} is a digraph obtained from $\mathcal{H}$ by replacing directed edges with directed paths of positive length connecting the respective end vertices.

If a digraph $D$ contains a subdivision of another digraph $\mathcal{H}$, this digraph $\mathcal{H}$ is also called a \emph{topological minor} of $D$. This notion has great importance in the minor structure theory of directed graphs, as topological minors form a canonical special case of the so-called \emph{butterfly-minors}. The latter are for instance used in the \emph{directed grid theorem} \cite{kawarabayashi2015directed}, a cornerstone of recent developments in structural digraph theory. 

For undirected graphs, it has been proven that testing for topological minors is fixed parameter-tractable \cite{grohe} when parametrising with the size of the minor. However, for directed graphs, there exist instances $\mathcal{H}$ for which it is NP-complete to test whether a given digraph $D$ contains a subdivision of $\mathcal{H}$ (see Theorem 33 in \cite{BANGJENSEN} for an example). We therefore believe that the following result is a relevant contribution to minor-testing in directed graphs.
\begin{theorem}
Let $\mathcal{H}$ be a multi-digraph (possibly containing loops, and multiple parallel and anti-parallel edges). There exists an algorithm that decides whether $\mathcal{H}$ is a topological minor of a given digraph $D$, and if so, returns a subdivision of $\mathcal{H}$ which is a subdigraph of $D$. This algorithm runs in time $$\mathcal{O}(f(m,\omega)n^{r+3}).$$ Here, we have $m\coloneqq |E(H)|, r\coloneqq |V(H)|, n\coloneqq |V(D)|$ and $\omega\coloneqq \dmw(D)$. We furthermore have $f(m,\omega)=2^{\mathcal{O}(m \log m \cdot 2^\omega \omega)}$. 
\end{theorem}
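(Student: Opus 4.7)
The plan is to reduce the topological-minor testing problem directly to $|V(\mathcal{H})|$ applications of the algorithm for \ref{prob:DSH} from \cref{dshpsolution}. The key observation is that $\mathcal{H}$ is a topological minor of $D$ if and only if there exists an ordered tuple $(s_1,\ldots,s_r)$ of pairwise distinct vertices in $D$ such that $D$ contains a homeomorphic image of $\mathcal{H}$ mapping the vertex $h_i \in V(\mathcal{H})$ to $s_i$ for every $i \in [r]$. Indeed, any subdivision of $\mathcal{H}$ contained in $D$ as a subdigraph uses $r$ distinct vertices as the images of the branch vertices, and conversely, a homeomorphic image of $\mathcal{H}$ with prescribed images of its vertices is, by definition, the same as a subdivision of $\mathcal{H}$ contained as a subdigraph of $D$ with a prescribed set of branch vertices.

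Given this, the algorithm is straightforward: enumerate all ordered tuples $(s_1,\ldots,s_r)$ of pairwise distinct vertices in $V(D)$, and for each such tuple call the algorithm from \cref{dshpsolution} with the instance $(D,s_1,\ldots,s_r)$. If one of these calls returns a homeomorphic image of $\mathcal{H}$, output the union of the paths defining it, which is a subdivision of $\mathcal{H}$ contained in $D$. If all calls return negatively, output that $\mathcal{H}$ is not a topological minor of $D$. Correctness follows immediately from the observation above and the correctness of \cref{dshpsolution}.

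For the runtime, the number of ordered tuples of $r$ distinct vertices in $V(D)$ is at most $n \cdot (n-1) \cdots (n-r+1) \le n^r$. Enumerating them (say in lexicographic order) takes time $\mathcal{O}(r n^r) \subseteq \mathcal{O}(n^r)$ for $r$ fixed as part of the pattern (alternatively absorbed into $f$). Each call to the subroutine from \cref{dshpsolution} takes time $\mathcal{O}(n^3+f(m,\omega)n^2)$, where $\omega=\dmw(D)$ and $f(m,\omega)=2^{\mathcal{O}(m \log m \cdot 2^\omega \omega)}$. Multiplying by the number of tuples yields a total running time of
$$\mathcal{O}\!\left(n^r\cdot\bigl(n^3+f(m,\omega)n^2\bigr)\right)=\mathcal{O}\!\left(n^{r+3}+f(m,\omega)n^{r+2}\right)\subseteq \mathcal{O}\!\left(f(m,\omega)n^{r+3}\right),$$
which matches the bound in the statement.

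There is no real obstacle to this proof beyond carefully justifying the equivalence between topological-minor containment and the existence of a vertex-assignment admitting an $\mathcal{H}$-homeomorphic image; all the real algorithmic work has been hidden in the earlier \cref{algdisjpaths} and its corollary \cref{dshpsolution}. The factor $n^r$ arises purely from the brute-force enumeration over branch-vertex assignments, which is unavoidable in this simple reduction but is compatible with the claimed XP-type dependence on $r$.
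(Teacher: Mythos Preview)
Your proposal is correct and follows essentially the same approach as the paper: enumerate all $\mathcal{O}(n^r)$ ordered tuples of distinct vertices and apply the algorithm from \cref{dshpsolution} to each. (Note a small slip in your opening sentence, where you wrote ``$|V(\mathcal{H})|$ applications'' but clearly meant at most $n^r$ applications, as you correctly state in the body.)
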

\begin{proof}
Let $\{h_1,h_2,\ldots,h_r\}$ be the labelled set of vertices of $\mathcal{H}$. To test whether $D$ contains a subdivision of $\mathcal{H}$, we apply the algorithm from \cref{dshpsolution} to all $\mathcal{O}(n^r)$ possible combinations $(s_1,s_2,\ldots,s_r)$ of pairwise distinct vertices in $D$. If we find a hoeomorphic image of $\mathcal{H}$ during this process, we output this image, and conclude that $\mathcal{H}$ is a topological minor of $D$, otherwise, we correctly conclude that $D$ does not contain a subdivision of $\mathcal{H}$.
\end{proof}
\section{Computing Other Directed Width Measures}\label{sec:otherwidth}

In the introduction we discussed that most directed width measures are not very powerful in the algorithmic context.
However, some of them like \emph{directed pathwidth} or \emph{cycle-rank} find very specialised applications, but still there are no FPT-algorithms known to compute their corresponding decompositions.
So in some cases it might still be desirable to compute these parameters, even in a setting where the directed modular width is bounded.

This section is dedicated to explore the power of directed modular width with regards to the computation of directed pathwidth and cycle-rank, an idea that was already pursued for directed pathwidth and directed treewidth on the special case of directed co-graphs \cite{gurski2018directed}.

\subsection{Directed Pathwidth}

The directed pathwidth was introduced by Reed, Thomas and Seymour around 1995 and was later tightly bound to a variant of the cops and robbers game by Bar{\'a}t \cite{barat2006directed}.
The game characterisation yields an XP-time algorithm for the computation of directed pathwidth, whose running time can further be improved by a more specialised approach \cite{tamaki2011polynomial}.

Directed pathwidth finds applications in \emph{boolean networks} \cite{tamaki2010directed}.
There also exists a class of so called \emph{FIFO stack-up} problems that is closely connected to the directed pathwidth of the input graph \cite{gurski2015directed,gurski2016complexity}.

Let $D$ be a directed graph and $A,B\subseteq\Fkt{V}{D}$.
We call $\Brace{A,B}$ a \emph{directed separation} if $A\cup B=\Fkt{V}{D}$ and there is no edge with tail in $B\setminus A$ and head in $A\setminus B$ in $D$.
The \emph{order} of a directed separation $\Brace{A,B}$ is $\Abs{A\cap B}$.

Let $P$ be a directed path from $u$ to $v$ and let $t\in\Fkt{V}{P}$.
We denote the subpath from $u$ to $t$ by $Pt$ and the subpath from $t$ to $v$ by $tP$.

\begin{definition}
	Let $D$ be a digraph, $P$ a directed path and $\beta\colon\Fkt{V}{P}\rightarrow 2^{\Fkt{V}{D}}$.
	For a directed subpath $P'$ of $P$, we use the notation $\Fkt{\beta}{P'}=\bigcup_{t\in V(P')}\Fkt{\beta}{t}$.
	
	The tuple $\Brace{P,\beta}$ is called a \emph{directed path decomposition} for $D$ if
	\begin{enumerate}
		
		\item $\bigcup_{t\in V(P)}\Fkt{\beta}{t}=\Fkt{V}{D}$,
		
		\item  $\Brace{\Fkt{\beta}{Pt},\Fkt{\beta}{t'P}}$ is a directed separation of $D$ for every $\Brace{t,t'}\in\Fkt{E}{P}$, and
		
		\item $B_v\coloneqq \CondSet{t\in\Fkt{V}{P}}{v\in\Fkt{\beta}{t}}$ induces a subpath of $P$.
		
	\end{enumerate}
	
	We call $\Fkt{\beta}{t}$ the \emph{bags} of $\Brace{P,\beta}$.
	The \emph{width} of $\Brace{P,\beta}$ is $\Fkt{\operatorname{width}}{P,\beta}=\max_{t\in V(P)}\Abs{\Fkt{\beta}{t}}-1$.
	The directed pathwidth of $D$, denoted by $\dpw{D}$, is the minimum width of a directed path decomposition for $D$.
	
	The task here is to determine the directed pathwidth of a given digraph $D$ and find a directed path decomposition of minimum width while doing so.
	
	\ProblemDefLabelled{Directed Pathwidth}{DPW}{prob:DPW}
	{A digraph $D$.}
	{Find a directed path decomposition of minimum width for $D$.}
	
	\begin{theorem} \label{FPTDPW}
		There exists an algorithm that given a digraph $D$ as input, outputs a directed path decomposition for $D$ of minimum width.
		The algorithm runs in time $\Fkt{\mathcal{O}}{\omega n^3+\omega^3 2^{\omega^2} n}$, where $n\coloneqq\Abs{\Fkt{V}{D}}$ and $\omega\coloneqq\Fkt{\dmw}{D}$.
	\end{theorem}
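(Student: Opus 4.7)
The plan is to mirror the general recursive strategy from \cref{sec:strategy}: compute a module decomposition of $D$, recursively compute an optimal directed path decomposition of each $D[M_i]$, and combine these via a brute-force search over directed path decompositions of the small module-digraph $D_M$.

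The key ingredient is a structural lemma relating $\operatorname{dpw}(D)$ to the $\operatorname{dpw}(D[M_i])$'s and to path decompositions of $D_M$. Starting from any optimal directed path decomposition $(P,\beta)$ of $D$, I would argue, using the modular property, that one may assume without loss of generality that for each $i\in[\ell]$ the set of bags meeting $M_i$ forms a contiguous sub-path of $P$; any ``gap'' can be filled in by rerouting vertices, exploiting the fact that all vertices of $M_i$ share identical external in- and out-neighbourhoods. Under this contiguity assumption, the projection $\beta_M(t):=\{v_i:M_i\cap\beta(t)\neq\emptyset\}$ is a valid directed path decomposition of $D_M$. Conversely, given any path decomposition $(P_M,\beta_M)$ of $D_M$ and optimal decompositions $(P_i,\gamma_i)$ of each $D[M_i]$, one obtains a path decomposition of $D$ by picking, for each $i$, a bag $\beta_M(t_i^{\ast})$ of minimum cardinality among those containing $v_i$, inserting the full decomposition $(P_i,\gamma_i)$ at that point, and at every other bag of $\beta_M$ containing $v_i$ keeping only a single representative vertex of $M_i$. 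A direct bag-size calculation then yields the identity
\[
\operatorname{dpw}(D)+1=\min_{(P_M,\beta_M)}\max\Bigl(\max_t|\beta_M(t)|,\ \max_{i\in[\ell]}\bigl(\operatorname{dpw}(D[M_i])+\min_{t:\,v_i\in\beta_M(t)}|\beta_M(t)|\bigr)\Bigr),
\]
where the outer minimum ranges over all directed path decompositions of $D_M$.

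The algorithm now proceeds recursively. When $|V(D)|\geq 2$, compute a module decomposition via \cref{computedecomposition}, recursively obtain optimal path decompositions of each $D[M_i]$, and then enumerate all directed path decompositions of $D_M$. Since $|V(D_M)|\leq\omega$, one may restrict to ``smooth'' decompositions with at most $2\omega$ bags, each a subset of $V(D_M)$, giving at most $2^{\mathcal{O}(\omega^2)}$ candidates; for each candidate the right-hand side of the above identity can be evaluated in $\operatorname{poly}(\omega)$ time, and the minimum-width candidate is then expanded using the modules' decompositions to produce an optimal decomposition of $D$.

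The runtime follows the by-now standard decomposition-tree analysis: by \cref{dectree} the recursion tree has at most $2n-1$ nodes, and at each node the work splits into $\mathcal{O}(n^2)$ for the modular decomposition, $\mathcal{O}(\omega^3 2^{\omega^2})$ for the brute force on $D_M$, and $\mathcal{O}(\omega n)$ to assemble the expanded decomposition. Summing over all nodes of the tree yields the claimed bound $\mathcal{O}(\omega n^3+\omega^3 2^{\omega^2} n)$. The main obstacle in the write-up will be the contiguity argument in the structural lemma: proving that any optimal directed path decomposition of $D$ can be rearranged, without increasing its width, so that each module occupies a single consecutive block of bags, which is exactly what makes the lower-bound direction of the identity above valid.
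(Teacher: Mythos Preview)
Your identity is false. Let $D_M$ be a single digon on $\{v_1,v_2\}$ and take $M_1=\{a,b\}$, $M_2=\{c,d\}$, both edgeless, so that $D$ is the bidirected $K_{2,2}$. Then $\operatorname{dpw}(D[M_i])=0$ for $i=1,2$, and your right-hand side, evaluated on the one-bag decomposition $\beta_M=\{v_1,v_2\}$ of $D_M$, equals $\max(2,\,0{+}2,\,0{+}2)=2$; but one checks directly that $\operatorname{dpw}(D)=2$, so the left-hand side is $3$. The gap is in the upper-bound construction: if $v_i$ and $v_j$ share a bag of $(P_M,\beta_M)$ and $(v_j,v_i)\in E(D_M)$, then every vertex of $M_j$ has an edge into every vertex of $M_i$, and a \emph{single} representative of $M_j$ cannot shield the already-forgotten $M_i$-vertices from the not-yet-introduced $M_j$-vertices as you unroll the decomposition of $D[M_i]$ along a subpath. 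In the example there is no width-$1$ decomposition of $D$ at all --- the last bag containing $a$ is forced to contain both $c$ and $d$. Your contiguity step (that each module may be assumed to occupy a consecutive block of bags) is fine and matches the paper's \cref{lemma:mrespecting}; it is the combination formula that fails.

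The paper does not use your formula. It introduces two weights $n(v_i)=|M_i|$ and $w(v_i)=\operatorname{dpw}(D[M_i])+1$ on $V(D_M)$ and an $n$--$w$ weighted pathwidth in which a vertex $v_i$ contributes $n(v_i)$ to \emph{every} bag it lies in whenever it is a guard (appears in two consecutive bags), and contributes $w(v_i)$ only if it lies in a unique bag. The structural preparation (\cref{lemma:mrespecting} and \cref{lemma:mreduced}) shows that an optimal decomposition of $D$ may be taken $\mathcal{M}$-reduced: each module is either wholly contained in every bag it meets, or else its restriction is itself an optimal decomposition of $D[M_i]$. That all-or-nothing dichotomy is exactly what your single-representative trick misses --- a module spanning several bags must in general be kept whole, contributing $|M_i|$ rather than $1$ --- and the brute-force on $D_M$ is then carried out over this weighted objective rather than over the plain bag sizes you propose.
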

	
\end{definition}

For the following proofs we need some further notation.
Let $D$ be a digraph and $\Brace{P,\beta}$ a directed path decomposition for $D$.
For every $\Brace{t,t'}\in\Fkt{E}{P}$ we can classify every vertex $v\in\Fkt{V}{D}$ as follows.
\begin{enumerate}
	
	\item If $v\in\Fkt{\beta}{Pt}\setminus\Fkt{\beta}{t'P}$, then $v$ is called \emph{clean}.
	
	\item If $v\in\Fkt{\beta}{Pt}\cap\Fkt{\beta}{t'P}$, then $v$ is called a \emph{guard} at $\Brace{t,t'}$.
	
	\item If If $v\in\Fkt{\beta}{t'P}\setminus\Fkt{\beta}{Pt}$, then $v$ is called \emph{infected}.
	
	\item We say that $v$ is \emph{protected} from a set $X\subseteq\Fkt{V}{D}$ by the guards at $\Brace{t,t'}$ if every directed path from a vertex of $X$ to $v$ in $D$ contains a guard at $\Brace{t,t'}$.
	
\end{enumerate}

Similar to the previous sections we need to solve a weighted version of the DPW-Problem on the module graph which we then can transform into a directed path decomposition using the recursion.
The directed path decomposition we will end up with has certain properties we cannot avoid.
So before we go with the description of the algorithm we will prove that there always is a directed path decomposition of minimum width with these properties.

In the following let $D$ be a digraph and $\mathcal{M}=\Set{M_1,\dots,M_{\ell}}$ a partition of $\Fkt{V}{D}$ into modules.
We say that a directed path decomposition $\Brace{P,\beta}$ of $D$ \emph{respects} $\mathcal{M}$ if the subgraph $P_{M_i}$ of $P$ induced by $\CondSet{t\in\Fkt{V}{P}}{M_i\cap\Fkt{\beta}{t}\neq\emptyset}$ is a directed path for every $i\in [\ell]$.

Please note that it is well-known that directed pathwidth is monotone under taking subgraphs.
To see this let $\Brace{P,\beta}$ be any directed path decomposition of $D$ and $X\subseteq\Fkt{V}{D}$.
We define the the \emph{$X$-reduction} $\Brace{P,\beta}_X$ of $\Brace{P,\beta}$ to be the directed path decomposition of $\InducedSubgraph{D}{X}$ obtained by deleting every vertex $p$ of $P$ with $\Fkt{\beta}{p}\cap X=\emptyset$ and adding edges to the remaining graph in order to form a new directed path still respecting the vertex order induced by $P$.
At last we delete all vertices not contained in $X$ from the bags of the remaining vertices.

We start with two preliminary lemmas.

\begin{lemma}\label{lemma:nonproctectingmodules}
	Let $\Brace{P,\beta}$ be a directed path decomposition for $D$ and $\Brace{t,t'}\in\Fkt{E}{P}$.
	If there is some $i\in[\ell]$ such that $M_i\setminus\Brace{\Fkt{\beta}{Pt}\cap\Fkt{\beta}{t'P}}\neq\emptyset$, then $\Brace{\Fkt{\beta}{Pt}\cap\Fkt{\beta}{t'P}}\setminus M_i$ protects $\Fkt{\beta}{Pt}\setminus M_i$ from $\Fkt{\beta}{t'P}\setminus M_i$.
\end{lemma}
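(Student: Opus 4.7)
The plan is to argue by contradiction. Abbreviate the guard set at $\Brace{t,t'}$ by $G\coloneqq\Fkt{\beta}{Pt}\cap\Fkt{\beta}{t'P}$, and suppose toward a contradiction that there exist $v\in\Fkt{\beta}{Pt}\setminus M_i$ and $u\in\Fkt{\beta}{t'P}\setminus M_i$ together with a directed $u$-$v$-path $Q$ in $D$ whose vertex set is disjoint from $G\setminus M_i$. Since $u,v\notin M_i$, the assumption immediately implies $u,v\notin G$, so $u\in\Fkt{\beta}{t'P}\setminus\Fkt{\beta}{Pt}$ is infected and $v\in\Fkt{\beta}{Pt}\setminus\Fkt{\beta}{t'P}$ is clean. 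The directed separation property of $\Brace{P,\beta}$ then forces $Q$ to meet $G$, and by our assumption every such meeting vertex must lie in $G\cap M_i$; in particular $V(Q)\cap M_i\neq\emptyset$.

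The central step will be to use the hypothesis $M_i\setminus G\neq\emptyset$ together with the module property of $M_i$ to reroute $Q$ through a single non-guard of $M_i$. Concretely, fix some $z\in M_i\setminus G$ and enumerate the vertices of $Q$ lying outside $M_i$ in the order they occur along $Q$ as $u=q_{i_1},q_{i_2},\ldots,q_{i_m}=v$. For every consecutive pair $q_{i_j},q_{i_{j+1}}$ that is not already adjacent on $Q$, the intervening maximal $M_i$-segment witnesses that $D$ contains an edge from $q_{i_j}$ into $M_i$ and an edge from $M_i$ into $q_{i_{j+1}}$. Since $M_i$ is a module, this forces $\Brace{q_{i_j},z},\Brace{z,q_{i_{j+1}}}\in\Fkt{E}{D}$, so the segment may be replaced by the detour $q_{i_j}\to z\to q_{i_{j+1}}$. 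Concatenating these pieces produces a directed $u$-$v$-walk $W$ in $D$ whose vertex set is contained in $\Set{q_{i_1},\ldots,q_{i_m}}\cup\Set{z}$.

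From $W$ one extracts a simple directed $u$-$v$-path $Q'$ in the usual way, and $V(Q')\subseteq\Set{q_{i_1},\ldots,q_{i_m}}\cup\Set{z}$ is preserved. Because $u$ is infected and $v$ is clean, the directed separation property forces $Q'$ to meet $G$; since $z\notin G$ by the choice of $z$, this meeting vertex must be some $q_{i_j}$. But every $q_{i_j}$ lies outside $M_i$, so $q_{i_j}\in G\setminus M_i$. This very $q_{i_j}$ is also a vertex of the original path $Q$, contradicting our assumption that $V(Q)\cap (G\setminus M_i)=\emptyset$ and thereby proving the lemma.

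The main obstacle I foresee is the rerouting step, and specifically checking that the two new edges $\Brace{q_{i_j},z}$ and $\Brace{z,q_{i_{j+1}}}$ are really guaranteed by the module property — this is precisely the place where both the hypothesis $M_i\setminus G\neq\emptyset$ and the modular nature of $M_i$ are used in concert. The remaining ingredients (applying the directed separation property twice and extracting a simple path from a walk) are routine.
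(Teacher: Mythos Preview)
Your proof is correct and follows essentially the same approach as the paper's: both argue by contradiction, use the module property of $M_i$ to reroute the offending path through a vertex $z\in M_i\setminus G$, and then invoke the directed separation $\Brace{\Fkt{\beta}{Pt},\Fkt{\beta}{t'P}}$ to reach a contradiction. The only technical difference is how the two proofs deal with the possibility that the path visits $M_i$ more than once: the paper first passes to a \emph{shortest} offending path and shows it meets $M_i$ in exactly one vertex (using a shortcut $(x',y)$ via the module property), then replaces that single vertex by $z$; you instead reroute every maximal $M_i$-segment through $z$ to obtain a walk and then extract a simple path. This is a cosmetic variation rather than a different idea.
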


\begin{proof}
	Suppose $\Brace{\Fkt{\beta}{Pt}\cap\Fkt{\beta}{t'P}}\setminus M_i$ does not protect $\Fkt{\beta}{Pt}\setminus M_i$ from $\Fkt{\beta}{t'P}\setminus M_i$.
	Then there must exist a directed path $P'$ in $D$ from $\Fkt{\beta}{t'P}\setminus \Brace{M_i\cup\Fkt{\beta}{Pt}}$ to $\Fkt{\beta}{Pt}\setminus \Brace{M_i\cup\Fkt{\beta}{t'P}}$ avoiding $\Brace{\Fkt{\beta}{Pt}\cap\Fkt{\beta}{t'P}}\setminus M_i$.
	Thus every such path $P'$ must contain a vertex of $M_i\cap\Fkt{\beta}{Pt}\cap\Fkt{\beta}{t'P}$ since $\Brace{\Fkt{\beta}{Pt},\Fkt{\beta}{t'P}}$ is a directed separation of $D$.
	We may choose the path $P'$ to be a shortest one among these paths and we claim $\Abs{\Fkt{V}{P'}\cap M_i}=1$.
	Suppose $P'$ contains more than one vertex of $M_i$ and let $x$ be the first vertex and $y$ the last vertex of $P'$ in $M_i$.
	Let $x'$ be the direct predecessor of $x$ on $P'$, then $\Fkt{V}{P'x'}\cap M_i=\emptyset$ and, since $M_i$ is a module, the edge $\Brace{x',y}$ exists in $D$.
	Hence $P'x'+\Brace{x',y}+yP'$ is a shorter path than $P'$ still meeting all of our requirements.
	
	Since $M_i\setminus\Brace{\Fkt{\beta}{Pt}\cap\Fkt{\beta}{t'P}}\neq\emptyset$ there is a vertex $z\in M_i\setminus \Brace{\Fkt{\beta}{Pt}\cap\Fkt{\beta}{t'P}}$.
	Let $y$ be the unique vertex of $M_i$ on $P'$, $x$ its predecessor and $x'$ its successor.
	Then, with $M_i$ being a module, the edges $\Brace{x,z}$ and $\Brace{z,x'}$ also exist and thus we can find another directed path $P''$ by replacing $y$ with $z$ that meets all of our requirements and avoids all of $\Fkt{\beta}{Pt}\cap\Fkt{\beta}{t'P}$.
	This is a contradiction to $\Brace{\Fkt{\beta}{Pt},\Fkt{\beta}{t'P}}$ being a directed separation and so our assertion follows.
\end{proof}

\begin{lemma}\label{lemma:protectionfrommodules}
	Let $\Brace{P,\beta}$ be a directed path decomposition for $D$ and $\Brace{t,t'}\in\Fkt{E}{P}$.
	If there is $i\in[\ell]$ such that $M_i\setminus\Brace{\Fkt{\beta}{Pt}\cap\Fkt{\beta}{t'P}}\neq\emptyset$ and $\Brace{M_i\cap\Fkt{\beta}{t'P}}\setminus\Fkt{\beta}{Pt}\neq\emptyset$ then $\Brace{\Fkt{\beta}{Pt}\cap\Fkt{\beta}{t'P}}\setminus M_i$ protects $\Fkt{\beta}{Pt}\setminus M_i$ from $M_i$.
\end{lemma}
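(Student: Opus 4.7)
The plan is to mirror the rerouting strategy of Lemma \ref{lemma:nonproctectingmodules}, but to piggy-back on that lemma rather than reprove the separation argument from scratch. Writing $G\coloneqq\Brace{\Fkt{\beta}{Pt}\cap\Fkt{\beta}{t'P}}\setminus M_i$, I would argue by contradiction, assuming the existence of a directed path $P'=u_0u_1\ldots u_k$ in $D$ from some $u_0\in M_i$ to $u_k=v\in\Fkt{\beta}{Pt}\setminus M_i$ that avoids $G$, and derive a subpath violating Lemma \ref{lemma:nonproctectingmodules}.

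The first concrete step is to take the largest index $j$ with $u_j\in M_i$, so that $u_{j+1}\notin M_i$ and the edge $(u_j,u_{j+1})$ crosses the module boundary. Then, using the hypothesis $\Brace{M_i\cap\Fkt{\beta}{t'P}}\setminus\Fkt{\beta}{Pt}\neq\emptyset$, I would fix an ``infected'' vertex $z\in M_i\cap\Fkt{\beta}{t'P}$ with $z\notin\Fkt{\beta}{Pt}$. The module property applied to $(u_j,u_{j+1})$ supplies the auxiliary edge $(z,u_{j+1})\in E(D)$, which is the crucial ingredient for the next step.

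Since $z\in\Fkt{\beta}{t'P}\setminus\Fkt{\beta}{Pt}$ and $\Brace{\Fkt{\beta}{Pt},\Fkt{\beta}{t'P}}$ is a directed separation, the head $u_{j+1}$ of the rerouted edge must lie in $\Fkt{\beta}{t'P}$; because $u_{j+1}$ sits on $P'$ (so $u_{j+1}\notin G$) and $u_{j+1}\notin M_i$, it cannot be a guard, forcing $u_{j+1}\in\Brace{\Fkt{\beta}{t'P}\setminus\Fkt{\beta}{Pt}}\setminus M_i$. If $j+1=k$ this already contradicts $v\in\Fkt{\beta}{Pt}$. Otherwise the tail segment $u_{j+1}u_{j+2}\ldots u_k$ is a directed path from $\Fkt{\beta}{t'P}\setminus M_i$ to $\Fkt{\beta}{Pt}\setminus M_i$ that avoids $M_i$ by maximality of $j$ and avoids $G$ by construction, hence contains no guard at $(t,t')$ at all. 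Lemma \ref{lemma:nonproctectingmodules}, whose hypothesis $M_i\setminus\Brace{\Fkt{\beta}{Pt}\cap\Fkt{\beta}{t'P}}\neq\emptyset$ is part of the present hypothesis, then forces a vertex of $G$ on this subpath, giving the desired contradiction.

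The main obstacle is forcing $u_{j+1}$ onto the infected side: without the detour through an infected $z\in M_i$, the separation property alone leaves open the possibility that $u_{j+1}$ is clean, in which case the tail segment would start inside $\Fkt{\beta}{Pt}$ and the reduction to Lemma \ref{lemma:nonproctectingmodules} would fail. It is precisely at this point that the second hypothesis $\Brace{M_i\cap\Fkt{\beta}{t'P}}\setminus\Fkt{\beta}{Pt}\neq\emptyset$ enters, combined with the module property — this is what distinguishes the present statement from a routine corollary of Lemma \ref{lemma:nonproctectingmodules}.
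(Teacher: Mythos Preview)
Your argument is correct and follows essentially the same strategy as the paper's proof: reroute the first edge leaving $M_i$ through an infected vertex $z\in(M_i\cap\beta(t'P))\setminus\beta(Pt)$ supplied by the second hypothesis, then use the directed separation to force a guard on the remaining segment. The only cosmetic difference is that the paper appeals directly to the separation property $(\beta(Pt),\beta(t'P))$ for the final contradiction, whereas you package that step as an invocation of Lemma~\ref{lemma:nonproctectingmodules}; your explicit choice of the \emph{last} index $j$ with $u_j\in M_i$ makes the argument slightly cleaner than the paper's more compressed version.
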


\begin{proof}
	Since $M_i$ is a module, if there is an edge $\Brace{x,v}$ in $D$ with $x\in M_i$ and $v\in\Fkt{\beta}{Pt}\setminus M_i$, then there is an edge $\Brace{y,v}$ for every $y\in M_i$.
	More generally, if there is a directed path from a vertex in $M_i$ to a vertex of $\Fkt{\beta}{Pt}\setminus M_i$, then, since $M_i\setminus\Brace{\Fkt{\beta}{Pt}\cap\Fkt{\beta}{t'P}}\neq\emptyset$, there is still such a path in $D-\Brace{M_i\cap\Fkt{\beta}{Pt}\cap\Fkt{\beta}{t'P}}$.
	Moreover, there also exists such a path which starts in a vertex of $\Brace{M_i\cap\Fkt{\beta}{t'P}}\setminus\Fkt{\beta}{Pt}\neq\emptyset$, and uses the same set of vertices from $(\beta(Pt) \cap \beta(t'P)) \setminus M_i$.
	So, with $\Brace{\Fkt{\beta}{Pt},\Fkt{\beta}{t'P}}$ being a directed separation, every directed path from $M_i$ to $\Fkt{\beta}{Pt}\setminus M_i$ must contain a vertex of $\Brace{\Fkt{\beta}{Pt}\cap\Fkt{\beta}{t'P}}\setminus M_i$.
\end{proof}

With these observations on how modules protect and are protected within a directed path decomposition we are able to transform any directed path decomposition for $D$ into one that respects $\mathcal{M}$.

\begin{lemma}\label{lemma:mrespecting}
	There exists a directed path decomposition of minimum width for $D$ that respects $\mathcal{M}$.
\end{lemma}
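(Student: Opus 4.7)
The plan is to start from an arbitrary min-width directed path decomposition $(P,\beta)$ of $D$ and to transform it, step by step, into one that respects $\mathcal{M}$, while preserving the width. Define the respect-deficiency
$$\Phi(P,\beta) \;=\; \sum_{i=1}^{\ell} \bigl( |I_i| - |T_i| \bigr),$$
where $T_i := \{\, t \in V(P) : \beta(t)\cap M_i \neq \emptyset \,\}$ and $I_i$ is the smallest subpath of $P$ containing $T_i$. Clearly $(P,\beta)$ respects $\mathcal{M}$ iff $\Phi = 0$. I would prove that whenever $\Phi(P,\beta) > 0$, there exists another min-width decomposition $(P,\beta')$ with $\Phi(P,\beta') < \Phi(P,\beta)$. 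Since $\Phi$ takes non-negative integer values, iterating this strict-descent step yields, after finitely many rounds, a min-width decomposition with $\Phi = 0$, proving the lemma.

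For the descent step, pick any module $M_i$ witnessing $\Phi > 0$, i.e.\ a gap position $t^* \in I_i \setminus T_i$, with direct successor $s$ on $P$. Because $t^* \in I_i$ but $t^* \notin T_i$, there is some $v \in M_i$ whose bag-interval begins at a position $t_v \succeq s$, and some $u \in M_i$ whose bag-interval ended at or before $t^*$. At the edge $(t^*,s)$, the hypotheses of Lemma \ref{lemma:protectionfrommodules} are met: $M_i \setminus (\beta(Pt^*)\cap\beta(sP))$ contains $u$ while $(M_i \cap \beta(sP)) \setminus \beta(Pt^*)$ contains $v$. Hence $(\beta(Pt^*)\cap \beta(sP)) \setminus M_i$ already protects $\beta(Pt^*)\setminus M_i$ from $M_i$, so no $M_i$-vertex at this separation is needed to shield external content. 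Exploiting this, I would perform the surgery of extending $v$'s bag-interval backwards one step, i.e.\ adding $v$ to $\beta(t^*)$; this preserves the covering of all edges incident with $v$ (its bag-interval remains a sub-path), and preserves every directed separation $(\beta(Pt),\beta(t'P))$ because enlarging a bag can only enlarge the induced separator. The new decomposition satisfies $t^* \in T_i$, strictly shrinking $\Phi$.

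The main obstacle is that simply adding $v$ to $\beta(t^*)$ could increase the width. To avoid this, the surgery has to be paired with a compensating removal. Here Lemmas \ref{lemma:nonproctectingmodules} and \ref{lemma:protectionfrommodules} play their decisive role: applied to the separation at $(t^*,s)$ (and, by symmetry, at the predecessor edge of $t^*$), they tell us that any $M_i$-vertex currently sitting in the separator there is redundant for protecting non-$M_i$ content, and that $v$ and $u$, sharing all external neighbourhoods by the module property, are interchangeable with respect to external adjacencies and separations. Thus whenever adding $v$ to $\beta(t^*)$ would exceed width $\dpw(D)+1$, we can in the same move delete an $M_i$-vertex from $\beta(t^*)$ (or, more generally, swap the bag-memberships of $v$ and another $M_i$-vertex $u$ on the relevant segment of $P$) without breaking the decomposition. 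Internal $M_i$-adjacencies that this swap might disturb are handled automatically: they only involve edges with both endpoints inside $M_i$, whose covering bags can be chosen inside the (possibly narrower) common window of $M_i$-activity produced by the swap. Iterating strictly decreases $\Phi$, producing in the end the required min-width decomposition respecting $\mathcal{M}$.
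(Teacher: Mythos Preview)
Your descent-by-local-surgery scheme has a genuine gap at exactly the point you flag as ``the main obstacle''. By definition of a gap position, $t^* \in I_i \setminus T_i$ means $\beta(t^*)\cap M_i = \emptyset$. So when you write ``we can in the same move delete an $M_i$-vertex from $\beta(t^*)$'', there simply is no such vertex to delete. If $|\beta(t^*)|$ already equals the maximum bag size (nothing prevents this), inserting $v$ into $\beta(t^*)$ strictly increases the width and you have nothing local to compensate with.

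Your fallback, ``swap the bag-memberships of $v$ and another $M_i$-vertex $u$ on the relevant segment'', does not rescue the argument either. Swapping $u$ and $v$ globally is fine for edges leaving $M_i$ (the module property makes $u$ and $v$ interchangeable there), but it gives you no control over edges \emph{inside} $M_i$: after the swap there is no guarantee that an internal edge $(u,w)$ with $w \in M_i$ is still covered by any bag, and your one-sentence claim that ``covering bags can be chosen inside the (possibly narrower) common window'' is unsupported---that window may well be empty. More structurally, the difficulty is that a single gap position $t^*$ tells you nothing about how large $|\beta(t)\cap M_i|$ can be elsewhere, so a one-step local move cannot bound the width after insertion.

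The paper's proof sidesteps this entirely with a \emph{global} reconstruction for the offending module. It locates the bag $t$ maximising $|\beta(t)\cap M_i|$, removes $M_i$ from every other bag, and then splices in at position $t$ a fresh copy of the whole path $P_{M_i}$, giving each new node the non-$M_i$ content of $\beta(t)$ together with the $M_i$-content of the corresponding original node. The width bound is then immediate from maximality of $|\beta(t)\cap M_i|$, and Lemmas~\ref{lemma:nonproctectingmodules} and~\ref{lemma:protectionfrommodules} are used only to certify that the separations survive after stripping $M_i$ elsewhere. The key idea you are missing is this choice of the maximising bag $t$ as the anchor for the entire $M_i$-block; that is what makes the width argument go through without any delicate compensation.
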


\begin{proof}
	Let $\Brace{P,\beta}$ be a directed path decomposition of $D$ of minimum width.
	If $\Brace{P,\beta}$ does not respect $\mathcal{M}$ there is at least one $i\in[\ell]$ such that $P_{M_i}$ is no directed path.
	We show that we can construct a directed path decomposition $\Brace{P',\beta'}$ from $\Brace{P,\beta}$ such that $P'_{M_i}$ is a directed path and for all $j\in[\ell]\setminus\Set{i}$ the graph $P'_{M_j}$ is a directed path if and only if $P_{M_j}$ is a directed path. Repeating this procedure eventually will yield the claim.
	
	First notice that there cannot exist a $t\in\Fkt{V}{P}$ such that $M_i\subseteq\Fkt{\beta}{t}$, since then $t\in B_x$ for all $x\in M_i$ and since every $B_x$ induces a directed path, $P_{M_i}$ must be a directed path.
	
	Let $t\in\Fkt{V}{P_{M_i}}$ be the vertex maximising $\Abs{M_i\cap\Fkt{\beta}{t}}$ closest to the starting point of $P$ and let $t_1$ be its predecessor on $P$ while $t_2$ is its successor on $P$.
	We distinguish several cases and the way we construct $\Brace{P',\beta'}$ varies slightly from case to case.
	To avoid repetition of arguments we will first give the construction of $\Brace{P',\beta'}$ by distinguishing cases and then show that $\Brace{P',\beta'}$ has the required properties in a separate step.
	In each of our constructions we will introduce a new directed path $P^i$ with $\Abs{\Fkt{V}{P^i}}=\Abs{\Fkt{V}{P_{M_i}}}$, and we treat the vertices of $P^i$ as copies of the vertices of $P_{M_i}$.
	Let $f\colon\Fkt{V}{P_{M_i}}\rightarrow\Fkt{V}{P^i}$ be the bijective function such that for every $t'\in\Fkt{V}{P_{M_i}}$ the set of vertices of $\Fkt{f}{t'}P^i$ is exactly the image of $\Fkt{V}{t'P}\cap\Fkt{V}{P_{M_i}}$ under $f$.
	We call $\Brace{P^i,f}$ the \emph{interval copy} of $P_{M_i}$.
	
	\textbf{Case 1:} If $t_1$ does not exist $t$ must be the starting point of $P$ and since $\Fkt{\beta}{t}$ does not contain all of $M_i$ there must exist a vertex of $M_i$ in $\Fkt{\beta}{t_2P}\setminus\Fkt{\beta}{t}$.
	Now consider the interval copy $\Brace{P^i,f}$ of $P_{M_i}$.
	Let $P'$ be the directed path obtained from $P$ and $P^i$ by deleting $t$ in $P$ and adding an edge from the endpoint of $P^i$ to $t_2$ which is the new starting point of $P-t$.
	
	\textbf{Case 2:} If $t_2$ does not exist, $t$ must be the endpoint of $P$.
	By our choice of $t$ this means that $\Brace{M_i\cap\Fkt{\beta}{t}}\setminus\Fkt{V}{Pt_1}=\Brace{M_i\cap\Fkt{V}{tP}}\setminus\Fkt{V}{Pt_1}\neq\emptyset$.
	Consider the interval copy $\Brace{P^i,f}$ of $P_{M_i}$.
	Let $P'$ be the directed path obtained from $P$ and $P^i$ by deleting $t$ from $P$ and introducing a directed edge from $t_1$ to the starting point of $P^i$.
	
	\textbf{Case 3:} Both $t_1$ and $t_2$ exist.
	Then the choice of $t$ and the fact that $\Fkt{\beta}{t}$ does not contain all of $M_i$ imply that $\Brace{M_i\cap\Fkt{\beta}{t}}\setminus\Fkt{V}{Pt_1}\neq\emptyset$ and so $\Brace{M_i\cap\Fkt{V}{tP}}\setminus\Fkt{V}{Pt_1}\neq\emptyset$.
	Now consider the interval copy $\Brace{P^i,f}$.
	Let $P'$ be the directed path obtained from $P$ and $P^i$ by deleting $t$ from $P$ and introducing an edge from $t_1$ to the starting point of $P^i$ together with an edge from the endpoint of $P^i$ to $t_2$.
	
	In all three cases we define $\beta'\colon\Fkt{V}{P'}\rightarrow 2^{V(D)}$ as follows:
	\begin{align*}
		\Fkt{\beta}'{t'}\coloneqq\TwoCases{\Fkt{\beta}{t'}\setminus M_i}{t'\in\Fkt{V}{P}}{\Brace{\Fkt{\beta}{t}\setminus M_i}\cup\Brace{M_i\cap\Fkt{f^{-1}}{t'}}}{t'\in\Fkt{V}{P^i}}
	\end{align*}
	Now we have to show that $\Brace{P',\beta'}$ is indeed a directed path decomposition, that it is of minimum width and that for all $j\in[\ell]\setminus\Set{i}$ the graph $P'_{M_j}$ is a directed path if and only if $P_{M_j}$ is a directed path.
	Clearly $P'_{M_i}$ is a directed path in all three cases.
	To see that $\Brace{P',\beta'}$ is indeed a directed path decomposition first note that the reduction of $\beta'$ to the vertices of $P^i$ and $M_i$ together with $P^i$ forms a directed path decomposition of $\InducedSubgraph{D}{M_i}$, since $\Brace{P,\beta}$ is a directed path decomposition of $D$.
	Similarly reducing $\beta'$ to the vertices of $\Fkt{V}{D}\setminus M_i$ yields a directed path decomposition of $\InducedSubgraph{D}{\Fkt{V}{D}\setminus M_i}$.
	
	In the second and third case by \cref{lemma:protectionfrommodules,lemma:protectionfrommodules} we have that $\Brace{\Fkt{\beta}{Pt_1}\cap\Fkt{\beta}{tP}}\setminus M_i$ protects $\Fkt{\beta}{Pt_1}\setminus M_i$ from $\Fkt{\beta}{tP}\cup M_i$.
	By definition we have $\Fkt{\beta'}{P'}=\Fkt{V}{D}$.
	For every $\Brace{p,p'}\in\Fkt{E}{P}$ with $p'\in\Fkt{V}{P^i}$ we have $\Fkt{\beta'}{P'p}\setminus\Brace{M_i\cup\Fkt{\beta}{t}}=\Fkt{\beta}{Pt_1}\setminus\Brace{M_i\cup\Fkt{\beta}{t}}$.
	Moreover, if $p$ is the starting point of $P^i$, then $\Fkt{\beta'}{P't_1}\cap\Fkt{\beta'}{pP'}=({\Fkt{\beta}{Pt_1}\cap\Fkt{\beta}{tP}})\setminus M_i$ and for all $\Brace{p,p'}\in\Fkt{E}{P^i}$ we have $\Brace{\Fkt{\beta'}{P'p}\cap\Fkt{\beta'}{p'P'}}\setminus M_i=\Fkt{\beta}{t}\setminus M_i$.
	So in particular $B'v\coloneqq\CondSet{p\in\Fkt{V}{P'}}{v\in\Fkt{\beta'}{p}}$ induces a subpath of $P'$ for all $v\in\Fkt{V}{D}$.
	Also for every $\Brace{p,p'}\in\Fkt{E}{P'}$ with $p'\in\Fkt{V}{P^i}$ the above observations show that $\Fkt{\beta'}{p}\cap\Fkt{\beta'}{p'}$ protects $\Fkt{\beta'}{P'p}\setminus M_i$ from $\Fkt{\beta'}{p'P'}$.
	Additionally we have observed that $\Brace{P',\beta'}$ can be reduced to a directed path decomposition of $\InducedSubgraph{D}{M_i}$ and thus $\Brace{P',\beta'}$ is a directed path decomposition of $D$.
	
	In the first case we again observe $\Brace{\Fkt{\beta'}{P'p}\setminus\Fkt{\beta'}{p'P'}}\cap M_i=\Fkt{\beta}{t}\setminus M_i$ for all $\Brace{p,p'}\in\Fkt{E}{P^i}$.
	Also note that $\Brace{\Fkt{\beta}{t}\cap\Fkt{\beta}{t_2}}$ protects $M_i$ from $\Fkt{\beta}{t_2P}\setminus M_i$ and so again $\Brace{P',\beta'}$ must be a directed path decomposition.
	
	For all $t'\in\Fkt{V}{P}\setminus\Set{t}$ we have $\Abs{\Fkt{\beta'}{\Fkt{f}{t'}}}\leq\Abs{\Fkt{\beta}{t'}}$ and for all $p\in\Fkt{V}{P^i}$ we have $\Abs{\Fkt{\beta'}{p}}\leq\Abs{\Fkt{\beta}{t}}$ since $\Abs{\Fkt{\beta}{t}\cap M_i}$ is maximal.
	Hence $\Fkt{\operatorname{width}}{P',\beta'}\leq\Fkt{\operatorname{width}}{P,\beta}$.
	
	Now let $j\in[\ell]\setminus\Set{i}$ be such that $P_{M_j}$ is a directed path.
	If $\Fkt{V}{P_{M_j}}$ does not contain $t$, then $P'_{M_j}$ must also be a directed path.
	So we may assume $t$ to be a vertex of $P_{M_j}$ and thus $M_j\cap\Fkt{\beta}{t}\neq\emptyset$.
	We have seen in all three cases that $\Brace{\Fkt{\beta'}{P'p}\cap\Fkt{\beta'}{p'P'}}\cap M_i=\Fkt{\beta}{t}\setminus M_i$ for all $\Brace{p,p'}\in\Fkt{E}{P^i}$ and since $M_i\cap M_j=\emptyset$, $M_j\cap\Fkt{\beta'}{p}=M_j\cap\Fkt{\beta}{t}$ for all $p\in\Fkt{V}{P^i}$.
	Hence $P'_{M_j}$ must be a directed path as well.
	The reverse direction follows with an analogue argument which completes the proof.
\end{proof}

So we can always find a directed path decomposition for $D$ respecting $\mathcal{M}$ and witnessing the directed pathwidth of $D$.
In the next step we aim for a further refinement of such a decomposition.
We call a directed path decomposition of $D$ \emph{$\mathcal{M}$-reduced} if
\begin{enumerate}
	
	\item it respects $\mathcal{M}$,
	
	\item if for some $i\in[\ell]$ there is a $t\in\Fkt{\beta}{t}$ with $M_i\subseteq\Fkt{\beta}{t}$, then $\Fkt{\beta}{t'}\cap M_i\neq\emptyset$ implies $M_i\subseteq\Fkt{\beta}{t'}$ for all $t'\in\Fkt{V}{P}$, and
	
	\item if $M_i\setminus\Fkt{\beta}{t}\neq\emptyset$ for all $t\in\Fkt{V}{P}$, then $\Brace{P,\beta}_{M_i}$ is a path decomposition of minimum width for $\InducedSubgraph{D}{M_i}$.
	
\end{enumerate}

\begin{lemma}\label{lemma:mreduced}
	There exists a directed path decomposition of minimum width for $D$ that is $\mathcal{M}$-reduced.
\end{lemma}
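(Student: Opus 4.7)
The plan is to start from a minimum-width directed path decomposition $(P,\beta)$ of $D$ that respects $\mathcal{M}$, whose existence is guaranteed by Lemma~\ref{lemma:mrespecting}, and then perform two local non-increasing modifications, one for each of the remaining conditions (2) and (3) in the definition of $\mathcal{M}$-reduced. For every module $M_i$, exactly one of these two conditions has bite, according to whether $V_i^{\star}:=\bigcap_{v\in M_i}B_v$ is non-empty (then (2) may apply and (3) is vacuous) or empty (then (3) may apply and (2) is vacuous); hence the fixes for different modules can be carried out independently.

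\textbf{Enforcing condition (2): shrinking.} Suppose $V_i^{\star}\ne\emptyset$. I simply delete every $v\in M_i$ from every bag $\beta(t)$ with $t\notin V_i^{\star}$. This can only decrease bag sizes, so the width does not grow. Since $V_i^{\star}$ is a non-empty intersection of subpaths, it is itself a non-empty subpath; hence every $v\in M_i$ still appears in some bag and its new $B_v$ equals $V_i^{\star}$, which is a subpath. The only point that is not immediate is the directed separation property. Edges with both endpoints in $M_i$ are handled inside $V_i^{\star}$, where all of $M_i$ is simultaneously present. For an edge between $u\notin M_i$ and some $v\in M_i$, the modular property says that $u$ is adjacent in the same direction to \emph{every} $v'\in M_i$. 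Hence the original inequalities (for instance $\mathrm{start}(B_u)\le\mathrm{end}(B_{v'})$ for every such $v'$) can be taken simultaneously, giving $\mathrm{start}(B_u)\le\min_{v'\in M_i}\mathrm{end}(B_{v'})=\mathrm{end}(V_i^{\star})$, which is exactly the inequality required after shrinking. The symmetric case is identical. After this operation $P_{M_i}$ is exactly $V_i^{\star}$, and by definition every bag on $V_i^{\star}$ contains all of $M_i$, so condition (2) holds for $M_i$.

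\textbf{Enforcing condition (3): substitution.} Suppose $V_i^{\star}=\emptyset$ and that $(P,\beta)_{M_i}$ is not of minimum width as a decomposition of $D[M_i]$. Let $(Q_i,\gamma_i)$ be an optimal directed path decomposition of $D[M_i]$. I plan to replace the subpath $P_{M_i}$ inside $P$ by a new subpath isomorphic to $Q_i$, defining each new bag as $\gamma_i(q)\cup G(q)$, where $G(q)\subseteq V(D)\setminus M_i$ is a set of outside guards. The modular property of $M_i$ is the decisive ingredient: since every $v\notin M_i$ sees $M_i$ as a single block, all the directed separation constraints coupling $v$ to $M_i$ act uniformly on $M_i$, and this is what allows $G(q)$ to be assembled from the original outside parts $\beta(t)\setminus M_i$ (for $t\in V(P_{M_i})$) without having to track each $u\in M_i$ separately. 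Vertices $v\notin M_i$ whose interval $B_v$ is wholly contained in $V(P_{M_i})$ can be distributed freely along $V(Q_i)$; vertices entering $V(P_{M_i})$ from the left (respectively right) are placed in a prefix (respectively suffix) of $V(Q_i)$; and vertices with $B_v\supseteq V(P_{M_i})$ remain in every bag.

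The main obstacle is verifying, in this second step, that no new bag exceeds size $w+1$. Concretely, for each $q\in V(Q_i)$ one has to exhibit a reference position $t(q)\in V(P_{M_i})$ such that $|\gamma_i(q)|\le|\beta(t(q))\cap M_i|$ and $|G(q)|\le|\beta(t(q))\setminus M_i|$ simultaneously. The first inequality follows from the optimality of $(Q_i,\gamma_i)$ compared with $(P,\beta)_{M_i}$; the second requires a careful bookkeeping of how the outside subpaths $B_v$, $v\notin M_i$, can be re-parametrised from $V(P_{M_i})$ onto $V(Q_i)$ while keeping each $B_v$ a subpath. Making both inequalities hold for the \emph{same} choice of $t(q)$, consistently across $q$, is the technical heart of the proof, and is where the uniformity granted by the modular property is used most heavily.
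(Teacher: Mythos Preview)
Your handling of condition~(2) is correct and essentially matches the paper's argument (the paper invokes Lemma~\ref{lemma:nonproctectingmodules} rather than arguing directly with $\mathrm{start}/\mathrm{end}$ of intervals, but the content is the same).

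For condition~(3), however, your proposal is a plan rather than a proof. You correctly isolate the crux---that the substituted bags must not exceed the original width---but then explicitly defer it: you say that choosing reference positions $t(q)$ so that both $|\gamma_i(q)|\le|\beta(t(q))\cap M_i|$ and $|G(q)|\le|\beta(t(q))\setminus M_i|$ hold simultaneously ``is the technical heart of the proof,'' without carrying it out. As written, nothing in your proposal establishes that such a consistent choice exists, nor that your reparametrisation of the outside intervals $B_v$ (for $v\notin M_i$) from $V(P_{M_i})$ onto $V(Q_i)$ can be done while keeping each $B_v$ a subpath \emph{and} meeting the size bound. This is the missing step, and your sketch gives no mechanism for it.

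The paper sidesteps your bookkeeping entirely. It keeps the outside part $\beta(p)\setminus M_i$ \emph{fixed} at every position $p\in V(P_{M_i})$ and only swaps out the $M_i$-part, after first padding whichever of $P_{M_i}$, $P_i$ is shorter by duplicating its endpoint bag so that the two paths have equal length and can be aligned by the obvious position bijection $f$. The new bag at $p$ is simply $(\beta(p)\setminus M_i)\cup\beta_i(f(p))$. Validity then follows from Lemma~\ref{lemma:nonproctectingmodules} (for the outside) together with validity of $(P_i,\beta_i)$ (for the inside), and the width bound is argued from $\text{width}((P,\beta)_{M_i})\ge\text{width}(P_i,\beta_i)$. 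Your proposed free rearrangement of the outside guards $G(q)$ is unnecessary and is precisely what creates the difficulty you then cannot close; the paper's point is that no rearrangement of the outside is needed at all.
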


\begin{proof}
	Let $\Brace{P,\beta}$ a a path decomposition of minimum width for $D$ and $i\in[\ell]$.
	By \cref{lemma:mrespecting} we may assume $\Brace{P,\beta}$ to respect $\mathcal{M}$.
	For every $\Brace{t,t'}\in\Fkt{E}{P}$ with $M_i\setminus\Brace{\Fkt{\beta}{t}\cap\Fkt{\beta}{t'}}\neq\emptyset$ we know by \cref{lemma:nonproctectingmodules} that $\Brace{\Fkt{\beta}{t}\cap\Fkt{\beta}{t'}}\setminus M_i$ protects $\Fkt{\beta}{Pt}\setminus M_i$ from $\Fkt{\beta}{t'P}\setminus M_i$.
	Hence, if there is some $t\in\Fkt{V}{P}$ with $M_i\subseteq\Fkt{\beta}{t}$ we can delete the vertices of $M_i$ from every bag not containing $M_i$ as a whole and still maintain the property of being a directed path decomposition.
	Clearly we also do not increase the width with this operation.
	Hence for every $i\in[\ell]$ we may further assume that if there is a $t\in\Fkt{\beta}{t}$ with $M_i\subseteq\Fkt{\beta}{t}$, then $\Fkt{\beta}{t'}\cap M_i\neq\emptyset$ implies $M_i\subseteq\Fkt{\beta}{t'}$ for all $t'\in\Fkt{V}{P}$.
	
	Now let us assume that $M_i\setminus\Fkt{\beta}{t}\neq\emptyset$ for all $t\in\Fkt{V}{P}$.
	In this case, since $\Brace{P,\beta}$ respects $\mathcal{M}$ we know that $P_{M_i}$ is a directed path.
	Moreover, $P_{M_i}$ is exactly the path of $\Brace{P,\beta}_{M_i}$.
	Let $\Brace{P_i,\beta_i}$ be a directed path decomposition of minimum width for $\InducedSubgraph{D}{M_i}$.
	In case $P_i$ is longer than $P_{M_i}$ we replace the endpoint $p$ of $P_{M_i}$ in $P$ by a directed path consisting of $\Abs{\Fkt{V}{P_i}}-\Abs{\Fkt{V}{P_{M_i}}}+1$ copies of $p$.
	If $P_{M_i}$ is longer than $P_i$ we replace the endpoint $p$ of $P_i$ by a directed path consisting of  $\Abs{\Fkt{V}{P_{M_i}}}-\Abs{\Fkt{V}{P_i}}+1$ copies of $p$.
	In both cases we also adjust $\beta$ or $\beta_i$ respectively.
	It is easy to see that the result either way is again a directed path decomposition maintaining all properties required by our assumptions.
	Hence we may assume $P_i$ and $P_{M_i}$ to be of the same length.
	Let $f\colon\Fkt{V}{P_{M_i}}\rightarrow\Fkt{V}{P_i}$ such that if $p\in\Fkt{V}{P_{M_i}}$ is the $j$th vertex of $P_{M_i}$, then $\Fkt{f}{p}$ is the $j$th vertex of $P_i$.
	We now replace $\beta$ by a new function $\beta'$ defined for every $p\in\Fkt{V}{P}$ as follows:
	\begin{align*}
		\Fkt{\beta'}{p}\coloneqq\TwoCases{({\Fkt{\beta}{p}\setminus M_i})\cup\Fkt{\beta_i}{\Fkt{f}{p}}}{p\in\Fkt{V}{P_{M_i}}}{\Fkt{\beta}{p}}{\text{otherwise.}}
	\end{align*}
	Again by \cref{lemma:nonproctectingmodules} we know that $\Brace{P,\beta'}_{V(D)\setminus M_i}$ still is a directed path decomposition of $\InducedSubgraph{D}{\Fkt{V}{D}\setminus M_i}$.
	Moreover, $\Brace{P,\beta'}_{M_i}=\Brace{P_i,\beta_i}$ and thus $\Brace{P,\beta'}$ is in fact a directed path decomposition of $D$.
	Since $\Fkt{\operatorname{width}}{\Brace{P,\beta}_{M_i}}\geq\Fkt{\operatorname{width}}{P_i,\beta_i}$ we also know that $\Brace{P,\beta'}$ is of minimum width.
	By repeating this process for every $i\in[\ell]$ with $M_i\setminus\Fkt{\beta}{t}\neq\emptyset$ for all $t\in\Fkt{V}{P}$ we eventually obtain an $\mathcal{M}$-reduced directed path decomposition of minimum width for $D$.
\end{proof}

In order to make use of the recursive nature of modular width we will need a weighted version of directed pathwidth.
For our notion we will introduce two weight functions on the vertices of the module digraph and the width of the decomposition will be evaluated depending on the use of every vertex.
If, in the directed path decomposition, a vertex appears as a guard for at least one edge we will evaluate it with the first function and if it never is a guard, then it will be evaluated with the second function.
This way we model whether we need to place all vertices in the module represented by the given vertex within a single bag, or we can spread the vertices of  the module in question out via an optimum directed path decomposition of the graph induced by the module.
For an $\mathcal{M}$-reduced directed path decomposition of $D$ these two options are exactly those we can choose for the different modules.

\begin{definition}
	Let $D$ be a directed graph and $n\colon\Fkt{V}{D}\rightarrow\mathbb{N}$ and $w\colon\Fkt{V}{D}\rightarrow\mathbb{N}$ two weight functions for the vertices of $D$.
	Given a directed path decomposition $\Brace{P,\beta}$ of $D$ we define the \emph{$n$-$w$-evaluation} function for every vertex $v\in\Fkt{V}{D}$ as follows:
	\begin{align*}
		\Fkt{\operatorname{eval}_{n,w}^{\Brace{P,\beta}}}{v}\coloneqq\TwoCases{\Fkt{n}{v}}{\text{there is}~\Brace{t,t'}\in\Fkt{E}{D}~\text{such that}~v~\text{is a guard at}~\Brace{t,t'}}{\Fkt{w}{v}}{\text{otherwise.}}
	\end{align*}
	The \emph{$n$-$w$-width} of $\Brace{P,\beta}$ is then defined as
	\begin{align*}
		\Fkt{n\text{-}w\text{-}\operatorname{width}}{P,\beta}\coloneqq\max_{t\in V(P)}\sum_{v\in\beta(t)}\Fkt{\operatorname{eval}_{n,w}^{\Brace{P,\beta}}}{v}
	\end{align*}
	At last we define the \emph{$n$-$w$-directed pathwidth} of $D$ as the minimum $n$-$w$-width over all directed path decompositions of $D$.
\end{definition}

We will first show that we can produce a directed path decomposition of minimum $n$-$w$-width for a digraph with at most $\omega$ vertices in a running time depending only on $\omega$.

\begin{lemma}\label{lemma:boundnumberofbags}
	Given a digraph $D$ on exactly $\omega\in\mathbb{N}$ vertices together with two weight functions $n\colon\Fkt{V}{D}\rightarrow\mathbb{N}$ and $w\colon\Fkt{V}{D}\rightarrow\mathbb{N}$
	a directed path decomposition for $D$ of minimum $n$-$w$-width can be found in time $\Fkt{\mathcal{O}}{\omega^32^{\omega^2}}$.
\end{lemma}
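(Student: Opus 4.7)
The plan is a brute-force enumeration of all essentially distinct directed path decompositions of $D$, combined with a validity check and $n$-$w$-width computation for each candidate.

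First I would observe that one may restrict attention to decompositions $(P,\beta)$ with $|V(P)| \leq 2\omega + 1 =: T_{\max}$. Indeed, in any decomposition we may assume consecutive bags are distinct, since deleting one of two equal consecutive bags preserves the directed-separation axioms, the subpath property of $B_v = \{t : v \in \beta(t)\}$, and (crucially) the set of vertices that serve as a guard at some edge of $P$ — hence the $n$-$w$-width is unchanged. Under this normalization each transition along $P$ either adds or removes at least one vertex, and since each of the $\omega$ vertices contributes exactly one ``enter'' and one ``leave'' event, the claimed bound on $|V(P)|$ follows.

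Second, every such normalized decomposition is uniquely described by a length $T \leq T_{\max}$ together with a non-empty interval $[s_v,e_v] \subseteq [1,T]$ for each $v \in V(D)$, with the convention that $v \in \beta(t)$ iff $s_v \leq t \leq e_v$. The total number of such encodings is bounded by $T_{\max} \cdot (T_{\max}^2)^{\omega} \leq (2\omega+1)^{2\omega+1}$, which for sufficiently large $\omega$ is at most $2^{\omega^2}$; for small $\omega$ the count is bounded by a constant absorbable into the $\mathcal{O}$-notation. I would enumerate all such encodings exhaustively.

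For each candidate, I would verify validity in time $\mathcal{O}(\omega^3)$ by checking the directed-separation property: for every $(t,t') \in E(P)$, confirm that no arc $(u,v) \in E(D)$ has $u \in \beta(t'P) \setminus \beta(Pt)$ and $v \in \beta(Pt) \setminus \beta(t'P)$, which requires iterating over the $\mathcal{O}(\omega^2)$ arcs of $D$ across the $\mathcal{O}(\omega)$ cuts of $P$. For each valid candidate, the $n$-$w$-width is computable in time $\mathcal{O}(\omega^2)$: a vertex $v$ acts as a guard at some edge of $P$ precisely when $e_v > s_v$, so $\operatorname{eval}(v)$ equals $n(v)$ in that case and $w(v)$ otherwise, after which the width is $\max_{t} \sum_{v \in \beta(t)} \operatorname{eval}(v)$. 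Returning a decomposition that minimises this value yields the desired output in total time $\mathcal{O}(\omega^3 \cdot 2^{\omega^2})$.

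The main technical obstacle is the normalization step: one must verify carefully that the contraction of equal consecutive bags preserves both validity and the $n$-$w$-width. The subtlety is that the definition of guard depends on which edges of $P$ a vertex straddles, so one has to check that no vertex silently ceases to be a guard after the contraction — this follows because any $v$ lying in $\beta(t) = \beta(t')$ also lies in the merged bag and, as long as $v$ lies in at least two bags of the original decomposition, it still does so after contraction unless its interval had length exactly $1$, in which case it was never a guard to begin with.
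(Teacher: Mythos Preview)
Your approach is essentially the paper's: bound the number of bags in an optimal decomposition, then brute-force enumerate. The paper obtains the tighter bound $|V(P)| \le \omega$ via a different construction (replacing each bag by a path that introduces one new vertex at a time), but either bound fits within the stated running time.

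Your normalization step has a real gap, and your final paragraph does not close it. If a vertex $v$ has interval exactly $\{t,t'\}$ with $\beta(t)=\beta(t')$, then $v$ \emph{is} a guard (at the edge $(t,t')$); after deleting $t'$ it lies in a single bag and is no longer a guard anywhere, so its evaluation switches from $n(v)$ to $w(v)$. Nothing in the lemma forces $n(v) \ge w(v)$, so the $n$-$w$-width can strictly increase: for $\omega=1$ with $n(v)=1$, $w(v)=100$, the two-bag decomposition $(\{v\},\{v\})$ has width $1$ while the unique distinct-bag decomposition has width $100$. The paper's proof shares exactly this gap --- it simply assumes that an optimal decomposition with pairwise distinct bags exists, and its bound $|V(P)|\le\omega$ in fact fails on this same example. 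Both arguments are rescued in the only application, where $w(v_i) = \dpw{\InducedSubgraph{D}{M_i}}+1 \le |M_i| = n(v_i)$, so contraction can only decrease the width. For an argument valid for arbitrary $n,w$: take any optimal $(P,\beta)$ and compress $P$ onto the at most $2\omega$ positions $\{s_v,e_v : v \in V(D)\}$; this map is order-preserving, so each $v$ retains whether $e_v>s_v$, every surviving bag already occurred among the original bags, and hence the $n$-$w$-width is preserved with $|V(P)| \le 2\omega$.
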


\begin{proof}
	First we will show that there is always a directed path decomposition $\Brace{P,\beta}$ of minimum $n$-$w$-width for $D$ with $\Abs{\Fkt{V}{P}}\leq \omega$.
	To see this consider a directed path decomposition $\Brace{P,\beta}$ of minimum $n$-$w$-width with $\Fkt{\beta}{t}\neq\Fkt{\beta}{t'}$ for all distinct $t,t'\in\Fkt{V}{P}$.
	We will iteratively replace every vertex $p$ of $P$ by a directed path such that every vertex that is introduced by the bag $\Fkt{\beta}{p}$, meaning every vertex of $\Fkt{\beta}{p}\setminus\Fkt{\beta}{Pp-p}$, receives its own introductory bag.
	Let $P=\Brace{p_1,\dots,p_{k}}$.
	We define the path $P_1$ to be a directed path with $\Abs{\Fkt{\beta}{p_1}}$ vertices.
	Choose an arbitrary ordering of the vertices of $\Fkt{\beta}{p_1}$ and assign to the first vertex of $P_1$ a bag containing exactly the first vertex of $\Fkt{\beta}{p_1}$.
	Then assign to the second vertex of $P_1$ a bag containing the first and the second vertex of $\Fkt{\beta}{p_1}$.
	Continue in this fashion until every vertex of $P_1$ received a bag containing the vertices of the bags of all previous vertices plus a single new one.
	Clearly this gives us a directed path decomposition of $\InducedSubgraph{D}{\Fkt{\beta}{p_1}}$.
	Now suppose we have already constructed the paths $P_i$ together with their bags for all $i\leq j$ for some $j\leq k-1$.
	Now choose an arbitrary ordering of $\Fkt{\beta}{p_{j+1}}\setminus\Fkt{\beta}{Pp_j}$ and introduce a new directed path $P_{j+1}$ on $\Abs{\Fkt{\beta}{p_{j+1}}\setminus\Fkt{\beta}{Pp_j}}$ vertices.
	Assign to the first vertex of $P_{j+1}$ a bag containing the vertices of $\Fkt{\beta}{p_j}\cap\Fkt{\beta}{p_{j+1}}$ together with the first vertex of $\Fkt{\beta}{p_{j+1}}\setminus\Fkt{\beta}{Pp_j}$.
	Continue with this in the same fashion as before until every vertex of $P_{j+1}$ has been assigned a bag.
	In particular each of those bags now contains $\Fkt{\beta}{p_j}\cap\Fkt{\beta}{p_{j+1}}$ as a subset.
	At last introduce an edge from the endpoint of $P_j$ to the staring point of $P_{j+1}$.
	The result is again a directed path decomposition of $\InducedSubgraph{D}{\Fkt{\beta}{Pp_{j+1}}}$.
	In the end this procedure produces a directed path decomposition $\Brace{P',\beta'}$ with $\Abs{\Fkt{V}{P'}}\geq\Abs{\Fkt{V}{P}}$ and there exists a bijection between the vertices of $D$ and the vertices of $P'$ defined by the first appearance of a vertex of $D$ in a bag of $P'$.
	Hence $\Abs{\Fkt{V}{P}}\leq\omega$.
	
	Now if we are given a path $P$ together with a mapping $\beta\colon\Fkt{V}{P}\rightarrow 2^{V(D)}$ we can test in $\Fkt{\mathcal{O}}{\omega^3}$ whether for every edge $\Brace{t,t'}\in\Fkt{E}{P}$ the tuple $\Brace{\Fkt{\beta}{Pt},\Fkt{\beta}{t'P}}$ defines a directed separation.
	In time $\Fkt{\mathcal{O}}{\omega^2}$ we can check whether $\CondSet{t\in\Fkt{V}{P}}{v\in\Fkt{\beta}{t}}$ induces a subpath od $P$ for every $v\in\Fkt{V}{D}$.
	At last, if the first two answers were yes we can evaluate the $n$-$w$-width of the directed path decomposition $\Brace{P,\beta}$ of $D$ in $\Fkt{\mathcal{O}}{\omega}$ steps.
	
	All that is left to do is to enumerate all possible pairs $\Brace{P,\beta}$ of directed paths of length at most $\omega$ and mappings $\beta\colon\Fkt{V}{P}\rightarrow 2^{V(D)}$.
	There are $\sum_{i=1}^{\omega}\Brace{2^{\omega}}^i\in\Fkt{\mathcal{O}}{2^{\omega^2}}$ such pairs.
	As we have seen it takes $\Fkt{\mathcal{O}}{\omega^3}$ steps to test such a pair on the property of being a directed path decomposition and evaluating its $n$-$w$-width.
	So in total we can find a minimum $n$-$w$-width directed path decomposition of $D$ in time $\Fkt{\mathcal{O}}{\omega^32^{\omega^2}}$.
\end{proof}

\begin{lemma}\label{lemma:decomposemodulegraph}
	Let $D_M$ be the module digraph of $D$ corresponding to $\mathcal{M}$ with vertex set $\Set{v_1,\dots,v_{\ell}}$.
	Let $n\colon\Fkt{V}{D_M}\rightarrow\mathbb{N}$ be defined by $\Fkt{n}{v_i}=\Abs{M_i}$ and $w\colon\Fkt{V}{D_M}\rightarrow\mathbb{N}$ be defined by $\Fkt{w}{v_i}\coloneqq\dpw{\InducedSubgraph{D}{M_i}}+1$ for all $i\in[\ell]$.
	
	Then $\Fkt{n\text{-}w\text{-}\operatorname{dpw}}{D_M}=\dpw{D}+1$.
	Moreover, given a directed path decomposition $\Brace{P_i,\beta_i}$ of minimum width for $\InducedSubgraph{D}{M_i}$ for every $i\in[\ell]$, we can construct a directed path decomposition of minimum width for $D$ in time $\Fkt{\mathcal{O}}{\Abs{\Fkt{V}{D}}^2}$.
\end{lemma}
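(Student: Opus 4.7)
The plan is to establish the equality $\Fkt{n\text{-}w\text{-}\operatorname{dpw}}{D_M}=\dpw{D}+1$ via a pair of reciprocal constructions between directed path decompositions of $D$ and $n$-$w$-decompositions of $D_M$, and to extract the algorithmic claim from the $\geq$-direction. The key structural tool is \Cref{lemma:mreduced}: whenever a decomposition $\Brace{P,\beta}$ of $D$ is considered, I may assume that it is $\mathcal{M}$-reduced. This creates a clean correspondence between guards of $D_M$ (vertices in several bags of $P_M$) and modules of $D$ that are fully contained in every bag meeting them (the regime of condition~2), and between non-guards of $D_M$ and modules of $D$ that are either squeezed into a single bag or decomposed optimally along a subpath (the regime of condition~3).

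For $\Fkt{n\text{-}w\text{-}\operatorname{dpw}}{D_M}\leq\dpw{D}+1$, I would take an $\mathcal{M}$-reduced minimum-width decomposition $\Brace{P,\beta}$ of $D$ and define $\Brace{P_M,\beta_M}$ by the natural projection $P_M \coloneqq P$ and $\Fkt{\beta_M}{t}\coloneqq\Set{v_i : M_i\cap\Fkt{\beta}{t}\neq\emptyset}$. Validity of $\Brace{P_M,\beta_M}$ follows easily: coverage is automatic since each $M_i$ is non-empty, the subpath property for each $v_i$ comes from $\Brace{P,\beta}$ respecting $\mathcal{M}$, and a hypothetical directed separation violation in $\beta_M$ by an edge $\Brace{v_j,v_i}\in\Fkt{E}{D_M}$ would lift via the modular property to a violation in $\beta$ by any concrete edge from $M_j$ to $M_i$. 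For the $n$-$w$-width bound I would compare each bag against $\Abs{\Fkt{\beta}{t}}=\sum_i\Abs{\Fkt{\beta}{t}\cap M_i}$ term by term: for a guard $v_i\in\Fkt{\beta_M}{t}$, condition~2 of $\mathcal{M}$-reducedness forces $M_i\subseteq\Fkt{\beta}{t}$ because $P_{M_i}$ has at least two vertices, so $\Fkt{\operatorname{eval}}{v_i}=\Fkt{n}{v_i}=\Abs{M_i}=\Abs{\Fkt{\beta}{t}\cap M_i}$; for a non-guard $v_i$ the unique bag in which $v_i$ appears can be chosen as the maximum-intersection bag of $\Brace{P,\beta}_{M_i}$, where condition~3 (or the trivial bound when $M_i$ is fully contained in that bag) gives $\Fkt{w}{v_i}\leq\Abs{\Fkt{\beta}{t}\cap M_i}$. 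Summing and maximising over $t$ yields the desired inequality.

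For the reverse inequality fused with the algorithmic statement, I would fix an optimal $\Brace{P_M,\beta_M}$ of $D_M$ and build $\Brace{P,\beta}$ on $D$ by replacing each vertex $t\in\Fkt{V}{P_M}$ with a subpath $Q_t$ as follows. Every guard $v_i\in\Fkt{\beta_M}{t}$ contributes all of $M_i$ to every bag of $Q_t$ and is carried continuously across consecutive $Q_t$'s wherever $v_i$ appears in adjacent bags of $P_M$, using the fact that a guard at an edge $\Brace{t,t'}$ of $P_M$ is present on both sides of that edge. Every non-guard $v_i\in\Fkt{\beta_M}{t}$ contributes the vertices of $M_i$ distributed along $Q_t$ according to the supplied minimum-width decomposition $\Brace{P_i,\beta_i}$ of $\InducedSubgraph{D}{M_i}$. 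Coverage of all vertices of $D$ is immediate, the subpath property is preserved by the uniform placement of guard-modules and by the subpath property of each $\Brace{P_i,\beta_i}$, and directed separation splits into inter-module edges, governed by the separation in $\beta_M$ together with the modular property, and intra-module edges, governed by the validity of each $\Brace{P_i,\beta_i}$. Each bag of the resulting $\Brace{P,\beta}$ has size at most $\sum_{v_i\in\Fkt{\beta_M}{t}}\Fkt{\operatorname{eval}}{v_i}$, bounded above by the $n$-$w$-width, giving $\dpw{D}+1\leq\Fkt{n\text{-}w\text{-}\operatorname{dpw}}{D_M}$.

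The hard part will be arranging a consistent interleaving of several non-guard sub-decompositions that cohabit a single bag of $P_M$, so that inter-module edges inside that bag are simultaneously respected, together with the sewing together of guard-modules across the boundaries between consecutive $Q_t$'s without introducing subpath violations; the modular property is what ultimately makes both tasks feasible, since it allows any vertex of $M_i$ to stand in for any other when routing incident edges. Once this combinatorial work is done, the expansion touches each vertex of $D$ a bounded number of times per bag of $P_M$, and since $P_M$ has at most $\Abs{\Fkt{V}{D}}$ vertices the construction runs in time $\Fkt{\mathcal{O}}{\Abs{\Fkt{V}{D}}^2}$, matching the bound stated in the lemma.
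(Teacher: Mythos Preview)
Your reverse direction (building a decomposition of $D$ from an optimal $n$-$w$-decomposition of $D_M$) is essentially the paper's construction and is fine. The gap is in the forward direction, and it stems from a misreading of condition~2 in the definition of an $\mathcal{M}$-reduced decomposition.

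You claim that if $v_i$ is a guard in your projection $\Brace{P_M,\beta_M}$, i.e.\ $P_{M_i}$ has at least two vertices, then condition~2 forces $M_i\subseteq\Fkt{\beta}{t}$ for every $t$ with $M_i\cap\Fkt{\beta}{t}\neq\emptyset$. But condition~2 is an implication in the other direction: \emph{if} some bag already contains all of $M_i$, \emph{then} every bag meeting $M_i$ does. It says nothing when no bag contains $M_i$. In an $\mathcal{M}$-reduced decomposition a non-guarding module $M_i$ (one with $M_i\setminus\Fkt{\beta}{t}\neq\emptyset$ for all $t$) is spread over several bags via an optimal decomposition of $\InducedSubgraph{D}{M_i}$ (condition~3), so $P_{M_i}$ typically has many vertices while $\Abs{\Fkt{\beta}{t}\cap M_i}$ stays at $\dpw{\InducedSubgraph{D}{M_i}}+1$. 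Under your projection such a $v_i$ becomes a guard and is evaluated with $\Fkt{n}{v_i}=\Abs{M_i}$, which can vastly exceed $\Abs{\Fkt{\beta}{t}\cap M_i}$; the term-by-term comparison then fails and you cannot conclude $\Fkt{n\text{-}w\text{-}\operatorname{dpw}}{D_M}\leq\dpw{D}+1$. A concrete instance: let $M_1=\{a,b,c\}$ with $\InducedSubgraph{D}{M_1}$ the path $a\to b\to c$ and $M_2=\{d\}$ with edges $d\to a,b,c$; the $\mathcal{M}$-reduced decomposition $\{d\},\{a\},\{b\},\{c\}$ has width $0$, but your projection makes $v_1$ a guard and yields $n$-$w$-width $3$ instead of $1$.

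The paper avoids this by \emph{not} taking the naive projection. It builds a different path $P_M$ in which each non-guarding module $M_i$ is represented by a single new vertex $p_i$, so that $v_i$ sits in exactly one bag and is evaluated with $\Fkt{w}{v_i}$ rather than $\Fkt{n}{v_i}$; only the guarding modules are carried along as guards. You need an analogous contraction step for non-guarding modules to make the forward inequality go through.
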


\begin{proof}
	We start by proving $\Fkt{n\text{-}w\text{-}\operatorname{dpw}}{D_M}\leq\dpw{D}+1$.
	To see this consider a directed path decomposition $\Brace{P,\beta}$ for $D$ of minimum width.
	By \cref{lemma:mreduced} we may assume $\Brace{P,\beta}$ to be $\mathcal{M}$-reduced.
	
	We will now construct a directed path decomposition $\Brace{P_M,\beta_M}$ for $D_M$ as follows:
	
	Let us call a module $M_i$ \emph{guarding} if there is at least one $t\in\Fkt{V}{P}$ with $M_i\subseteq\Fkt{\beta}{t}$ and let $I\subseteq[\ell]$ be the set of all $i$ such that $M_i$ is not guarding.
	For every $i\in I$ introduce a single vertex $p_i$.
	We now iterate over the vertices of $P$ in the order induced by the direction of $P$.
	While iterating we will sometimes mark vertices with elements of $I$ to indicate that the corresponding module has been considered.
	While iterating we will, piece by piece, construct a new directed path $P_M$ together with its bags $\beta_M$.
	Suppose we already constructed the path $P_Mp$ together with its bags $\beta_M$ and some vertices of $P$ have already been marked with elements of $I$.
	Let $t\in\Fkt{V}{P}$ be the next vertex in order that we consider.
	\begin{enumerate}
		\item First check whether there is an $i\in I$ such that $\Fkt{\beta}{t}\cap M_i\neq\emptyset$ and $t$ is not marked with $i$.
		\begin{enumerate}
			\item If this is the case introduce the vertex $p_i$ together with the edge $\Brace{p,p_i}$ to $P_Mp$ and call the result $P_Mp_i$.
			Then mark all vertices $t'\in\Fkt{V}{P}$ whose bag contains vertices of $M_i$ with $i$.
			At last add $v_i$ to the bag of $p_i$.
			\item If this is not the case create a new vertex $p'$ and introduce it together with the edge $\Brace{p,p'}$ to $P_Mp$ and call the result $P_Mp'$.
		\end{enumerate}
		\item In both cases a new vertex was introduced to our path, let us call this new vertex $q$ independent on the outcome of the above.
		For every $j\in[\ell]\setminus I$ with $\Fkt{\beta}{t}\cap M_j\neq\emptyset$ add the vertex $v_j$ to the bag of $q$.
		\item Now check again if there is a some $i\in I$ such that $\Fkt{\beta}{t}\cap M_i\neq\emptyset$ and $t$ is not marked with $i$.
		If so go back to step 1.a, otherwise continue with the next vertex in order.
	\end{enumerate}
	Let $\Brace{P_M,\beta_M}$ be the result of this procedure.
	Since for each of those $i\in I$ the module $M_i$ is not guarding we can deduce from \cref{lemma:protectionfrommodules,lemma:protectionfrommodules} that $\Fkt{\beta_M}{p_i}\setminus\Set{v_i}$ protects $\Fkt{\beta_M}{Pp_i-p_i}$ from $\Fkt{\beta_M}{p_iP}$.
	
	Since $\Brace{P,\beta}$ is a directed path decomposition of $D$, $\CondSet{p\in\Fkt{V}{P_M}}{v_i\in\Fkt{\beta_M}{p}}$ induces a subpath of $P_M$ for every $i\in[\ell]$.
	
	Now let $\Brace{p,p'}\in\Fkt{E}{P_M}$ be an arbitrary edge and suppose $\Brace{\Fkt{\beta_M}{P_Mp},\Fkt{\beta_M}{p'P_M}}$ is not a directed separation of $D_M$.
	In this case there must exist an edge $\Brace{v_i,v_j}$ in $D_M$ with $v_i\in\Fkt{\beta_M}{P_Mp}\setminus\Fkt{\beta_M}{p'}$ and $v_j\in\Fkt{\beta_M}{p'P_M}\setminus\Fkt{\beta}{p}$.
	But this implies that there are edges from the vertices of $M_i$ to the vertices of $M_j$ and there must exist an edge $\Brace{t,t'}\in\Fkt{E}{P}$ such that $M_i\cap\Brace{\Fkt{\beta}{Pt}\setminus\Fkt{\beta}{t'}}\neq\emptyset$ and $M_j\cap\Brace{\Fkt{\beta}{t'P}\setminus\Fkt{\beta}{t}}\neq\emptyset$.
	This contradicts the fact that $\Brace{P,\beta}$ is a directed path decomposition.
	Hence $\Brace{P_M,\beta_M}$ is a directed path decomposition of $D_M$.
	
	Note that for every $i\in I$ there is a unique vertex $p\in\Fkt{V}{P_M}$ with $v_i\in\Fkt{\beta_M}{p}$, namely $p_i$.
	So the vertices of $I$ will be evaluated with $w$ for the $n$-$w$-width of $\Brace{P_M,\beta_M}$, while all the vertices corresponding to guarding modules themself appear as guards of at least one edge of $P_M$.
	Thus these vertices will be evaluated with $n$.
	Since $\Brace{P_M,\beta_M}$ is $\mathcal{M}$-reduced we conclude $\Fkt{n\text{-}w\text{-}\operatorname{width}}{P_M,\beta_M}=\Fkt{\operatorname{width}}{P,\beta}+1=\dpw{D}+1$.
	
	Now we prove the reverse inequality.
	Let $\Brace{P_M,\beta_M}$ be a directed path decomposition of $D_M$ of minimum $n$-$w$-width.
	Let $I\subseteq[\ell]$ be the set of all $i$ for which $v_i$ is not guarding at any edge of $P_M$.
	For every $i\in I$ let $\Brace{P_i,\beta_i}$ be a directed path decomposition for $\InducedSubgraph{D}{M_i}$ of minimum width.
	Note that for every $i\in I$, since $v_i$ is no guarding at any edge of $P_M$, there is a unique vertex $p_i\in\Fkt{V}{P_M}$ whose bag contains $v_i$.
	We create a directed path decomposition of $D$ in several steps.
	\begin{enumerate}
		\item First label every vertex of $P_M$ whose bag contains $v_i$ for $i\in I$ with $i$.
		It can happen that a single vertex is marked several times.
		\item Let $p\in\Fkt{V}{P_M}$ be a vertex that was marked in the previous step and let $Y\subseteq I$ be the set of its labels.
		Replace $p$ by a directed path $L_p$ on $\max_{i\in Y}\Abs{\Fkt{V}{P_i}}$ vertices and copy the content of $\Fkt{\beta_M}{p}$ for every vertex of $L_p$.
		\item Let $P$ be path obtained by the previous step after all marked vertices of $P_M$ have been replaced by paths and let $\beta'$ be the newly obtained bag function.
		For very $p\in\Fkt{V}{P}$ let $\Fkt{\lambda}{p}\coloneqq\CondSet{i\in[\ell]}{v_i\in\Fkt{\beta'}{p}}$.
		\begin{enumerate}
			
			\item For every $p\in\Fkt{V}{P}\cap\Fkt{V}{P_M}$ let $\Fkt{\beta}{p}\coloneqq\bigcup_{i\in\lambda(p)}M_i$.
			
			\item Let $p\in\Fkt{V}{P_M}\setminus\Fkt{V}{P}$, then $P$ contains $L_p$ as a subpath and the $\lambda$ of all vertices of $L_p$ are the same, so let $t\in\Fkt{V}{L_p}$ be chosen arbitrarily.
			Let $\Fkt{V}{L_p}=\Set{t_1,\dots,t_k}$ be numbered with respect to the ordering of the vertices induced by the orientation of $L_p$.
			Moreover, for every $i\in\Fkt{\lambda}{t}\cap I$ let $\Fkt{V}{P_i}=\Set{t_1^i,\dots,t^i_{k_i}}$ be numbered with respect to the orientation of $P_i$.
			By choice of $L_p$ we know $k\geq k_i$.
			Now let $j\in[k]$, we define the bag of $t_j$ as follows:
			\begin{align*}
				\Fkt{\beta}{t_j}\coloneqq\bigcup_{\substack{i\in\lambda(t)\cap I\\t^i_j\in V(P_i)}}\Fkt{\beta_i}{t^i_j}\cup\bigcup_{i\in\lambda(t)\setminus I}M_i.
			\end{align*}
			
		\end{enumerate}
	\end{enumerate}
	Now for every $i\in I$ we have $\Brace{P,\beta}_{M_i}=\Brace{P_i,\beta_i}$ and for every $i\in[\ell]\setminus I$ all of $M_i$ is completely contained in every bag that has a non-empty intersection with the module.
	Hence $\Brace{P_M,\beta_M}$ being a directed path decomposition of $D_M$ of minimum $n$-$w$-width implies that $\Brace{P,\beta}$ is a directed path decomposition of $D$ of width at most $\Fkt{n\text{-}w\text{-}\operatorname{dpw}}{D_M}-1$.
	So in total we have $\Fkt{n\text{-}w\text{-}\operatorname{dpw}}{D_M}=\dpw{D}+1$.
	
	Note that the procedure that produced $\Brace{P,\beta}$ given $\Brace{P_M,\beta_M}$ and $\Brace{P_i,\beta_i}$ for every $i\in I$ uses $\Fkt{\mathcal{O}}{\Abs{\Fkt{V}{D}}^2}$ steps.
	This completes our proof.
\end{proof}

\begin{proof}[Proof of \Cref{FPTDPW}.]
	Assume we are given as an instance a directed graph $D$ of directed modular width at most $\omega$.
	
	If $D$ consists of a single vertex $v$, we output the triple $\Brace{\Brace{\Brace{\Set{p},\emptyset},\beta_v},1,0}$ where $\Fkt{\beta_v}{p}=\Set{v}$.
	Here $\Brace{\Brace{\Set{p},\emptyset},\beta_v}$ is a directed path decomposition of $D$ of minimum width, $1$ is the number of vertices in $D$ and $0$ is the directed pathwidth of $D$.
	Otherwise, we apply the algorithm from \cref{computedecomposition} in order to obtain a non-trivial decomposition of $\Fkt{V}{D}$ into modules $\mathcal{M}=\Set{M_1,\ldots,M_{\ell}}$, where $2 \leq \ell \leq \omega$.
	We compute the induced subdigraphs $\InducedSubgraph{D}{M_1},\ldots,\InducedSubgraph{D}{M_{\ell}}$ as well as the module-digraph $D_M$.
	
	Now we recursively apply the algorithm to each of the instances $\InducedSubgraph{D}{M_i}$, $i\in[\ell]$.
	For each $i \in [\ell]$, we thereby obtain the directed pathwidth of $\InducedSubgraph{D}{M_i}$ together with a directed path decomposition $\Brace{P_i,\beta_i}$ of minimum width and the size $\Abs{M_i}$ of the module.
	
	We now define two weight functions $n\colon\Fkt{V}{D}\rightarrow\mathbb{N}$ and $w\colon\Fkt{V}{D}\rightarrow\mathbb{N}$ for the vertices $v_i$, $i\in[\ell]$, of the module graph $D_M$ as by setting $\Fkt{n}{v_i}\coloneqq\Abs{M_i}$ and $\Fkt{w}{v_i}\coloneqq\dpw{\InducedSubgraph{D}{M_i}}+1$.
	
	By \cref{lemma:boundnumberofbags} we can now compute a directed path decomposition $\Brace{P_M,\beta_M}$ of minimum $n$-$w$-width for $D_M$ in time $\Fkt{\mathcal{O}}{\omega^3 2^{\omega^2}}$.
	Once we produced $\Brace{P_M,\beta_M}$ we can use the procedure described in the proof of \cref{lemma:decomposemodulegraph} to combine $\Brace{P_M,\beta_M}$ with the directed path decompositions of the graphs induced by the modules to obtain a directed path decomposition $\Brace{P,\beta}$ of minimum width for $D$.
	This step takes $\Fkt{\mathcal{O}}{\Abs{\Fkt{V}{D}}^2}$ elementary operations for every module.
	
	In total, the computation time spent on $D$ alone can be bounded by 
	\[
	\Fkt{\mathcal{O}}{n^2+\omega n^2+\omega^3 2^{\omega^2}+\omega n^2}=\Fkt{\mathcal{O}}{\Brace{2\omega+1}n^2+\omega^3 2^{\omega^2}}.
	\]
	We now recurse on every $\InducedSubgraph{D}{M_i}$ with at least $2$ vertices and each such subgraph has strictly less than $n$ vertices by Fact \ref{inducedmonotonicity}.
	The same bound as for $D$ applies for all branching vertices of the tree defined by the recursive calls of our algorithm except for the leaves, where we terminate in constant time.
	Therefore the total run-time required by our algorithm is bounded from above by
	\begin{align*}
		\mathcal{O}({\underbrace{n}_{\text{leafs}}+\underbrace{(2n-1)\Brace{\Brace{2\omega+1}n^2+\omega^3 2^{\omega^2}}}_{\text{branching vertices}}})=\Fkt{\mathcal{O}}{\omega n^3+\omega^3 2^{\omega^2} n}.
	\end{align*}

\end{proof}

\subsection{Cycle Rank}

The cycle-rank of a digraph can be seen as possible generalisation of the notion of \emph{tree-depth} for undirected graphs, however its introduction predates most (un)directed graph measures including tree-depth.
It was introduced in 1963 by Eggan as a tool to analyse regular languages \cite{eggan1963transition} and has found its place among many other digraph width parameters by a characterisation via a cop\&robbers game \cite{giannopoulou2012lifo} which implies an XP-time algorithm for its computation.

\begin{definition}
The \emph{cycle-rank} of a digraph $D$, denoted by $\Fkt{\operatorname{cr}}{D}$, is defined as follows
\begin{enumerate}
	\item If $D$ has no directed cycle, den $\Fkt{\operatorname{cr}}{D}=0$.
	\item If $D$ is strongly connected, then $\Fkt{\operatorname{cr}}{D}=1+\min_{v\in V(D)}\Fkt{\operatorname{cr}}{D-v}$.
	\item Otherwise $$\Fkt{\operatorname{cr}}{D}=\max_{\substack{\text{strongly connected component}~H\\\text{of}~D}}\Fkt{\operatorname{cr}}{H}.$$
\end{enumerate}
\end{definition}

Alternatively we can describe the cycle-rank of $D$ via an elimination tree.
Let $D$ be a digraph,  then $T_D$ is an \emph{elimination tree} for $D$ with root $r\in\Fkt{V}{T_D}$ if
\begin{enumerate}
	
	\item $\Fkt{V}{D}=\Fkt{V}{T_D}$,
	
	\item $\InducedSubgraph{D}{T'}$ is strongly connected for every component of $T'\subseteq T_D-r$, and
	
	\item every component $T'\subseteq T_d-r$ is an elimination tree for $\InducedSubgraph{D}{\Fkt{V}{T'}}$.
	
\end{enumerate}
The \emph{depth} of a rooted tree $T$ is the maximum number of vertices of a path starting in the root of $T$ and ending in a leaf of $T$.
It is straight forward to see that the cycle-rank of $D$ equals the minimum depth of an elimination tree for $D$.

Another way of encoding the cycle-rank of $D$ is via an ordering $\sigma$ of $\Fkt{V}{D}$.
Let $X\subseteq\Fkt{V}{D}$ be a set of vertices and $\sigma$ an ordering of $\Fkt{V}{D}$, then we denote by $\sigma_X$ the ordering $\sigma$ induces on $X$.
Given an ordering $\sigma$ of $\Fkt{V}{D}$ we can derive an elimination tree $T_{\sigma}$ from $\sigma$ as follows
\begin{enumerate}
	
	\item the smallest vertex $v$ of $D$ with respect to $\sigma$ is the root of $T_{\sigma}$, and
	
	\item for every strongly connected component $H\subseteq D-v$ there is a component $T_H$ of $T_{\sigma}-v$ such that $T_H=T_{\sigma_{V(H)}}$.
	
\end{enumerate}
Then the \emph{rank} of $\sigma$, denoted by $\Fkt{\operatorname{rank}}{\sigma}$ is the depth of $T_{\sigma}$ and again it is straight forward to see that the cycle-rank of $D$ is equal to the minimum rank of an ordering $\sigma$ of $\Fkt{V}{D}$.
Hence in order to determine the cycle-rank of $D$ it suffices to find an ordering of minimum rank.

The following is straight forward as we just have to recursively find the strongly connected components of $D-v$ where $v$ is the smallest vertex of $D$ with respect to $\sigma$.
Therefore we omit the proof.

\begin{lemma}\label{lemma:determinerank}
Let $D$ be a digraph and $\sigma$ an ordering of $\Fkt{V}{D}$, then we can compute $T_{\sigma}$ and thus determine the rank of $\sigma$ in time $\Fkt{\mathcal{O}}{\Abs{\Fkt{V}{D}}^3}$.
\end{lemma}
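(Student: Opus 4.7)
The plan is to verify the lemma by providing an explicit recursive algorithm mirroring the recursive definition of $T_\sigma$, and then analyzing its runtime by a charging argument across the recursion tree.

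First, I would describe the algorithm: given $(D,\sigma)$, locate the vertex $v$ that is minimum with respect to $\sigma$ (this takes $\mathcal{O}(|V(D)|)$ time), declare $v$ the root of the elimination tree, and compute the strongly connected components of $D-v$ using, say, Tarjan's SCC algorithm, which runs in time $\mathcal{O}(|V(D)|+|E(D)|) \leq \mathcal{O}(|V(D)|^2)$. For each strongly connected component $H$ of $D-v$, recursively compute the elimination tree $T_{\sigma_{V(H)}}$ of $\InducedSubgraph{D}{V(H)}$ with respect to the induced ordering, and attach its root to $v$ as a child. Once $T_\sigma$ has been constructed, the rank of $\sigma$ can be read off by a simple depth-first traversal of $T_\sigma$ in time $\mathcal{O}(|V(D)|)$. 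Correctness is immediate from the definition of $T_\sigma$, so the core of the proof is the runtime bound.

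For the runtime, let $n\coloneqq|V(D)|$. Consider the recursion tree of this procedure. Each vertex of $D$ serves as the root of exactly one recursive subcall, and the strongly connected subgraphs on which we recurse at any single level of the recursion have pairwise disjoint vertex sets. Thus, at any fixed depth of the recursion, the sum of $|V(D')|$ over all recursive subinstances $D'$ is at most $n$. Since the recursion depth is at most $\Fkt{\operatorname{cr}}{D}\leq n$ (as each step removes at least one vertex along any root-leaf branch), the total sum $\sum_{D'} |V(D')|$ taken over all recursive calls is at most $n^2$. The analogous bound on $\sum_{D'} |E(D')|$ is at most $n \cdot |E(D)| \leq n^3$, because each edge $(u,w)$ of $D$ can appear in at most one recursive subinstance at each depth and the depth is at most $n$.

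The work spent at any single recursive call on input $D'$ is dominated by the SCC computation, which takes $\mathcal{O}(|V(D')|+|E(D')|)$ time. Summing over all recursive calls yields total runtime $\mathcal{O}(n^2 + n^3)=\mathcal{O}(n^3)$, as claimed. The only step that required any genuine care was bounding the total edge-work across the recursion; everything else is either definitional or a standard application of a linear-time SCC algorithm.
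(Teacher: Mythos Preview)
Your proposal is correct and follows exactly the approach the paper sketches (the paper in fact omits the proof, only remarking that one recursively finds the strongly connected components of $D-v$ for the $\sigma$-minimal vertex $v$). One small slip: the recursion depth is $\Fkt{\operatorname{rank}}{\sigma}$ rather than $\Fkt{\operatorname{cr}}{D}$, since $\sigma$ need not be optimal; but you only use the bound $\leq n$, which holds regardless, so the argument is unaffected.
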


The task at hand is, given a directed graph $D$, to determine its cycle-rank and find an ordering of $\Fkt{V}{D}$ whose rank witnesses the cycle-rank.

\ProblemDefLabelled{Cycle-Rank}{CR}{prob:CR}
{A digraph $D$.}
{Find an ordering of minimum rank for $\Fkt{V}{D}$.}

\begin{theorem}\label{thm:CR}
There exists an algorithm that, given a digraph $D$ as input, outputs an ordering $\sigma$ for $\Fkt{V}{D}$ such that $\Fkt{\operatorname{rank}}{\sigma}=\Fkt{\operatorname{cr}}{D}$.
The algorithm runs in time $\Fkt{\mathcal{O}}{n^3+\omega^3\omega!n}$, where $n\coloneqq\Abs{\Fkt{V}{D}}$ and $\omega\coloneqq\Fkt{\dmw}{D}$.
\end{theorem}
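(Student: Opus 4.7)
The plan is to design a recursive algorithm that mirrors the modular-decomposition strategy used for directed pathwidth in the preceding subsection, but adapted to the SCC-driven recursive structure of cycle-rank. The central idea is that an optimal ordering of $V(D)$ can always be chosen to respect the module decomposition; this reduces the problem on $D$ to a weighted variant on the module-digraph $D_M$, which has at most $\omega$ vertices and can be handled by exhaustive enumeration of orderings.

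\textbf{Step 1 (Structural lemma).} I would first prove: for every digraph $D$ together with a partition $\{M_1, \ldots, M_\ell\}$ of $V(D)$ into modules, there exists an ordering $\sigma^\ast$ of $V(D)$ with $\operatorname{rank}(\sigma^\ast)=\operatorname{cr}(D)$ in which the vertices of every $M_i$ form a consecutive block. This is established by an exchange argument analogous to \cref{lemma:mrespecting}. Starting from any optimal $\sigma$, one identifies the first module $M_i$ whose vertices are not consecutive, finds the leftmost foreign vertex $x\notin M_i$ that lies between two $M_i$-vertices in $\sigma$, and swaps $x$ with the next $M_i$-vertex. The modular property is what drives the swap: all vertices of $M_i$ share identical external in- and out-neighbourhoods, so the SCC decomposition of any subdigraph of $D$ depends only on \emph{which} modules intersect the retained vertex set, not on \emph{which specific} vertices of each module are retained. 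Consequently the whole sequence of SCC decompositions encountered during elimination is invariant under such a swap, hence the rank is preserved.

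\textbf{Step 2 (Reduction to a weighted problem on $D_M$).} Set $c_i\coloneqq\operatorname{cr}(D[M_i])$ and $s_i\coloneqq|M_i|$ for every $i\in[\ell]$. Define a \emph{weighted cycle-rank} $\operatorname{wcr}(G,s,c)$ on a digraph $G$ equipped with two weight vectors $s,c$ recursively by:
\begin{itemize}
\item if $G$ has no directed cycle, $\operatorname{wcr}(G,s,c)=\max_{v\in V(G)}c_v$;
\item if $G$ is strongly connected with $|V(G)|\geq 2$, $\operatorname{wcr}(G,s,c)=\min_{v\in V(G)}\bigl[s_v+\operatorname{wcr}(G-v,s,c)\bigr]$;
\item otherwise $\operatorname{wcr}(G,s,c)=\max_S\operatorname{wcr}(G[S],s|_S,c|_S)$ with $S$ ranging over the strongly connected components of $G$.
\end{itemize}
Then $\operatorname{cr}(D)=\operatorname{wcr}(D_M,s,c)$. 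The intuition: by Step~1 there is an optimal ordering in which the modules form consecutive blocks, and whenever we are inside a strongly connected residual subdigraph, the modular property ensures that eliminating the vertices of a module $M_i$ one after another preserves strong connectivity of the residual digraph until the last vertex of $M_i$ is removed. Thus $M_i$ contributes exactly $s_i=|M_i|$ to the rank, regardless of the internal structure of $D[M_i]$. The internal cycle-rank $c_i$ only enters once $v_i$ becomes a singleton SCC of the residual module-digraph, at which point the SCCs of $D$ inside $M_i$ are precisely the SCCs of $D[M_i]$, as discussed at the start of the paper.

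\textbf{Step 3 (Algorithm and runtime).} If $|V(D)|\leq 1$ we return the trivial ordering. Otherwise, compute the modular decomposition via \cref{computedecomposition}, recursively obtain $(c_i,\sigma_i)$ for every $i$, build $D_M$, and enumerate all at most $\omega!$ orderings $\pi$ of $V(D_M)$. For each such $\pi$ the weighted rank is evaluated in time $\mathcal{O}(\omega^3)$ by iterated SCC computation, in the same spirit as \cref{lemma:determinerank} but with $s_v$ replacing the unit contribution and $c_v$ replacing $0$ on singleton SCCs. The output ordering is the concatenation of $\sigma_{\pi^\ast(1)},\dots,\sigma_{\pi^\ast(\ell)}$ for the best $\pi^\ast$. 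At each call we spend $\mathcal{O}(|V(D')|^2)$ on the decomposition and $\mathcal{O}(\omega^3\omega!)$ on the enumeration; by Fact~\ref{inducedmonotonicity} the bound on $\omega$ propagates, and by Fact~\ref{dectree} the recursion tree has at most $2n-1$ branching vertices, yielding the claimed runtime $\mathcal{O}(n^3+\omega^3\omega!\,n)$.

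The main obstacle is the structural lemma in Step~1. Intra-module swaps are rank-neutral essentially for free, but swapping an $M_i$-vertex past a non-$M_i$-vertex requires arguing that the \emph{entire} sequence of SCC decompositions of the shrinking residual digraphs stays the same. This hinges on the fact that for any subset $X\subseteq V(D)$ the SCC partition of $D[X]$ is determined by $\{i:X\cap M_i\neq\emptyset\}$ together with the SCC structure within each fully-contained module, a non-trivial but direct consequence of the modular property combined with the SCC-analysis of $D$ versus $D_M$ already used elsewhere in the paper.
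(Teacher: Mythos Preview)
Your proposal is correct and follows the same three-step scheme as the paper: prove that some optimal ordering respects the module partition (\cref{lemma:respectufulcrordering}), reduce to a two-weight variant on $D_M$ in which inner vertices of the elimination tree count $|M_i|$ and leaves count $\operatorname{cr}(D[M_i])$ (\cref{lemma:nwranktocyclerank}), and brute-force all $\omega!$ orderings of $V(D_M)$ (\cref{lemma:nwrankconstantvertices}). The only real difference is in the structural lemma, where the paper avoids your swap argument and instead argues directly on the elimination tree $T_\sigma$ by induction on $\ell$: at the first branching vertex $v\in M_j$ one shows---using exactly the modular replacement trick you describe---that either every subtree below $v$ lies inside $M_j$ or none of them meet $M_j$, which yields the consecutive-block property without having to justify rank-preservation of individual swaps or the (slightly imprecise) claim that the SCC partition of $D[X]$ depends only on which modules meet $X$.
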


Similar to our approach for directed pathwidth we need to make sure that we can always find an ordering of minimum rank that is consistent with a module decomposition.
So in the following $D$ will always be a strongly connected digraph and $\mathcal{M}=\Set{M_1,\dots,M_{\ell}}$ is a decomposition of $D$ into modules.
We say that an ordering $\sigma$ of $\Fkt{V}{D}$ \emph{respects} $\mathcal{M}$ if for all $i\in[ell]$, $u,w\in M_i$ and $v\in\Fkt{V}{D}$, $u\leq_{\sigma} v\leq_\sigma w$ implies $v\in M_i$.
In other words, $\sigma$ respects $\mathcal{M}$ if every module appears as an interval on $\sigma$.

\begin{lemma}\label{lemma:respectufulcrordering}
Let $D$ be a digraph and $\mathcal{M}=\Set{M_1,\dots,M_{\ell}}$ a decomposition of $D$ into $\ell\in\mathbb{N}^+$ modules, then there exists an ordering $\sigma$ for $\Fkt{V}{D}$ of minimum rank that respects $\mathcal{M}$.
\end{lemma}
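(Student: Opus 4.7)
I would prove this by induction on $|V(D)|$, with the trivial base case $|V(D)|=1$. For the inductive step, take any minimum-rank ordering $\sigma$ of $V(D)$ and transform it into a respecting one of the same rank. My strategy is to pick a ``leading'' module $M^\ast \in \mathcal{M}$ — a natural choice is the module containing the smallest vertex of $\sigma$ — and argue we can rearrange $\sigma$ so that all of $M^\ast$ appears at the front (as a contiguous block in any order) without increasing the rank. Then, since $\{M_i\}_{i\neq j^\ast}$ is a module decomposition of $D' \coloneqq D-M^\ast$, I would apply the inductive hypothesis to $D'$ to obtain a respecting min-rank ordering of $D'$, and concatenate to get a respecting ordering of $D$. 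One also has to verify that the resulting ordering still achieves $\operatorname{cr}(D)$ overall, which follows from the recursive structure of the elimination tree once the front-block claim is established.

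The core technical step is showing the front-block claim is rank-preserving. The point is to exploit the modular property of $M^\ast$: all its vertices share the same out- and in-neighbours outside $M^\ast$, so they are interchangeable with respect to the external structure. Concretely, I would build the elimination tree $T_\sigma$ recursively and argue that at every recursive stage, the subset of $M^\ast$ still remaining is a module of the current subdigraph (since modules restrict to induced subdigraphs), and hence the external vertices ``see'' $M^\ast$ only as a single uniform entity. This symmetry means that rearranging the order in which the vertices of $M^\ast$ are processed — including processing them all before any outside vertex — produces an elimination tree whose external portion is unchanged and whose $M^\ast$-portion is built from an ordering of $D[M^\ast]$; the depth is then the maximum of two quantities, neither of which can exceed the original rank.

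The main obstacle is managing the case where vertices of $M^\ast$ are scattered across several SCCs of $D - r$ for some root $r \in M^\ast$. Such a scattering can happen (as in simple examples with $D[M^\ast]$ disconnected), and so one cannot naively lump $M^\ast$ together inside a single subtree of $T_\sigma$. The way around this is to observe that if $M^\ast \setminus \{r\}$ is split across several SCCs of $D-r$, then by modularity each external SCC of $D-r$ is ``blind'' to which vertex of $M^\ast$ sits in it — one can move these vertices between SCCs (or more precisely, simulate this in the elimination tree) without altering which external SCCs occur. Pushing this carefully through the recursion shows that the elimination tree of the rearranged ordering has depth no larger than that of $\sigma$, completing the front-block claim. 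Iterating the argument once per module (or equivalently, following the induction above) yields the desired minimum-rank respecting ordering.
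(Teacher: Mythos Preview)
Your overall plan---induct, peel off one module as a contiguous block, recurse on the rest---is the right shape, but the proof of the ``front-block claim'' has a genuine gap, and the obstacle you single out is not the real one.

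First, the interaction between modules and strongly connected components is stronger than you use: if $M$ is a module of a digraph $H$ and $C$ is an SCC of $H$, then either $C\cap M=\emptyset$, or $C\subseteq M$, or $M\subseteq C$. Applying this to $H=D-r$ with the module $M^\ast\setminus\{r\}$ shows that the ``scattering'' you worry about cannot mix internal and external vertices: either every SCC of $D-r$ meeting $M^\ast$ is contained in $M^\ast$, or there is a single SCC $C_0$ with $M^\ast\setminus\{r\}\subseteq C_0$. The scattered case (several SCCs inside $M^\ast$) is in fact the \emph{easy} case---reordering $M^\ast$ to the front leaves $T_\sigma$ unchanged.

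The actual difficulty is the other case: $M^\ast\setminus\{r\}$ sits inside one SCC $C_0$ that also contains external vertices, and the $\sigma$-smallest vertex of $C_0$ lies \emph{outside} $M^\ast$. Now your induction hypothesis (which needs the root of the sub-ordering to lie in the module) does not apply to $D[C_0]$, and nothing in your sketch explains why pulling $M^\ast\setminus\{r\}$ to the front of $\sigma_{C_0}$ cannot increase the depth. The phrases ``simulate this in the elimination tree'' and ``pushing this carefully through the recursion'' are precisely where a real argument is needed.

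The paper sidesteps this by a different choice: it inducts on the number $\ell$ of modules and looks not at the root but at the \emph{first branching vertex} $v$ of $T_\sigma$; letting $M_j$ be its module, one proves a clean dichotomy---either every subtree below $v$ lies inside $M_j$, or every subtree is disjoint from $M_j$. In the first case $M_j$ is placed \emph{last}, in the second \emph{first}, and in both cases one recurses on the $\ell-1$ remaining modules. Allowing the extracted module to go to either end is exactly what avoids the case your argument gets stuck on.
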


\begin{proof}
	We prove the statement via induction over the number of modules $\ell$ in $\mathcal{M}$.
	Note that we allow $\mathcal{M}$ to be trivial in the sense that $\mathcal{M}$ consists of a single module.
	For $\ell=1$ this is trivial since with just one module any ordering respects $\mathcal{M}$.
	
	Now let $\ell\geq 2$ and $\sigma$ be an ordering for $\Fkt{V}{D}$ of minimum rank.
	Let $r$ be the root of the elimination tree $T_{\sigma}$ corresponding to $\sigma$ and let $v\in\Fkt{V}{T_{\sigma}}$ be the vertex closest to $r$ with more than one successor in $T_{\sigma}$.
	Note that $v=r$ in the case where $r$ itself has more than one neighbour.
	Let $P$ be the path from $r$ to $v$ in $T_{\sigma}$.
	Moreover let $T_1,\dots,T_k$ be the components of $T_{\sigma}-\Fkt{V}{P}$ where $r_i$ is the root of $T_i$ for each $i\in[k]$, furthermore please note that every $r_i$ is a successor of $v$ by choice of $v$.
	At last let $j\in[\ell]$ be such that $v\in M_j$.
	
	Suppose there exist two distinct integers $n,m\in[k]$ and some $h\in[\ell]\setminus\Set{j}$ such that $M_j\cap\Fkt{V}{T_n}\neq\emptyset$ while $M_j\cap\Fkt{V}{T_m}=\emptyset$ and $\Fkt{V}{T_m}\cap M_h\neq\emptyset$.
	Since $v$ is the first vertex for which $T_{\sigma}$ branches, we know that $\Set{v}\cup\bigcup_{p=1}^k\Fkt{V}{T_p}$ induces a strongly connected subgraph of $D$.
	Hence there exists a directed cycle containing $v$ and at least one vertex of $\Fkt{V}{T_m}$ in $D'\coloneqq\InducedSubgraph{D}{\Set{v}\cup\bigcup_{p=1}^k\Fkt{V}{T_p}}$.
	Since $M_j$ is a module and $T_m$ does not contain a vertex of $M_j$ we may assume $C$ to contain no other vertex of $M_j$.
	Let $\Brace{t_1,v},\Brace{v,t_2}\in\Fkt{E}{C}$ be the two edges incident with $v$ in $C$.
	By our assumption there exists a vertex $u\in M_j\cap\Fkt{V}{T_n}$ and thus the edges $\Brace{t_1,u}$ and $\Brace{u,t_2}$ must exist in $D$ as well.
	So by replacing $v,\Brace{t_1,v}$, and $\Brace{v,t_2}$ in $C$ with $u,\Brace{t_1,u}$, and $\Brace{u,t_2}$ we obtain a new directed cycle $C'$ in $D'-v$.
	This means that $\Fkt{V}{T_n}\cup \Fkt{V}{T_m}$ is contained in the vertex set of a strongly connected component of $D'-v$.
	This however is a contradiction to the definition of $T_{\sigma}$.
	
	Therefore one of two cases must be true: either for all $h\in[k]$ we have that $\Fkt{V}{T_h}\subseteq M_j$, or for all $h\in[k]$ we have $\Fkt{V}{T_h}\cap M_j=\emptyset$ since $k\geq 2$ by choice of $v$.
	
	In the first case $P$ must contain the vertices of all other modules.
	In this case we may simply consider $D-M_j$ together with $\sigma_{V(D)\setminus M_j}$ and the module decomposition $\Set{M_1,\dots,M_{j-1},M_{j+1},\dots,M_{\ell}}$ into $\ell-1$ modules.
	By our induction hypothesis there exists an ordering $\sigma''$ for $D-M_j$ of minimum rank that respects $\mathcal{M}$.
	Let $\sigma'''$ be an ordering of minimum rank for $\InducedSubgraph{D}{M_j}$ and let $\sigma'$ be the ordering of $\Fkt{V}{D}$ obtained by concatenating $\sigma''$ by $\sigma'''$.
	It follows that $\Fkt{\operatorname{rank}}{\sigma'}\leq\Abs{\Fkt{V}{D}\setminus M_j}+\Fkt{\operatorname{rank}}{\sigma'''}$.
	We claim that $\sigma'$ has rank $\Fkt{\operatorname{cr}}{D}$.
	Since $\Fkt{\operatorname{rank}}{\sigma}\geq\Abs{\Fkt{V}{P}\setminus M_j}+\Fkt{\operatorname{rank}}{\sigma_{M_j}}\geq\Abs{\Fkt{V}{P}\setminus M_j}+\Fkt{\operatorname{rank}}{\sigma'''}$ this implies our claim.
	
	In the second case $P$ must contain all vertices of $M_j$.
	Like in the first case we may assume by induction that there exists an ordering $\sigma''$ of minimum rank for $D-M_j$ that respects $\Set{M_1,\dots,M_{j-1},M_{j+1},\dots,M_{\ell}}$.
	Let $\sigma'''$ be an ordering of minimum rank for $\InducedSubgraph{D}{M_j}$.
	This time we construct our new ordering, which respects $\mathcal{M}$ by concatenating $\sigma'''$ by $\sigma''$, so this time the vertices of $M_j$ are smaller than all vertices of $\Fkt{V}{D}\setminus M_j$.
	We obtain $\Fkt{\operatorname{rank}}{\sigma'}\leq \Abs{M_j}+\Fkt{\operatorname{rank}}{\sigma''}$.
	And again we can obtain $\Fkt{\operatorname{rank}}{\sigma}\geq\Abs{M_j}+\Fkt{\operatorname{rank}}{\sigma_{V(D)\setminus M_j}}\geq\Abs{M_j}+\Fkt{\operatorname{rank}}{\sigma''}$ which concludes our proof.
\end{proof}

As in the case of directed pathwidth we want to construct an ordering of minimum rank for $D$ by finding an ordering for the module graph $D_M$ and combining this ordering with minimum rank orderings of the modules.
To do this we again introduce a weighted version of the cycle-rank problem which evaluates the rank of an ordering depending on the position of a vertex within the corresponding elimination tree.
The idea behind this is, that any vertex which is no a leaf in the elimination tree corresponds to a module which has to be completely deleted before the module of any of its successors can be decomposed.
Hence every inner vertex of the elimination tree has to be counted with the entire cardinality of its corresponding module, while the modules corresponding to leaf vertices may be eliminated in an optimal fashion.

\begin{definition}
Let $D$ be a directed graph and $n\colon\Fkt{V}{D}\rightarrow\mathbb{N}$ and $w\colon\Fkt{V}{D}\rightarrow\mathbb{N}$ two weight functions for the vertices of $D$.
Given an ordering $\sigma$ for $\Fkt{V}{D}$ we define the \emph{$n$-$w$-$\sigma$-evaluation} function for every vertex $v\in\Fkt{V}{D}$ as follows:
\begin{align*}
\Fkt{\operatorname{eval}^{\sigma}_{n,w}}{v}\coloneqq\TwoCases{\Fkt{n}{v}}{v~\text{is an inner vertex of}~T_{\sigma}}{\Fkt{w}{v}}{v~\text{is a leaf of}~T_{\sigma}.}
\end{align*}
The \emph{$n$-$w$-rank} is then defined as
\begin{align*}
\Fkt{n\text{-}w\text{-}\operatorname{rank}}{\sigma}\coloneqq\max_{\substack{P\subseteq T_{\sigma}\\P~\text{path from the root of}~T_{\sigma}~\text{to a leaf}}}~\sum_{v\in V(P)}\Fkt{\operatorname{eval}^{\sigma}_{n,w}}{v}
\end{align*}
At last we define the \emph{$n$-$w$-cycle-rank} of $D$, denoted by $\Fkt{n\text{-}w\text{-}\operatorname{cr}}{D}$, as the minimum $n$-$w$-rank of an ordering $\sigma$ for $\Fkt{V}{D}$.
\end{definition}

In the next step we need to show that given a digraph on a constant number of vertices together with the functions $n$ and $w$, we can find an ordering of minimum $n$-$w$-rank in constant time.
Moreover, we have to relate the $n$-$w$-cycle-rank, with special choices for $n$ and $w$, of the module digraph $D_M$ to the cycle-rank of $D$.

\begin{lemma}\label{lemma:nwrankconstantvertices}
Given $D$ a digraph on exactly $\omega\in\mathbb{N}$ vertices together with two weight functions $n\colon\Fkt{V}{D}\rightarrow\mathbb{N}$ and $w\colon\Fkt{V}{D}\rightarrow\mathbb{N}$, an ordering for $\Fkt{V}{D}$ of minimum $n$-$w$-rank can be found in time $\Fkt{\mathcal{O}}{\omega^3\omega!}$.
\end{lemma}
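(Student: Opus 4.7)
The plan is to use brute-force enumeration over all orderings of $V(D)$. Since $|V(D)|=\omega$, there are exactly $\omega!$ such orderings, which is a constant depending only on $\omega$. For each candidate ordering $\sigma$, I will explicitly construct the elimination tree $T_\sigma$ and then evaluate its $n$-$w$-rank, finally selecting an ordering that attains the minimum.

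First I would iterate through the $\omega!$ permutations of $V(D)$. For a fixed ordering $\sigma$, I build $T_\sigma$ exactly as described just before \cref{lemma:determinerank}: identify the $\sigma$-smallest vertex $v$ as the root, decompose $D-v$ into strongly connected components, and recurse on each component using the induced ordering $\sigma_{V(H)}$. By \cref{lemma:determinerank} this construction can be carried out in time $\mathcal{O}(|V(D)|^3)=\mathcal{O}(\omega^3)$.

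Once $T_\sigma$ is available, computing $\Fkt{n\text{-}w\text{-}\operatorname{rank}}{\sigma}$ is a routine tree computation. I traverse $T_\sigma$ (say, by a depth-first search rooted at its root), and at each vertex $v$ I maintain the value $\Fkt{\operatorname{eval}^{\sigma}_{n,w}}{v}$, which is $\Fkt{n}{v}$ if $v$ has at least one child in $T_\sigma$ and $\Fkt{w}{v}$ otherwise. Summing these evaluations along each root-to-leaf path and taking the maximum yields the $n$-$w$-rank in time $\mathcal{O}(\omega)$, which is dominated by the $\mathcal{O}(\omega^3)$ cost of building the tree. After processing all $\omega!$ orderings, I return one that attains the minimum $n$-$w$-rank.

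The overall running time is $\mathcal{O}(\omega^3 \cdot \omega!)$, matching the bound claimed in the statement. There is no real obstacle here beyond carefully bookkeeping the distinction between inner vertices and leaves of $T_\sigma$ when evaluating, since this distinction — rather than the structure of $D$ itself — is what controls which of the two weight functions is applied at each vertex.
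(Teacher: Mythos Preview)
Your proposal is correct and follows essentially the same approach as the paper: brute-force over all $\omega!$ orderings, invoke \cref{lemma:determinerank} to build each elimination tree in $\mathcal{O}(\omega^3)$ time, evaluate the $n$-$w$-rank, and keep the minimum. The only cosmetic difference is that you compute the rank in $\mathcal{O}(\omega)$ via a single DFS whereas the paper bounds it more loosely by $\mathcal{O}(\omega^3)$ via enumerating root-to-leaf paths; either way the tree construction dominates and the total is $\mathcal{O}(\omega^3\omega!)$.
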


\begin{proof}
To find an ordering of minimum $n$-$w$-rank for $\Fkt{V}{D}$ we can simply check all $\omega!$ orderings of $\Fkt{V}{D}$.
For each of them we can construct the corresponding elimination tree in $\Fkt{\mathcal{O}}{\omega^3}$ time by \cref{lemma:determinerank}.
It at most the same time we can enumerate for each such elimination tree all paths from the root to a leaf and thus we can determine the $n$-$w$-rank for each ordering in time $\Fkt{\mathcal{O}}{\omega^3}$.
In total, a minimum $n$-$w$-rank ordering for $\Fkt{V}{D}$ can be found in time $\Fkt{\mathcal{O}}{\omega^3\omega!}$.
\end{proof}

\begin{lemma}\label{lemma:nwranktocyclerank}
Let $D_M$ be the module digraph of $D$ corresponding to $\mathcal{M}$ with vertex set $\Set{v_1,\dots,v_{\ell}}$.
Let $n\colon\Fkt{V}{D_M}\rightarrow\mathbb{N}$ be defined by $\Fkt{n}{v_i}\coloneqq\Abs{M_i}$ and $w\colon\Fkt{V}{D_M}\rightarrow\mathbb{N}$ be defined by $\Fkt{w}{v_i}\coloneqq\Fkt{\operatorname{cr}}{\InducedSubgraph{D}{M_i}}$ for all $i\in[\ell]$.

Then $\Fkt{n\text{-}w\text{-}\operatorname{cr}}{D_M}=\Fkt{\operatorname{cr}}{D}$.
Moreover, given an ordering $\sigma_i$ of minimum rank for $\InducedSubgraph{D}{M_i}$ for every $i\in[\ell]$ we can construct an ordering of minimum rank for $\Fkt{V}{D}$ in time $\Fkt{\mathcal{O}}{\ell\Fkt{\log}{\ell}}$.
\end{lemma}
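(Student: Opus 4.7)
The plan is to apply \cref{lemma:respectufulcrordering} to restrict attention to orderings of $V(D)$ that respect $\mathcal{M}$, and then establish a precise structural correspondence between the elimination trees $T_\sigma$ (of $D$) and $T_{\sigma_M}$ (of $D_M$), where $\sigma_M$ is the projection of $\sigma$ to $V(D_M)$.

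Fix an optimal $\sigma$ respecting $\mathcal{M}$ with modules appearing in the order $M_{\pi(1)},\ldots,M_{\pi(\ell)}$, and set $\sigma_M\coloneqq(v_{\pi(1)},\ldots,v_{\pi(\ell)})$. The core claim, to be proven by induction on $\Abs{\Fkt{V}{D}}$, is that $T_\sigma$ is obtained from $T_{\sigma_M}$ by \emph{expanding} each vertex $v_i$ as follows. If $v_i$ is an inner vertex of $T_{\sigma_M}$, it is replaced by a directed chain of $\Abs{M_i}$ vertices (the vertices of $M_i$ in the internal order of $\sigma$), with the children-subtrees of $v_i$ attached at the bottom of the chain. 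If $v_i$ is a leaf of $T_{\sigma_M}$, it is replaced by the elimination tree $T_{\sigma|_{M_i}}$ of $\InducedSubgraph{D}{M_i}$ under the induced ordering. The correspondence is driven by the following observation: while $M_{\pi(k)}$ is only partially processed, the strongly connected components of the residual digraph $D-(M_{\pi(1)}\cup\ldots\cup M_{\pi(k-1)}\cup S)$ mirror those of $D_M-\Set{v_{\pi(1)},\ldots,v_{\pi(k-1)}}$, with $M_{\pi(k)}\setminus S$ playing the role of $v_{\pi(k)}$, thanks to the modular property. It is exactly when $M_{\pi(k)}$ is fully exhausted that the SCC structure can refine, and such a refinement happens precisely when $v_{\pi(k)}$ is a singleton SCC in the reduced module-digraph at that stage, i.e., a leaf of $T_{\sigma_M}$.

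With the correspondence in hand, every root-to-leaf path of $T_\sigma$ projects to a root-to-leaf path $P$ of $T_{\sigma_M}$ of total length $\sum_{v_i\text{ inner on }P}\Abs{M_i}+\Fkt{\operatorname{rank}}{\sigma|_{M_{\pi(k)}}}$, where $v_{\pi(k)}$ is the terminal leaf of $P$. Since $\Fkt{\operatorname{rank}}{\sigma|_{M_i}}\geq\Fkt{\operatorname{cr}}{\InducedSubgraph{D}{M_i}}=\Fkt{w}{v_i}$, taking the maximum over paths yields $\Fkt{\operatorname{rank}}{\sigma}\geq\Fkt{n\text{-}w\text{-}\operatorname{rank}}{\sigma_M}\geq\Fkt{n\text{-}w\text{-}\operatorname{cr}}{D_M}$, hence $\Fkt{\operatorname{cr}}{D}\geq\Fkt{n\text{-}w\text{-}\operatorname{cr}}{D_M}$. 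For the reverse direction, take an optimal $\sigma_M$ and the given optimal $\sigma_i$'s; constructing $\sigma$ by concatenating the modules in $\sigma_M$-order, using $\sigma_i$ inside leaf modules and any internal order inside inner modules, the same structural correspondence gives $\Fkt{\operatorname{rank}}{\sigma}=\Fkt{n\text{-}w\text{-}\operatorname{rank}}{\sigma_M}$, which establishes equality. The algorithm is implicit in this construction: given $T_{\sigma_M}$ (which identifies leaf and inner vertices) and the $\sigma_i$'s, one sorts the $\ell$ module-blocks into $\sigma_M$-order and concatenates accordingly, accounting for the $\Fkt{\mathcal{O}}{\ell\Fkt{\log}{\ell}}$ bound.

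The hardest part will be the rigorous justification of the structural correspondence. Tracking the SCC-evolution under successive vertex removals -- and, in particular, distinguishing whether $v_{\pi(k)}$ is a singleton SCC or lies in a larger SCC of the reduced module-digraph (which controls whether the internal elimination tree of $\InducedSubgraph{D}{M_{\pi(k)}}$ is grafted in, or whether $M_{\pi(k)}$ is absorbed as a chain into a bigger SC block together with modules from the same SCC of $D_M$) -- requires a careful case analysis that leans crucially on the modular property.
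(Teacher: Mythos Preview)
Your proposal is correct and follows essentially the same route as the paper: both invoke \cref{lemma:respectufulcrordering} to restrict to $\mathcal{M}$-respecting orderings, both establish that for each inner vertex $v_i$ of $T_{\sigma_M}$ the module $M_i$ appears as a chain in $T_\sigma$ (with the branching only at its bottom vertex), and both obtain the reverse inequality by concatenating the optimal module-orderings in $\sigma_M$-order. The only difference is packaging: you phrase the key fact as a global structural correspondence between $T_\sigma$ and $T_{\sigma_M}$ to be proven by induction, whereas the paper argues it directly via two contradiction arguments (one showing $M_i$ induces a path, one showing only the deepest vertex of $M_i$ can branch), each exploiting the module property to reroute a cycle through a different representative of $M_i$.
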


\begin{proof}
We start by showing $\Fkt{n\text{-}w\text{-}\operatorname{cr}}{D_M}\leq\Fkt{\operatorname{cr}}{D}$.
To do this let $\sigma$ be an ordering of minimum rank for $\Fkt{V}{D}$.
By \cref{lemma:respectufulcrordering} we may assume that $\sigma$ respects $\mathcal{M}$.
We construct an ordering $\sigma_M$ for $D_M$ from $\sigma$ by letting $v_i\leq_{\sigma_M}v_j$ if and only if for all $x\in M_i$ and $y\in M_j$ we have $x\leq_{\sigma}y$ for all distinct integers $i,j\in[\ell]$.
Since $\sigma$ respects $\mathcal{M}$, $\sigma_M$ is well defined.
Now let $i\in[\ell]$ be chosen such that $v_i$ is not a leaf of $T_{\sigma_M}$.
Let $r_\sigma$ be the root of $T_{\sigma}$ and thus the smallest vertex with respect to $\sigma$.
We need to prove that $M_i$ induces a path in $T_{\sigma}$ and only one vertex of $M_i$ can have more than one successor in $T_{\sigma}$.

So first suppose $M_i$ does not induce a path in $T_{\sigma}$, then there exist vertices $x_0,x_1,x_2\in M_i$ such that $x_0$ has at least two successors and $x_i$ is the root of a subtree below $x_0$ that does not contain another vertex of $M_i$ for $i\in[2]$.
Moreover, $x_1$ and $x_2$ belong to different components of $T_{\sigma}-x_0$
Since $v_i$ is not a leaf of $T_{\sigma_M}$, $x_1$ can be chosen such that the subtree $T_1$ of $T_{\sigma}$ rooted in $x_1$ contains vertices of another module $M_j$ where $j\in[\ell]\setminus\Set{i}$ and $v_i <_{\sigma_M} v_j$.
Hence there exists a directed cycle $C$ in $D$ containing only vertices of $\Fkt{V}{T_1}$ and $x_1$ in particular.
Let $\Brace{t,x_1}$ and $\Brace{x_1,t'}$ be the two edges of $C$ incident with $x_1$.
By choice of $x_1$ we know $t,t'\notin M_i$ and thus, by $M_i$ being a module, the edges $\Brace{t,x_2}$ and $\Brace{x_2,t'}$ must also exist in $D$.
Thus we can create a new directed cycle $C'$ by replacing $x_1$, $\Brace{t,x_1}$, and $\Brace{x_1,t'}$ in $C$ with $x_2$, $\Brace{t,x_2}$, and $\Brace{x_2,t'}$.
This however means hat $x_1$ and $x_2$ cannot belong to different components of $T_{\sigma}-x_0$ by the definition of elimination trees and so $M_i$ must induce a subpath of $T_{\sigma}$.
Let $P_i$ be this path and let $x\in M_i$ be the endpoint of $P_i$ lowest in $T_{\sigma}$, i.e.\@ furthest away from $r_{\sigma}$.
Assume there is a vertex $y\in M_i\setminus\Set{x}$ such that $y$ has another successor $z$, which, since $M_i$ induces a path, cannot be a member of $M_i$.
Let $T_z$ be the subtree of $T_{\sigma}$ rooted at $z$, then $\Fkt{V}{T_z}\cap M_i=\emptyset$.
Since $M_i$ is a module and $z$ a successor of $y$ in the elimination tree $T_{\sigma}$ we again find a directed cycle $C$ that contains $y$, no other vertex of $M_i$ and has all other vertices in $T_z$.
However, by using the module property of $M_i$ we can now find, as we did above, another directed cycle $C'$ be replacing $y$ in $C$ with $x$, which again contradicts $x$ and $z$ to be contained in different components of $T_{\sigma}-y$.
Thus no vertex of $M_i$ except $x$ can have more than one successor.
Therefore every path from $r_{\sigma}$ to a leaf of $T_{\sigma}$ that contains a vertex of $M_i$ must contain the whole path $P_i$.
Now let $P$ be a path in $T_{\sigma}$ from $r_{\sigma}$ to a leaf $t$ of $T_{\sigma}$ that witnesses the rank of $\sigma$.
Let $i\in[\ell]$ be such that $t\in M_i$.
Let $I\subseteq[\ell]$ be the set of indices such that $M_j\cap\Fkt{V}{P-M_i}\neq\emptyset$.
By our observations above this means $M_j\subseteq\Fkt{V}{P-M_i}$ for all $j\in I$.
At last note that the subpath of $P$ induced by vertices of $M_i$ must have $\Fkt{\operatorname{rank}}{\sigma_{M_i}}$ vertices by the maximality of $P$.
Hence $\Fkt{rank}{\sigma}=\Abs{\bigcup_{j\in I}M_j}+\Fkt{\operatorname{rank}}{\sigma_{M_i}}$.
Now let $P_M$ be the subpath of $T_{\sigma_M}$ corresponding to $P$.
Note that, by the maximality of $P$, $P_M$ also has maximum weight with respect to the $n$-$w$-$\sigma_M$-evaluation function.
By definition and the construction of $\sigma_M$ we also have
\begin{align*}
	\Fkt{n\text{-}w\text{-}\operatorname{cr}}{D_M}\leq\Fkt{n\text{-}w\text{-}\operatorname{rank}}{\sigma_M}=\sum_{t\in V(P_M-v_i)}\Fkt{n}{t}+\Fkt{w}{v_i}=\Fkt{\operatorname{cr}}{D}.
\end{align*}

Now for $\Fkt{n\text{-}w\text{-}\operatorname{cr}}{D_M}\geq\Fkt{\operatorname{cr}}{D}$ let us assume $\sigma_M$ to be an ordering for $\Fkt{V}{D_M}$ of minimum $n$-$w$-rank.
For every $i\in[ell]$ let $\sigma_i$ be an ordering for $M_i$ of minimum rank.
Let $\sigma'_M$ be the ordering $\sigma_M$ induces on $[\ell]$, then we construct the ordering $\sigma$ for $\Fkt{V}{D}$ by concatenating the $\sigma_i$ in order of $\sigma'_M$.
Then $\sigma$, in particular, respects $\mathcal{M}$.
Let $P$ be a path in $T_{\sigma}$ from the root $r_{\sigma}$ to a leaf of the elimination tree with a maximum number of vertices.
Moreover let $i\in[\ell]$ be such that the endpoint of $P$ that is not $r_{\sigma}$ lies in $M_i$ and let $I\subseteq[\ell]$ be the set of indices such that $j\in I$ if and only if $M_i\neq M_i$ and $M_j\cap\Fkt{V}{P}\neq\emptyset$.
Hence we may conclude $\Abs{\Fkt{V}{P}}\leq\sum_{j\in I}\Abs{M_j}+\Fkt{\operatorname{rank}}{\sigma_i}$.
Therefore we have
\begin{align*}
	\Fkt{\operatorname{rank}}{\sigma}\leq\sum_{j\in I}\Abs{M_j}+\Fkt{\operatorname{rank}}{\sigma_i}=\Fkt{n\text{-}w\text{-}\operatorname{cr}}{D_M}
\end{align*}
by the choice of $\sigma_M$.
Since the construction of $\sigma$ is the concatenation of the $\sigma_i$ for $i\in[\ell]$ in order, this can be done in time $\Fkt{\mathcal{O}}{\ell\Fkt{\log}{\ell}}$.
\end{proof}

\begin{proof}[Proof of \cref{thm:CR}]
Assume we are given as an instance a directed graph $D$ of directed modular width at most $\omega$.
If $D$ consists of a single vertex $v$, we output the triple $\Brace{\Brace{v},1,1}$.
The sequence $\Brace{v}$ is an ordering of the vertex set of our digraph, $1$ is the number of its vertices and $1$ is its cycle-rank and in particular the rank of the ordering $\Brace{v}$.
If $D$ has at least two vertices we apply the algorithm from \cref{computedecomposition} in order to obtain a non-trivial decomposition of $\Fkt{V}{D}$ into modules $\mathcal{M}=\Set{M_1,\dots,M_{\ell}}$ where $2\leq\ell\leq\omega$.
We compute the induced subgraphs $\InducedSubgraph{D}{M_1},\dots,\InducedSubgraph{D}{M_{\ell}}$ as well as the corresponding module-digraph $D_M$.
Now we recursively apply the algorithm to each of the instances $\InducedSubgraph{D}{M_i}$, $i\in[\ell]$.
By doing this, for each of these graphs we obtain as the output of our algorithm the triple $\Brace{\sigma_i,n_i,w_i}$ where $\sigma_i$ is an ordering of $M_i$ of minimum rank, $n_i=\Abs{M_i}$ and $w_i=\Fkt{\operatorname{rank}}{\sigma_i}=\Fkt{\operatorname{cr}}{\InducedSubgraph{D}{M_i}}$ for all $i\in[\ell]$.
We now define two weight functions $n\colon\Fkt{V}{D_M}\rightarrow\mathbb{N}$ and $w\colon\Fkt{V}{D_M}\rightarrow\mathbb{N}$ for the vertices $v_i\in\Fkt{V}{D}$, $i\in[\ell]$,
of the module-digraph $D_M$ by setting $\Fkt{n}{v_i}\coloneqq n_i$ and $\Fkt{w}{v_i}=w_i$.
By \cref{lemma:nwrankconstantvertices} we can now compute an ordering $\sigma_M$ of minimum $n$-$w$-rank for $\Fkt{V}{D_M}$ in time $\Fkt{\mathcal{O}}{\omega^3\omega!}$.
Once we produced $\sigma_M$ we apply the procedure from \cref{lemma:nwranktocyclerank} to obtain an ordering of minimum rank for $\Fkt{V}{D}$ in time $\Fkt{\mathcal{O}}{\omega\Fkt{\log}{\omega}}$.
In total, the computation time spent on $D$ alone can be bounded by $\Fkt{\mathcal{O}}{n^2+\omega\Fkt{\log}{\omega}+\omega^3\omega!}=\Fkt{\mathcal{O}}{n^2+\omega^3\omega!}$.

As we recurse on the digraphs induced by the modules, which all have less than $n$ vertices, the same bound as for $D$ applies for all such digraphs induced by modules an at least two vertices.
Therefore the total run-time required by our algorithm is bounded from above by
\begin{align*}
\Fkt{\mathcal{O}}{n+n\Brace{n^2+\omega^3\omega!}}=\Fkt{\mathcal{O}}{n^3+\omega^3\omega!n}.
\end{align*}
\end{proof}

\section{Conclusion}
In this paper, we have initiated the study of the \emph{directed modular width} in parametrised algorithmics, which is a natural structural parameter for digraphs. We have seen that by combining dynamic programming with the strong tool of bounded-variable ILP solving, one can obtain FPT-algorithms for many intrinsically hard problems on directed graphs, which are $W[1]$-hard, intractable or still unsolved for classes of bounded directed tree-width, DAG-width or clique-width. In fact, while no FPT-algorithms are known for the \ref{prob:rVDDP} on digraphs of bounded directed tree-width and the Hamiltonian cycle problem is $W[1]$-hard for digraphs of bounded clique-width. Moreover, the recursive nature of module-decompositions allows us to find fast FPT-algorithms for natural weighted generalisations of these problems. Weighted generalisations such as \ref{prob:rVDDP-C} for \ref{prob:rVDDP} naturally appear in real-world problems, as routing problems often involve capacities and costs, which may depend on the location (respectively the vertices in the network).

Our results show that the directed modular width covers a nice niche in the landscape of directed width measures, as it can be computed efficiently, is small on dense but structured networks and avoids the algorithmic price of generality paid by most other width measures for directed graphs. We want to emphasize that although the directed modular width is very restrictive, it still covers non-trivial special cases such as directed co-graphs, which have been investigated previously.

Furthermore, when faced with essentially any hard algorithmic problem on digraphs, the results and techniques developed in this paper are worthwhile to be used as preprocessing or intermediate steps in other algorithms to achieve substantial speed-ups. Because a module-decomposition with a minimal number of modules can be computed in linear time, from a practical point of view, this comes at a relatively small price, but, depending on the instance, possibly with a huge pay-off.

As explained, our algorithmic solutions all follow a common general strategy. We wonder whether it is possible to use these ideas to obtain a strong algorithmic meta-theorem, which allows for more general problems than the one for clique-width. Another possible direction of future research would be to investigate generalisations of the notion of directed modular width, where instead of bounding the number of vertices of the module-digraph $D_M$ by a constant, we impose more general structural properties (for instance bounded directed treewidth or planarity). As we have investigated the complexity of computing other structural digraph parameters in \cref{sec:otherwidth}, we conclude with the following open problem:

Is it possible to compute or approximate the DAG-width of a given digraph in polynomial time for bounded directed modular width? 

It is not clear that such an algorithm should exist. In fact, the gadgets involved in the proof of the PSPACE-completeness of computing the DAG-width in \cite{pspace} are quite structured and admit decompositions into few modules. It is easily seen from their results that there exist digraphs on $n$ vertices of directed modular width at most $6$, which require DAG-decompositions of super-polynomial size in $n$.

\bibliography{references}
\bibliographystyle{alpha}

\end{document}